\documentclass{article}

\usepackage[utf8]{inputenc}
\usepackage[left=3cm, right=3cm, top=2cm, bottom=2cm]{geometry}

\usepackage[numbers,sort]{natbib}
\usepackage[noend,boxed]{algorithm2e}
\usepackage{mdframed}
\usepackage{graphicx}
\usepackage{amsmath}
\usepackage{amssymb}
\usepackage{amsthm}
\usepackage{authblk}
\usepackage{comment}
\usepackage[usenames, dvipsnames]{xcolor}
\usepackage{tikz, pgfplots}
\usetikzlibrary{decorations.pathreplacing}
\definecolor{DarkRed}{rgb}{0.65,0,0}
\definecolor{Red}{rgb}{1,0,0}
\usepackage[linktocpage=true,
pagebackref=true,colorlinks,
linkcolor=DarkRed,citecolor=ForestGreen,
bookmarks,bookmarksopen,bookmarksnumbered]
{hyperref}
\usepackage{hyperref}
\usepackage{xspace}

\newtheorem{definition}{Definition}
\newtheorem{theorem}{Theorem}
\newtheorem{corollary}[theorem]{Corollary}
\newtheorem{lemma}[theorem]{Lemma}
\newtheorem{hyp}[theorem]{Hypothesis}

\newtheorem{conjecture}[theorem]{Conjecture}
\newtheorem{claim}[theorem]{Claim}

\newcommand{\eps}{\varepsilon}

\newcommand{\dist}{\text{dist}}

\newcommand{\rr}{\mathbb{R}}

\newcommand{\mst}{\texttt{MST}}

\newcommand{\Thetaish}{\widetilde{\Theta}}
\newcommand{\betastar}{\beta^*}
\newcommand{\abeta}{\protect\overrightarrow{\beta}}

\newcommand{\DMC}{\texttt{DMC}\xspace}
\newcommand{\MMF}{\texttt{MMF}\xspace}
\newcommand{\DSC}{\texttt{DSC}\xspace}
\newcommand{\DSF}{\texttt{DSF}\xspace}

\newcommand{\spo}{\texttt{SPO}}
\newcommand{\ppo}{\texttt{PO}}

\newcommand{\Oish}{\widetilde{O}}
\newcommand{\Omegaish}{\widetilde{\Omega}}
\newcommand{\rs}{\texttt{rs}}

\newcommand{\rpp}{\texttt{RP}}

\newcommand{\dpp}{\texttt{DP}}
\newcommand{\apdpp}{\texttt{ADP}}

\newcommand{\sss}{\texttt{SS}}
\newcommand{\gamgam}{\gamma\gamma}
\newcommand{\fcg}{\texttt{MCG}}
\newcommand{\scg}{\texttt{SCG}}
\newcommand{\dsfg}{\texttt{DSFG}}

\newcommand{\ms}{\texttt{MS}}
\newcommand{\lms}{\texttt{LMS}}
\newcommand{\ehh}{\texttt{EH}}


\newcommand{\Althofer}{Alth\"{o}fer}

\newcommand{\Erdos}{Erd\"{o}s}
\newcommand{\Szekely}{Sz\'{e}kely}
\newcommand{\Szemeredi}{Szemer\'{e}di}
\newcommand{\Turan}{Tur\'{a}n}

\usepackage{booktabs}
\usepackage{longtable}
\usepackage{multirow}

\usepackage{placeins}


\usepackage{thmtools}
\declaretheorem[numberlike=theorem]{observation}

\newcommand{\mcut}{\textsc{MCut}}
\newcommand{\fmcut}{\widehat{\textsc{MCut}}}
\newcommand{\mmflow}{\textsc{MMFlow}}
\newcommand{\vmcut}{\textsc{VMCut}}
\newcommand{\vfmcut}{\widehat{\textsc{VMCut}}}

\usepackage{environ}
\makeatletter
\newsavebox{\measure@tikzpicture}
\NewEnviron{scaletikzpicturetowidth}[1]{%
  \def\tikz@width{#1}%
  \def\tikzscale{1}\begin{lrbox}{\measure@tikzpicture}%
  \BODY
  \end{lrbox}%
  \pgfmathparse{#1/\wd\measure@tikzpicture}%
  \edef\tikzscale{\pgfmathresult}%
  \BODY
}
\makeatother


\title{Bridge Girth: A Unifying Notion in Network Design\footnote{This work was supported by NSF:AF 2153680.}}

\author[1]{Greg Bodwin}
\author[1]{Gary Hoppenworth}
\author[2]{Ohad Trabelsi}
\affil[1]{University of Michigan EECS. \texttt{\{bodwin,garytho\}@umich.edu}}
\affil[2]{Toyota Technological Institute at Chicago. \texttt{ohadt@ttic.edu}\thanks{Work partially done at University of Michigan, and partially supported by the NSF grant CCF-1815316 and the NWO VICI grant 639.023.812.}}
\date{}

\begin{document}

\maketitle

\thispagestyle{empty}

\begin{abstract}
A classic 1993 paper by Alth{\" o}fer et al.\ proved a tight reduction from spanners, emulators, and distance oracles to the extremal function $\gamma$ of high-girth graphs.
This paper initiated a large body of work in network design, in which problems are attacked by \emph{reduction} to $\gamma$ or the analogous extremal function for other girth concepts.
In this paper, we introduce and study a new girth concept that we call the \emph{bridge girth of path systems}, and we show that it can be used to significantly expand and improve this web of connections between girth problems and network design.
We prove two kinds of results:

\begin{itemize}

\item We write the maximum possible size of an $n$-node, $p$-path system with bridge girth $>k$ as $\beta(n, p, k)$, and we write a certain variant for ``ordered'' path systems as $\beta^*(n, p, k)$.
We identify several arguments in the literature that implicitly show upper or lower bounds on $\beta, \beta^*$, and we provide some polynomial improvements to these bounds.
In particular, we construct a tight lower bound for $\beta(n, p, 2)$, and we polynomially improve the upper bounds for $\beta(n, p, 4)$ and $\beta^*(n, p, \infty)$.

\item We show that many state-of-the-art results in network design can be recovered or improved via black-box reductions to $\beta$ or $\beta^*$.
Examples include bounds for distance/reachability preservers, exact hopsets, shortcut sets, the flow-cut gaps for directed multicut and sparsest cut, an integrality gap for directed Steiner forest. 
\end{itemize}

We believe that the concept of bridge girth can lead to a stronger and more organized map of the research area.
Towards this, we leave many open problems related to both bridge girth reductions and extremal bounds on the size of path systems with high bridge girth.

\end{abstract}
\clearpage
\setcounter{tocdepth}{2}

\tableofcontents

\thispagestyle{empty}
\setcounter{page}{0}


\clearpage
\section{Introduction}

A common goal in theoretical computer science is to compress a graph into a small-space representation while approximately preserving structural information related to shortest paths, distances, or reachability.
Examples include spanners \cite{PU89jacm, PU89sicomp, ADDJS93, ACIM99, BS07, EP04, TZ06, Chechik13soda, Pettie09, Woodruff06, Woodruff10, BKMP10, AB17jacm, ABP17, Knudsen17, Knudsen14, AlDhalaan21}, emulators \cite{DHZ00, EP04, TZ05}, distance oracles \cite{TZ01, EP16, Chechik14, ENW16, WulffNilsen12, RTZ05}, distance and reachability preservers \cite{BCE05, CE06, Bodwin21, Bodwin19, AB18, BCR16, CC20, CCC22}, hopsets \cite{EN20, BP20, HP19, EN16, EN17, MPVX15, Cohen00}, shortcut sets \cite{Hesse03, Thorup92, UY91, Fineman19, LJS19, HP18, KS21, KP21, KP22}, etc.; see survey \cite{ABSHJKS20} for more.
We shall broadly refer to this research area as \emph{network design}.

A successful strategy has been to reduce network design problems to \emph{girth problems} in extremal combinatorics.
Generally speaking, a girth problem asks for the maximum possible size of a combinatorial system that avoids short ``cycles'' of some kind.
The contribution of this paper is to introduce a new girth problem, based on a particular kind of cycle in \emph{path systems} that we call ``bridges.''
We then use our new girth problem to organize and improve the understanding of several well-studied problems in network design.
This paper contains two kinds of results:
\begin{enumerate}
\item We polynomially improve upper and lower bounds on the maximum possible size of path systems of high bridge girth (over bounds implicit in the previous literature), and 

\item We show reductions from various problems in network design to our new bridge girth problem, and use them to recover or improve state-of-the-art upper and/or lower bounds.
\end{enumerate}


\begin{figure} [ht]
\begin{scaletikzpicturetowidth}{\textwidth}
\begin{tikzpicture}[scale=\tikzscale]

\node (girth) at (12, 0) [draw, thick, align=center] {
\large Girth Problem $\gamma(n, k)$\\
\footnotesize mult.\ spanners, emulators,\\
\footnotesize distance oracles (upper \& lower bds) \cite{ADDJS93}\\
\footnotesize vertex fault tolerant mult.\ spanners and vertex\\
\footnotesize dist.\ sensitivity oracles (upper \& lower bds) \cite{BP19, BDPV18}};

\node (wgirth) at (12, -4) [draw, thick, align=center] {
\large Weighted Girth Problem $\lambda(n, k)$\\
\footnotesize light mult.\ spanners (upper \& lower bds) \cite{ENS14}};

\node (bpt) at (0, 0) [draw, thick, align=center] {
\large Bipartite Girth Problem $\gamgam(n,p, k)$\\
\footnotesize edge fault tolerant mult.\ spanners and \\
\footnotesize edge dist.\ sensitivity oracles (lower bds) \cite{BDR22}\\
\footnotesize comm.\ compl.\ of mult.\ spanners (lower bds) \cite{FWY20}};

\node (set) at (0, 4) [draw, thick, align=center] {
\large Set Girth Problem $\Sigma(n, p, k)$};

\node (rs) at (12, 4) [draw, thick, align=center] {
\large Ruzsa-\Szemeredi{} Problem $\rs(n)$\\
\footnotesize undir.\ unweighted dist.\ preservers (upper bds) \cite{Bodwin21}\\
\footnotesize undir.\ unweighted\ dist.\ labeling schemes (upper bds) \cite{KUV19}};

\node (ordbridge) at (12, 9) [draw, thick, align=center, blue] {
\large Ordered Bridge Girth Problem $\beta^*(n, p, k)$\\
\footnotesize dir.\ weighted dist.\ preservers and\\
\footnotesize shortest path oracles (lower bds)\\
\footnotesize dir.\ weighted exact hopsets (lower bds)\\
\footnotesize online reachability preservers (upper \& lower bds)};

\node (bridge) at (0, 9) [draw, thick, align=center, blue] {
\large Bridge Girth Problem $\beta(n, p, k)$\\
\footnotesize reachability preservers and\\
\footnotesize path oracles (upper \& lower bds)\\
\footnotesize dir.\ weighted dist.\ preservers and\\
\footnotesize shortest path oracles (upper bds)\\
\footnotesize shortcut sets (lower bds) \\
\footnotesize dir.\ flow-cut gap (lower bds)\\
\footnotesize sparsest cut gap (lower bds)\\
\footnotesize DSF integrality gap (lower bds)};

\draw [thick, ->] (bpt) -- (girth) node [above, midway, align=center]{
\footnotesize case $p=n$\\
\footnotesize (folklore)};

\draw [thick, <->] (bpt) -- (set) node [left, midway, align=center] {\footnotesize equivalent\\
\footnotesize $\gamgam(n, p, 2k) = \Sigma(n, p, k)$\\
\footnotesize (folklore)};

\draw [thick, ->] (bpt) -- (rs) node [above, midway, sloped, align=center]{
\footnotesize case $k=6$,\\
\footnotesize $p$ large \cite{de1991maximum}};

\draw [thick, ->] (set) -- (rs) node [above, midway, align=center] {
\footnotesize case $k=3$,\\
\footnotesize $p$ large \cite{de1991maximum}};

\draw [blue, thick, ->] (bridge) -- (rs) node [above, midway, sloped, align=center] {
\footnotesize case $k=3$,\\
\footnotesize $p$ large \cite{de1991maximum} (see Thm \ref{thm:b3rs})};

\draw [blue, thick, ->] (set) -- (bridge) node [left, midway, align=center] {
\footnotesize directed version\\
\footnotesize $\Sigma \le \beta$};

\draw [blue, thick, ->] (bridge) -- (ordbridge) node [above, midway, align=center] {
\footnotesize ordered version,\\
\footnotesize $\beta \le \beta^*$};

\draw [thick, <->] (girth) -- (wgirth) node [midway, right, align=center] {
\footnotesize approx equivalent \cite{LS22}\\
\footnotesize conjectured fully equivalent \cite{ENS14}};

\end{tikzpicture}
\end{scaletikzpicturetowidth}
\caption{\label{fig:introgirth} Relationships among some girth problems in the literature, and the problems they capture.  Bridge girth, ordered bridge girth, and the associated reductions -- all in blue -- are new in this paper.  See Appendix \ref{app:girthtour} for more detail on the prior work reflected in this chart.
}
\end{figure}
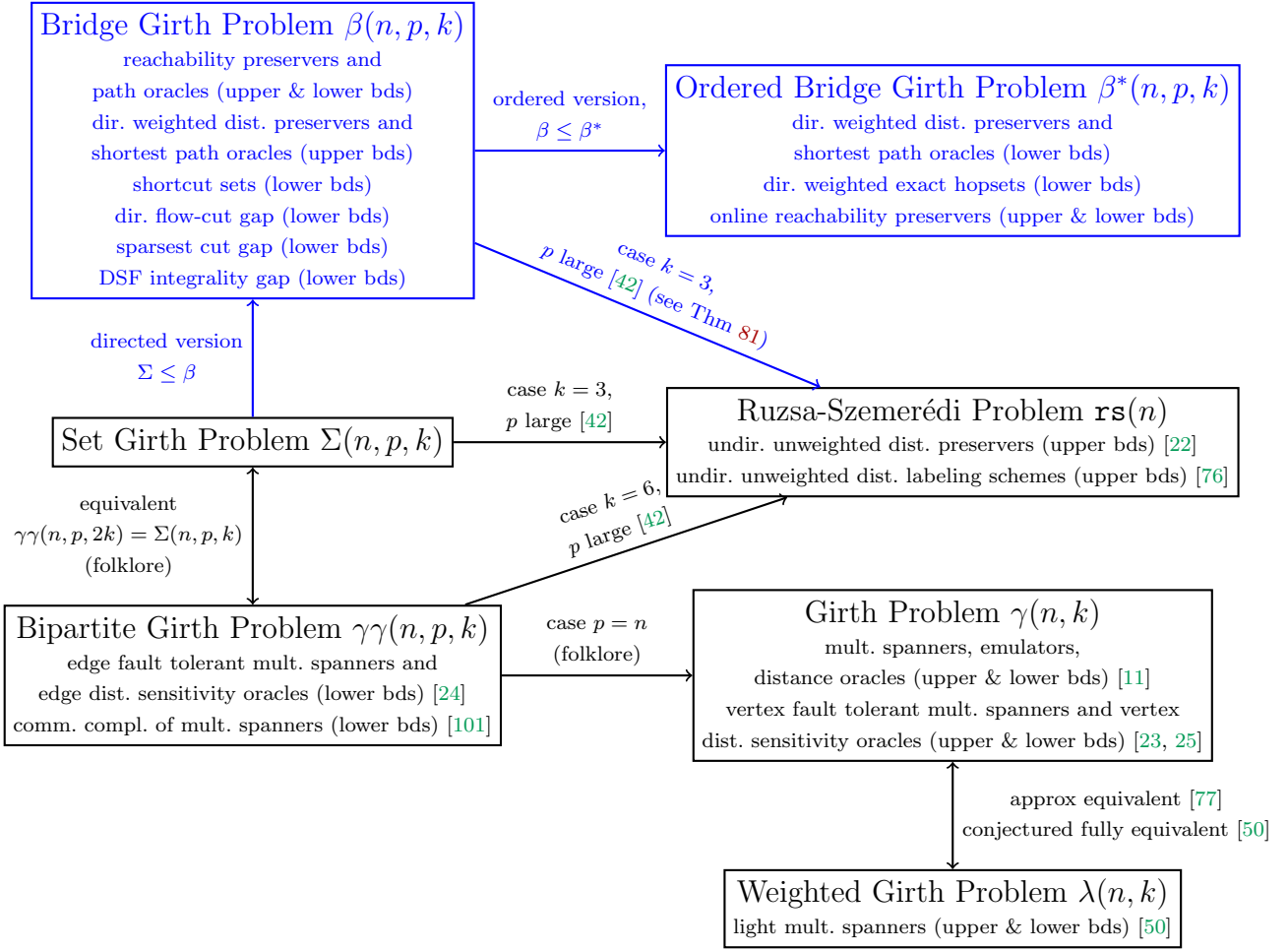

\subsection{Bridge Girth and the Landscape of Girth Reductions}

\subsubsection{Previously-Studied Girth Concepts}

In Appendix \ref{app:girthtour}, we provide a detailed tour through all the objects mentioned in Figure \ref{fig:introgirth}.
Here we give a much faster overview, to show how our new concept of bridge girth relates to previously-studied girth concepts in the literature.

The strategy of reducing network design problems to girth problems was pioneered in a classic 1993 paper by Alth{\"o}fer, Das, Dobkin, Joseph, and Soares \cite{ADDJS93}.
They provided an exactly-tight reduction between the extremal functions of \emph{spanners} and \emph{high-girth graphs}.
We will rephrase their main result a bit, to give an example of the reduction-based perspective used in this paper.
\begin{definition} [Multiplicative Spanners]
A multiplicative $k$-spanner of a graph $G$ is a subgraph $H$ satisfying $\dist_H(s, t) \le k \cdot \dist_G(s, t)$ for all nodes $s, t$.
The function $\ms(n, k)$ is the least integer such that every undirected weighted $n$-node graph has a $k$-spanner on $\le \ms(n, k)$ edges.
\end{definition}

\begin{definition} [Graph Girth]
The girth of a graph $G$ is the least number of edges in a cycle in $G$ (or $\infty$ if $G$ is a forest).
The function $\gamma(n, k)$ is the maximum possible number of edges in an $n$-node graph of girth $>k$.
\end{definition}

\begin{theorem} [\cite{ADDJS93}] \label{thm:msintro}
$\gamma(n, k) = \ms(n, k+1)$.
\end{theorem}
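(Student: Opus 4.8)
The plan is to prove the two inequalities $\ms(n,k+1)\le\gamma(n,k)$ and $\ms(n,k+1)\ge\gamma(n,k)$ separately; both rest on the same structural observation, that a weighted graph in which the heaviest edge of every short cycle is already shortcut within the stretch budget by the rest of that cycle can contain no short cycle at all, hence has high girth and is sparse.

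For $\ms(n,k+1)\le\gamma(n,k)$, I would analyze the greedy spanner. Given an arbitrary undirected weighted $n$-node graph $G$, sort its edges by nondecreasing weight and build a subgraph $H$ by examining each edge $e=(u,v)$ in turn, adding $e$ to $H$ precisely when $\dist_H(u,v)>(k+1)\cdot w(e)$ in the current $H$. First, $H$ is a $(k+1)$-spanner: when $e=(u,v)$ is skipped we have $\dist_H(u,v)\le(k+1)\,w(e)$ at that moment, and $H$-distances only decrease afterward, so $\dist_H(u,v)\le(k+1)\,w(e)$ holds for every edge of $G$; summing this inequality along a shortest $G$-path gives $\dist_H(s,t)\le(k+1)\,\dist_G(s,t)$ for all $s,t$. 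Second, $H$ has girth $>k$: if $H$ contained a cycle $C$ with at most $k$ edges, let $e^\star=(u,v)$ be the edge of $C$ added to $H$ last; since processing is by nondecreasing weight, $e^\star$ is a heaviest edge of $C$, and when $e^\star$ was examined all other edges of $C$ were already present, so $C\setminus\{e^\star\}$ formed a $u$-$v$ path of weight at most $(|C|-1)\,w(e^\star)<(k+1)\,w(e^\star)$, contradicting that $e^\star$ was added. Hence $|E(H)|\le\gamma(n,k)$, and since $G$ was arbitrary this gives $\ms(n,k+1)\le\gamma(n,k)$.

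For $\ms(n,k+1)\ge\gamma(n,k)$, I would exhibit a single worst-case instance. Let $G_0$ be an $n$-node graph of girth $>k$ with the extremal number $\gamma(n,k)$ of edges, equipped with a suitable weighting. The aim is to show $G_0$ is its own unique $(k+1)$-spanner, so every $(k+1)$-spanner of $G_0$ keeps all $\gamma(n,k)$ edges. Indeed, if $H$ is a subgraph of $G_0$ missing an edge $e=(u,v)$, then any $u$-$v$ path in $H$ together with $e$ contains a cycle of $G_0$, which has more than $k$ edges, forcing the $u$-$v$ detour to be long; with the weights chosen appropriately this yields $\dist_H(u,v)>(k+1)\,w(e)=(k+1)\,\dist_{G_0}(u,v)$, so $H$ fails the spanner condition. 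Therefore $\ms(n,k+1)\ge|E(G_0)|=\gamma(n,k)$, and combined with the first inequality this gives the claimed equality.

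The step I expect to be the main obstacle is getting the stretch-versus-girth bookkeeping exactly right in both directions at once: in the greedy analysis one must handle equal-weight edges carefully (so that ``the rest of $C$ is already in $H$'' is literally true) and identify the correct comparison between $(|C|-1)\,w(e^\star)$ and $(k+1)\,w(e^\star)$; symmetrically, in the lower bound one must choose the weights on $G_0$ so that deleting any single edge provably pushes the endpoint distance strictly past the $(k+1)$ threshold — unit weights are the natural first attempt, to be replaced by a finer construction if the raw girth bound does not by itself give a strict enough increase.
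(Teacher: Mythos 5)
Your overall approach mirrors the paper's (Theorem \ref{thm:greedyspanners} in Appendix \ref{app:girthtour}): greedy for the upper bound, an unweighted high-girth extremal graph as the hard instance for the lower bound. But the statement here carries an off-by-one index error relative to the version the paper actually proves, and your lower-bound direction inherits it fatally. The appendix establishes $\ms(n,k)=\gamma(n,k+1)$, equivalently $\gamma(n,k)=\ms(n,k-1)$; reconciling this with $\gamma(n,k)=\ms(n,k+1)$ would force $\gamma(n,k)=\gamma(n,k+2)$, which fails already at $k=2$, where $\gamma(n,2)=\binom{n}{2}$ while $\gamma(n,4)=\Theta(n^{3/2})$.

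Concretely, the step ``with the weights chosen appropriately'' in your lower bound cannot be completed. Suppose $G_0$ has girth exactly $k+1$, and let $C$ be a cycle of length $k+1$ with edges $e_1,\dots,e_{k+1}$ of positive weights $w_1,\dots,w_{k+1}$. For $G_0$ to be its own unique $(k+1)$-spanner, removing each $e_i$ must violate the stretch, so the $k$-edge detour around $e_i$ must satisfy $\sum_{j\ne i}w_j>(k+1)w_i$; summing these $k+1$ inequalities over $i$ gives $k\sum_j w_j>(k+1)\sum_j w_j$, a contradiction. So for any weighting some edge of $C$ is removable under stretch $k+1$. By contrast, the paper's version works with unit weights and no tuning: once the girth is $>k+1$, removing an edge leaves a detour of more than $k$ edges, pushing the unweighted stretch past $k$. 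Your greedy upper-bound direction is fine but slack for the same reason — the identical bookkeeping actually excludes cycles of length up to $k+2$ (the detour $C\setminus\{e^\star\}$ has weight $\le(|C|-1)w(e^\star)\le(k+1)w(e^\star)$), so you really get $|E(H)|\le\gamma(n,k+2)$, matching $\ms(n,k+1)=\gamma(n,k+2)$ rather than $\gamma(n,k)$.
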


It remains a major open question in extremal combinatorics to determine the asymptotic value of $\gamma$.
Regardless, by tight reduction, Theorem \ref{thm:msintro} is considered by the community to close the question of the existential size of multiplicative spanners.
It has been highly influential in network design, spawning a long line of work similarly reducing spanner or spanner-like problems to $\gamma$ \cite{RZ11, BDPV18, BP19, DR20, ENS14, FS16, CW16, LS22}.

More recently, some variants and extensions of the function $\gamma$ have emerged as similarly fundamental in network design.
One example is an elegant paper by Elkin, Neiman, and Solomon, which developed a notion of ``weighted girth,'' and showed that the corresponding extremal function is equivalent to the tradeoff between stretch and \emph{lightness} for spanners \cite{ENS14} (see Section \ref{sec:wtdgirth} for details).
Another example is a line of this work \cite{FWY20, BDR22, de1991maximum} that has developed reductions to $\gamgam$, a generalization of $\gamma$ to bipartite graphs:
\begin{definition} [Bipartite Graph Girth]
The function $\gamgam(n, p, k)$ is the maximum possible number of edges in a bipartite graph with $n, p$ nodes on each side of its bipartition and girth $>k$.
\end{definition}

There is a folklore reduction showing that $\gamma(n, k) = \Theta(\gamgam(n, n, k))$ (see Theorem \ref{thm:ggtog}), and so any reductions to $\gamma(n, k)$ can be equivalently phrased as a reduction to $\gamgam(n, n, k)$.
The importance of $\gamgam$ was further shown in an important paper by de Caen and \Szekely{} \cite{de1991maximum}, which proves an equivalence between a special case of $\gamgam$ and the \emph{Ruzsa-\Szemeredi{} function} (see Appendix \ref{app:rs}), another extremal function that captures various problems in network design \cite{KUV19, Bodwin21}.
The function $\gamgam$ is sometimes expressed in the equivalent language of \emph{set system girth}:
\begin{definition} [Set Systems]
A set system is a pair $S = (V, \mathcal{T})$, where $V$ is a ground set of nodes and $\mathcal{T}$ is a multiset of subsets of $V$.
The size of $S$ is given by
$$\|S\| := \sum \limits_{T \in \mathcal{T}} |T|.$$
\end{definition}

\begin{definition} [Set System Girth]
A $k$-cycle in a set system $S = (V, \mathcal{T})$ is a circularly-ordered list of distinct nodes $v_0, v_1, \dots, v_k=v_0 \in V$ and distinct sets $T_0, T_1, \dots, T_k=T_0 \in \mathcal{T}$ for which we have $v_i, v_{i+1} \in T_i$ for all $i$.
The girth of a set system is the smallest integer $k$ for which the system has a $k$-cycle.
The maximum possible size of a set system with $n$ nodes, $s$ sets, and girth $>k$ is written $\Sigma(n, s, k)$.
\end{definition}

For example, a set system in which each set has size $2$ can be viewed as an undirected graph.\footnote{A set system is equivalent to a (not necessarily uniform) hypergraph.  We call these set systems rather than hypergraphs (1) to emphasize the way in which path systems can be viewed as a directed variant, and (2) because there are several competing notions of hypergraph size/girth in the literature, but these terms are unambiguous for set systems.}
Set systems are in natural bijection with their bipartite \emph{incidence graphs}, and this implies the folklore equivalence $\Sigma(n, p, k) = \gamgam(n, p, 2k)$ (see Theorem \ref{thm:ggtoset}).
Thus, we can consider the set system girth problem as merely a rephrasing of the bipartite girth problem.


\subsubsection{Bridge Girth and Ordered Bridge Girth}

A directed version of a set system is a \emph{path system}, in which we have node sequences instead of sets:

\begin{definition} [Path Systems]
A path system is a pair $S = (V, \Pi)$ where $V$ is a ground set of nodes and $\Pi$ is a multiset of vertex sequences called paths.
Each path may contain at most one instance of each node.
The size of a path system is written\footnote{Note that $|\pi|$ counts the number of \emph{nodes} in $\pi$, and so it differs by $1$ from the length of $\pi$ when viewed as a path through a graph.}
$$\|S\| := \sum \limits_{\pi \in \Pi} |\pi|.$$
\end{definition}

For example, a path system in which all paths have length $2$ is essentially a directed graph.
Our new girth concept is based on the following notion of a ``cycle'' in a path system:

\begin{definition} [$b$-Bridges]
In a path system $S = (V, \Pi)$, a $b$-bridge is a set of $b$ distinct nodes $v_1, \dots, v_b \in V$ and $b$ distinct paths $\pi_1, \dots, \pi_b$ such that (1) for all $1 \le i \le b-1$ we have $v_i, v_{i+1} \in \pi_i$ with $v_i$ preceding $v_{i+1}$, and (2) we have $v_1, v_b \in \pi_b$ with $v_1$ preceding $v_b$.
The path $\pi_b$ is called the river, and the other paths $\pi_1, \dots, \pi_{b-1}$ are called arcs.
\end{definition}

Note that the nodes $v_i, v_{i+1}$ are not necessarily consecutive on their arc $\pi_i$; there might be many nodes between these, and it still counts as a bridge.
Informally, a $b$-bridge resembles a directed $b$-cycle with one of the path directions reversed; the reversed path is called the river, and the non-reversed paths are called the arcs.
See Figure \ref{fig:4bridge} for an example.

\begin{figure}[htbp]
  \begin{center}
    \includegraphics[width=1.5in]{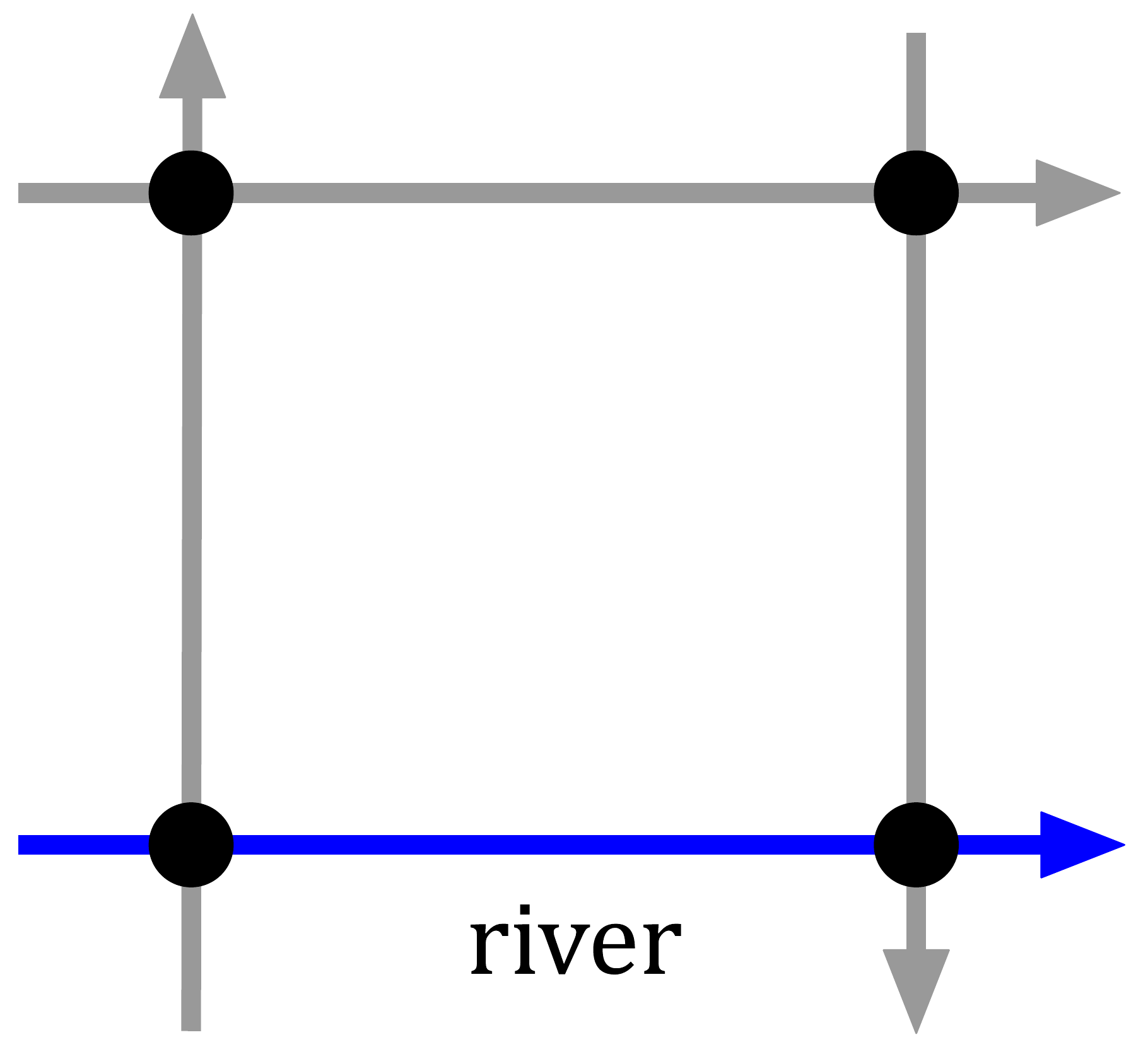}
    \end{center}
    \caption{\label{fig:4bridge} A $4$-bridge}
\end{figure}

\begin{definition} [Bridge Girth]
The bridge girth of a path system $S$ is the smallest integer $b$ for which $S$ contains a $b$-bridge.
The function $\beta(n, p, k)$ is the maximum possible size of a path system with $n$ nodes, $p$ paths, and bridge girth $>k$.
We allow $k=\infty$, meaning that the system has no bridges of any size.
\end{definition}

We will also consider a related notion of bridge girth, based on path systems with an ordering on their paths.
\begin{definition} [Ordered Path Systems and Ordered Bridges]
An ordered path system is a path system $S = (V, \Pi)$ equipped with a total ordering of its paths $\Pi$.
An ordered bridge in an ordered path system is a bridge in which the river comes after all the arcs in the ordering (and the arcs may occur in any order relative to each other).
\end{definition}

\begin{definition} [Ordered Bridge Girth]
The ordered bridge girth of an ordered path system $S$ is the smallest integer $b$ for which $S$ has an ordered $b$-bridge (with the river last in the ordering).
The function $\betastar(n, p, k)$ is the maximum possible size of an ordered path system with $n$ nodes, $p$ paths, and ordered bridge girth $>k$.
\end{definition}

It is immediate from the definitions that $\betastar(n, p, k) \ge \beta(n, p, k)$, since an ordered path system of ordered bridge girth $>k$ is a strictly less constrained object than an (unordered) path system of (unordered) bridge girth $>k$.

We discuss our definitions before proceeding to their applications.
Notice that a bridge is a directed version of a set cycle, in the sense that a $b$-bridge becomes a set $b$-cycle if we forget the order of each path and interpret it as a set.
However, there are many other patterns besides bridges that correspond to set cycles in the same way.
Perhaps the most natural alternative is a directed cycle, defined like a bridge but with the paths/nodes circularly ordered instead of having a river with reversed direction (see Figure \ref{Figs:bridgepic}).
Why focus on bridges rather than directed cycles?

\begin{figure}[htbp]
  \begin{center}
    \includegraphics[width=4.5in]{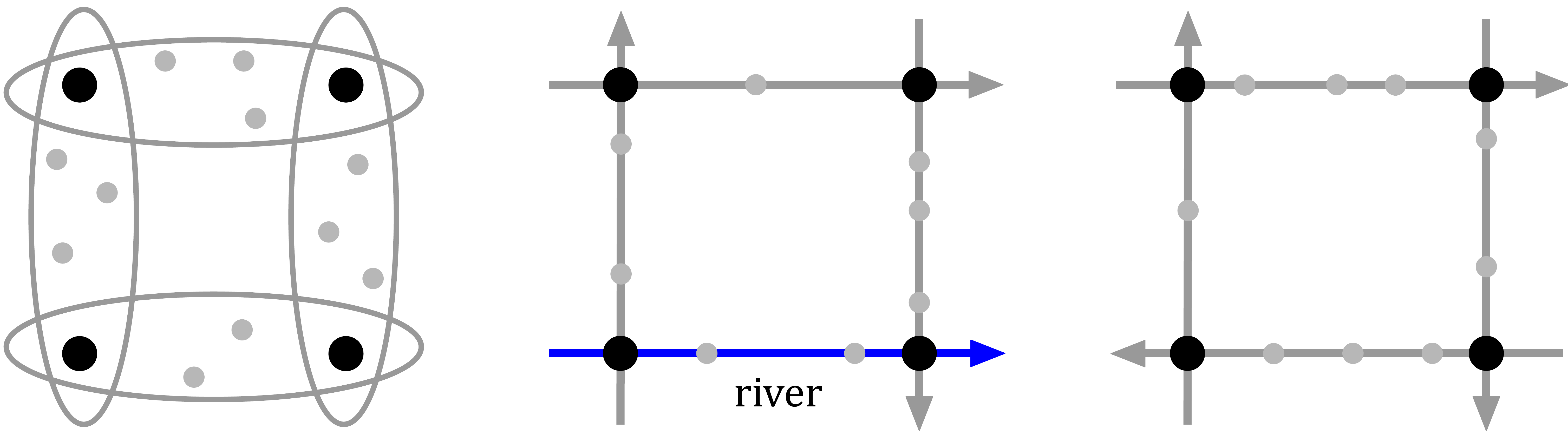}
    \end{center}
    \caption{A set system $4$-cycle, a $4$-bridge, and directed $4$-cycle (left to right).}
    \label{Figs:bridgepic}
\end{figure}

In part, this paper is a lengthy answer to this question.
Our \emph{point} is that, out of the many possible directed versions of set cycles, bridges are the ones that realize applications in network design.
Table \ref{tbl:combo} summarizes our findings to this effect, listing objects where state-of-the-art bounds can be improved, or recovered in a black-box manner, via bridge girth reductions.

\begin{table}[t]
\begin{tabular}{cllc}
\hline
\textbf{Object} & \textbf{Bound} & \textbf{Implied By} & \textbf{First Proved}\\
\hline
\multirow{2}{10em}{\centering Reachability Preservers} & $\rpp \ge \Omega(n^{\frac{2}{d+1}} p^{\frac{d-1}{d}}), d$ any pos int & Thm \ref{thm:rpreduction}, Cor \ref{cor:binftylb} & \cite{AB18, CE06}\\
& $\rpp \le O(n^{3/4} p^{1/2} + n^{5/8} p^{11/16} + n)$ & Thms \ref{thm:rpreduction}, \ref{thm:bnp4upper} & \textbf{New}\\
\hline
\multirow{3}{7em}{\centering Online Reachability Preservers} & $\rpp^* \ge \Omega(n^{2/3} p^{2/3} + n)$  & Thms \ref{thm:rpstarreduction}, \ref{thm:bstarlower} & \textbf{New}\\
 & $\rpp^* \le O(\min\{n^{1/2} p, np^{1/2}\} + n)$  & Thms \ref{thm:rpstarreduction}, \ref{thm:bstarupper} & \textbf{New}\\
 & $\rpp^* = \Thetaish(n^{4/3})$ conditionally & Hyp \ref{hyp:ordered_gap}, Thms \ref{thm:rpstarreduction}, \ref{thm:hypbstarlb} & \textbf{New}\\
\hline
\multirow{2}{*}{Path Oracles} & $\ppo \ge \Omega(n^{\frac{2}{d+1}} p^{\frac{d-1}{d}}), d$ any pos int & Thm \ref{thm:pathoracle}, Cor \ref{cor:binftylb} & \textbf{New}\\
& $\ppo \le \Oish(n^{3/4} p^{1/2} + n^{5/8} p^{11/16} + n)$ & Thms \ref{thm:pathoracle}, \ref{thm:bnp4upper} & \textbf{New}\\
\hline
\multirow{2}{*}{Shortcut Sets} & \multirow{2}{*}{$\sss \ge \Omega(n^{\frac{2}{d+1}} p^{\frac{-1}{d}}), d$ any pos int} & \multirow{2}{*}{Thm \ref{thm:sslb}, Cor \ref{cor:binftylb}} & \cite{HP18} ($p=n$)\\
& & & \textbf{New} (other $p$)\\
\hline
\hline
\multirow{2}{8em}{\centering Distance Preservers} & $\dpp \ge \Omega(n^{2/3}p^{2/3} + n)$ & Thms \ref{thm:distpreschain}, \ref{thm:bstarlower} & \cite{CE06}\\
& $\dpp \le O(\min\left\{n^{2/3} p, np^{1/2} \right\} + n)$ & Thms \ref{thm:distpreschain}, \ref{thm:cebp2}, \ref{thm:bodbnp2} & \cite{Bodwin21}\\
\hline
\multirow{2}{8em}{\centering Shortest Path Oracles} & $\spo \ge \Omega(n^{2/3}p^{2/3} + n)$ & Thms \ref{thm:shortestpathoracle}, \ref{thm:bstarlower} & \textbf{New}\\
& $\spo \le \Oish(\min\left\{n^{2/3} p, np^{1/2} \right\} + n)$ & Thms \ref{thm:shortestpathoracle}, \ref{thm:cebp2}, \ref{thm:bodbnp2} & \cite{CE06, Bodwin21}\\
\hline
\multirow{2}{*}{Exact Hopsets} & \multirow{2}{*}{$\ehh \ge \Omega\left( n^{2/3} p^{-1/3} \right)$} & \multirow{2}{*}{Thms \ref{thm:exacthopsets}, \ref{thm:bstarlower}} & \cite{KP22} ($p=n$)\\
& & & \textbf{New} (other $p$)\\
\hline
\hline
\multirow{2}{7em}{\centering Directed Flow-Cut Gap} &  \multirow{2}{*}{$\fcg \ge \widetilde{\Omega}\left(n^{1/7}\right)
$} & \multirow{2}{*}{Thm \ref{thm:fcg}, Cor \ref{cor:binftylb}} & \multirow{2}{*}{\cite{chuzhoy2009polynomial}} \\
& & & \\
\hline
Sparsest Cut Gap & $\scg \ge \widetilde{\Omega}\left(n^{1/7}\right)$ & Thm \ref{thm:fcg_sc}, Cor \ref{cor:binftylb} & \cite{chuzhoy2009polynomial} \\
\hline
\multirow{2}{10em}{\centering Directed Steiner Forest Gap} & \multirow{2}{11em}{$\dsfg \geq \Omega (n^{1/2 - o(1)})$} & \multirow{2}{*}{Thm \ref{thm:dsf_gap}, Cor \ref{cor:binftylb}} & \multirow{2}{*}{\textbf{New}}\\
& & & \\
\hline

\end{tabular}
\caption{Quantitative bounds for the problems in network design considered in this paper, implied by combining our bridge girth reductions with our bounds on $\beta, \beta^*$.  For brevity the parameters $(n, p)$ on the functions are omitted. See also Figure~\ref{Figs:Bridge}. \label{tbl:combo}}
\end{table}

But a more succinct answer is that directed cycles do not have an interesting extremal function.
One can construct a path system of size $\|S\| = np$ without directed cycles, by taking $p$ identical paths that all proceed through a sequence of the $n$ nodes in the same order.
This system would have no directed cycles (but it would have many $2$-bridges).
In contrast, we shall see shortly that the extremal function $\beta$ for path systems of high bridge girth is highly nontrivial.

\subsection{Reductions from Network Design to Bridge Girth}

\begin{table} [t]
\begin{center}
\begin{tabular}{lll}
\hline
\textbf{Object} & \textbf{Reduction} & \textbf{Theorem}\\
\hline
\textbf{Reachability Preservers} & $\rpp(n, p) = \Theta\left(\beta(n, p, \infty)\right)$ & Thm \ref{thm:rpreduction}\\
Online Reachability Preservers & $\rpp^*(n, p) = \Theta\left( \betastar(n, p, \infty) \right)$ & Thm \ref{thm:rpstarreduction}\\
Path Oracles & $\ppo(n, p) = \Thetaish(\beta(n, p, \infty))$ & Thm \ref{thm:pathoracle}\\
Shortcut Sets & $\sss(n, p) = \Omega\left( \frac{\beta(n, p, \infty)}{p} \right)$ & Thm \ref{thm:sslb}\\
\hline
\textbf{Distance Preservers} & $\Omega(\betastar(n, p, \infty)) \le \dpp(n, p) \le \betastar(n, p, 2)$ & Thm \ref{thm:distpreschain}\\
Shortest Path Oracles & $\Omega\left(\betastar(n, p, \infty)\right) \le \spo(n, p) \le \Oish\left(\betastar(n, p, 2)\right)$ & Thm \ref{thm:shortestpathoracle}\\
Exact Hopsets & $\texttt{EH}(n, p) = \Omega\left( \frac{\betastar(n, p, \infty)}{p} \right)$ & Thm \ref{thm:exacthopsets}\\
\hline
\textbf{Directed Flow-Cut Gap} &  $\fcg(\beta(n, n, \infty)) =  \widetilde{\Omega}\left(\frac{\beta(n, n, \infty)}{n}\right)
$  & Thm \ref{thm:fcg} \\
Sparsest Cut Gap & $\scg(\beta(n, n, \infty)) =  \widetilde{\Omega}\left(\frac{\beta(n, n, \infty)}{n}\right)
$ & Thm \ref{thm:fcg_sc} \\
Directed Steiner Forest Gap & $\dsfg(n,p)=\Omega\left(\frac{\beta(n, p, \infty)}{n^{3/2}}\right)$ & Thm \ref{thm:dsf_gap}\\
\hline
\end{tabular}
\caption{\label{tbl:reductions} Our results on the relationships between $\beta, \beta^*$, and objects in the literature on succinct network design.}
\end{center}
\end{table}

The main conceptual contribution of this paper is a series of reductions from problems in network design to the functions $\beta$ or $\beta^*$.
Table \ref{tbl:reductions} lists our results of this type, separated into three main technical threads.
For every row of this table, one can recover or improve the current state-of-the-art bounds for the object in question by plugging in bounds for $\beta$ or $\beta^*$ (see Table \ref{tbl:combo} for details).
Thus, (1) $\beta, \beta^*$ have arguably been under the surface in prior work on all of these problems, and (2) further improved bounds for $\beta, \beta^*$ could have widespread, black-box consequences for the area.


While new ideas are often needed to prove the bounds in Table \ref{tbl:reductions}, this is overall the less technical part of our paper; our main technical contributions lie in improved bounds for $\beta, \beta^*$, discussed next.

\subsubsection{Technical Overview: Preservers \label{sec:preservers}}

We begin with reachability preservers.
\begin{definition} [Reachability Preservers \cite{AB18}]
Let $G = (V, E)$ be a directed graph and let $P \subseteq V \times V$ be a set of demand pairs.
A reachability preserver is a subgraph $H \subseteq G$ in which, for all $(s, t) \in P$ such that there exists an $s \leadsto t$ path in $G$, there also exists an $s \leadsto t$ path in $H$.

We define $\rpp(n, p)$ as the smallest integer such that every $n$-node graph and set of $|P|=p$ demand pairs has a reachability preserver on $\le \rpp(n, p)$ edges.
\end{definition}

Extremal bounds for reachability preservers have been studied recently \cite{AB18, CCC22, CC20, BCR16}, but they had long been studied algorithmically in the context of the Directed Steiner Forest problem, which asks to compute a reachability preserver of minimum total weight of a given input instance $G, P$.
This problem is NP-hard, but the state-of-the-art approximation algorithms use \emph{extremal} bounds for reachability preservers as an ingredient \cite{CDKL17, AB18, GLQ21}.
We prove:
\begin{theorem}
$\rpp(n, p) = \Theta(\beta(n, p, \infty))$.
\end{theorem}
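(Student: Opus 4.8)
The plan is to prove the two directions separately, each by an explicit transformation: from a hard instance of reachability preservers to a large bridge-girth-$\infty$ path system, and conversely. For the direction $\rpp(n,p) = \Omega(\beta(n,p,\infty))$, I would start with an $n$-node path system $S = (V,\Pi)$ with $|\Pi| = p$, bridge girth $> \infty$ (i.e.\ no bridges at all), and $\|S\| = \beta(n,p,\infty)$. From $S$ I build a directed graph $G$ on (essentially) the node set $V$ by laying down, for each path $\pi \in \Pi$, a directed path through the nodes of $\pi$ in their given order; to keep the paths edge-disjoint and to have clean demand pairs, I would subdivide or add private endpoint nodes, so each $\pi$ contributes a dedicated source $s_\pi$ and sink $t_\pi$, giving $p$ demand pairs $P = \{(s_\pi, t_\pi)\}$. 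The graph $G$ realizes all $p$ reachability demands using $\|S\| + O(p)$ edges. The crux is to argue that \emph{every} reachability preserver $H$ of $(G,P)$ must still be large: if $H$ used substantially fewer than $\|S\|$ edges, then some demand $(s_\pi, t_\pi)$ would be routed in $H$ along a path that ``shortcuts'' through edges contributed by other paths, and chaining such shortcuts together produces exactly a bridge in $S$ (the river is $\pi$, the arcs are the other paths whose edges the $H$-path borrows, the nodes $v_i$ being the points where the $H$-path jumps between original paths). Since $S$ has no bridge, no such saving is possible, so $H$ must essentially contain all of $S$; hence $\rpp(n,p) \ge \|S\| - O(p) = \beta(n,p,\infty) - O(p)$, which is the desired bound after absorbing lower-order terms (one can also add a disjoint sparse gadget forcing $\rpp(n,p) = \Omega(\beta(n,p,\infty) + p + n)$ to match any additive slack).

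For the reverse direction $\rpp(n,p) = O(\beta(n,p,\infty))$, I would take an arbitrary worst-case instance: a directed $n$-node graph $G$ and $p$ demand pairs $P$. First reduce to a canonical preserver $H$: it is standard (and used in the preservers literature) that there is a minimal reachability preserver $H$ which is the union of $p$ paths $\pi_1, \dots, \pi_p$, one $s_i \leadsto t_i$ path per demand, and that a minimal such $H$ has no redundant edges. Interpret $H$ as a path system $S_H = (V(H), \{\pi_1,\dots,\pi_p\})$ on at most $n$ nodes and exactly $p$ paths, with $\|S_H\| \ge |E(H)| + p \ge |E(H)|$. The key claim is that a \emph{minimal} such $H$ has no bridge: a $b$-bridge would give, via the river path $\pi_b$ together with the arcs, a detour proving that some edge of $\pi_b$ is not needed — following the river from $v_1$, then cutting across the arcs $\pi_{b-1}, \dots, \pi_1$ from $v_b$ back down to $v_1$ in the reverse-river orientation, wait — more carefully, the bridge certifies that the portion of the river between $v_1$ and $v_b$ can be replaced by a walk through the arcs, contradicting minimality (this mirrors exactly why the forward direction's shortcuts correspond to bridges). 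Hence $S_H$ has bridge girth $>\infty$, so $|E(H)| \le \|S_H\| \le \beta(n, p, \infty)$, giving $\rpp(n,p) \le \beta(n,p,\infty)$.

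The main obstacle I anticipate is making the correspondence ``shortcut in the preserver $\iff$ bridge in the path system'' airtight in both directions simultaneously, including the orientation bookkeeping: a bridge has its river traversed ``backwards'' relative to the arcs, and one must check that the reachability detour really uses this reversed river / forward arcs pattern rather than a directed-cycle pattern (which, as the excerpt notes, would be useless since directed-cycle-free systems can be huge). Concretely, when an $H$-path for demand $(s_\pi, t_\pi)$ leaves $\pi$ at some node $v_1$, travels along other original paths, and re-enters $\pi$ at a later node $v_b$, the segment of $\pi$ from $v_1$ to $v_b$ is the river (with $v_1$ before $v_b$, consistent with the definition), and the borrowed segments are the arcs, each traversed forward. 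I would need a short lemma showing the re-entry point is always \emph{later} on $\pi$ than the exit point in any shortest/minimal routing (otherwise the detour could be removed), which is what pins down the bridge orientation. A secondary, purely bookkeeping obstacle is handling multi-edges, repeated nodes, and the $+O(p)$, $+O(n)$ additive terms so that the $\Theta(\cdot)$ is literally correct; I would dispose of these by the standard device of attaching disjoint padding gadgets and by noting $\beta(n,p,\infty) = \Omega(n+p)$ trivially.
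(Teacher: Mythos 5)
Your lower-bound direction is essentially the paper's argument: build a directed graph from a bridge-free path system, observe that bridge-freeness forces each demand pair to have a unique path and that $2$-bridge-freeness forces pairwise edge-disjointness, so no edge can be dropped. The extra machinery you introduce (subdivision, private endpoints) is unnecessary---the paper just takes the demand pairs to be the path endpoints and notes $\beta(n,p,\infty) = \Omega(p)$---but the idea is right.

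The upper-bound direction, however, has a genuine gap, and it is precisely the pitfall the paper calls out explicitly. Your ``key claim'' is that a \emph{minimal} reachability preserver $H$, interpreted as a path system $S_H$ on the chosen demand paths $\pi_1,\dots,\pi_p$, has no bridges. This is false. Consider two demand pairs whose unique routes in $H$ both pass through the same edge $(u,v)$: then $u$ precedes $v$ on both paths, so $S_H$ has a $2$-bridge, yet no edge of $H$ is redundant---both demand pairs genuinely need that segment. More generally, any time two demand paths in a minimal preserver overlap on a subpath of length $\ge 1$ (which cannot be avoided in general, e.g.\ when many demand pairs must funnel through a bottleneck), $S_H$ has $2$-bridges, so $\|S_H\| \le \beta(n,p,\infty)$ does not follow. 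Your proposed fix---a detour/``shortcut removal'' argument showing bridges imply redundancy---only addresses bridges where the arcs give a genuinely different $v_1 \leadsto v_b$ route than the river; it does not touch the shared-edge $2$-bridges, which is where the real obstruction lives.

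The paper resolves this with a much stronger structural result, the Independence Lemma (Lemma~\ref{lem:rpindep}): there \emph{exists} a worst-case instance $(G,P)$ achieving $\rpp(n,p)$ in which every demand pair has a unique path and these paths are pairwise edge-disjoint. Crucially, the lemma does not merely select paths more carefully inside a fixed hard instance---it \emph{modifies the graph itself} by repeatedly performing ``edge skip'' operations (delete $(x,y)$, add $(x,z)$ with a topologically-earliest choice of $y$) while keeping the edge count invariant and preserving the property that every edge is required by some demand pair. This produces a new, different hard instance that happens to be independent, at which point the interpretation as a bridge-free path system is immediate. Without something of this kind, the reduction to $\beta(n,p,\infty)$ does not go through, so the upper bound in your proposal is unproved.
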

That is, the extremal bounds for reachability preservers are entirely captured by the value of $\beta$.
This reduction is perhaps our most consequential one: with our improved upper bounds on $\beta(n, p, 4)$ discussed later, this implies a polynomial improvement in the extremal bounds for reachability preservers.
\begin{corollary}
Every $n$-node graph and set of $p$ demand pairs has a reachability preserver on $O(n^{3/4} p^{1/2} + n^{5/8} p^{11/16} + n)$ edges.
\end{corollary}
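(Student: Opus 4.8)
The plan is to obtain the corollary immediately by composing two facts. First, Theorem~\ref{thm:rpreduction} gives $\rpp(n,p) = \Theta(\beta(n,p,\infty))$; we only need the direction $\rpp(n,p) = O(\beta(n,p,\infty))$, i.e.\ that it suffices to upper bound the extremal function $\beta(n,p,\infty)$. Second, a path system with no bridge of \emph{any} size in particular has no $4$-bridge, so ``bridge girth $\infty$'' implies ``bridge girth $>4$'', and hence $\beta(n,p,\infty) \le \beta(n,p,4)$. Combining these with the extremal bound $\beta(n,p,4) = O(n^{3/4}p^{1/2} + n^{5/8}p^{11/16} + n)$ (Theorem~\ref{thm:bnp4upper}) yields $\rpp(n,p) = O(n^{3/4}p^{1/2} + n^{5/8}p^{11/16} + n)$, which is exactly the claim. (One could instead try to bound $\beta(n,p,\infty)$ directly, exploiting the absence of \emph{all} bridges rather than just short ones, but since our target matches the $k=4$ bound we lose nothing by proving the formally stronger statement, and everything downstream is black-box from it.)

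Thus all the real work sits in Theorem~\ref{thm:bnp4upper}, and that is the step I expect to be the main obstacle. To prove it I would take a path system $S = (V,\Pi)$ with $n$ nodes, $p$ paths, and bridge girth $>4$, and run a double-counting argument that squeezes structure out of the three forbidden configurations in increasing order of strength. The absence of a $2$-bridge says precisely that every ordered pair $(u,v)$ appears, in that order, in at most one path; equivalently, if we form the auxiliary digraph $D$ on $V$ with an arc $u \to v$ whenever some path contains $u$ before $v$, each arc of $D$ is \emph{owned} by a unique path. The absence of $3$- and $4$-bridges then becomes a family of owner-coincidence constraints along short transitive patterns in $D$ (for instance, if $v_1 \to v_2 \to v_3$ and $v_1 \to v_3$ all appear, the three owners cannot be pairwise distinct). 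After discarding short paths and nodes lying on few paths, whose combined contribution is absorbed into the $O(n)$ term, I would bound $\|S\| = \sum_{\pi \in \Pi} |\pi|$ by counting the \emph{branching events} of $S$ (places where two paths diverge or converge) with suitable weights, and then balance the resulting estimate across two regimes; the odd exponents $3/4, 1/2$ and $5/8, 11/16$ should drop out as the optimum of this trade-off. The flavor here is a forbidden-configuration (bipartite-girth / \Turan-type) count, transplanted to the directed setting of path systems.

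I anticipate two delicate points. The first is extracting enough from $4$-bridge-freeness to beat what $2$-bridge-freeness alone buys: from $2$-bridge-freeness one gets $\sum_{\pi} \binom{|\pi|}{2} \le \binom{n}{2}$ and hence, by Cauchy--Schwarz, only $\|S\| = O(n\sqrt{p})$, so the improvement must come from controlling how the branchings of distinct paths interleave, which is exactly what forbidding $3$- and $4$-bridges restricts. The second is making the case analysis over path lengths and branching multiplicities tight enough that the optimization genuinely produces the stated two-term bound rather than something weaker. By contrast, the reduction of Theorem~\ref{thm:rpreduction} and the monotonicity $\beta(n,p,\infty) \le \beta(n,p,4)$ are routine, so essentially all of the difficulty is concentrated in Theorem~\ref{thm:bnp4upper}.
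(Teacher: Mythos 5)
Your composition is exactly the paper's proof of this corollary, and it is correct: Theorem~\ref{thm:rpreduction} gives $\rpp(n,p) = \Theta(\beta(n,p,\infty))$, bridge girth $\infty$ in particular implies bridge girth $>4$ so $\beta(n,p,\infty) \le \beta(n,p,4)$, and Theorem~\ref{thm:bnp4upper} closes it, with the extra $+p$ term in that theorem absorbed by $n^{5/8}p^{11/16}$ in the range $p \le O(n^2)$ where $\rpp$ is defined. One caveat on your accompanying sketch of Theorem~\ref{thm:bnp4upper}: the forward-search, forbidden-transitive-pattern counting you outline (owners of arcs in an auxiliary digraph, branching events) is the style that proves $\beta(n,p,3)$, and the paper explicitly notes that this approach breaks down for $4$-bridges; its actual argument instead samples a random base path $\pi_b$, forms the induced subsystem $S'$ of nodes near $\pi_b$, and controls the squared $L^2$ norm $\|S'\|_2^2$ of path lengths. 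Since you invoke Theorem~\ref{thm:bnp4upper} as a black box, this discrepancy does not affect the validity of the corollary derivation itself.
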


The previous upper bound was $O(n^{2/3} p^{2/3} + n)$ \cite{AB18}.
The lower bound $\rpp \ge \Omega(\beta(n, p, \infty))$ is straightforward and perhaps implicit in \cite{AB18}, but the upper bound $\rpp(n, p) \le O(\beta(n, p, \infty)$ takes more work.
A natural proof attempt might be to take a hard input instance $G, P$ for reachability preservers requiring $\rpp(n, p)$ edges, carefully choose a path for each demand pair, interpret these choices as a path system, and hope that the resulting path system has bridge girth $\infty$ and therefore size $\le \beta(n, p, \infty)$.
Unfortunately, this attempt fails: for some inputs $G, P$, it is not possible to choose paths that yield a path system of bridge girth $\infty$.
Specifically, this may not work on input instances $G, P$ that have several possible paths for each demand pair, or where these paths are not edge-disjoint (the overlapping parts of paths count as $2$-bridges).

Our solution is perhaps conceptually unusual: we do not attempt to handle these troublesome input instances $G, P$ at all.
Instead, we prove an \emph{independence lemma}, showing that there exist highly structured hard input instances realizing $\rpp(n, p)$.
This structure allows us to map these particular structured worst-case instances to systems of bridge girth $\infty$, which is enough for an extremal reduction between $\rpp(n, p)$ and $\beta(n, p, \infty)$.
We prove an analogous independence lemma for online reachability preservers\footnote{Our model of online reachability preservers is a slight variant of the one introduced recently by Grigorescu, Lin, and Quanrud \cite{GLQ21}.}.
These independence lemmas are also the missing ingredient towards an incompressibility theorem for reachability preservers: we show that no \emph{data structure} (not necessarily a subgraph) can encode paths among demand pairs with better space efficiency than a reachability preservers, which yields our reduction for path oracles.

A \emph{distance preserver} is a subgraph that preserves distance among demand pairs, not just reachability.
Distance preservers were introduced by Coppersmith and Elkin \cite{CE06}, and extremal bounds for distance preservers were studied in \cite{CE06, Bodwin21, BCE05, BV21, CGMW18, CDKL17}.
\begin{definition} [Distance Preservers \cite{CE06}] \label{def:dps}
Let $G=(V, E, w)$ be a directed weighted graph and let $P \subseteq V \times V$ be a set of demand pairs.
A distance preserver is a subgraph $H \subseteq G$ in which, for all $(s, t) \in P$, we have $\dist_H(s, t) = \dist_G(s, t)$.

We define $\dpp(n, p)$ as the least integer such that every $n$-node graph and set of $|P|=p$ demand pairs has a distance preserver on $\le \dpp(n, p)$ edges.
\end{definition}

We prove:
\begin{theorem}
$\Omega(\beta^*(n, p, \infty)) \le \dpp(n, p) \le \beta^*(n, p, 2)$.
\end{theorem}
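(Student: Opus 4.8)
The plan is to establish the two inequalities separately. For the upper bound $\dpp(n,p) \le \betastar(n,p,2)$, I would take a worst-case instance $G=(V,E,w)$ with demand set $P$, $|P|=p$, on $n$ nodes requiring $\dpp(n,p)$ edges in any distance preserver. First I would invoke the classical fact (from Coppersmith--Elkin \cite{CE06}) that $G$ may be assumed to have \emph{unique} shortest paths for each demand pair — by perturbing edge weights, or by passing to a minimal distance preserver $H$, the shortest $s \leadsto t$ path in $H$ is unique and every edge of $H$ lies on at least one such path. This lets me define a path system $S = (V, \Pi)$ where $\Pi$ consists of the $p$ canonical shortest paths $\pi_{(s,t)}$, one per demand pair. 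The size $\|S\| = \sum_{(s,t)} |\pi_{(s,t)}| \ge |E(H)| = \dpp(n,p)$, since the union of these paths is exactly $H$ and each edge is counted at least once (in fact, the branching/consistency structure of shortest paths forces $\|S\|$ to be within a constant of $|E(H)|$, but for the inequality as stated one direction suffices). I would then equip $\Pi$ with an \emph{arbitrary total order} and argue that $S$ has ordered bridge girth $> 2$: an ordered $2$-bridge is a pair of distinct paths $\pi_1$ (arc), $\pi_2$ (river) with $\pi_1$ before $\pi_2$, and nodes $v_1, v_2$ with $v_1$ preceding $v_2$ on \emph{both} $\pi_1$ and $\pi_2$ — i.e.\ two shortest paths sharing an ordered pair of vertices in the same relative order. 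But by uniqueness of shortest paths, if $v_1$ precedes $v_2$ on two shortest paths, the subpath from $v_1$ to $v_2$ is \emph{the} shortest $v_1 \leadsto v_2$ path and is identical on both — this alone is not yet a contradiction, so the key subtlety is that I should instead quotient out shared subpaths: whenever two canonical paths share the segment from $v_1$ to $v_2$, they share that entire segment, and a standard ``consistent shortest path tree / branching event'' argument (exactly as in \cite{CE06}) shows the arising path system can be taken to genuinely avoid ordered $2$-bridges after this cleanup. Thus $\dpp(n,p) \le \|S\| \le \betastar(n,p,2)$.

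For the lower bound $\dpp(n,p) \ge \Omega(\betastar(n,p,\infty))$, I would take an ordered path system $S = (V,\Pi)$ with $n$ nodes, $p$ paths, and ordered bridge girth $\infty$, of size $\|S\| = \betastar(n,p,\infty)$, and build a weighted directed graph realizing it. The construction: for each path $\pi = (u_0, u_1, \dots, u_\ell)$ in $\Pi$, add the edges $(u_{i}, u_{i+1})$ to $G$; the demand pairs are the endpoints $P = \{(u_0, u_\ell) : \pi \in \Pi\}$. I must choose weights so that each $\pi$ is the \emph{unique} shortest path for its demand pair, forcing every edge of $G$ (hence $\ge \|S\|/2 = \Omega(\betastar)$ edges) into the preserver. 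The natural weighting is layered/lexicographic: assign node $v$ a ``rank'' equal to its maximum position over all paths containing it (or process paths in their given order and assign weights greedily), so that the absence of bridges translates into: any alternative $u_0 \leadsto u_\ell$ route would have to ``go backward'' relative to some path's order, contradicting the ordered-bridge-freeness. The ordering on $\Pi$ is precisely what lets this weighting be globally consistent — an ordered $b$-bridge for any $b$ is exactly the configuration that would let a cheaper detour exist, so ordered bridge girth $\infty$ is the right hypothesis.

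The main obstacle I anticipate is the weight assignment in the lower bound: I need a single weight function $w: E \to \mathbb{R}_{>0}$ under which, \emph{simultaneously} for all $p$ demand pairs, the designated path $\pi$ is the strict shortest path. This is where the \emph{ordered} nature of the path system and the absence of bridges of \emph{every} size must be combined — a bridge of size $b$ corresponds exactly to a cyclic alternating sequence of ``forward along an arc / backward along the river'' steps, which is precisely an obstruction to the existence of a potential function (node weights) certifying all paths shortest. I would make this precise by setting up the potential/feasibility LP: shortest-path-uniqueness for path $\pi_j$ means there is a node potential $\phi_j$ with $\phi_j(v) - \phi_j(u) \le w(u,v)$ for all edges, tight exactly along $\pi_j$; showing one global $w$ works for all $j$ amounts to showing a certain system has no negative cycle, and the ordered-bridge-freeness is exactly the combinatorial certificate of that. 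The upper-bound direction's main subtlety, by contrast, is the cleanup step ensuring the constructed path system is genuinely $2$-bridge-free rather than merely ``locally consistent,'' which I expect to follow by the same machinery Coppersmith and Elkin use to bound $|E(H)|$ in terms of branching events — so I would lean on \cite{CE06} heavily there.
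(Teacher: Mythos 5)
Your lower-bound argument is essentially the paper's: process the paths in their given order, add each path's consecutive node pairs as new edges, and assign the new edges sufficiently large weights so that earlier demand pairs cannot benefit from them. The two facts you need (and should state explicitly) are (i) $2$-bridge-freeness makes the paths pairwise edge-disjoint, so you really are adding $\|S\|-p$ distinct edges, and (ii) ordered bridge girth $\infty$ makes each $\pi(s,t)$ the \emph{unique} $s\leadsto t$ path in the graph at the moment it is added, so it is automatically a unique shortest path regardless of weights. Once you have (ii), there is no LP or potential-function machinery to set up — the ``sufficiently large'' choice works because uniqueness is already a purely combinatorial fact about the partially-built graph. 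Your alternative idea of ranking each node by its maximum position over all containing paths is unlikely to work as stated: unordered path systems of bridge girth $\infty$ need not be acyclic, and even after passing to the acyclic case there is no reason a single global rank makes every designated path the shortest one. Commit to the greedy-over-the-ordering version.

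For the upper bound there is a genuine gap. You correctly notice that uniqueness of shortest paths does \emph{not} give $2$-bridge-freeness — two unique shortest paths that share a maximal common subsegment $u\leadsto v$ form a $2$-bridge. But your fix, ``quotient out shared subpaths,'' is not a thing that \cite{CE06} does, and it is not clear how to make it work. Deleting the overlapping internal nodes from one path is a destructive operation on the \emph{path system}: it can shrink $\|S\|$ below $\dpp(n,p)$, and it does not obviously terminate with a $2$-bridge-free system (a node pair $(u,v)$ may reappear as a $2$-bridge between a different pair of shortened paths). What the paper actually proves, and needs, is an \emph{independence lemma}: one can modify the worst-case \emph{graph instance} $G,P$ itself — first by random reweighting to force unique shortest paths, then by an iterative ``edge-skip'' operation (replace a two-edge subpath $(x,y,z)$, where $(x,y)$ is uniquely used by $(s,t)$ but $(y,z)$ is not, by a new edge $(x,z)$) — so that one lands on an extremal instance whose unique shortest paths are pairwise \emph{edge-disjoint}, while the preserver size stays exactly $\dpp(n,p)$. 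That preservation of the extremal count under surgery on $G$ is the crux, and it is new to this paper rather than inherited from the branching-event counting of Coppersmith and Elkin (which bounds $|E(H)|$ but does not produce an edge-disjoint instance). Your proposal does not supply a substitute for this step, so the inequality $\dpp(n,p)\le\betastar(n,p,2)$ is not established as written.
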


This time, our main conceptual contribution is the lower bound $\dpp(n, p) \ge \Omega(\beta^*(n, p, \infty))$, based on realizing ordered path systems as unique shortest paths in a graph.
The upper bound $\dpp(n, p) \le \beta^*(n, p, 2)$ is arguably implicit in \cite{CE06}, and is based on a well-known connection to \emph{consistent path systems}: that is, it follows from the simple observation that no two unique shortest paths in a graph may intersect, split apart, and then intersect again later.\footnote{Since $\beta^*(n, p, 2) = \beta(n, p, 2)$, and so we could have just as well written $\beta(n, p, 2)$ for the upper bound on $\dpp$.  We chose $\beta^*(n, p, 2)$ because it suggests an open question: since $\dpp$ is sandwiched between $\beta^*(n, p, \infty)$ and $\beta^*(n, p, 2)$, can it be placed more precisely in the $\beta^*$ hierarchy?}
However, we note that the following section contains a tight lower bound on $\beta^*(n, p, 2) = \beta(n, p, 2)$.
This implies a major technical limitation to further progress on distance preservers: the power of consistency has been pushed to its limit, and so if we are to improve the state-of-the-art upper bounds for distance preservers (see Table \ref{tbl:combo}), we \emph{must} rely on more intricate structural properties of shortest paths.

We also prove an analogous independence lemma for distance preservers, which implies an analogous incompressibility theorem: no data structure can record shortest paths among demand pairs with significantly better space efficiency than a distance preserver.
This yields our reduction for shortest path oracles.

\subsubsection{Technical Overview: Flow-Cut and Integrality Gaps}

Finally, we consider integrality gaps for three problems in network design: Directed Multicut (\DMC), Directed Sparsest Cut (\DSC), and Directed Steiner Forest (\DSF).
%
The standard integrality gaps for \DMC and \DSC are often interpreted as \textit{flow-cut gaps}, as we explain next.

In the \DMC problem, we are given a graph $G = (V, E)$ and a set of demand pairs $P \subseteq V \times V$, and the objective is to find a minimum-size subset of $E$ whose removal separates all pairs of nodes in $P$.
The Maximum Multicommodity Flow (\MMF) problem asks for the maximum total flow that can be simultaneously pushed between the demand pairs (under unit edge capacities).
\MMF is the LP dual of the fractional relaxation $\widehat{\DMC}$ of $\DMC$.
Thus, for any input $G, P$, we have
$$\MMF(G, P) = \widehat{\DMC}(G, P) \le \DMC(G, P).$$
The famous min-cut max-flow theorem states that we have equality when $|P|=1$, but we do \emph{not} have equality in general.
It is interesting to study the maximum possible ratio between these terms, as an approximate version of the min-cut/max-flow theorem, and which often has applications in approximation algorithms.
For undirected graphs, the seminal work of Leighton and Rao showed that the maximum possible ratio is $\Theta(\log p)$ \cite{LR99, GVY96}.
We will be interested in the corresponding quantity for directed graphs, called the \emph{directed flow-cut gap}:
\begin{definition} [$\fcg$]
The function $\fcg(n)$ is the least integer $k$ such that, for every $n$-node directed graph $G$ and set of demand pairs $P$ (of any size), we have
$\DMC(G, P) \le k \cdot \MMF(G, P).$
\end{definition}


This function $\fcg$ has been studied in \cite{AAC07, CKR01, Gupta03, KKN05, LR99, SSZ04}.
On the lower bounds side, an important paper by Chuzhoy and Khanna was the first to show that the bound is polynomial, with a lower bound of $\fcg(n) = \Omegaish(n^{1/7})$ \cite{chuzhoy2009polynomial}.
On the upper bounds side, the current bound is $\fcg(n) = \Oish(n^{11/23})$ \cite{AAC07}.
We show the following reduction, which recovers the lower bound from \cite{chuzhoy2009polynomial} (see Table \ref{tbl:combo}):
\begin{theorem}\label{thm:fcg:weak}
$\fcg(\beta(n, n, \infty)) = \widetilde{\Omega}\left(\frac{\beta(n, n, \infty)}{n}\right)$.
\end{theorem}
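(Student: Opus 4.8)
I would prove this by a reduction from the bridge-girth problem. Fix a path system $S = (V,\Pi)$ realizing $\beta(n,n,\infty)$, so $|V| = n$, $|\Pi| = n$, $\|S\| = \beta(n,n,\infty) =: m$, and $S$ has no $b$-bridge for any $b$. The goal is to build a directed graph $G$ on $N = \Theta(m)$ nodes and a demand set $P$ with $|P| = n$ such that $\DMC(G,P) \geq \Omegaish(m/n)\cdot\MMF(G,P)$; since $\fcg(N) \geq \DMC(G,P)/\MMF(G,P)$ and $N = \Theta(\beta(n,n,\infty))$, this is exactly the claim. As a first step I would normalize $S$ by attaching to each $\pi_i \in \Pi$ a fresh private source $a_i$ and sink $b_i$, so every path becomes an $a_i \leadsto b_i$ path whose endpoints have degree one and belong to no other path. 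This preserves bridge girth $\infty$ --- a fresh degree-one endpoint can never be one of the nodes of a bridge, since every bridge node lies on two distinct paths --- and it changes $\|S\|$ by only $2n = O(m)$.

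\textbf{Construction and the role of ``no bridges''.} The construction replaces each $v \in V$ by a small shared routing gadget $\Gamma_v$ used by all paths through $v$, and realizes $\pi_i = (a_i, v_{i,1},\dots,v_{i,\ell_i}, b_i)$ as a directed route that threads $\Gamma_{v_{i,1}},\dots,\Gamma_{v_{i,\ell_i}}$ in order via private connectors; the demand set is $\{(a_i, b_i)\}_i$. The combinatorial input from $S$ is a \emph{canonical-routing lemma}: every simple $a_i \leadsto b_i$ path in $G$ visits $\Gamma_{v_{i,1}},\dots,\Gamma_{v_{i,\ell_i}}$ in exactly this order and performs no ``detour.'' This is precisely where bridge girth $\infty$ is used: a detour for pair $i$ would leave $\pi_i$ at some node $u$, follow forward-oriented arcs of other paths through a sequence of intermediate nodes, and rejoin $\pi_i$ at a later node $u'$ --- but that is a bridge whose river is $\pi_i$ and whose arcs are the borrowed paths, contradicting that $S$ is bridge-free. (Taking a shortest such detour makes the borrowed arcs pairwise distinct, which handles the one degenerate case.) Hence all demand pairs are ``pinned'' to their paths, and distinct pairs interact only through the shared gadgets $\Gamma_v$; this pinning underlies both the flow bound and the cut bound.

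\textbf{The two bounds, the amplification, and the main obstacle.} On the flow side the canonical-routing lemma does most of the work: all simultaneously routed flow is forced through the shared gadgets, so a capacity / fractional-matching count over the $n$ nodes of $V$ gives $\MMF(G,P) \leq \Oish(n)$. The real difficulty is obtaining a \emph{polynomial} flow-cut gap, because a single level of the construction above has only a logarithmic gap --- it amounts to a hitting-set-versus-fractional-matching instance on $n$ sets, whose integrality gap is $O(\log m)$. I therefore expect the technical core to be a recursive composition in the spirit of Chuzhoy and Khanna \cite{chuzhoy2009polynomial}, where the bridge-free property is exactly the invariant that survives the recursion --- at every level all routes stay canonical, so no detour can cheapen a multicut --- and the recursion amplifies the per-level logarithmic gap up to $\Omegaish(m/n)$ while, for a suitable choice of parameters, the final instance still has only $\Theta(m)$ nodes. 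The main obstacle is thus the joint design of the gadget $\Gamma_v$ and this recursion so that, simultaneously at every level, routes remain canonical, the flow stays congested, and every multicut stays expensive; equivalently, one must re-derive the Chuzhoy--Khanna hard instance using an arbitrary bridge-free path system in place of their particular combinatorial skeleton, which is what makes future improvements on $\beta(n,n,\infty)$ translate directly into improvements on $\fcg$.
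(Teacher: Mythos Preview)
Your high-level plan matches the paper: use bridge-freeness to force canonical routing, then plug the resulting ``labeling scheme'' into the Chuzhoy--Khanna machinery in place of their hand-built combinatorial object. That is exactly what the paper does, and your canonical-routing lemma is precisely Observation~\ref{obs:G_S_props}(1).

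Where you drift is the amplification step. You describe it as a \emph{recursive composition}; the paper (following \cite{chuzhoy2009polynomial}) does \emph{not} recurse. It takes a \emph{single} product $G = G_S \times H$, where $H$ is a random layered graph on $d$ nonterminal nodes with $d' = d/(2\log d)$ layers of $\log d$ nodes each. The two factors play complementary roles: $G_S$ forces every $s_{i,j}\leadsto t_{i,j}$ path in the product to be canonical (this is where bridge girth $\infty$ is used), and $H$ then guarantees that every canonical path has length $\ge d'$. So the fractional vertex cut is immediate: put weight $1/d'$ on every nonterminal node, total cost $N/d'$ where $N = nd = \beta(n,n,\infty)$. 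The integral lower bound $\Omega(N)$ comes from a robustness property of the random graph $H$ (Lemma~\ref{lem:big_multicut}): any set of $\le d/16$ nodes fails to disconnect most demand pairs in $H$, and this lifts to the product via a counting argument over ``$H$-good'' and ``$S$-good'' pairs. The gap is then $\Omega(d') = \widetilde{\Omega}(d) = \widetilde{\Omega}(\beta(n,n,\infty)/n)$.

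Your framing of the single-level obstruction as ``hitting-set-versus-fractional-matching with $O(\log m)$ gap'' is also not quite the right diagnosis; the issue is rather that without $H$, paths in $G_S$ alone need not be long, so the fractional cut is not cheap. The random graph $H$ is what simultaneously lengthens all canonical paths and resists small integral cuts --- no recursion is needed.
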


At a technical level, this theorem closely follows the construction of Chuzhoy and Khanna \cite{chuzhoy2009polynomial}.
Their construction uses a particular path system construction as an internal ingredient, which may be interpreted as a $\beta(n, n, \infty)$ lower bound system that also has many additional convenient properties (e.g., it is layered and highly symmetric).
Our contribution is a generalization of their analysis, to show that these convenient properties are not really necessary, and one can plug in any system achieving $\beta(n, n, \infty)$ as a black box.
Similarly, we obtain a reduction for the \emph{sparsest cut problem}, again based on \cite{chuzhoy2009polynomial}.

Finally, in the Directed Steiner Forest problem (\DSF), we are given a weighted directed graph $G = (V, E, w)$ and a set of demand pairs $P \subseteq V \times V$, and the goal is to find  a  minimum weight subgraph $H \subseteq G$ that contains a directed $s \leadsto t$ path for all $(s, t) \in P$.
\DSF is NP-hard, but it has a natural integer programming formulation (see Section \ref{sec:dsfgap}), and studying the integrality gap of its LP relaxation is a natural step towards designing approximation algorithms.

This integrality gap and related approximation algorithms have been studied in \cite{li2022polynomial, BermanBMRY13, FKN12}.
Letting $\dsfg(n, p)$ be the integrality gap for instances with $n$ nodes and $p$ demand pairs, the previous best bound was that there exists $p$ for which $\dsfg(n, p) = \Omega(n^{0.0418})$ \cite{li2022polynomial}, implied by work on Directed Steiner Tree.
We show:
\begin{theorem}
\label{thm:dsfg}
    $\dsfg(n, p) = \Omega\left( \frac{\beta(n, p, \infty)}{n^{3/2}} \right)$ for  $p \leq n^{2-o(1)}$.
\end{theorem}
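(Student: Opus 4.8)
The plan is to exhibit a single \DSF instance whose integral optimum is $\Omega(\beta(n,p,\infty))$ while its LP optimum is $O(n^{3/2})$; the ratio of the two is the claimed gap, and $p\le n^{2-o(1)}$ is exactly the regime in which the construction's parameters stay consistent (for larger $p$ the instance degenerates toward an all-pairs instance, whose natural LP has no polynomial gap). Following the style of polynomial integrality gaps for Directed Steiner Tree, the instance has a \emph{rigid core} $G_0$ plus a \emph{dispersal gadget} $H$. For $G_0$ I take the realization of a path system $S=(V,\Pi)$ of bridge girth $\infty$ and size $\|S\|=\beta(n,p,\infty)$: the nodes are $V$, every path $\pi=(v_1,\dots,v_\ell)\in\Pi$ contributes the unit-weight arcs $v_1\to v_2,\dots,v_{\ell-1}\to v_\ell$, and the demand for $\pi$ is the endpoint pair $(v_1,v_\ell)$. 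Exactly as in the reachability-preserver reduction of Theorem~\ref{thm:rpreduction}, bridge girth $\infty$ forces each demand to have a \emph{unique} $s\leadsto t$ path in $G_0$, so $G_0$ is its own unique reachability preserver and the integral optimum of $(G_0,P)$ is $\Theta(\rpp(n,p))=\Theta(\|S\|)=\Theta(\beta(n,p,\infty))$. The gadget $H$ lives on $O(n)$ extra nodes and has $\Theta(n^{3/2})$ arcs; its only purpose is to drive the LP value down to $O(n^{3/2})$ without helping integral solutions. I would also take $S$ to be a \emph{balanced} extremal system (no path much longer than $\|S\|/p$), which the known lower-bound constructions already satisfy.

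First I would establish the integral lower bound for $G=G_0\cup H$. Give each gadget arc weight of order $\sqrt n$, chosen so that any integral sub-solution of total weight $o(\beta(n,p,\infty))$ buys too few gadget arcs to provide $s\leadsto t$ routes for more than $o(p)$ demands. The remaining $\Omega(p)$ demands must then be connected entirely inside $G_0$, and by balancedness any $\Omega(p)$ of the demands of $S$ have total (unique) path length $\Omega(\|S\|)$, so those arcs alone weigh $\Omega(\beta(n,p,\infty))$. Hence $\DSF(G,P)=\Omega(\beta(n,p,\infty))$. A technical point to discharge here is that attaching $H$ does not create genuinely cheaper $s\leadsto t$ routes that let an integral solution bypass the core; this follows from the weight choice together with the bridge-girth uniqueness in $G_0$.

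Next I would prove the LP upper bound $\widehat{\DSF}(G,P)=O(n^{3/2})$ by writing down an explicit fractional solution that routes all $p$ demands simultaneously through $H$. The leverage is that the \DSF LP charges each arc its \emph{maximum} per-commodity flow rather than the sum over commodities, so demands share gadget capacity for free. I would lay out $H$ so that every source reaches, and every sink is reached from, a common dispersal structure of width $\Theta(\sqrt n)$, with cross-connections arranged so that assigning each of the $\Theta(n^{3/2})$ gadget arcs a capacity of order $n^{-1/2}$ already supports a full unit of $s\leadsto t$ flow for every demand at once (up to an $o(1)$ slack routed through $G_0$ at negligible extra cost). The weighted cost is then $\Theta(n^{3/2})\cdot\Theta(n^{-1/2})\cdot\Theta(\sqrt n)=\Theta(n^{3/2})$, as required. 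Combining the two bounds and plugging in the $\beta(n,p,\infty)$ lower bound of Corollary~\ref{cor:binftylb} recovers the quantitative statement in Table~\ref{tbl:combo}.

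The step I expect to be the main obstacle is the joint design of $H$: it must be simultaneously \emph{fractionally shareable} --- a low-weight \emph{fractional} selection connects all $p$ demands --- and \emph{integrally sparse} --- every low-weight \emph{integral} selection connects almost none of them. This is a genuine tension, because any substructure cheap enough for the LP to exploit is a priori equally available to an integral solution, and unlike the continuous flow, an integral purchase of a gadget arc is shared by every demand that could route through it. Resolving this is precisely the phenomenon behind dense graphs with few large induced matchings (Ruzsa--\Szemeredi-type designs), and getting the bookkeeping to line up --- gadget weight $\Theta(n^{3/2})$, per-arc fractional capacity $\Theta(n^{-1/2})$, demand count up to $n^{2-o(1)}$, and the balancedness of $S$ --- is the delicate part of the argument.
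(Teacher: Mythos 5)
Your proposal and the paper diverge at the foundational level, and the divergence matters: the paper does \emph{not} build any auxiliary gadget graph $H$, and in fact it does not work with the edge-weighted \DSF LP directly at all. Instead, it reduces to the \emph{Vertex} Directed Steiner Forest problem ($\textsc{VDSF}$; see Claim~\ref{claim:vdsf_lp}) and exploits a structural fact that is specific to vertex costs. The paper uses a source-restricted version of the cleaning lemma to arrange that all $p$ paths of the extremal system $S$ emanate from a small set $X$ of $\Theta(p/d)$ sources, and then groups demands: for each $x\in X$ there is one demand pair $(x,y_x)$, and the $\Theta(d)$ paths of $\Pi_x$ become $\Theta(d)$ internally vertex-disjoint $x\leadsto y_x$ routes in the instance (internal disjointness is forced by $2$-bridge-freeness). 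This gives a fractional vertex solution of cost $O(n/d)$ with \emph{no} extra gadget --- the ``dispersal structure'' you want to engineer with $H$ is already sitting inside $S$ for free. The integral lower bound is then a counting argument that is only available in the vertex-cost setting: any feasible subgraph must contain $\Omega(n)$ edges among nonterminal vertices, and a graph with $\Omega(n)$ edges has $\Omega(\sqrt n)$ vertices, so the integral $\textsc{VDSF}$ cost is $\Omega(\sqrt n)$. The ratio $\Omega(\sqrt n)/O(n/d)=\Omega(d/\sqrt n)=\Omega(\beta(n,p,\infty)/n^{3/2})$ is what you want.

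The genuine gap in your proposal is exactly the one you flag at the end: the gadget $H$ that must be ``fractionally shareable but integrally sparse'' is never constructed, and the reason the paper's argument works is that it sidesteps this tension entirely rather than resolving it. In your edge-weighted setup you have no analogue of the $m$-edges-implies-$\sqrt m$-vertices inequality, so the separation between integral and fractional cost has to come from a bespoke gadget and a careful weight schedule; you would still need to exhibit a Ruzsa--Szemer\'edi-type design with the stated properties and verify that the heavy gadget arcs really cannot be bought cheaply by an integral solution across all regimes $p\le n^{2-o(1)}$. In addition, your plan buries a second nontrivial step in a parenthetical: your fractional solution routes ``up to an $o(1)$ slack'' through $G_0$; but the \DSF LP has a hard constraint that each demand receive a full unit of flow, so any such slack must itself be routed at cost comparable to $\beta(n,p,\infty)/\text{poly}$, and you would have to justify that this residual cost is genuinely $O(n^{3/2})$. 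None of these issues arise in the paper's vertex-based construction, which is why it succeeds without a gadget. If you want to salvage your approach, the first thing to do is to switch to vertex costs and look for the $\sqrt n$ counting argument; at that point you will likely find the gadget evaporates.
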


In particular, plugging in $p = n^{2 - o(1)}$, we improve the integrality gap to $\Omega(n^{1/2 - o(1)})$.
This theorem partially addresses an open question in \cite{AMS13}, where Alon, Moitra, and Sudakov asked whether bounds on the Ruzsa-\Szemeredi{} function, which is equivalent to bounds on $\beta(n, p, \infty)$ in the setting of large $p$ (see Theorem \ref{thm:b3rs}), could be useful towards proving integrality gaps for Directed Steiner \emph{Tree}.

\subsection{New Extremal Bounds for Bridge Girth}

The main technical contributions of this paper are some new upper and lower bounds for $\beta$ and $\beta^*$, polynomially improving over bounds implicit in the previous literature.
These improved bounds imply new results for various problems in network design; see Table \ref{tbl:combo} in the next section.
Table \ref{tbl:betabounds} gives a quick reference to our new bounds on $\beta, \beta^*$, as well as bounds implicit in prior work, and Figure \ref{Figs:tikzpicture2} plots the state-of-the-art bounds on $\beta, \beta^*$ following our paper.

\begin{table} [t]
\begin{center}
\begin{tabular}{lll}
\hline
\textbf{$k$} &  \textbf{Bound} & \textbf{Justification}\\
\hline
\multirow{2}{*}{$2$} & \multirow{2}{*}{$\beta=\betastar=\Theta\left(\min\left\{ n^{2/3}p, p^{1/2}n\right\} + n + p\right)$} & Upper Implicit (see App \ref{app:twobounds})\\
& & Lower New (see Thm \ref{thm:twobeta})\\
\hline
\multirow{2}{*}{$3$}
& $\beta = O\left( \min\left\{n^{2/3}p^{2/3}, \frac{n^2}{2^{C\log^* n}}\right\} + n + p \right)$ & \multirow{2}{*}{Implicit (see App \ref{app:threebounds})}\\
& $\beta = \Theta\left( n^{2/3} p^{2/3} \right)$ when $p \in \{n^{4/5}, n^{7/8}, n, n^{8/7}, n^{5/4}\}$ & \\
\hline
$4$ & $\beta = O\left(n^{3/4} p^{1/2} + n^{5/8} p^{11/16} + n + p\right)$ & New (see Thm \ref{thm:bnp4upper})\\
\hline

\multirow{5}{*}{$\infty$} 
& $\beta = O\left( \frac{p^2}{2^{C \log^* p}} + n \right) $ & New-ish (see Thm \ref{thm:rsreverse})\\
& $\beta = \Omega\left(n^{\frac{2}{d+1}}p^{\frac{d-1}{d}}\right)$, $d$ any positive integer & Implicit (see Cor \ref{cor:binftylb})\\
& $\betastar = O\left(\min\left\{n^{1/2} p, p^{1/2} n\right\} + n + p\right)$ & New-ish (see Thm \ref{thm:bstarupper})\\
& $\betastar = \Omega\left(n^{2/3}p^{2/3}\right)$ & New-ish (see Thm \ref{thm:bstarlower})\\
& $\betastar = \Thetaish(n^{4/3})$ when $p=n$, conditional on Hyp \ref{hyp:ordered_gap} & New-ish (see Thm \ref{thm:hypbstarlb})\\
\hline
\end{tabular}

\caption{\label{tbl:betabounds} Asymptotic Bounds for $\beta$ and $\beta^*$.  For brevity, we write $\beta$ in place of $\beta(n, p, k)$, and similar for $\beta^*$.  Results are marked as fundamentally new in this paper, implicit in prior work, or ``new-ish'' meaning that they reuse a major ingredient from prior work but also have a new idea.}
\end{center}
\label{table:asymp_fn}
\end{table}

\begin{figure}[htbp]
  \begin{center}
    \includegraphics[width=4.5in]{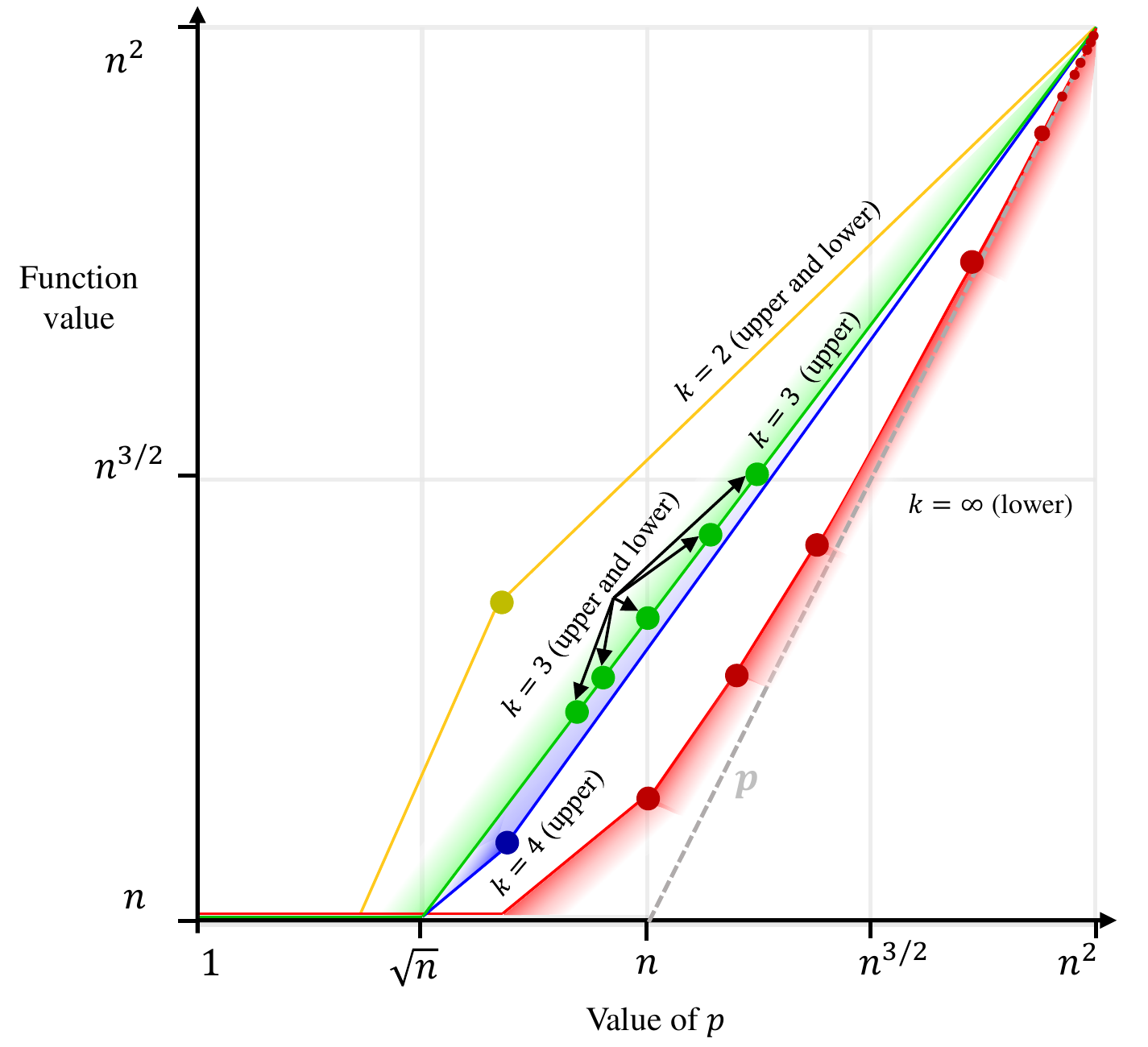}
    \end{center}
    \caption{Asymptotic bounds for $\beta(n, p, k)$, drawn to a logarithmic scale.}
    \label{Figs:tikzpicture2}
\end{figure}

\subsubsection{New Lower Bounds for $\beta(n, p, 2)$}

Our first main new result is a tight lower bound for $\beta(n, p, 2)$:
\begin{theorem} \label{thm:introbnp2}
$\beta(n, p, 2) = \Theta\left(\min\left\{ n^{2/3}p, p^{1/2}n\right\} + n + p\right)$.
\end{theorem}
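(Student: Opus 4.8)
The plan is to first restate the hypothesis: since a $1$-bridge is impossible, bridge girth $>2$ means exactly that no two distinct paths both traverse some ordered pair $(u,v)$ (i.e. both contain $u$ and $v$ with $u$ before $v$). This has two consequences I would extract: (i) whenever two paths share $\ge 2$ nodes they order them oppositely, so no node pair lies on $\ge 3$ paths; (ii) no three paths share $\ge 2$ nodes, hence every three paths have at most one common node. From (i), writing $\ell_\pi=|\pi|$ and double-counting (node pair, path containing it), $\sum_\pi\binom{\ell_\pi}{2}\le 2\binom n2$, so $\sum_\pi\ell_\pi^2 = O(n^2)+\|S\|$, and Cauchy--Schwarz over the $p$ paths gives $\|S\|^2\le p\sum_\pi\ell_\pi^2$, whence $\|S\|=O(n p^{1/2}+p)$. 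From (ii), with $d_v$ the number of paths through $v$, double-counting (node, triple of paths through it) gives $\sum_v\binom{d_v}{3}\le\binom p3$; isolating the vertices with $d_v\ge 4$, where $\binom{d_v}{3}=\Omega(d_v^3)$, and applying the power-mean inequality yields $\sum_{v:d_v\ge4}d_v=O(n^{2/3}p)$, while low-degree vertices contribute $O(n)$, so $\|S\|=\sum_v d_v=O(n^{2/3}p+n)$. Intersecting the two bounds gives $\beta(n,p,2)=O(\min\{n^{2/3}p,\,np^{1/2}\}+n+p)$. (Also $\betastar(n,p,2)=\beta(n,p,2)$: a $2$-bridge is symmetric in its two paths, so naming the later path the river makes any $2$-bridge ordered.)

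\textbf{Lower bound: the easy pieces.} The terms $n$ and $p$ are trivial (a single Hamiltonian path; or $p$ singleton paths). For $p\ge n$ I would take a prime $q=\Theta(p^{1/2})$ and the $\Theta(q^2)$ lines of the affine plane $\mathbb{F}_q^2$; picking a generic $n$-element set $W\subseteq\mathbb{F}_q^2$ so that every line keeps $\Omega(n/q)$ of its points, the path system consists of these lines restricted to $W$, each ordered by $x$-coordinate. Distinct affine lines meet in $\le 1$ point, so no two paths share two nodes and the bridge girth is $\infty$; the size is $\Omega(q^2\cdot n/q)=\Omega(nq)=\Omega(np^{1/2})$, matching the target for $p\in[n,n^2]$, and for $p>n^2$ a bundle of $p$ singleton paths already gives $\Omega(p)=\Omega(np^{1/2}+p)$. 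The remaining and main case is $p\le n$, where I need a consistent system of size $\Omega(\min\{n^{2/3}p,\,np^{1/2}\})$.

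\textbf{Lower bound: the hard case.} Here I would build the system from affine planes in $\mathbb{F}_q^3$ ($q$ prime, $n\approx q^3$, with restriction/padding to tune parameters). A family of affine planes yields a consistent system of paths, each of size $\Theta(q^2)$, provided (a) no three of them share $\ge2$ points (otherwise consistency is outright impossible) and (b) the planes can be ordered so that any two sharing a line traverse that line in opposite directions. For (a): three affine planes meet in $\ge2$ affine points iff their dual points in $\mathbb{P}^3(\mathbb{F}_q)$ are collinear along a line avoiding the point $p_\infty$ dual to the hyperplane at infinity. So I take the planes dual to the cone over a conic $C\subseteq\mathbb{P}^2(\mathbb{F}_q)$ with apex $p_\infty$: this is a set of $\Theta(q^2)$ dual points, no three of which are collinear except along lines through $p_\infty$ (three points from three distinct fibers project to three non-collinear points of the conic, while three points from $\le2$ fibers force the line to pass through $p_\infty$). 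This gives $p\approx q^2\approx n^{2/3}$ planes, every three meeting in $\le1$ point and every two either disjoint (same fiber) or meeting in a full line. Using $p'\le p$ of the planes gives size $\Theta(p'q^2)=\Theta(p'n^{2/3})$, and restricting every plane to a generic $n$-subset of $\mathbb{F}_q^3$ scales this down to $\Theta(nq)=\Theta(np^{1/2})$ when more, shorter paths are wanted; together these cover the whole range $p\le n$.

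\textbf{The main obstacle.} The crux is clause (b): a globally compatible choice of orderings. The ``shared-line'' graph on the chosen planes is complete multipartite (parts = pencils through $p_\infty$), hence far from bipartite, so one cannot merely split the planes into ``forward'' and ``backward'' classes. The plan instead is to order each plane $P$ lexicographically by three per-plane linear functionals, using the balanced representatives $\{-(q-1)/2,\dots,(q-1)/2\}$ for $\mathbb{F}_q$ so that negating a functional exactly reverses the induced order of any line, and to choose these functionals so that for every shared line $L=P\cap P'$ the first functional of $P$ that is nonconstant on $L$ is the negative, along $L$, of the corresponding functional of $P'$. Showing such a choice exists --- equivalently, that the resulting linear constraints on the functionals are simultaneously satisfiable, with no ``odd-cycle'' obstruction arising from triangles of shared lines inside a single plane --- is the main technical content of the construction, and is the step I expect to be hardest.
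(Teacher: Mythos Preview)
Your upper bound argument is correct and essentially matches the paper's (Theorems~\ref{thm:cebp2} and~\ref{thm:bodbnp2}); likewise the easy lower bounds and the $p\ge n$ case via affine lines are standard and fine.

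For the hard case $p\le n$, your geometric framework is right—and in fact coincides with the paper's, though the paper presents it differently. The paper takes nodes to be degree-$\le 2$ polynomials over $\mathbb{F}_q$ (so $V\cong\mathbb{F}_q^3$, $n=q^3$) and, for each point $(x,y)\in\mathbb{F}_q^2$, the set $Q_{(x,y)}=\{f:f(x)=y\}$ as a path. Each $Q_{(x,y)}$ is an affine plane in coefficient space, and its dual point $[x^2:x:1:-y]\in\mathbb{P}^3$ lies on the cone with apex $[0{:}0{:}0{:}1]$ over the conic $\{[x^2:x:1:0]\}$—precisely your cone-over-a-conic family. So your clause (a) is not just plausible but is exactly what the paper uses.

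The genuine gap is clause (b), which you correctly flag as the crux but do not resolve. Your proposed lex ordering by linear functionals with balanced representatives faces the obstacle you yourself name: each plane meets $\Theta(q^2)$ others along $\Theta(q)$ distinct lines, and you need the primary functional on $P$ to restrict to the \emph{negation} of that on $P'$ along every $P\cap P'$; you give no argument that these negation constraints are simultaneously satisfiable, and the triangle/odd-cycle worry is real. The paper's resolution is different in kind. Rather than seek a global linear order on each plane, it orders $Q_{(x,y)}$ \emph{circularly} by the derivative value $f'(x)=2ax+b$ (a single linear functional in $(a,b,c)$). On the common line $Q_{(x_1,y_1)}\cap Q_{(x_2,y_2)}$ one computes that $f'(x_1)$ and $f'(x_2)$ are related by an affine map $a\mapsto -a$ (Lemma~\ref{lem:quadcirc}), so the two derivative orderings are exact reversals \emph{as circular orders on $\mathbb{F}_q$}. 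Circular reversal does not by itself give opposite linear orders; the final trick is to cut each circular order into three equal contiguous arcs (tripling $p$ harmlessly), so that any arc spans less than half the circle and hence two arcs from different $x$-values inherit genuinely opposite linear orders on their intersection. This circular-order-plus-splitting device is the missing idea in your proposal, and it is what makes the ``no odd-cycle obstruction'' issue evaporate rather than need to be checked.
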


The upper bound in this theorem is implicit in \cite{Bodwin21, CE06}, and the lower bound is new in this paper.
To explain our contribution, we discuss the previous (implicit) lower bound.
The simplest way to ensure that a path system avoids $2$-bridges is to simply ensure that any two paths intersect on at most one node.
The largest path systems with this property have long been known: they are \emph{finite projective planes}\footnote{Finite projective planes are set systems rather than path systems, but one can place an arbitrary ordering on the sets while retaining the property that any pair of resulting paths intersect on at most one node.}, which imply the lower bound
$$\beta(n, p, 2) = \Omega\left(\min\left\{ n^{1/2}p, p^{1/2}n\right\} + n + p\right).$$
If one wants to design a path system that polynomially exceeds this bound, it is necessary for a typical pair of paths $\pi_1, \pi_2$ to intersect on polynomially many nodes.
To avoid $2$-bridges, it would then be required that $\pi_1, \pi_2$ contain the nodes in $\pi_1 \cap \pi_2$ in exactly opposite orders.
It seems rather unlikely to obtain this opposite-order property for all pairs of paths simultaneously.
Cementing this intuition, the work of Coppersmith and Elkin \cite{CE06} implies that the finite projective plane lower bound on $\beta(n, p, 2)$ is indeed tight in the parameter regime $p \ge n$.
However, our Theorem \ref{thm:introbnp2} shows on the contrary that the finite projective plane is \emph{polynomially far} from optimal in the remaining parameter regime $p \ll n$, and a denser construction with exactly this reverse-order property can in fact be achieved.
The construction is a (dualized) version of the finite projective plane, based on \emph{quadratics} over finite fields instead of lines.

As discussed in Section \ref{sec:preservers}, the practical consequence of this lower bound is a technical limitation on the tool of \emph{consistent path systems}.
In network design, a common strategy to limit the number of edges in a graph is to show that it arises from a path system with the property that no two paths intersect, split apart, and then intersect again later.
Theorem \ref{thm:introbnp2} settles the worst-case size of such a path system, and thus to obtain better upper bounds than the ones in Theorem \ref{thm:introbnp2}, more careful technical arguments are needed.

\subsubsection{New Upper Bounds for $\beta(n, p, 4)$}

Our next main result is a new upper bound for $\beta(n, p, 4)$.
\begin{theorem} \label{thm:introbnp4upper}
$\beta(n, p, 4) = O\left(n^{3/4} p^{1/2} + n^{5/8} p^{11/16} + n + p\right)$.
\end{theorem}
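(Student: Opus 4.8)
The plan is to recode the extremal problem as a question about an edge-coloured digraph and then run a count in the spirit of the classical extremal bounds for bipartite graphs with no short even cycles (K\H{o}v\'{a}ri--S\'{o}s--Tur\'{a}n and Bondy--Simonovits). Given a path system $S=(V,\Pi)$ of bridge girth $>4$, let $U$ be the digraph on $V$ with arc set $\bigcup_{\pi\in\Pi}E(\pi)$, where $E(\pi)$ is the transitive tournament induced by the order of $\pi$, and colour each arc by a path through it. The no-$2$-bridge condition says exactly that the tournaments $E(\pi)$ are pairwise arc-disjoint, so the colouring is well defined and $e:=|U|=\sum_\pi\binom{|\pi|}{2}\le n(n-1)$; writing $m:=\|S\|=\sum_\pi|\pi|$ we also have $\sum_\pi|\pi|^2=2e+m$. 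After discarding paths with at most one node (which costs $O(p)$) and assuming $m$ exceeds a constant times $n+p$, the goal is to bound $m$ given $n,p$, and these relations.

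The key point is that a forbidden $b$-bridge is just a pair of routes between two nodes: a single ``river'' arc of some colour $\chi$, plus an alternate directed walk of length $b-1$ whose colours are pairwise distinct and avoid $\chi$. For $b=3$ this unwinds cleanly, and one checks that no-$3$-bridge is equivalent to: for every arc $a\to b$ of colour $\chi$, every length-$2$ walk $a\to z\to b$ in $U$ has $z\in V(\pi_\chi)$. Summing over arcs, the number $\Delta$ of transitive triangles of $U$ satisfies $\Delta=\sum_{a\to b\in U}|N^+(a)\cap N^-(b)|\le\sum_\pi\binom{|\pi|}{2}|\pi|=O(\sum_\pi|\pi|^3)$. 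The no-$4$-bridge condition should give a comparable bound on length-$3$ walks, roughly $\sum_{a\to b\in U}\#\{\text{walks }a\to\cdot\to\cdot\to b\}=O(\sum_\pi|\pi|^4)$; the complication here is that besides walks staying inside $\pi_\chi$ one must also discount detours involving paths that meet the river's endpoints in the reversed order, and these are exactly the ``twisted'' configurations that also obstruct consistency-type arguments.

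For a matching lower bound I would exploit the fact that a transitive tournament on $\ell$ vertices has complementary in- and out-degrees, so that $\sum_{v\in V}\min(d^-(v),d^+(v))\ge\sum_\pi\sum_{v\in\pi}\min(d^-_\pi(v),d^+_\pi(v))=\Omega(e)$. This rules out the degenerate ``bipartite'' configuration, and by convexity forces $U$ to contain $\Omega(e^2/n)$ cherries and, iterating, $\Omega(e^3/n^2)$ walks of length $3$, up to lower-order terms governed by $\max_\pi|\pi|$ and by walks that fail to close into an arc of $U$. Playing these lower bounds against the bridge upper bounds above, together with $\sum_\pi|\pi|=m$, $\sum_\pi|\pi|^2=2e+m\le 2n^2+m$, and H\"{o}lder's inequality, yields a system of inequalities in $m,e,\sum_\pi|\pi|^3,\sum_\pi|\pi|^4$ and $n,p$; solving it gives the claimed bound, with the two nonlinear terms $n^{3/4}p^{1/2}$ and $n^{5/8}p^{11/16}$ corresponding to two regimes of $p$ (separated around $p=n^{2/3}$) in which different combinations of the no-$3$- and no-$4$-bridge inequalities bind.

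The hardest part, I expect, is the no-$4$-bridge analysis: controlling the contribution of the reversed-segment detours rigorously, rather than only in the clean case where any two paths share at most one node, is where the real work lies, and it is plausible that the exact exponents $\tfrac34,\tfrac12$ and $\tfrac58,\tfrac{11}{16}$ (as opposed to something weaker, or with a spurious polylogarithmic factor) only come out after carefully tracking those terms. A secondary nuisance is bounding the ``open'' detours --- short rainbow walks that have no closing river arc --- since the bridge axioms restrict only detours paired with a river; one likely has to pass to longer walks or argue that open detours cannot dominate the count once $e$ is large.
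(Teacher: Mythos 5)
Your write-up is a plan rather than a proof, and the two steps you flag as ``the hardest part'' and ``a secondary nuisance'' are in fact the places where the proposed route breaks down; the paper explicitly warns of this: it remarks that ``forward-search does not work so well for $4$-bridges, in the sense that a lack of $4$-bridges does not imply that distinct nodes/paths are discovered at the appropriate level of the forward search.'' Your scheme is precisely a forward-search count (transitive triangles controlled by no-$3$-bridge, closed length-$3$ walks controlled by no-$4$-bridge), and it hits two obstructions. First, the no-$4$-bridge condition does \emph{not} straightforwardly bound the number of length-$3$ walks $a\to z\to w\to b$ that close into an arc $a\to b$: a $4$-bridge needs four \emph{distinct} paths traversing four \emph{distinct} nodes in the correct orientations, and walks whose arcs reuse the river colour $\chi$, or whose intermediate arc belongs to a path meeting the river in reversed order, or whose $z,w$ coincide with river nodes, all escape the axiom. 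You acknowledge this but offer no mechanism for ``discounting'' them; in fact no clean inclusion--exclusion seems available, which is the whole difficulty. Second, and more fundamentally, your convexity lower bound ($\Omega(e^2/n)$ cherries, $\Omega(e^3/n^2)$ length-$3$ walks) counts \emph{all} walks, while the bridge axioms cap only walks that close into an arc of $U$; for a system with many pairwise near-disjoint paths, essentially every cherry is open, so the ``open detours are lower order'' assertion is false in exactly the regime that matters. Without repairing both of these, the ``system of inequalities'' at the end is never actually written down, and there is no evidence the exponents $\tfrac34,\tfrac12,\tfrac58,\tfrac{11}{16}$ come out.

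The paper's proof takes a genuinely different route that sidesteps both problems. It never directly counts walks against the no-$4$-bridge axiom. Instead it (i) cleans the system to remove $2$-cycles, (ii) samples a random base path $\pi_b$ and restricts to nodes within $h$ hops of $\pi_b$ along paths in $Q$ meeting $\pi_b$ once, producing a random subsystem $S'$ that is still bridge-girth $>3$ with $O(\ell d h)$ nodes and maximum path length $O(\ell)$, (iii) defines, for each ordered pair of paths $(\pi_1,\pi_2)$, the set $R_S(\pi_1,\pi_2)$ of ordered node pairs witnessed by a third path meeting each of $\pi_1,\pi_2$ in a single node, and uses no-$4$-bridge plus no-$2$-cycle only to prove the structural fact that these pairs \emph{cannot cross} (they are monotone along $\pi_1,\pi_2$), (iv) converts this monotonicity, via the random base path and a Cauchy--Schwarz step, into the inequality $\mathbb{E}[\|S'\|_2^2]=\Omega\bigl(\tfrac{h}{\ell p^3}\,\mathcal{R}_S^2\bigr)$ relating $\sum|R_S(\pi_1,\pi_2)|$ to the $L^2$ norm of path lengths in $S'$, and (v) bounds $\|S'\|_2^2$ by a dedicated lemma (the ``long/medium/short path'' case analysis) that uses only the $\beta(n,p,3)=O(n^{2/3}p^{2/3}+n+p)$ bound. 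The $h$-balancing at the end produces the claimed exponents. The crucial conceptual difference is that the no-$4$-bridge axiom enters only as a \emph{geometric non-crossing property} feeding a second-moment argument, rather than as a direct bound on a walk count --- which is exactly what lets the proof avoid the twisted-configuration and open-walk difficulties that sink the direct count you sketch.
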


It should be noted here that $\beta$ is inverse-monotonic in $k$; that is, $\beta(n, p, k_1) \leq \beta(n, p, k_2)$ if $k_1 \leq k_2$.
Hence, Theorem \ref{thm:introbnp4upper} also implies a polynomially improved upper bound for $\beta(n, p, 5), \dots, \beta(n, p, \infty)$.
As discussed in Section \ref{sec:preservers}, one corollary is a new polynomially improved upper bound for reachability preservers.
Another consequence of this result is that it provides a clear avenue for further progress towards understanding $\beta(n, p, \infty)$.
A point of this paper, reflected more precisely in Table \ref{tbl:reductions}, is that new \emph{lower} bounds for $\beta(n, p, \infty)$ would be very consequential in network design.
This in turn motivates the study of \emph{upper} bounds for $\beta(n, p, \infty)$, towards determining the extent to which these improved lower bounds might be possible.
The previous-best upper bounds on $\beta(n, p, \infty)$ were inherited all the way from the implicit upper bounds on $\beta(n, p, 3)$.
Our theorem is proof-of-concept that exploiting larger forbidden bridges is indeed a worthwhile avenue towards improved understanding of the value of $\beta(n, p, \infty)$.

At a technical level, the proof of Theorem \ref{thm:introbnp4upper} is considerably more involved than other upper bounds in the area, including those for $\beta(n, p, 2), \beta(n, p, 3),$ and the extremal functions of high-girth graphs (see Appendix \ref{app:girthtour}).
All previous bounds are based roughly on a \emph{forward-search} strategy, in which one picks a node, counts the paths intersecting that node, counts the nodes contained in those paths, and so on; bridge-freeness is used to argue that the nodes/paths that are witnessed are all distinct.
See Theorems \ref{thm:moore}, \ref{thm:cebp2}, \ref{thm:threebound}, for examples.
Our challenge is that forward-search does not work so well for $4$-bridges, in the sense that a lack of $4$-bridges does not imply that distinct nodes/paths are discovered at the appropriate level of the forward search.
This requires considerable technical work to overcome, and due to space constraints we defer further technical overviewing to Section \ref{sec:bnp4upper}.

\subsection{Future Directions and Open Problems \label{sec:open}}

Table \ref{tbl:combo} lists the quantitative bounds obtained by directly mixing the reductions from Table \ref{tbl:reductions} with the bounds on $\beta, \beta^*$ from Table \ref{tbl:betabounds}.
One category of open problem is to improve the quantitative upper or lower bounds for any of these objects, whether or not via reductions to $\beta, \beta^*$.

It would also be interesting just to \emph{recover} state-of-the-art quantitative bounds for network design problems via bridge girth reductions, so that we gain black-box improvements if and when the bounds for $\beta, \beta^*$ are improved.
Some good candidates for this program might include:
\begin{itemize}
\item \textbf{(Shortcut/Hopset Upper Bounds)} We have proved that one can recover state-of-the-art lower bounds on shortcut sets ($\sss$) by reduction to $\beta(n, p, \infty)$.
On the upper bounds side, a recent breakthrough of Kogan and Parter \cite{KP22a} proved that\footnote{These bounds are stated under a different parametrization than \cite{KP22a}: we use $p$ as the size of the hopset, and $\sss(n, p)$ as its hopbound, and thus $\sss$ is decreasing in $p$.}
$$\sss(n, p) = \begin{cases}
\Oish\left( n^{2/3} p^{-1/3} \right) & \text{when } p \ge n\\
\Oish\left( np^{-2/3} \right) & \text{when } p \le n
\end{cases}.$$
Subsequent work by Berenstein and Wein \cite{BW23} obtained a similar bound for $(1+\eps)$ hopsets.
Obtaining this bound with a bridge girth reduction would be interesting.

\item \textbf{(Exact Hopset Upper Bounds)} We have proved that one can recover state-of-the-art lower bounds on exact hopsets ($\ehh$) by reduction to $\beta^*(n, p, \infty)$.
On the upper bounds side, there is a simple folklore algorithm, sometimes attributed to Ullman and Yannakakis \cite{UY91}, that shows
$\ehh(n, p) = \Oish\left( np^{-1/2} \right).$
We refer to \cite{KP22, KP22a} for discussion of this algorithm.
We find the possibility of recovering this upper bound with a bridge girth reduction intriguing.

\item \textbf{(Flow-Cut Gap Upper Bounds)} We have proved that one can recover state-of-the-art lower bounds on the flow-cut gaps ($\fcg, \scg$) by reduction to $\beta(n, p, \infty)$.
On the upper bounds side, the state-of-the-art is
$\fcg(n), \scg(n) = \Oish(n^{11/23})$
obtained by Agarwal, Alon, and Charikar \cite{AAC07}.
Can we recover this bound with a bridge girth reduction?
\end{itemize}

A recent paper by Kogan and Parter \cite{KP22} perhaps makes some progress on this program by proving reductions among several important objects in network design (although naturally it does not directly consider bridge girth).

Another natural kind of open problem left by this paper is to obtain quantitative improved upper/lower bounds for $\beta, \beta^*$.
We would also consider \emph{self-reductions} very interesting, studying how the values of $\beta, \beta^*$ evolve as $k$ increases.
The following is a concrete open problem in this vein.
Notice that the extremal functions of high girth graphs cease to benefit from bridge girth parameters above $\log n$ (for example, $\gamma(n, \log n) = \Theta(\gamma(n, \infty)) = \Theta(n)$).
We think it is likely that a similar effect holds for $\beta, \beta^*$:

\begin{conjecture}\label{conj:beta_log}
For all $n, p$, we have $\beta(n, p, \log n) = \Theta(\beta(n, p, \infty))$, and $\beta^*(n, p, \log n) = \Theta(\beta^*(n, p, \infty))$.\footnote{We are grateful to an anonymous reviewer for suggesting this open problem.}
\end{conjecture}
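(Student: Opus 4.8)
The easy direction is immediate: every path system with bridge girth $\infty$ in particular has bridge girth $>\log n$, so $\beta(n,p,\log n)\ge\beta(n,p,\infty)$ (and likewise $\betastar(n,p,\log n)\ge\betastar(n,p,\infty)$). The content is the reverse inequality, which the plan is to obtain as a \emph{Moore bound for bridges}: every path system $S$ on $n$ nodes and $p$ paths with bridge girth $>\log n$ satisfies $\|S\| = O(\beta(n,p,\infty))$. Since $\beta$ is monotone in $n$ and $p$, this gives $\beta(n,p,\log n)=O(\beta(n,p,\infty))$, and the ordered case is handled identically once every step is carried out compatibly with the path ordering.

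The engine should be an exponential-growth (breadth-first) argument mirroring the classical Moore bound for graph girth. Fix a node $x$ lying on some path of $S$ and think of it as a source. For $i\ge 0$ let $R_i$ be the set of nodes $z$ reachable from $x$ along a \emph{chain} of $i$ distinct paths $x = w_0 \prec_{\rho_1} w_1 \prec_{\rho_2} \cdots \prec_{\rho_i} w_i = z$, where $u\prec_\rho v$ means $u$ precedes $v$ on path $\rho$; the sets $R_0\subseteq R_1 \subseteq \cdots$ play the role of BFS layers. Two facts would close the argument: (i) if $S$ is sufficiently dense then $|R_{i+1}|$ is substantially larger than $|R_i|$, since each $z\in R_i$ lies on many paths and each offers many new heads; and (ii) if the growth ever stalls --- extending a chain revisits a discovered node or reuses a path --- then one can read off a $b$-bridge with $b\le\log n$, contradicting the girth hypothesis. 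Then $R_0,\dots,R_{\log n}$ grow geometrically while contained in the $n$-element ground set, which caps $\|S\|$ at the target. Reproducing exactly $\beta(n,p,\infty)$ (rather than a weaker bound) will likely require choosing the source carefully and summing over sources, in the style of Theorems \ref{thm:moore}, \ref{thm:cebp2} and \ref{thm:threebound}.

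The main obstacle is exactly the one the paper already flags for $\beta(n,p,4)$: the absence of short bridges does \emph{not} by itself guarantee that the bridge-search discovers genuinely new nodes and paths at every layer. The search can ``collapse'' --- a node or path reappearing across chains --- in ways that do not immediately exhibit a forbidden bridge, and this already causes real difficulty at layer two in the $\beta(n,p,4)$ argument of Section \ref{sec:bnp4upper}. Sustaining geometric growth over $\log n$ layers therefore requires showing such collapses are rare enough that an $\Omega(1)$ fraction of each layer's heads survive, so that growth continues by a constant factor per layer. I would attempt this by amortizing the collision analysis of Section \ref{sec:bnp4upper} --- bounding how many chains can funnel through a repeated node or path and charging the loss against the girth bound --- the delicate point being that the per-layer loss must be a \emph{constant} factor independent of the number of layers, since a $\bigl(1+\Theta(1/\log n)\bigr)$-type slippage would compound to a polynomial loss over $\log n$ layers.

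A possible alternative that bypasses the growth argument is a preprocessing reduction: delete a vanishing fraction of the paths or nodes of $S$ so as to destroy \emph{all} of its remaining bridges at once. Here one exploits that in $S$ every bridge has size strictly between $\log n$ and $p$, so a random or greedy hitting argument that leverages ``all surviving bridges are long'' might eliminate them while retaining $\Omega(\|S\|)$ of the size, reducing directly to the $k=\infty$ case. I expect the ordered version to be the easier target in any case: since the river of an ordered bridge is last in the path ordering, running the bridge-search in increasing order of paths turns any collision with an already-used path into an immediate ordered bridge, which should tighten the collision analysis. Finally, obtaining the threshold exactly $\log n$ rather than $c\log n$ for a constant $c$ is a secondary sharpening --- the classical Moore bound itself naturally yields girth $\sim 2\log_{d-1}n$ --- so a reasonable first milestone is to prove $\beta(n,p,c\log n)=\Theta(\beta(n,p,\infty))$ and its $\betastar$ analogue for some constant $c$, then push $c$ down toward $1$.
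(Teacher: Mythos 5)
This statement is a conjecture, not a theorem: the paper poses Conjecture~\ref{conj:beta_log} explicitly as an open problem and does not prove it. The only supporting evidence it offers is Theorem~\ref{eq_conj}, which derives the first half of the conjecture conditionally from the open premise $\apdpp(n,p,\alpha)=O(\rpp(n,p))$ for fixed $\alpha>1$; your sketch does not engage with that reduction. So there is no paper proof to compare against --- but the direct ``Moore bound for bridges'' you sketch has a concrete flaw worth naming.

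The load-bearing step is your fact (ii): that a collision during the forward search yields a $b$-bridge with $b\le\log n$. This is false in the generic case, and not as a delicate quantitative matter but structurally. A chain $w_0 \prec_{\rho_1} w_1 \prec \cdots \prec_{\rho_j} w_j$ that revisits a node produces a \emph{directed cycle}, and directed cycles impose no size constraint at all on path systems, as the paper itself notes in Section~1.1.2. More to the point, the generic BFS collision is two distinct multi-path chains from $x$ reaching the same node $z$; that configuration contains a bridge only if one of the two chains is a \emph{single} path, which can then serve as the river. A ``diamond'' between $x$ and $z$ in which both sides consist of $\ge 2$ distinct paths produces no bridge of any size. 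There is also a sanity check that confirms the approach proves too much: if the layers $R_0,\dots,R_{\log n}$ really grew geometrically while confined to $n$ nodes, you would conclude $d\ell = O(1)$ and hence $\|S\| = O(n+p)$, but Corollary~\ref{cor:binftylb} gives $\beta(n,p,\log n) \ge \beta(n,p,\infty) \ge \Omega\bigl(n^{2/(d+1)}p^{(d-1)/d}\bigr)$, which is polynomially larger (e.g.\ $\Omega(n^{7/6})$ at $p=n$). So in the extremal bridge-free constructions the forward search \emph{must} collapse densely via multi-multi diamonds, and this carries no contradiction --- which is exactly the obstruction the paper describes in Section~\ref{sec:bnp4upper} and the reason its $\beta(n,p,4)$ bound is forced into an $L^2$-norm argument rather than a layered search. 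Your preprocessing/sparsification alternative is at least not ruled out by this sanity check, but as written it is only a hope: a system with bridge girth $>\log n$ can harbor exponentially many $(\log n + O(1))$-bridges, and no mechanism is offered for hitting all of them while preserving $\Omega(\|S\|)$.
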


In Theorem~\ref{eq_conj} we provide some additional evidence that Conjecture~\ref{conj:beta_log} is true, by showing that its first half is implied by a plausible equality between the sizes of approximate distance preservers and reachability preservers, analogous to results already known for undirected preservers \cite{KP22} and directed hopsets \cite{BW23}.

Finally, we discuss applications of $\beta, \beta^*$ with other values of $k$.
This paper directly motivates bridge girth parameters $k=2$ and $k=\infty$, which are the settings that arise most commonly in our reductions (see Table \ref{tbl:reductions}).
We also consider the parameter $k=3$ to be comparably important, because it generalizes the Ruzsa-\Szemeredi{} problem, which in turn captures prior work in network design \cite{Bodwin21, KUV19} (see Appendix \ref{app:girthtour} for more details).
What about finite $k \ge 4$?
Currently, we primarily use these setting as a tool to understand $\beta(n, p, \infty)$.
It is an interesting conceptual open problem to find \emph{direct} applications of $\beta(n, p, k), \beta^*(n, p, k)$, with intermediate choices of $k$, to problems in network design.

A candidate area in which these applications could arise is in the theory of ordered graphs and matrices.
We have already applied this theory a bit, in the connections between Hypothesis \ref{hyp:ordered_gap} and Theorem \ref{thm:hypbstarlb}.
More broadly, there is a line of work in extremal combinatorics on \emph{ordered matrix patterns}, as pioneered by Pach and Tardos \cite{PT06}.
In this problem, we consider binary $n \times n$ matrices, and we receive a collection of one or more forbidden submatrices.
The goal is to determine the maximum possible number of $1$'s that could appear in such a matrix.
One can naturally interpret an $n \times n$ binary matrix as the incidence matrix of an ordered, acyclic path system with $n$ nodes and $n$ paths, and high bridge girth in such a system corresponds to a collection of forbidden patterns.
These problems have applications in data structures \cite{Pettie10}.
We refer to survey \cite{Tardos18} for more on work in this space.





\clearpage
\begin{figure}[htbp]
  \begin{center}
    \includegraphics[width=6.5in]{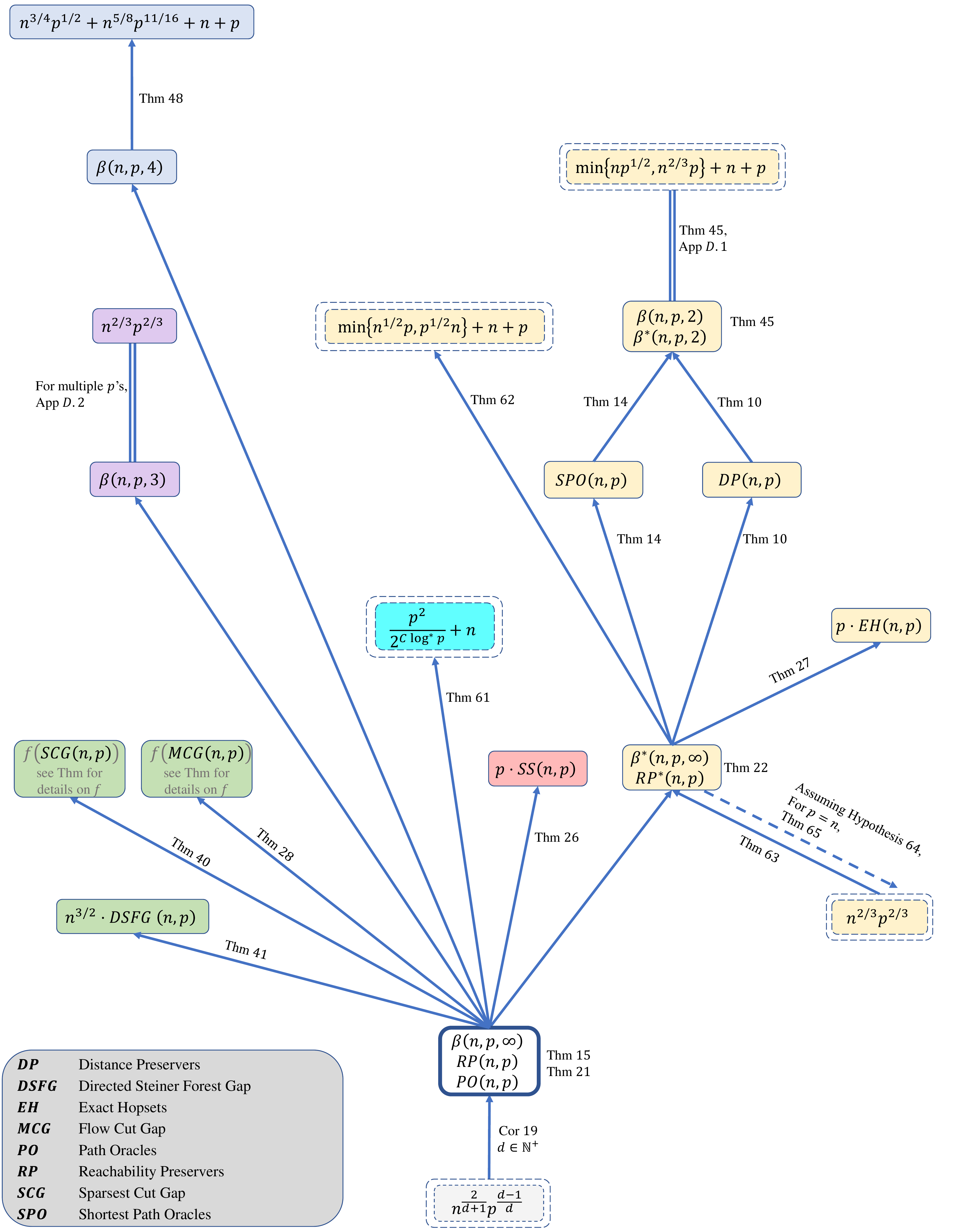}
    \end{center}
    \caption{A diagram representing the relationships between the different notions in this paper. Here, a directed arrow from $A$ to $B$ means that $A\leq B$, and an undirected double line means they are equal.}
    \label{Figs:Bridge}
\end{figure}
\clearpage



\section{Preliminaries on Path Systems}

Here, we quickly review some standard definitions, notations, and technical lemmas for path systems that will be useful.
\begin{itemize}
\item For a path system $S = (V, \Pi)$, its \emph{incidence graph} is the bipartite graph where the nodes on one side of the bipartition correspond to $V$, the nodes on the other side of the bipartition correspond to $\Pi$, and there is an edge between $v \in V$ and $\pi \in \Pi$ iff $v \in \pi$.

\item A path system $S = (V, \Pi)$ is said to be \emph{acyclic} if it does not have any directed cycles; equivalently, there is a total order of $V$ (called a ``topological order'') such that the order of every path $\pi \in \Pi$ is simply the order of $V$ restricted to the nodes in $\pi$.

\item For a path $\pi$, we write $x <_{\pi} y$ to mean that $x, y$ are both nodes in $\pi$, and $x$ strictly precedes $y$ in $\pi$.
We use the notation $x \le_{\pi} y$ similarly.

\item For a path $\pi$, a \emph{subpath} is a (not necessarily contiguous) subsequence $\pi' \subseteq \pi$.

\item A path system $S' = (V', \Pi')$ is a \emph{subsystem} of $S = (V, \Pi)$, written $S' \subseteq S$, if one can obtain $S'$ from $S$ by a sequence of zero or more of the following operations: delete a node from $V$, delete a path from $\Pi$, or delete a single instance of a node from a single path in $\Pi$.
We say that $S'$ is the \emph{induced subsystem} on $V'$ if it is the system obtained by deleting all nodes in $V \setminus V'$.

\item For a path system $S = (V, \Pi)$, the \emph{degree} of a node $v \in V$, written $\deg(v)$, is the number of paths in $\Pi$ that contain $v$.

\item The \emph{length} of a path $\pi \in \Pi$, written $|\pi|$, is the number of nodes in $\pi$ (note that this length is bigger by $1$ than the length of $\pi$ when viewed as a path in an unweighted graph).

\item For a system $S$ with $n$ nodes and $p$ paths, the \emph{average degree} is the quantity
$$d = \sum \limits_{v \in V} \deg(v) / n$$
and the \emph{average length} is the quantity
$$\ell = \sum \limits_{\pi \in \Pi} |\pi| / p.$$
The \emph{size identity} is that
$$nd = \|S\| = p \ell.$$
\end{itemize}

The following ``cleaning lemma'' lets us assume some convenient regularity properties for the path systems realizing $\beta(n, p, k)$ and $\beta^*(n, p, k)$.

\begin{lemma} [Cleaning Lemma] \label{lem:cleaning}
For every triplet $n, p, k$ there exists a path system $S$ with $\le n$ nodes, $\le p$ paths, bridge girth $>k$, $\|S\| = \Omega(\beta(n, p, k))$, and the following two additional properties:
\begin{itemize}
\item (Approximately Degree-Regular) All nodes have degree $\Theta(d)$, where $d$ is the average degree in $S$, and

\item (Approximately Length-Regular) All paths have length $\Theta(\ell)$, where $\ell$ is the average length in $S$.
\end{itemize}
An identical claim holds for ordered path systems and ordered bridge girth.
\end{lemma}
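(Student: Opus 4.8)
The plan is to take an extremal path system realizing $\beta(n,p,k)$ and regularize it through a short sequence of subsystem operations. First I would fix a path system $S_0=(V_0,\Pi_0)$ with $n_0\le n$ nodes, $p_0\le p$ paths, bridge girth $>k$, and $\|S_0\|=\beta(n,p,k)$. The first point to record is that \emph{every} subsystem $S\subseteq S_0$ still has bridge girth $>k$: a $b$-bridge of $S$ uses $b$ distinct nodes and $b$ distinct paths, all of which survive into $S_0$ with the same relative orders, so it is also a $b$-bridge of $S_0$. Hence I am free to pass to any subsystem, and it suffices for the final system to have size $\Omega(\|S_0\|)$ since trivially $\|S_0\|\le\beta(n,p,k)$. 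The ordered case will be handled identically: every operation used below (delete a node, delete a path, delete one node-instance) restricts to a legal operation on ordered path systems and keeps the ordering, and an ordered bridge of a subsystem is an ordered bridge of the parent.

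Next, the \emph{regularity-from-below} step. Writing $d_0=\|S_0\|/n_0$ and $\ell_0=\|S_0\|/p_0$ for the two averages, I would repeatedly delete any node of current degree $<d_0/4$ and any path of current length $<\ell_0/4$, holding the thresholds $d_0/4$ and $\ell_0/4$ fixed throughout, until none remains; call the result $S_1$. There are at most $n_0$ node-deletions, each removing fewer than $d_0/4$ incidences, so node-deletions cost less than $n_0 d_0/4=\|S_0\|/4$; symmetrically path-deletions cost less than $p_0\ell_0/4=\|S_0\|/4$. Thus $\|S_1\|\ge\|S_0\|/2=\Omega(\beta(n,p,k))$, and $S_1$ has bridge girth $>k$, at most $n$ nodes, at most $p$ paths, minimum degree $\ge d_0/4$, and minimum length $\ge\ell_0/4$. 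This already gives one side of the claimed regularity, modulo controlling how much the average can grow.

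Then comes the harder \emph{regularity-from-above} step, which simultaneously pins down the averages. The removal above never discards a high-degree node or a long path, so an extra ingredient is needed to rule out a surviving node of degree $\omega(d)$ or a path of length $\omega(\ell)$, where $d,\ell$ are the final averages. A clean but lossy route is to sort the nodes into dyadic degree-classes and the paths into dyadic length-classes — $O(\log n\cdot\log p)$ cells in all — and keep the heaviest cell with the incidences inside it, which costs a polylogarithmic factor but already makes all degrees, and all lengths, pairwise within a factor $2$. Achieving the claimed constant-factor loss requires the sharper statement that a near-extremal path system is essentially concentrated on a single degree scale and a single length scale: if a constant fraction of its incidences lay on nodes of degree $>Cd$ for large constant $C$, those nodes would span a subsystem on an asymptotically vanishing fraction of the nodes yet still carry $\Omega(\beta(n,p,k))$ incidences, which should be excluded by how $\beta(\cdot,p,k)$ behaves when the node budget is shrunk by a constant factor; one then deletes this part, does the symmetric deletion for long paths, and re-runs the removal-from-below, iterating a bounded number of times until both averages have stabilized.

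The step I expect to be the real obstacle is this last one: obtaining the upper bounds on degree and length with only a constant-factor loss rather than the polylogarithmic loss of naive bucketing. It hinges on a mild concentration property of $\beta$ — that a constant fraction of the incidences of a near-extremal system cannot be squeezed onto a vanishing fraction of the nodes, nor onto vanishingly few very-high-degree nodes, and likewise for paths — together with arranging the constants so that the top-heavy case is genuinely impossible rather than merely suboptimal. Everything else (the cleaning from below, the bookkeeping on $n_0,p_0,d_0,\ell_0$, and the verification that the ordered variant goes through verbatim) is routine.
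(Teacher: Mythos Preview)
Your deletion-from-below step is correct and matches part of the paper. The gap is exactly where you flag it: regularity from above. Your bucketing route loses a polylogarithmic factor, which is weaker than the lemma claims, and your concentration route needs $\beta(n/C,p,k)$ to be strictly smaller than $\beta(n,p,k)$ by a definite constant factor --- which you do not prove and which fails in some regimes (e.g.\ when $\beta(n,p,k)=\Theta(p)$, shrinking $n$ changes nothing), so the iteration need not terminate with the desired conclusion.

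The paper sidesteps this entirely by \emph{splitting} rather than deleting for the upper bounds: any path of length $\ge\ell/2$ is broken into two node-disjoint subpaths each of length $\ge\ell/4$, and any node of degree $\ge d/2$ is split into two copies each of degree $\ge d/4$. Splitting preserves $\|S\|$ exactly; afterwards all lengths are $<\ell/2$ and all degrees $<d/2$, and then your deletion-from-below step finishes the job with only constant-factor loss. The price is that splitting may blow up the node and path counts by constant factors. The paper absorbs this via a one-line random-subsampling lemma showing $\beta(cn,cp,k)=\Omega(\beta(n,p,k))$ for any fixed $c\in(0,1)$: start from a system realizing $\beta(cn,cp,k)$ with $c$ small enough, split, then delete, and you land within the $n,p$ budget with size $\Omega(\beta(n,p,k))$. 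The ordered case goes through identically.
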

Slight variants of this lemma are standard in the area, so we defer the proof to Appendix \ref{app:cleaning}.
In the rest of this paper, we will often use the cleaning lemma as a tool to make assumptions about path systems realizing $\beta$ or $\beta^*$.
In other words, as our proofs typically start along the lines of ``Let $S$ be a path system with $n$ nodes, $p$ paths, bridge girth $>k$, and $\|S\| = \Omega(\beta(n, p, k))$", we may often then use the cleaning lemma without loss of generality to guarantee that, additionally, $S$ is both approximately degree-regular and approximately length-regular.

\section{Bounds on $\beta, \beta^*$ Functions}

In this section, we prove new bounds for $\beta, \beta^*$ in the settings $k = 2$, $k=4$, and $k=\infty$.

\subsection{Lower Bounds for $k=2$}

We prove:
\begin{theorem} \label{thm:twobeta}
$\beta(n, p, 2) = \betastar(n, p, 2) = \Theta\left(\min\left\{np^{1/2}, n^{2/3}p \right\} + n + p\right)$
\end{theorem}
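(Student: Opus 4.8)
The plan is to prove the matching upper and lower bounds separately, after the easy observation that $\betastar(n,p,2)=\beta(n,p,2)$: a $2$-bridge involves only one arc and one river, and in an ordered system whichever of the two witnessing paths comes later can play the river, so the ordered and unordered $k=2$ constraints coincide. Together with $\betastar(n,p,2)\ge\beta(n,p,2)$ it then suffices to prove $\beta(n,p,2)=\Omega(\min\{np^{1/2},n^{2/3}p\}+n+p)$ and $\betastar(n,p,2)=O(\min\{np^{1/2},n^{2/3}p\}+n+p)$. The additive $n+p$ is immediate, since $n$ single-node paths (one per node) or $p$ copies of a single-node path are $2$-bridge-free, as any $2$-bridge requires two distinct nodes. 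The operative fact throughout is that a path system has bridge girth $>2$ precisely when every pair of its paths traverses the nodes of their common intersection in \emph{exactly opposite} orders; we will refer to such systems as \emph{consistent}.

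For the upper bound, I would start from a system $S$ realizing $\beta(n,p,2)$ and apply the Cleaning Lemma (Lemma~\ref{lem:cleaning}) to assume it is approximately degree-regular and length-regular, with $\|S\|=nd=p\ell$ where $d,\ell$ are the average degree and length. One then applies two known counting arguments about consistent path systems, each of which uses nothing beyond $2$-bridge-freeness. The first, essentially Coppersmith--Elkin~\cite{CE06}, is a forward-search/branching argument: following the paths through a fixed node, consistency forces them to split into distinct successors and never recombine, so the "branching" incidences are few; this yields $\|S\|=O(n^{2/3}p+n)$. The second, essentially Bodwin~\cite{Bodwin21}, exploits that in a consistent system each path is pinned down by a short certificate, yielding $\|S\|=O(np^{1/2}+p)$. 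Taking the minimum gives the claimed bound; since these are standard, I would defer the details to Appendix~\ref{app:twobounds} and only verify there that they invoke nothing about $S$ except consistency.

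For the lower bound I would split on whether $p\gtrsim n$. In the regime $p\gtrsim n$ a finite projective plane of order $\Theta(\sqrt p)$ already suffices: take an arbitrary $n$ of its points as nodes and its lines as paths, each ordered by a coordinate. Any two lines meet in at most one point, so the system is $2$-bridge-free for the trivial reason that no two paths share two nodes at all, and a standard incidence count gives $\Omega(np^{1/2})$. When $p\lesssim n$ the projective plane is polynomially suboptimal, and here I would build the new, denser "dualized" construction over a finite field $\mathbb{F}_q$, with paths indexed by \emph{quadratic} polynomials rather than lines. In this system two paths typically share polynomially many nodes, and the crux is to order each path's nodes so that every pair of paths traverses its shared nodes in exactly opposite orders, which is exactly what forbids $2$-bridges. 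One then chooses $q$ and the dimensions of the node set and the polynomial family as functions of $n$ and $p$ so that the system has $\Theta(n)$ nodes, $\Theta(p)$ paths, and size $\Theta(\min\{np^{1/2},n^{2/3}p\})$, with standard rounding reductions for non-prime-power $q$ and monotonicity handling the remaining $(n,p)$.

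The main obstacle is this $p\lesssim n$ construction, specifically the global reversal requirement. It is easy to make two paths share two nodes in reversed order, but the requirement applies to all $\binom p2$ pairs at once, and it "wants" to fail once three paths pairwise share nodes: if $\pi$ reverses $\pi'$ and $\pi'$ reverses $\pi''$ on overlapping node sets, then $\pi$ and $\pi''$ are pushed toward agreeing. The role of the algebra is to dissolve this tension — in the quadratic construction the order on each path's nodes is read off from the field, and the reversal between any two paths falls out as an automatic consequence of the arithmetic, so no two local ordering choices can ever conflict. Pinning down the right polynomial family, proving the reversal identity cleanly, and checking the parameter count is where essentially all the work lies.
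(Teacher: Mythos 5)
Your plan matches the paper's: observe $\betastar(n,p,2) = \beta(n,p,2)$; delegate the upper bounds to the two consistency arguments recorded in Appendix~\ref{app:twobounds}; and build the lower bound from a quadratics-over-$\mathbb{F}_q$ system. One small departure: you route $p \gtrsim n$ through projective planes, while the paper builds the single balanced construction at $p = \Theta(n^{2/3})$ and then covers every $(n,p)$ by deletion (delete nodes, each of degree $\Theta(p^{1/2})$, to get the $np^{1/2}$ branch; delete paths, each of length $\Theta(n^{2/3})$, to get the $n^{2/3}p$ branch). Your route works but is not needed, and this deletion step is also the real answer to your vague "choose $q$ and the dimensions \dots as functions of $n$ and $p$'' --- the algebraic family has essentially one free parameter $q$, and it always sits at $p = \Theta(n^{2/3})$.

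Two things to fix in the construction description. First, you have the roles reversed: in the paper the \emph{nodes} are the $q^3$ quadratic polynomials and the paths come from the $q^2$ points of $\mathbb{F}_q^2$, giving $p \approx n^{2/3}$ as needed; taking paths to be the quadratics would land at $p \gg n$ and is also inconsistent with your own observation that two paths share polynomially many nodes, since two quadratics meet in at most two points. Second, you explicitly defer exactly the load-bearing steps ("pinning down the right polynomial family, proving the reversal identity cleanly"), which are the content of the argument: the paper circularly orders each $Q_{(x,y)}$ by the derivative $f'(x)$, then \emph{splits} each circular order into three contiguous arcs, and verifies via a direct computation (Lemma~\ref{lem:quadcirc}) that for quadratics $z$ through two points $(x_1,y_1),(x_2,y_2)$ with $x_1 \neq x_2$, the values $z'(x_1)$ and $z'(x_2)$ are both affine in the leading coefficient with opposite slopes, so the two circular orders are exact reverses. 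The split into thirds is not cosmetic: working with the full circular order would let two paths intersect both at the ``start'' and at the ``end'' of the wrap-around, which can realize a $2$-bridge.
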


Since $2$-bridges are not sensitive to ordering, we immediately have $\beta(n, p, 2) = \betastar(n, p, 2)$.
The upper bounds for Theorem \ref{thm:twobeta} are implicit in \cite{Bodwin21, CE06}; for completeness, we supply proofs in Appendix \ref{app:twobounds}.
The lower bound is new, and will be the focus of the rest of this section.
We recall that
$$\beta(n, p, 2) \ge \Omega(n+p)$$
is an immediate lower bound, by considering either $1$ path through all $n$ nodes (giving a lower bound of $\Omega(n)$), or by considering $p$ paths of $1$ node each (giving a lower bound of $\Omega(p)$).
It thus remains to prove
$$\beta(n, p, 2) \ge \Omega\left(\min\{np^{1/2}, n^{2/3}p\}\right).$$
These two minimized bounds meet at $p=n^{2/3}$.
We will begin by considering this special case: that is, our goal is to construct a 2-bridge-free path system $S = (V, \Pi)$ with $p = \Theta(n^{2/3})$ paths and $\|S\| = \Omega(n^{4/3})$.
We will then generalize to the full bound at the end.

\subsubsection{Construction of $S$}

\paragraph{The nodes.} Let $q$ be an arbitrary prime and let $F_q$ be the finite field on $q$ elements.
Let $Q$ be the set of polynomials over $F_q$ of degree $\le 2$.
The polynomials in $Q$ will ultimately correspond to the \emph{nodes} of the path system: $V = Q$, so $n = |V| = q^3$.

\paragraph{The paths.} For each $(x, y) \in F_q^2$, let $Q_{(x, y)} \subseteq Q$ be the set of polynomials that intersect the point $(x, y)$; that is,
$$Q_{(x, y)} := \left\{ f \in Q \ \mid \ f(x) = y\right\}.$$
There are $q^2$ points $(x, y)$, and so there are $q^2$ such sets in total.
Our plan for defining our paths is to put a circular ordering on the elements of each $Q_{(x, y)}$, and then split the ordering into three parts, giving three paths for each $(x, y)$.
Hence there will be $3q^2$ paths in total.

To define an ordering on $Q_{(x, y)}$: for a polynomial $f(x) = ax^2 + bx + c \in Q$, we define its \emph{derivative} as $f'(x) := 2ax + b$, which we interpret as an element of $F_q$.
Circularly order the polynomials in $Q_{(x, y)}$ by derivative $f'(x)$.
We note that some polynomials in $Q_{(x, y)}$ will have tied derivatives; these ties may be broken arbitrarily.
We then equitably partition the circular ordering into three contiguous parts, and add all three parts as paths in $\Pi$.

\paragraph{Size analysis.} We have $q^3 =: n$ nodes (polynomials).
We have $q^2$ points $(x, y) \in F_q^2$; each point is associated to three paths with $q^2$ nodes between them.
Thus we have
$$\|S\| = \sum \limits_{v \in V} \deg(v) = q^4.$$

\subsubsection{Proof of 2-bridge-freeness}

We will need the following structural lemma:
\begin{lemma} [See Figure \ref{fig:quadcirc} for intuition] \label{lem:quadcirc}
Fix some $x_1 \ne x_2, y_1, y_2$, and let
$$Z := Q_{(x_1, y_1)} \cap Q_{(x_2, y_2)}.$$
Then we have:
\begin{itemize}
\item For $z \in Z$, the values $z'(x_1)$ are pairwise distinct,
\item For $z \in Z$, the values $z'(x_2)$ are pairwise distinct, and
\item The circular ordering of $Q_{(x_1, y_1)}$ restricted to $Z$ is exactly the reverse of the circular ordering of $Q_{(x_2, y_2)}$ restricted to $Z$.
\end{itemize}
\end{lemma}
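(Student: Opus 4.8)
The plan is to pin down $Z$ as an explicit one-parameter family of quadratics and then track how the two derivative values $z'(x_1), z'(x_2)$ depend on the parameter. Write a generic element of $Q$ as $f(t) = at^2 + bt + c$ with $(a,b,c) \in F_q^3$. Membership of $f$ in $Q_{(x_1,y_1)} \cap Q_{(x_2,y_2)}$ is exactly the pair of linear conditions $ax_i^2 + bx_i + c = y_i$ for $i = 1,2$ on the coefficient triple. Since $x_1 \ne x_2$, the $2 \times 3$ coefficient matrix of this system has full rank $2$ (already the columns indexed by $b,c$ give the nonzero minor $x_1 - x_2$), so its solution set is a $1$-dimensional affine subspace of $F_q^3$. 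Concretely, it is parametrized by the leading coefficient $a$: subtracting the two equations and solving gives $b = m - (x_1+x_2)a$, where $m := (y_1 - y_2)/(x_1-x_2)$, and then $c$ is determined. Hence $Z = \{z_a : a \in F_q\}$ for an explicit family $z_a$, and in particular $|Z| = q$ with the bijection $a \leftrightarrow z_a$.

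Next I would compute the derivative values. From $z_a(t) = at^2 + (m - (x_1+x_2)a)t + c$ we get $z_a'(t) = 2at + m - (x_1+x_2)a$, and substituting $t = x_1$ and $t = x_2$ and simplifying (pure ring arithmetic, with no division by $2$ needed) yields $z_a'(x_1) = m + (x_1 - x_2)a$ and $z_a'(x_2) = m - (x_1 - x_2)a$. Because $x_1 - x_2 \ne 0$ in $F_q$, each of $a \mapsto z_a'(x_1)$ and $a \mapsto z_a'(x_2)$ is an affine bijection $F_q \to F_q$. This immediately gives the first two bullets: distinct elements of $Z$ have distinct values $z'(x_1)$, and distinct values $z'(x_2)$. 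In particular there are no ties among elements of $Z$ in either of the two circular orderings, so the restriction of each ordering to $Z$ is determined purely by the relevant derivative values and is unaffected by the arbitrary tie-breaking used in the construction.

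For the third bullet, the key identity is $z_a'(x_1) + z_a'(x_2) = 2m$, i.e. $z_a'(x_2) = 2m - z_a'(x_1)$. The circular ordering of $Q_{(x_1,y_1)}$ restricted to $Z$ lists the elements $z_a$ in the cyclic order of the values $z_a'(x_1) \in F_q$, and likewise the ordering of $Q_{(x_2,y_2)}$ restricted to $Z$ uses the values $z_a'(x_2)$. The map $w \mapsto 2m - w$ on $\mathbb{Z}/q\mathbb{Z}$ is a reflection — negation composed with an additive shift — and negation reverses the standard cyclic order while the shift merely rotates it; therefore the cyclic order of $\{z_a'(x_2)\}_{a}$ is exactly the reverse of the cyclic order of $\{z_a'(x_1)\}_{a}$. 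Transporting this through the bijection $a \leftrightarrow z_a$ shows the two restricted circular orderings of $Z$ are reverses of one another. I do not expect a genuine obstacle here: the lemma reduces to elementary linear algebra over $F_q$ plus the fact that a reflection of $\mathbb{Z}/q\mathbb{Z}$ reverses cyclic order; the only points needing care are verifying independence of the two defining equations of $Z$ (handled by $x_1 \ne x_2$) and checking that the tie-breaking is irrelevant on $Z$ (which follows from the first two bullets).
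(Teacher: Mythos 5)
Your proof is correct and takes essentially the same route as the paper's: you parametrize $Z$ by the leading coefficient $a$, observe that $a \mapsto z_a'(x_1)$ and $a \mapsto z_a'(x_2)$ are affine bijections of $F_q$, and deduce the reversed cyclic order from the relation between these two derivative values. The only cosmetic difference is in the final step, where you invoke the reflection $w \mapsto 2m - w$ while the paper uses the equivalent identity $z_a'(x_1) = z_{-a}'(x_2)$.
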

\begin{proof}
For any $z(x) = ax^2 + bx + c \in Z$, subtracting the equations $z(x_1) = y_1$ and $z(x_2) = y_2$, we get
$$a(x_2^2 - x_1^2) + b(x_2 - x_1) = y_2 - y_1$$
and so, solving for $b$, we have
$$b = \frac{y_2 - y_1 - a(x_2^2 - x_1^2)}{x_2 - x_1}.$$
Rearranging $z(x_1) = y_1$, we also have
$$c = y_1 - ax_1^2 - bx_1.$$
Thus the polynomials $z \in Z$ can be written in the following form, parameterized by $a \in F_q$ (only):
$$z(x) = ax^2 + \left(\frac{y_2 - y_1 - a(x_2^2 - x_1^2)}{x_2 - x_1}\right) x + \left(y_1 - ax_1^2 - \left(\frac{y_2 - y_1 - a(x_2^2 - x_1^2)}{x_2 - x_1}\right)x_1\right)$$
We then have
$$z'(x) = 2ax + \frac{y_2 - y_1 - a(x_2^2 - x_1^2)}{x_2 - x_1} = 2ax + \frac{y_2 - y_1}{x_2 - x_1} - a(x_2 + x_1)$$
and so we compute
$$z'(x_1) = \frac{y_2 - y_1}{x_2 - x_1} + a (x_1 - x_2)$$
and
$$z'(x_2) = \frac{y_2 - y_1}{x_2 - x_1} + a (x_2 - x_1).$$
Both of these functions are affine in $a$, and hence they take different values for each possible choice of $a$, proving the first two points.
The third point follows from the observation that, letting $z_a, z_{-a} \in Z$ be quadratics with parameters $a, -a \in F_q$ respectively, the previous two equations imply that
\begin{align*}
z'_a(x_1) = z'_{-a}(x_2).
\end{align*}

In particular: let $a_i$ be the choice of parameter $a$ such that $z'_{a_i}(x_1) = i$ (note that, since $z'(x_1)$ is affine in $a$, such a choice $a_i$ must exist).
Then the circular ordering of quadratics in $Q_{(x_1, y_1)}$ is
$$\left(z_{a_0}, z_{a_1}, z_{a_2}, \dots, z_{a_{q-2}}, z_{a_{q-1}}, z_{a_q}=z_{a_0}  \right).$$
Meanwhile, using that $z'_{a_i}(x_1) = z'_{-a_i}(x_2)$, the circular ordering of quadratics in $Q_{(x_2, y_2)}$ is
$$\left(z_{a_0}, z_{a_{-1}}, z_{a_{-2}}, \dots, z_{a_{-(q-2)}}, z_{a_{-(q-1)}}, z_{a_{-q}} = z_{a_0}  \right).$$
Since the parameter $a$ is taken mod $q$, these are reverse circular orderings, completing the proof.
\end{proof}

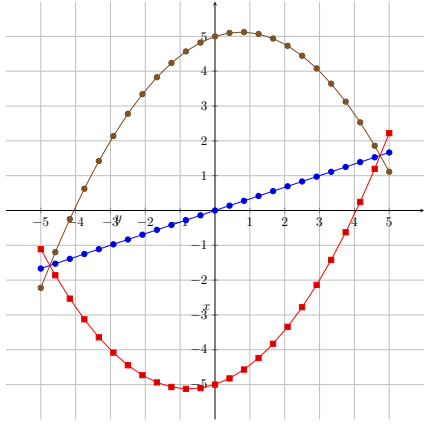
\begin{figure}
\begin{center}
\begin{tikzpicture}[scale=0.5]
\begin{axis}[
axis y line=center,
axis x line=middle,
axis equal,
grid=both,
xmax=6,xmin=-6,
ymin=-6,ymax=6,
xlabel=$x$,ylabel=$y$,
xtick={-5,...,5},
ytick={-5,...,5},
width=5in,
height=5in,
anchor=center,
]

\addplot {x/3} ;
\addplot {x^2/4.5 + x/3 - 5};
\addplot {5 + x/3 - x^2/4.5};
\end{axis}
\end{tikzpicture}
\end{center}
\caption{\label{fig:quadcirc} In $\rr^2$, if one plots quadratics that pass through two fixed points, the orderings by derivative at these two points are exactly opposite.
Lemma \ref{lem:quadcirc} proves that an analogous fact holds in $F_q^2$.}
\end{figure}

\begin{lemma} 
$S$ is $2$-bridge-free.
\end{lemma}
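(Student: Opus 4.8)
The plan is to show that every potential $2$-bridge in $S$ is ruled out by Lemma~\ref{lem:quadcirc}. Recall that a $2$-bridge consists of two distinct paths $\pi_1$ (the arc) and $\pi_2$ (the river) and two distinct nodes $v_1, v_2$ with $v_1 <_{\pi_1} v_2$ and $v_1 <_{\pi_2} v_2$ — in other words, two paths that contain the same pair of nodes $\{v_1, v_2\}$ in the \emph{same} relative order. So to prove $2$-bridge-freeness it suffices to show: for any two distinct paths $\pi, \pi'$ in $\Pi$, if a node appears in both, then the nodes of $\pi \cap \pi'$ appear in exactly \emph{opposite} orders along $\pi$ and $\pi'$ (and in particular no two of them agree in order, which forbids a $2$-bridge using $\pi, \pi'$ in either role).

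First I would dispose of the case where $\pi$ and $\pi'$ are two of the three pieces coming from the \emph{same} point $(x,y)$. These three paths are contiguous arcs of a single circular ordering of $Q_{(x,y)}$, so any two of them are disjoint as node sequences — they share no node at all, hence cannot form a $2$-bridge together. (If the equitable partition leaves a shared endpoint, one can either choose the partition to avoid this or note a single shared node still cannot be $v_1$ and $v_2$ simultaneously.) Next, the main case: $\pi$ is a piece of the ordering of $Q_{(x_1,y_1)}$ and $\pi'$ is a piece of the ordering of $Q_{(x_2,y_2)}$ with $(x_1,y_1) \neq (x_2,y_2)$. If $x_1 = x_2$ but $y_1 \neq y_2$, then $Q_{(x_1,y_1)}$ and $Q_{(x_2,y_2)}$ are disjoint (a polynomial cannot take two values at one point), so again $\pi \cap \pi' = \emptyset$ and there is nothing to check. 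The remaining case is $x_1 \neq x_2$, where Lemma~\ref{lem:quadcirc} applies directly to $Z := Q_{(x_1,y_1)} \cap Q_{(x_2,y_2)}$: the circular ordering of $Q_{(x_1,y_1)}$ restricted to $Z$ is the reverse of that of $Q_{(x_2,y_2)}$ restricted to $Z$. Since $\pi$ is a contiguous arc of the first circular ordering and $\pi'$ a contiguous arc of the second, the linear order they induce on any two common nodes $v_1, v_2 \in Z \cap \pi \cap \pi'$ is opposite. (One has to be slightly careful that "contiguous arc of a circular order" induces a well-defined linear order on its elements, and that reversing the circular order reverses that induced linear order on the intersection — this is immediate once one fixes that $\pi$'s two common nodes lie in one arc.) Hence $v_1 <_\pi v_2$ forces $v_2 <_{\pi'} v_1$, so $\{\pi,\pi',v_1,v_2\}$ is not a $2$-bridge, and by symmetry neither is the configuration with the roles of $\pi,\pi'$ swapped.

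The main obstacle is a bookkeeping one rather than a conceptual one: Lemma~\ref{lem:quadcirc} is stated for the \emph{circular} orderings of the full sets $Q_{(x_i,y_i)}$, whereas the paths in $\Pi$ are \emph{linear} sub-arcs obtained by cutting each circle into three contiguous pieces. I need to argue cleanly that "reverse circular orders" implies "reverse induced linear orders on any common pair of nodes lying in a single arc of each." The key point is that two nodes $v_1, v_2$ on the same path $\pi$ (a contiguous arc) have a well-defined order determined by traversing that arc in its chosen direction, and this direction is inherited consistently from the circular order; reversing the circle reverses the traversal direction of the corresponding arc in the other system, hence reverses the order of $v_1, v_2$. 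A second minor point to handle is the arbitrary tie-breaking among polynomials in $Q_{(x,y)}$ with equal derivative: I should check that Lemma~\ref{lem:quadcirc}'s first two bullets guarantee that within the relevant intersection $Z$ there are no ties at $x_1$ or at $x_2$, so the tie-breaking never affects the relative order of two nodes that actually lie in a common intersection $Z$ — which is exactly the only situation where a shared pair could arise. Once these two small points are nailed down, the proof is just a short case analysis as above.
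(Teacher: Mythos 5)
Your case analysis and reliance on Lemma~\ref{lem:quadcirc} track the paper's proof exactly, and you correctly handle the tie-breaking subtlety by noting that Lemma~\ref{lem:quadcirc} gives distinct derivative values at $x_1$ and at $x_2$ for elements of $Z$, so ties never affect a shared pair.

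The gap is in the step you declare ``immediate.'' The claim that reversing a circular order reverses the induced linear order on the intersection of two arcs is \emph{not} a general fact and fails for long arcs. For instance, take the circular order $(1,2,3,4,5,6)$ and its reverse, and arcs $\pi_1=(1,2,3,4,5)$ of the first and $\pi_2=(2,1,6,5)$ of the second. Both $1$ and $5$ lie in $\pi_1 \cap \pi_2$, and $1 <_{\pi_1} 5$ while also $1 <_{\pi_2} 5$ --- the same order, i.e.\ a $2$-bridge. This is the ``wrap-around'' failure. What saves the construction is quantitative: each path is a contiguous \emph{third} of its circle, and $Z$ has exactly $q$ elements (one per $a \in F_q$) with distinct derivative values at $x_1$ and at $x_2$, so $\pi_1 \cap Z$ and $\pi_2 \cap Z$ each span roughly $q/3$ of the $q$ derivative values. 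Together they cannot cover the circle on $Z$, hence $\pi_1 \cap \pi_2 \cap Z$ is a \emph{single} contiguous arc, and only on a single arc does ``reverse circular order $\Rightarrow$ reverse linear order'' hold. The paper flags precisely this in its parenthetical (``it is important that we take $\ll 1/2$ of the circular ordering\dots''). Your proof would be complete once this arc-length argument is supplied; as written, the reversal claim you lean on is an unstated lemma that is false without the $\ll 1/2$ hypothesis, and your stated justification (``reversing the circle reverses the traversal direction\dots'') does not use that hypothesis and therefore cannot be correct as a proof of it.
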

\begin{proof}
Let $\pi_1, \pi_2 \in \Pi$, and recall that $\pi_1, \pi_2$ are respectively constructed with respect to two points $(x_1, y_1), (x_2, y_2) \in F_q^2$.
We consider two cases:
\begin{itemize}
\item If $x_1 = x_2$ then by construction $\pi_1, \pi_2$ are node-disjoint, either because $y_1 \ne y_2$ and so $Q_{(x_1=x_2, y_1)}$ and $Q_{(x_1=x_2, y_2)}$ are disjoint, or because $y_1 = y_2$ and $\pi_1, \pi_2$ represent different parts in the node-disjoint partition of $Q_{(x_1=x_2, y_1=y_2)}$.
Hence $\pi_1, \pi_2$ do not form a $2$-bridge.

\item If $x_1 \ne x_2$, then the points common to $\pi_1, \pi_2$ correspond to a subset of the polynomials in
$$Z = Q_{(x_1, y_1)} \cap Q_{(x_2, y_2)}.$$
By Lemma \ref{lem:quadcirc}, the polynomials in $Z$ have distinct derivatives and opposite circular orderings in $Q_{(x_1, y_1)}, Q_{(x_2, y_2)}$.
It follows that, when we partition the points of $Q_{(x_1, y_1)}, Q_{(x_2, y_2)}$ into thirds to form $\pi_1, \pi_2$, they have opposite orderings of any points $z_1, z_2 \in \pi_1 \cap \pi_2$.
(Here it is important that we take $\ll 1/2$ of the circular ordering to form each path, to avoid the possibility of two paths wrapping around either side of the circular ordering to intersect both at the beginning and the end.)
Hence $\pi_1, \pi_2$ do not form a $2$-bridge. \qedhere
\end{itemize}
\end{proof}

\subsubsection{Remaining Lower Bound}

We have now completed the lower bound proof in the special case $p = \Theta(n^{2/3})$, and it remains to discuss the extension to general $p$.
To obtain the remaining points on our lower bound curve, we can post-process our construction in one of two ways:
\begin{itemize}
\item Suppose we delete nodes from the construction arbitrarily, until only $n' \ll n$ nodes remain.
In our original construction, we had $n$ nodes of degree $\Theta(n^{1/3}) = \Theta(p^{1/2})$ each.
Thus, after deletions, we have $\|S\| = \Theta(n' \cdot p^{1/2})$, which provides one part of our lower bound curve.

\item Alternately, suppose we delete paths from the construction arbitrarily, until only $p' \ll p$ paths remain.
In our original construction, we had $p = \Theta(n^{2/3})$ paths of length $\Theta(n^{2/3})$ each.
Thus, after deletions, we have $\|S\| = \Theta\left(p' n^{2/3}\right)$, which provides the other part of our lower bound curve.
\end{itemize}

\subsection{Upper Bounds for $k=4$ \label{sec:bnp4upper}}

We will prove:
\begin{theorem} \label{thm:bnp4upper}
$\beta(n, p, 4) = O\left( n^{3/4} p^{1/2} + n^{5/8} p^{11/16}  + n + p \right)$.
\end{theorem}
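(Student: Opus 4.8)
The plan is to first apply the cleaning lemma (Lemma~\ref{lem:cleaning}) to reduce to a system $S=(V,\Pi)$ that is both degree- and length-regular, so I may write $n$ for the number of nodes, $p$ for the number of paths, $d$ for the (average $=$ every) degree with $\deg(v)=\Theta(d)$, $\ell$ for the length with $|\pi|=\Theta(\ell)$, and $nd=p\ell=\|S\|$; I may also assume $\|S\|\ge Cn$ and $\|S\|\ge Cp$ (hence $d,\ell\ge C$), as otherwise there is nothing to prove. The goal is then an upper bound on $d\ell=\|S\|^2/(np)$.

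Next I would extract the structural content of forbidding $b$-bridges for $b\le 4$. No $2$-bridge is equivalent to: every \emph{ordered} node pair is realized by at most one path. This already gives $\sum_{\pi}\binom{|\pi|}{2}\le n^2$ (hence the coarse bound $\|S\|=O(np^{1/2}+n+p)$) and the ``overlap budget'' $\sum_{\{\pi,\pi'\}}\binom{|\pi\cap\pi'|}{2}\le\binom n2$, since any node pair lies on at most two paths. No $2$-bridge also forces any two paths that share $\ge 2$ nodes to share them in reversed order, with no third path containing $\ge 2$ of those nodes. No $3$-bridge says: if $s\to a\to v$ is a chain through two distinct paths, then no further path places $s$ before $v$; no $4$-bridge says the same for chains $s\to a\to b\to v$. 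Hence, writing $R_i(s)$ for the set of nodes reached from $s$ by an $i$-step chain through distinct paths, we get $R_1(s)\cap R_2(s)=\emptyset$ and --- the one place that genuinely needs $4$-bridge-freeness --- $R_1(s)\cap R_3(s)=\emptyset$ (both up to lower-order degenerate overlaps).

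The heart of the argument is a forward search that turns these disjointness facts into a bound on $d\ell$. As the excerpt warns, the naive search fails: missing $4$-bridges do not by themselves certify that a depth-two or depth-three frontier consists of \emph{fresh} nodes, because many distinct chains can join $s$ to one target when they run along a pair of paths with a long common subpath. To cope, I would fix a threshold $t$, call an ordered pair of paths \emph{tangled} if $|\pi\cap\pi'|>t$ and \emph{transversal} otherwise, and split the search accordingly. On the transversal side the key lemma --- the technical core --- is that for fixed $s,v$ the number of short chains (length two, or length three) from $s$ to $v$ that use only transversal path-pairs is only a small power of $t$; here $4$-bridge-freeness, together with the reversed-order rigidity from no-$2$-bridge, is used to show that the path-pairs (or -triples) joining a fixed source and target are rigid, so the only real freedom is the $\le t$ choices of intermediate node. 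Plugging this multiplicity bound into $|R_1(s)|+|R_i^{\mathrm{tr}}(s)|\le n$, with $|R_1(s)|=\Theta(d\ell)$ and $\Omega((d\ell)^i)$ transversal $i$-chains out of a typical $s$ (valid once $t\gg n/d^2$ makes tangled chains lower order), and summing over $s$, gives $d\ell$ at most roughly $n^{1/(i+1)}$ times a fixed power of $t$. On the tangled side, the overlap budget gives $\sum_{\mathrm{tangled}}|\pi\cap\pi'|=O(n^2/t)$, and since the common subpath of a tangled pair is crossed at most once more by any other path one can bound the tangled contribution to the chain counts (and, with a more careful charge, to $\|S\|$) by a quantity that grows with $t$; balancing the two sides and optimizing $t$ separately for $p\le n^{2/3}$ and $p\ge n^{2/3}$ yields the two terms $n^{3/4}p^{1/2}$ and $n^{5/8}p^{11/16}$.

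The step I expect to be the real obstacle is precisely the transversal collision bound: showing that $4$-bridge-freeness (plus the reversed-order rigidity) pins down the short chains between a fixed pair of nodes so that the depth-two/three frontier has multiplicity only polynomial in $t$ rather than unbounded. The surrounding steps --- the cleaning lemma, the elementary ordered-pair and overlap-budget inequalities, the regularity estimates for a typical $s$, and the final optimization of $t$ --- are bookkeeping by comparison, but controlling all the error terms so that they are genuinely lower order is what forces the case analysis and the ``considerable technical work'' the excerpt mentions.
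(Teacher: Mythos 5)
Your plan and the paper's proof diverge almost immediately. The paper does not do a thresholded forward search at all. After the cleaning lemma and a $2$-cycle removal step, it (i) proves a general $L^2$ bound $\|T\|_2^2 = O(nL + p^{1/3}n^{4/3})$ for any $3$-bridge-free system $T$ with max path length $L$; (ii) constructs a random subsystem $S'$ by choosing a random \emph{base path} $\pi_b$, letting $Q$ be the paths hitting $\pi_b$ once, and keeping only the portion of each $q\in Q$ within hop-distance $h$ of the intersection (this is where the $h$-parameter lives --- it is a local depth along paths, not a threshold on overlap sizes); (iii) defines the quantity $R_S(\pi_1,\pi_2)$ of node pairs on $\pi_1\times\pi_2$ joined by a third path, proves a total-order (``no crossing'') structure for $R_S$ using $4$-bridge-freeness and $2$-cycle-freeness; and (iv) relates $\|S\|_2^2$ to $\mathcal R_S=\sum|R_S|$, uses the random base path to charge $\Omega(|R_S|^2 h/(\ell p))$ pairs to each $(\pi_1,\pi_2)$, and closes the loop with Cauchy--Schwarz applied twice. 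Your parameter $t$ (a threshold on $|\pi\cap\pi'|$) and the paper's $h$ (a truncation depth along paths in $Q$) are not the same object, and the balancing computation is different.

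The genuine gap in your proposal is the unproved ``transversal collision bound'': that for a fixed source $s$ and target $v$, the number of $2$- or $3$-chains from $s$ to $v$ using only transversal (small-overlap) path pairs is $\mathrm{poly}(t)$. You identify this as the hard step, and it is. But what you give for it --- that $4$-bridge-freeness plus reversed-order rigidity makes ``the path-pairs joining a fixed source and target rigid, so the only real freedom is the $\le t$ choices of intermediate node'' --- is an assertion, not an argument, and it is exactly the kind of claim that the paper notes is \emph{not} delivered by $4$-bridge-freeness in a forward search: missing $4$-bridges do not certify that the depth-$2$ or depth-$3$ frontier is fresh, because two paths $\pi_1,\pi_2$ can both run through a common ``spine'' and give many chains $s\to a\to v$ without ever completing a bridge (the bridge would require a river going \emph{from} $s$ \emph{to} $v$ via a fourth path with the right orientation, and nothing forces such a path to exist). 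The paper resolves precisely this difficulty by \emph{not} counting chains to a fixed endpoint and instead analyzing the monotone structure of $R_S(\pi_1,\pi_2)$ across the whole pair of paths, then converting that into an $L^2$-norm statement and amplifying it by the random base-path sampling. So the conclusion matches, but your route to it has an unproved lemma at its heart, and the no-crossing / $R_S$ machinery the paper builds is not a cosmetic variant of your sketch --- it is the answer to the obstacle you flagged. As written, the proposal is a plan with the load-bearing step deferred, not a proof.
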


\subsubsection{Technical Lemma: A Bound on the Sum Square of Path Lengths}

A major technical lemma for our proof will be an upper bound on the quantity
$$\|T\|_2^2 := \sum \limits_{\pi \in \Pi} |\pi|^2,$$
i.e., the squared $L^2$ norm of path lengths in a path system (note that the size notion $\|T\|$ may be viewed as the $L^1$ norm of path lengths).
We name this path system $T$ rather than $S$ here because our plan is \emph{not} to apply this lemma to the entire path system $S$ that we analyze in Theorem \ref{thm:bnp4upper}, but rather to a specific subsystem $T \subseteq S$ that we will construct later.
We prove the following bound:
\begin{lemma} \label{lem:2normbound}
Let $T = (V, \Pi)$ be a path system with $n$ nodes, $p$ paths, bridge girth $> 3$, maximum path length $L$, and average path length at least a sufficiently large constant.
Then we have
$$\|T\|_2^2 = O\left( nL + p^{1/3} n^{4/3} \right).$$
\end{lemma}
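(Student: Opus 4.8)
The plan is to bound $\|T\|_2^2 = \sum_{\pi} |\pi|^2$ by a double-counting argument over ordered pairs of nodes on a common path. Observe that $\sum_\pi |\pi|^2 = \sum_\pi \sum_{(u,v): u,v \in \pi} 1$, where the inner sum ranges over all ordered pairs of (not necessarily distinct) nodes on $\pi$; up to the diagonal contribution $\|T\| = O(pL)$, this equals the number of triples $(u, v, \pi)$ with $u <_\pi v$. So it suffices to bound $N := \#\{(u,v,\pi) : u,v \in V,\ \pi \in \Pi,\ u <_\pi v\}$. The key structural input is bridge-girth $>3$, i.e., $T$ is $2$-bridge-free and $3$-bridge-free. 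First I would record what $2$- and $3$-bridge-freeness buy us: $2$-bridge-freeness says that for any ordered pair $(u,v)$, the paths containing both $u$ and $v$ all traverse them in the \emph{same} direction (say $u$ before $v$); $3$-bridge-freeness then controls how a third node/path can interact with such a pair.

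The main step is to fix an ordered pair $(u, v)$ with $u <_\pi v$ for at least one $\pi$, let $m(u,v) := \#\{\pi : u <_\pi v\}$ (which by $2$-bridge-freeness is also the total number of paths containing both $u$ and $v$), and argue that pairs with large $m(u,v)$ are rare. For this I would route through the \emph{forward-search}-style counting the paper alludes to (Theorems \ref{thm:moore}, \ref{thm:cebp2}, \ref{thm:threebound}): consider the bipartite-like structure of (node, path) incidences and use $3$-bridge-freeness to say that if many paths share the ordered pair $(u,v)$, then the portions of those paths \emph{after} $v$ (equivalently, before $u$) must be essentially disjoint outside a small core, or else one obtains a forbidden $3$-bridge with river among these paths and an arc through $u$ or $v$. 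Concretely: if $\pi_1, \pi_2$ both contain $u <\cdot< v$ and some node $w$ with $v <_{\pi_1} w$ and $w <_{\pi_2} v$... — one has to be careful which reversals create a $3$-bridge versus a $2$-bridge — and extract that the ``downstream'' sets are consistent, so their union has size $O(n)$ while contributing $m(u,v) \cdot (\text{avg downstream length})$ incidences. Balancing this against the total incidence budget $\|T\| = nd = p\ell$ and the node bound $n$ is what produces the two terms $nL$ and $p^{1/3}n^{4/3}$.

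I would organize the arithmetic as follows: split $N$ according to whether $v$ is ``early'' or ``late'' on $\pi$; the contribution of late positions relative to short suffixes is at most $\|T\| \cdot O(1)$-ish, and the dangerous contribution is from pairs $(u,v)$ far apart on many paths. Let $A$ be the set of such heavy pairs. Using $3$-bridge-freeness as above, each heavy pair forces a large ``consistent'' structure, and counting these structures against the $n$ available nodes and the $p$ available paths with maximum length $L$ gives $|A| = O(\text{something})$ and each pair contributes $O(L)$ or so to $\|T\|_2^2$. Optimizing the split threshold between ``heavy'' and ``light'' pairs yields $p^{1/3} n^{4/3}$: the threshold $t$ on $m(u,v)$ gets chosen so that $n^2/t^{?} \cdot L \approx nL$ and the heavy part is $\approx p \cdot (n/t)\cdot$ (per-path length), balanced at $t \approx (n/p^{?})^{?}$.

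The hard part will be the bridge-chasing in the middle step: correctly identifying exactly which configurations of a third path crossing a heavily-shared ordered pair $(u,v)$ are ruled out by the absence of a $3$-bridge (as opposed to merely a $2$-bridge), and packaging this into a clean ``consistency'' statement that the downstream/upstream node-sets of the $m(u,v)$ paths overlap in a controlled way. Getting the direction conventions right (river last, arcs preceding, nodes need not be consecutive) is where the subtlety lies, and it is also where the assumption ``average path length at least a sufficiently large constant'' presumably gets used, to discard degenerate short-path cases. Once that consistency lemma is in hand, the rest is the optimization of the two-way split described above, which I expect to be routine.
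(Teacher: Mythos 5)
Your plan hinges on the quantity $m(u,v) := \#\{\pi : u <_\pi v\}$ and on a heavy/light split controlled by $3$-bridge-freeness, but the premise is based on a misreading of the $2$-bridge definition. You write that ``$2$-bridge-freeness says that for any ordered pair $(u,v)$, the paths containing both $u$ and $v$ all traverse them in the same direction,'' and that $m(u,v)$ ``is also the total number of paths containing both $u$ and $v$.'' This is exactly backwards. A $2$-bridge is precisely two distinct paths $\pi_1, \pi_2$ that both contain $u$ and $v$ with $u$ preceding $v$ in \emph{both}; so bridge girth $> 3$ (hence $> 2$) forces $m(u,v) \le 1$ for every ordered pair. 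Two paths traversing $u,v$ in opposite orders are permitted — that is the separate notion of a $2$-cycle, which the paper only removes later via Lemma~\ref{lem:2cyccleaning}, not something forbidden by bridge girth. With $m(u,v) \le 1$ unconditionally, the entire ``heavy pair'' machinery is vacuous: the number of triples $(u,v,\pi)$ with $u <_\pi v$ is at most $n^2$, and the only thing your argument produces is $\|T\|_2^2 = O(n^2)$, which has no dependence on $p$ or $L$ and is strictly weaker than the target $O(nL + p^{1/3}n^{4/3})$ whenever $p \ll n^2$ or $L \ll n$. Put differently: the $p^{1/3}$ factor cannot come from a per-pair argument at all, since pairs are already maximally constrained by girth $>2$ alone; the gain has to come from somewhere else.

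The paper's actual route is quite different and worth contrasting with yours. Instead of double-counting incidences over node pairs, it partitions $\Pi$ into dyadic length classes $T_j$ (paths of length $\approx j$), applies the black-box bound $\|T_j\| = O(n^{2/3}p_j^{2/3} + n)$ from Theorem~\ref{thm:threebound} to each class, and inverts to control the number of paths $p_j$ at each scale and hence the contribution of that scale to $\|T\|_2^2$ (Lemma~\ref{lem:jsbound}). Summing the resulting geometric series over $j \ge n^{1/2}$ gives the $O(nL)$ term (dominated by the top scale $L$), and over smaller $j$ gives a term dominated by the \emph{bottom} scale. The cleverness that produces $p^{1/3}n^{4/3}$ is a self-referential choice of that bottom cutoff: paths of length below $\|T\|_2/(2p^{1/2})$ are declared ``short,'' and a trivial union bound shows they contribute at most a quarter of $\|T\|_2^2$ (Lemma~\ref{lem:noshortdom}), so the medium range's contribution $O(p^{1/2}n^2/\|T\|_2)$ can be set against $\|T\|_2^2$ and solved for $\|T\|_2^2 = O(p^{1/3}n^{4/3})$. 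None of this is a per-pair consistency argument. If you want to salvage your double-counting outline, you would at minimum need to (i) correct the $2$-bridge reading, (ii) find a way to extract $p$-dependence that does not route through $m(u,v)$ (which is already $\le 1$), and (iii) locate where $L$ and the ``average length at least a large constant'' hypothesis enter — in the paper the latter is needed only to legitimize the cleaning lemma's regularity assumptions inside Lemma~\ref{lem:jsbound}.
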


We will split our proof into a few claims.
Let us say that a path $\pi \in \Pi$ is:
\begin{itemize}
\item \emph{long} if $|\pi| > Cn^{1/2}$, where $C$ is a sufficiently large absolute constant that we leave implicit,
\item \emph{medium} if $\frac{\|T\|_2}{2p^{1/2}} \le |\pi| \le Cn^{1/2}$, or
\item \emph{short} if $|\pi| \le \frac{\|T\|_2}{2p^{1/2}}$.
\end{itemize}

Let us say that the \emph{long paths dominate} if the sum square of long paths is at least as large as the sum square of medium paths and as the sum square of short paths, and the \emph{medium/short paths dominate} if the analogous property holds for the medium/short paths.
The two terms added together in Lemma \ref{lem:2normbound} respectively arise from the cases where the long or medium paths dominate.
The following lemma dispatches with the remaining case:
\begin{lemma} \label{lem:noshortdom}
The short paths do not dominate.
\end{lemma}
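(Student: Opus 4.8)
The plan is to bound the short paths' contribution $\sum_{\pi\text{ short}}|\pi|^2$ by $\tfrac14\|T\|_2^2$, which is strictly less than $\tfrac13\|T\|_2^2$; since ``the short paths dominate'' would force $\|T\|_2^2 = \sum_{\pi\text{ short}}|\pi|^2 + \sum_{\pi\text{ medium}}|\pi|^2 + \sum_{\pi\text{ long}}|\pi|^2 \le 3\sum_{\pi\text{ short}}|\pi|^2$, this would immediately yield a contradiction. Here I use that $\|T\|_2 > 0$, which holds because the average path length is a large constant and hence some path has positive length.

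To get the key bound, I would argue in two steps. First, every short path satisfies $|\pi| \le \tfrac{\|T\|_2}{2p^{1/2}}$ by definition, so
$$\sum_{\pi\text{ short}}|\pi|^2 \;\le\; \frac{\|T\|_2}{2p^{1/2}}\sum_{\pi\text{ short}}|\pi|.$$
Second, since at most $p$ paths are short, Cauchy--Schwarz gives $\sum_{\pi\text{ short}}|\pi| \le p^{1/2}\bigl(\sum_{\pi\text{ short}}|\pi|^2\bigr)^{1/2}$. Writing $s := \bigl(\sum_{\pi\text{ short}}|\pi|^2\bigr)^{1/2}$ and combining the two inequalities, I obtain $s^2 \le \tfrac{\|T\|_2}{2}\,s$, hence $s \le \tfrac{\|T\|_2}{2}$, and therefore $\sum_{\pi\text{ short}}|\pi|^2 = s^2 \le \tfrac14\|T\|_2^2$, as needed.

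I do not expect a genuine obstacle in this sub-lemma; the only point requiring a moment of care is the cancellation of $s$ in the last step, which is valid when $s>0$ and trivial when $s=0$ (in that case there are no short paths of positive length, and since $\|T\|_2 > 0$ the short paths again fail to dominate). It is worth noting that this argument uses neither the bridge-girth hypothesis nor the regularity hypotheses on $T$: it is a pure consequence of Cauchy--Schwarz together with the specific threshold chosen in the definition of ``short,'' which is precisely what the cutoff $\tfrac{\|T\|_2}{2p^{1/2}}$ is engineered for.
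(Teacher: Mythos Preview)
Your proof is correct and reaches the same conclusion as the paper, namely $\sum_{\pi\text{ short}}|\pi|^2 \le \tfrac14\|T\|_2^2$. The paper's argument is slightly more direct: it simply bounds each of the at most $p$ short terms by $\bigl(\tfrac{\|T\|_2}{2p^{1/2}}\bigr)^2$ and sums, avoiding the Cauchy--Schwarz detour.
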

\begin{proof}
There are $\le p$ short paths, and by definition each one has length $\le \|T\|_2 / (2p^{1/2})$.
By unioning, their sum square is at most
$$p \cdot \frac{\|T\|_2^2}{4p} = \frac{\|T\|_2^2}{4}.$$
Thus the short paths contribute at most $1/4$ of the total value of $\|T\|_2^2$, so they cannot dominate.
\end{proof}

The following technical lemma will be useful towards bounding $\|T\|_2^2$ in both cases where the long or medium paths dominate:
\begin{lemma} \label{lem:jsbound}
Let $j$ be a parameter that is at least a sufficiently large constant, and let $T_j \subseteq T$ be the subsystem of $T$ that contains exactly the paths $\pi$ of length $j \le |\pi| \le 2j$.
Then:
$$\|T_j\|_2^2 = \begin{cases}
O\left( j n \right) & \text{if } j \ge n^{1/2}\\
O\left( j^{-1} n^2 \right) & \text{if } j \le n^{1/2}.
\end{cases}$$
\end{lemma}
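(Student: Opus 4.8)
The plan is to collapse the quantity $\|T_j\|_2^2$ down to a single parameter and then invoke the already-known upper bound on $\beta(\cdot,\cdot,3)$ restricted to one length scale. First I would record that, since every path $\pi$ of $T_j$ satisfies $j \le |\pi| \le 2j$, writing $p_j$ for the number of paths of $T_j$ we have $\|T_j\| = \sum_{\pi\in T_j}|\pi| = \Theta(j\,p_j)$ and $\|T_j\|_2^2 = \sum_{\pi\in T_j}|\pi|^2 = \Theta(j^2 p_j)$. So the entire lemma is equivalent to the pair of statements $p_j = O(n/j)$ when $j \ge n^{1/2}$ and $p_j = O(n^2/j^3)$ when $j \le n^{1/2}$.

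Then I would exploit that $T_j$ is a subsystem of $T$ and hence still has bridge girth $>3$, with at most $n$ nodes and $p_j$ paths. By the bound $\beta(n,p,3) = O(n^{2/3}p^{2/3} + n + p)$ recorded in Table~\ref{tbl:betabounds} (proved in Appendix~\ref{app:threebounds}), together with monotonicity of $\beta$ in its first two arguments, we get $\|T_j\| \le \beta(n,p_j,3) = O\!\left(n^{2/3}p_j^{2/3} + n + p_j\right)$. Combining this with $\|T_j\| = \Theta(j p_j)$ yields $j p_j = O(n^{2/3}p_j^{2/3} + n + p_j)$; because $j$ is at least a sufficiently large constant, the $p_j$ term on the right is absorbed into the left side, leaving $j p_j = O(n^{2/3}p_j^{2/3} + n)$, which forces $p_j = O\!\big(\max\{\,n^2/j^3,\; n/j\,\}\big)$. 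Since $n^2/j^3 \ge n/j$ exactly when $j \le n^{1/2}$, this gives $p_j = O(n^2/j^3)$ and hence $\|T_j\|_2^2 = \Theta(j^2 p_j) = O(n^2/j)$ in the regime $j \le n^{1/2}$, and $p_j = O(n/j)$ and hence $\|T_j\|_2^2 = O(jn)$ in the regime $j \ge n^{1/2}$, which is exactly the claim.

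Along this route the only things requiring care are bookkeeping: verifying that $T_j$ genuinely inherits bridge girth $>3$ and the node/path budgets, confirming monotonicity of $\beta$ so that the $k=3$ estimate may be applied with the ambient $n$ and $p_j$, and the (trivial) case split $j \gtrless n^{1/2}$. There is no real combinatorial obstacle: in effect Lemma~\ref{lem:jsbound} is just the $\beta(\cdot,\cdot,3)$ upper bound rewritten for paths confined to a dyadic length band, which also explains why neither the maximum length $L$ nor the ``average length at least a constant'' hypothesis of $T$ is used here. If a self-contained argument is preferred, the fallback is to replay the forward-search proof of the $\beta(\cdot,\cdot,3)$ bound directly on $T_j$, using that all its paths have length $\Theta(j)$ so that the degree-length tradeoff it produces reads $d_j \cdot j^2 = O(n)$ (with $d_j$ the average degree of $T_j$); this reproduces the same two regimes without citing the $k=3$ bound as a black box, at the cost of more work, and the hard part of that alternative would be the familiar one of showing that a length-$2$ forward search in a $3$-bridge-free system discovers the expected number of distinct nodes.
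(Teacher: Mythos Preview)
Your proposal is correct and follows essentially the same approach as the paper: both apply the $\beta(n,p,3)$ upper bound to the subsystem $T_j$, combine it with the fact that all paths in $T_j$ have length $\Theta(j)$ to bound $p_j$ by $O(\max\{n^2/j^3, n/j\})$, and then read off $\|T_j\|_2^2 = \Theta(j^2 p_j)$. Your write-up is in fact slightly more streamlined than the paper's, which plugs $p_j$ back into the $\beta(n,p_j,3)$ bound before converting to $\|T_j\|_2^2$, whereas you go directly from $p_j$ to $\|T_j\|_2^2$.
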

\begin{proof}
Let $p_j$ be the number of paths in $T_j$.
Since $T_j$ has bridge girth $>3$, we may apply the bounds on $\beta(n, p, 3)$ implicit in prior work, which give:
$$\|T_j\| = O\left( n^{2/3} p_j^{2/3} + n + p_j \right)$$
(see Theorem \ref{thm:threebound} in the appendix for a formal proof).
Since $p_j \le n^2$, the term $+p_j$ never dominates, so we may simplify this bound to
$$\|T_j\| = O\left( n^{2/3} p_j^{2/3} + n \right).$$
We then have:
\begin{align*}
j \le \frac{\|T_j\|}{p_j} &= O\left( n^{2/3} p_j^{-1/3} + n p_j^{-1} \right).
\end{align*}
In the case where the first term in the right-hand sum dominates, we continue
\begin{align*}
p_j^{1/3} &\le O\left( n^{2/3} j^{-1} \right)\\
p_j &\le O\left( n^2 j^{-3} \right).
\end{align*}
In the case where the second term in the right-hand sum dominates, we continue
$$p_j = O\left( n j^{-1} \right).$$
Combining these, we get
$$p_j = O\left( n^2 j^{-3} + n j^{-1}\right).$$
Plugging back into our bound on $\|T_j\|$, we get
\begin{align*}
\|T_j\| &= O\left( n^{2/3} \left(n^2 j^{-3} + nj^{-1}\right)^{2/3} + n\right)\\
&= O\left( n^{2/3} \left(n^{4/3} j^{-2} + n^{2/3} j^{-2/3} \right) + n \right)\\
&= O\left( n^{2} j^{-2} + n^{4/3} j^{-2/3} + n \right).
\end{align*}
In the case where $j \ge n^{1/2}$, the latter $+n$ term dominates the sum, and so this gives $\|T_j\| = O(n)$.
Thus $\|T_j\|_2^2$ is the sum of $O(n j^{-1})$ paths, each of which contribute $\Theta(j^2)$ to the sum, so its total is $O(jn)$.
On the other hand, in the case where $j \le n^{1/2}$, this gives $\|T_j\| = O(n^2 j^{-2})$.
Thus $\|T_j\|_2^2$ is the sum of $O(n^2 j^{-3})$ paths, each of which contributes $\Theta(j^2)$ to the sum, so its total is $O(j^{-1} n^2)$.
\end{proof}

Our next lemma counts the contribution of the long paths:
\begin{lemma} [Long Path Gap Bound] \label{lem:longgapbound}
$\sum \limits_{\pi \in \Pi \ \mid \ \pi \text{ long}} |\pi|^2 = O\left( Ln\right).$
\end{lemma}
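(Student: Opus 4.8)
The plan is to dyadically decompose the long paths by length and apply Lemma \ref{lem:jsbound} to each scale. Since every long path has length $> Cn^{1/2}$, the relevant scales all satisfy $j \ge n^{1/2}$, which is exactly the regime where Lemma \ref{lem:jsbound} gives $\|T_j\|_2^2 = O(jn)$. So first I would partition the long paths into groups $G_i$, where $G_i$ consists of paths $\pi$ with $2^i \le |\pi| < 2^{i+1}$, letting $i$ range over $i_0, i_0+1, \dots, i_{\max}$, where $2^{i_0} = \Theta(n^{1/2})$ and $2^{i_{\max}} = \Theta(L)$. Each group $G_i$ is precisely the subsystem $T_{2^i}$ from Lemma \ref{lem:jsbound} (up to the harmless constant in the ``sufficiently large constant'' hypothesis, which holds since $2^i \ge Cn^{1/2}$ is large, and since the sum-square of paths in $T_{2^i}$ is comparable to $T$ restricted to $G_i$).

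Then I would sum the contributions: $\sum_{\pi \text{ long}} |\pi|^2 = \sum_i \sum_{\pi \in G_i} |\pi|^2 \le \sum_i \|T_{2^i}\|_2^2 = \sum_i O(2^i n) = O\left( n \sum_i 2^i \right)$. Because the scales $2^i$ increase geometrically up to the maximum value $\Theta(L)$, the sum $\sum_i 2^i$ is dominated by its largest term and is therefore $O(L)$. This gives $\sum_{\pi \text{ long}} |\pi|^2 = O(Ln)$, as claimed. The one point requiring a small amount of care is checking that Lemma \ref{lem:jsbound} applies verbatim: it is stated for the subsystem of paths of length exactly between $j$ and $2j$, which matches $G_i$ with $j = 2^i$, and the lemma's precondition that $j$ be at least a sufficiently large constant is satisfied since $j \ge 2^{i_0} = \Omega(n^{1/2})$ (we may assume $n$ is larger than any fixed constant, else the lemma is trivial).

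I do not expect any serious obstacle here — this is a routine dyadic-summation argument on top of Lemma \ref{lem:jsbound}. The only mild subtlety is the bookkeeping at the boundary scale $j \approx n^{1/2}$: a path of length exactly $\Theta(n^{1/2})$ could in principle be classified as long or medium depending on the implicit constant $C$, but since we only need an upper bound on the long-path contribution, it is safe to fold the single boundary dyadic class into the long-path sum (its contribution is $O(n^{1/2} \cdot n) = O(n^{3/2}) \le O(Ln)$ since $L > Cn^{1/2}$ for there to exist any long path at all). Everything else is just the geometric-series estimate $\sum_{j = n^{1/2}}^{L} j \cdot n = O(Ln)$ over dyadic scales $j$.
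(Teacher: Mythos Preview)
Your proposal is correct and follows essentially the same approach as the paper: a dyadic decomposition of the long paths by length, applying Lemma~\ref{lem:jsbound} at each scale $j \ge n^{1/2}$, and then summing the resulting geometric series dominated by its top term $O(Ln)$. The paper's proof is slightly terser but otherwise identical in structure and content.
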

\begin{proof}
By Lemma \ref{lem:jsbound}, for any parameter $j \ge n^{1/2}$, we have
$$\sum \limits_{\pi \in \Pi \ \mid \ \pi \text{ long and } j \le |\pi| \le 2j} |\pi|^2 = O\left( jn \right).$$
We may therefore control the sum square of long paths by partitioning into subsets of paths of length $j \le |\pi| \le 2j$, and summing the contribution of these subsets.
This gives:
$$\sum \limits_{\pi \in \Pi \ \mid \ \pi \text{ long}} |\pi|^2 = O\left( (n^{1/2})n \right) + O\left((2n^{1/2})n\right) + O\left((4n^{1/2})n\right)+ \dots.$$
This is a geometric sum, which is thus dominated by its largest term.
Recall that we have assumed that all paths in $T$ have length $\le L$, and so the last term has the form $O(Ln)$, proving the lemma.
\end{proof}

Next, we count the contribution of the medium paths.
\begin{lemma} [Medium Path Gap Bound] \label{lem:mediumgapbound}
$\sum \limits_{\pi \in \Pi \ \mid \ \pi \text{ medium}} |\pi|^2 = O\left( \frac{p}{\|T\|_2} n^2 \right)$.
\end{lemma}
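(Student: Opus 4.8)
The plan is to control the medium paths by a dyadic decomposition into length classes and then apply Lemma~\ref{lem:jsbound} within each class. First I would split the medium paths according to their length scale: for $i \ge 0$ put $j_i := 2^i \cdot \frac{\|T\|_2}{2p^{1/2}}$ and let $T_{j_i} \subseteq T$ be the subsystem consisting of the paths of length between $j_i$ and $2j_i$. Since every medium path has length in $\bigl[\frac{\|T\|_2}{2p^{1/2}},\, Cn^{1/2}\bigr]$, only the indices $i$ with $j_i = O(n^{1/2})$ contribute, and these classes cover all the medium paths (with at most a constant factor of overcounting at the dyadic boundaries, which is harmless). Before invoking Lemma~\ref{lem:jsbound} I would check its hypothesis that the length parameter is at least a sufficiently large constant: this holds for every $j_i$, since the smallest one satisfies $j_0 = \frac{\|T\|_2}{2p^{1/2}} \ge \frac{\|T\|}{2p} = \frac{\ell}{2}$ by the Cauchy--Schwarz inequality $\|T\|_2 \ge \|T\|/p^{1/2}$, and $\ell$ is assumed large in Lemma~\ref{lem:2normbound}.

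The core step is then to bound $\|T_{j_i}\|_2^2$ for each relevant $i$ using Lemma~\ref{lem:jsbound}. When $j_i \le n^{1/2}$ the lemma gives $\|T_{j_i}\|_2^2 = O(j_i^{-1}n^2)$ directly; when $n^{1/2} < j_i \le Cn^{1/2}$ it gives $\|T_{j_i}\|_2^2 = O(j_i n)$, which is still $O(j_i^{-1}n^2)$ because $j_i = \Theta(n^{1/2})$ there (the hidden constant depending only on the absolute constant $C$). So across the whole medium range one gets the uniform estimate $\|T_{j_i}\|_2^2 = O(j_i^{-1}n^2)$, and summing,
\[
\sum_{\pi \text{ medium}} |\pi|^2 \;\le\; \sum_i \|T_{j_i}\|_2^2 \;=\; O(n^2)\sum_{i \ge 0}\frac{1}{2^i j_0} \;=\; O\!\left(\frac{n^2}{j_0}\right) \;=\; O\!\left(\frac{p^{1/2}n^2}{\|T\|_2}\right),
\]
the geometric series being dominated by its largest term, i.e.\ the smallest scale $j_0$. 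Since $p \ge 1$ this is in particular $O\!\left(\frac{p}{\|T\|_2}n^2\right)$, which is the claimed bound.

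I do not anticipate a real obstacle: the argument is a routine dyadic pigeonholing layered on top of the $\beta(n,p,3)$-type estimate already packaged in Lemma~\ref{lem:jsbound}. The two small points that need care are (i) verifying the ``parameter at least a large constant'' hypothesis of Lemma~\ref{lem:jsbound} for the smallest scale $j_0$, handled above via $j_0 \ge \ell/2$, and (ii) merging the two regimes of Lemma~\ref{lem:jsbound} across the threshold $j = n^{1/2}$ so that the entire medium range is governed by a single geometric series in $1/j$, which is then dominated by its largest term (i.e.\ the smallest length scale).
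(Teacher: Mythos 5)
Your proof is correct and follows essentially the same route as the paper's: dyadically decompose the medium paths by length scale, apply Lemma~\ref{lem:jsbound} within each class, and observe the resulting geometric series is dominated by the smallest scale $j_0 = \|T\|_2/(2p^{1/2})$, giving $O(p^{1/2}n^2/\|T\|_2)$, which is at least as strong as the stated bound. You are a bit more careful than the paper in two minor respects — verifying the ``$j$ at least a large constant'' hypothesis via $\|T\|_2 \ge \|T\|/p^{1/2}$ (Cauchy--Schwarz), and explicitly merging the two regimes of Lemma~\ref{lem:jsbound} across the $j=n^{1/2}$ threshold so that the scales up to $Cn^{1/2}$ are all covered — but these are clarifications of the same argument, not a different one.
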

\begin{proof}
Let
$$\frac{\|T\|_2}{2p^{1/2}} \le j \le n^{1/2} $$
be a parameter.
From Lemma \ref{lem:jsbound} and the fact that medium paths have length $\le Cn^{1/2}$, we have
$$\sum \limits_{\pi \in \Pi \ \mid \ \pi \text{ medium and } j \le |\pi| \le 2j} = O\left( j^{-1} n^2 \right).$$
As in the long path case, we can bound the sum square of medium path lengths by partitioning the medium paths into parts where all paths in a part have $j \le |\pi| \le 2j$.
This gives
$$\sum \limits_j O\left( j^{-1} n^2 \right)$$
where $j$ ranges from $\|T\|_2 / (2p^{1/2})$ to $n^{1/2}$ by multiples of $2$.
This is again a geometric sum which is dominated by its largest term.
The largest term occurs when $j$ is smallest, i.e., $j=\|T\|_2 / (2p^{1/2})$, and we get
$$O\left( \frac{p^{1/2}}{\|T\|_2} n^2 \right),$$
completing the proof.
\end{proof}

Now we put the parts together:
\begin{proof} [Proof of Lemma \ref{lem:2normbound}]
We consider two cases:
\begin{itemize}
\item If the long paths dominate, then by Lemma \ref{lem:longgapbound} we have $\|T\|_2^2 = O\left( Ln \right)$.

\item If the medium paths dominate, then by Lemma \ref{lem:mediumgapbound} we have
$$\|T\|_2^2 = O\left( \frac{p^{1/2}}{\|T\|_2} n^2 \right).$$
Rearranging, we get
\begin{align*}
\|T\|_2^3 &= O\left( p^{1/2} n^2 \right)\\
\|T\|_2^2 &= O\left( p^{1/3} n^{4/3} \right).
\end{align*}
\end{itemize}
Finally, we recall by Lemma \ref{lem:noshortdom} that the short paths do not dominate, and so this completes the proof.
\end{proof}

\subsubsection{Setup and a Stronger Cleaning Lemma}

In addition to the usual cleaning lemma, for technical reasons it will be helpful to assume an additional property for the path system $S$ that we analyze.
Let us say that a \emph{$2$-cycle} in a path system is a pair of nodes $u, v$, and a pair of paths $\pi_1, \pi_2$ with $u <_{\pi_1} v$ and $v <_{\pi_2} u$.
We use the following lemma to remove $2$-cycles from $S$:
\begin{lemma} \label{lem:2cyccleaning}
For any $n, p$ and any $k \ge 3$, there exists a path system $S$ with $n$ nodes, $p$ paths, bridge girth $>k$, size $\|S\| = \Theta(\beta(n, p, k))$, and no $2$-cycles.
\end{lemma}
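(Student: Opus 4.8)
The plan is to start from a path system realizing $\beta(n,p,k)$ and carefully delete nodes from paths until no $2$-cycle remains, losing only a constant factor in size. First, let $S_0$ be any path system with at most $n$ nodes, at most $p$ paths, bridge girth $>k$, and $\|S_0\| = \Theta(\beta(n,p,k))$; apply the ordinary Cleaning Lemma (Lemma~\ref{lem:cleaning}) so that in addition $S_0$ is approximately degree- and length-regular. Since $k \ge 3$, the system $S_0$ is $2$-bridge-free, and the key structural fact I would extract is: \emph{every pair of nodes lies on at most two paths of $S_0$, and if it lies on exactly two, those two paths traverse it in opposite orders} --- because two distinct paths traversing a pair $u,v$ in the same order are a $2$-bridge. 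Hence a $2$-cycle in $S_0$ is exactly a pair of paths intersecting in $\ge 2$ nodes, and removing all $2$-cycles is the same as passing to a subsystem in which every pair of nodes lies on at most one path.

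For the deletion, I would process the paths of $S_0$ in a uniformly random order $\pi_1, \pi_2, \dots$, and when processing $\pi_i$ scan its nodes in $\pi_i$'s own order, deleting a node whenever keeping it would make the truncated copy of $\pi_i$ intersect some already-processed truncated path $\pi_j$ (with $j<i$) in two nodes. The output $S$ has no $2$-cycles by construction, has $\le n$ nodes and $\le p$ paths, and still has bridge girth $>k$ (we only performed deletions). It remains to show $\|S\| = \Omega(\|S_0\|)$, after which taking the best outcome of the random ordering finishes the proof; re-applying the Cleaning Lemma afterwards is deletion-only and thus preserves both $2$-cycle-freeness and bridge girth $>k$, so regularity can be restored if desired.

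The main obstacle --- and the one spot where bridge girth $>k$ enters beyond the $2$-bridge observation --- is proving that only a constant fraction (bounded away from $1$) of $\|S_0\|$ is deleted. The worry is a pile-up: a single path may be met in $\ge 2$ nodes by many others, and blindly truncating it to fend off all of them could destroy almost all of it. I would resolve this with a global charging/amortization argument rather than a path-by-path bound. Each deletion is blamed on a specific earlier path $\pi_j$ together with a surviving witness node on $\pi_i \cap \pi_j$; the witness node and the deleted node form a bad pair, hence lie on exactly the two paths $\pi_i$ and $\pi_j$, which tightly constrains the configuration. Using $3$-bridge-freeness --- concretely, that three pairwise ``reversed-overlapping'' paths cannot be very tangled without producing a $3$-bridge, so the reversed-overlap structure is locally sparse --- together with the $\beta(n,p,3)$-type counting bounds already available (Theorem~\ref{thm:threebound}, or Lemma~\ref{lem:2normbound} applied to subsystems), the charges should telescope to $O(\|S_0\|)$ in expectation. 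I expect this counting step to be the technical heart. The ``twin'' example of a path together with its reverse already shows that a constant-factor loss is unavoidable, and that the accounting must be done globally: there one wants to keep one of the two paths intact and gut the other, which no local rule detects but which the random-order greedy does achieve.
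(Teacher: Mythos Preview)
Your greedy framework is exactly right and matches the paper's: process paths one by one, drop nodes to prevent $2$-cycles with already-processed paths, and use $3$-bridge-freeness to bound the loss. But the analysis you sketch is far heavier than necessary, and as written it is a gap --- you have not actually shown the constant-fraction bound, only gestured at ``$\beta(n,p,3)$-type counting'' and random ordering as tools you hope will close it.

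The paper's analysis is a direct \emph{per-node} amortization, with no randomness and no appeal to Theorem~\ref{thm:threebound} or Lemma~\ref{lem:2normbound}. The key point you are missing is this: fix a node $v$; whenever $v$ is deleted from some path $\pi_i$, the blaming path $\pi_j$ still contains $v$ (in its already-truncated form), so you can mark $\pi_j$ as a ``kept occurrence of $v$''. Now suppose two later paths $\pi_i,\pi_{i'}$ both delete $v$ and both blame the \emph{same} $\pi_j$. You then have witnesses $w,w'$ with $w<_{\pi_i} v$, $w'<_{\pi_{i'}} v$, and (by $2$-bridge-freeness) $v<_{\pi_j} w$ and $v<_{\pi_j} w'$. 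If $w=w'$ this is a $2$-bridge between $\pi_i$ and $\pi_{i'}$; if, say, $w<_{\pi_j} w'$, then $(w,w',v)$ with arcs $\pi_j,\pi_{i'}$ and river $\pi_i$ is a $3$-bridge. Either way, contradiction. Hence each earlier path keeping $v$ is marked at most once, so $\deg_{S'}(v)\ge \deg_S(v)/2$ for every node, and $\|S'\|\ge \|S_0\|/2$. Arbitrary processing order suffices.

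Two smaller remarks. First, your proposed charge is to the \emph{witness} node $w$ rather than to the deleted node $v$; that choice runs into trouble because a single $\pi_i$ can delete many nodes against the same $\pi_j$, all sharing the same witness. Charging to the deleted node $v$ is what makes the argument one line. Second, the paper's deletion rule is the mirror of yours (it drops the \emph{earlier} of the two conflicting nodes in $\pi$, not the later), but the per-node argument above works symmetrically for either rule.
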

\begin{proof}
Start with a path system $S = (V, \Pi)$ with $n$ nodes, $p$ paths, $\|S\| = \beta(n, p, k)$, and bridge girth $>k$ (which may have $2$-cycles).
Construct a path system $S'$ as follows.

Initially $S' = (V, \emptyset)$ is empty.
For each $\pi \in \Pi$ in an arbitrary order, add a subpath $\pi' \in \pi$ to $\Pi'$ generated as follows.
For each node $v \in \pi$, omit $v$ from $\pi'$ if there exists a node $u$ and a previously-added path $q \in \Pi'$ for which $v <_{\pi} u$ and $u <_q v$.
Otherwise, include $v \in \pi'$.
In the following picture, if $\pi$ is the wavy path on top and $q$ is the straight path at the bottom, the two hollow nodes would be omitted and the four solid nodes would be included in $\pi'$ (unless another choice of path $q$ causes them to be omitted).
\begin{center}
\begin{tikzpicture}
\draw [thick, <-] (0, 0) -- (6, 0);
\draw [fill=black] (6, 0) circle [radius=0.15];
\draw (4, 0) circle [radius=0.15];
\draw (2, 0) circle [radius=0.15];
\draw [fill=black] (5, 1) circle [radius=0.15];
\draw [fill=black] (3, 1) circle [radius=0.15];
\draw [fill=black] (1, 1) circle [radius=0.15];
\draw [thick, ->] plot [smooth] coordinates {(0, 1) (1, 1) (2, 0) (3, 1) (4, 0) (5, 1) (6, 0) (7, 1)};

\node at (-0.5, 0) {$q$};
\node at (-0.5, 1) {$\pi$};

\end{tikzpicture}
\end{center}

It is immediate from the construction that we do not complete any $2$-cycles in $S'$, and since $S' \subseteq S$ we still have that $S'$ has bridge girth $>k$.
So it only remains to prove that $\|S'\| = \Theta(\|S\|)$.
Consider a fixed node $v \in V$.
Each time we consider a path $\pi$ with $v \in \pi$, we either keep $v \in \pi$ or we omit it.
If we keep $v$, then $\pi$ contributes $+1$ to the degree of $v$ in $S'$.
If we omit $v$, we do so because of a previously-added path $q$ with $v \in q$.
In this case, let us say that $q$ is \emph{marked} by this action.
We claim that each path can only be marked once.
To see this: suppose for contradiction, that there are two different paths $\pi, \pi'$, which both contain $v$ and which both mark $q$.
This implies that $q, \pi, \pi'$ form a $3$-bridge, as in the following picture (with $v$ as the first node and $\pi$ as the river):
\begin{center}
\begin{tikzpicture}
\draw [fill=black] (6, 0) circle [radius=0.15];
\draw [fill=black] (1, 0) circle [radius=0.15];
\draw [fill=black] (3, 0) circle [radius=0.15];

\draw [thick, <-] (0, 0) -- (7, 0);
\draw [thick, ->] plot [smooth] coordinates {(0, 1) (1, 0) (2, 1) (5, 1) (6, 0) (7, 1)};
\draw [thick, ->] plot [smooth] coordinates {(0, 0.5) (1, 0) (2, 0.5) (3, 0) (4, 0.5)};

\node at (-0.5, 0) {$q$};
\node at (4.3, 0.5) {$\pi'$};
\node at (7.2, 1.2) {$\pi$};
\node at (1, -0.4) {$v$};

\node at (9, 0.5) {$3$-bridge};
\end{tikzpicture}
\end{center}
Thus, each time we omit a node $v \in \pi$ from $\pi$ in the construction of $S'$, we may amortize this against a previously-added path $q$ that kept $v$ in $S'$.
It follows that $\deg_{S'}(v) \ge \deg_S(v)/2$.
Since this holds for all nodes $v$, we have $\|S'\| \ge \|S\|/2$, completing the proof.
\end{proof}

Using this lemma, let $S = (V, \Pi)$ be a path system with bridge girth $>4$, no $2$-cycles, $n$ nodes, $p$ paths, and size $\|S\| = \Theta(\beta(n, p, 4))$.
By the Cleaning Lemma (Lemma \ref{lem:cleaning}), we may further let $\ell, d$ be the average path length and node degree in $S$ (respectively), and assume without loss of generality that all paths have length $\Theta(\ell)$ and that all nodes have degree $\Theta(d)$.\footnote{Technically, to assume the properties of the cleaning lemma and $2$-cycle-freeness simultaneously, we need to use the fact that the construction in the cleaning lemma cannot create $2$-cycles.  This is immediate from the proof.}
We assume that $\ell, d$ are both at least sufficiently large constants (if not, then we immediately have $\|S\| = O(n + p)$).
Under all these assumptions, our goal is now to prove that
$$\|S\| = O\left( n^{3/4} p^{1/2} + n^{5/8} p^{11/16} \right).$$

\subsubsection{The Random Subsystem $S'$}

Our next step is construct a particular subsystem of $S$ that will be useful in analysis.
Consider the following process, parametrized by a positive integer $h \le \ell$ that we choose later, that generates a random subsystem $S' \subseteq S$:
\begin{itemize}
\item Choose a path $\pi_b \in \Pi$ uniformly at random, called the \emph{base path}.

\item Let $Q \subseteq \Pi$ be the set of paths that intersect $\pi_b$ at exactly one node.

\item \textbf{(Vertices of $S'$)} Flip a coin to choose either ``forwards'' or ``backwards.''
Let $V'$ be the set of nodes $v \in V$ with the following property:
\begin{itemize}
\item If we choose ``forwards,'' then the property is: there exists a node $u \in \pi_b$ and a path $q \in Q$ with $u \le v$ in $q$ and with $|q[u \leadsto v]| < h$.
(That is, $u$ weakly precedes $v$ along $q$, and these nodes are at most $h-1$ positions apart in $q$.)
\item If we choose ``backwards,'' the property is similar except that we require $v \le u$ with $|q[v \leadsto u]| < h$.  (That is, $u$ weakly \emph{follows} $v$ along $q$, and these nodes are at most $h-1$ positions apart in $q$.)
\end{itemize}

\item $S' = (V', \Pi')$ is the induced subsystem of $S$ on the vertex set $V'$.
Recall: this means that $\Pi'$ contains the subpath $\pi \cap V'$ for each $\pi \in \Pi$.
\end{itemize}

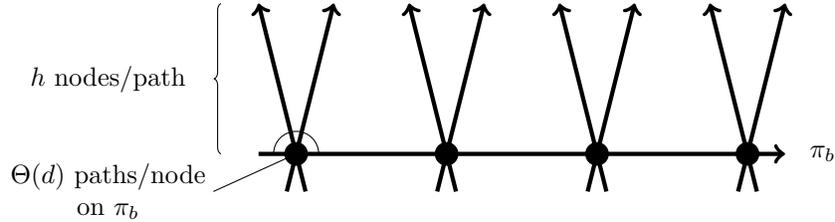
\begin{figure} [ht]
\begin{center}
    \begin{tikzpicture}
    \draw [fill=black] (0, 0) circle [radius=0.15];
    \draw [fill=black] (2, 0) circle [radius=0.15];
    \draw [fill=black] (4, 0) circle [radius=0.15];
    \draw [fill=black] (6, 0) circle [radius=0.15];
    \draw [ultra thick, ->] (-0.5, 0) -- (6.5, 0);
    \draw [ultra thick, ->] (-0.125, -0.5) -- (0.5, 2);
    \draw [ultra thick, ->] (0.125, -0.5) -- (-0.5, 2);
    
    \draw [ultra thick, ->] (1.875, -0.5) -- (2.5, 2);
    \draw [ultra thick, ->] (2.125, -0.5) -- (1.5, 2);
    
    \draw [ultra thick, ->] (3.875, -0.5) -- (4.5, 2);
    \draw [ultra thick, ->] (4.125, -0.5) -- (3.5, 2);
    
    \draw [ultra thick, ->] (5.875, -0.5) -- (6.5, 2);
    \draw [ultra thick, ->] (6.125, -0.5) -- (5.5, 2);
    
    \draw [decorate, decoration = {brace}] (-1, 0) --  (-1, 2);
    \node at (-2.5, 1) {$h$ nodes/path};
    \node at (7, 0) {$\pi_b$};
    
    \draw (0.3, 0) arc (0:180:0.3);
    
    \node [align=center] at (-2.5, -0.5) {$\Theta(d)$ paths/node\\on $\pi_b$};
    \draw (-1.1, -0.5) -- (0, 0);
    
    \end{tikzpicture}
\end{center}
\caption{The random subsystem $S'$, generated with respect to ``height'' parameter $h$, and a ``forwards'' coin flip.}
\end{figure}

Let $n' := |V'|$ be the number of surviving nodes in $S'$, let $p' := |\Pi'|$ be the number of surviving paths in $S'$, and let $\ell' := \|S'\|/p'$ be the average path length in $S'$.
Note that part of each path in $Q$ survives in $S'$; we call the surviving parts $Q' \subseteq \Pi'$.
We have the following controls on their values:

\begin{lemma} [Properties of $S'$] \label{lem:sprimeprops} ~
\begin{itemize}
\item $|Q'| = \Theta(\ell d)$
\item All nodes in $S'$ have degree $\Theta(d)$
\item $n' \le O(\ell d h)$
\item The maximum path length in $S'$ is $O(\ell)$
\end{itemize}
\end{lemma}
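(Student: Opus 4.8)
The plan is to read all four properties directly off the construction of $S'$, deducing them from a single structural observation about $S$: \emph{any two distinct paths of $S$ share at most one node}. This holds because $S$ has bridge girth $>4$ (so in particular no $2$-bridge) and, by Lemma~\ref{lem:2cyccleaning}, no $2$-cycle; if two distinct paths shared two common nodes, those nodes would occur in the same relative order on both paths (a $2$-bridge) or in opposite orders (a $2$-cycle), and both are excluded. A first payoff is that the ``at exactly one node'' clause in the definition of $Q$ is automatic: $Q$ is simply the set of all paths of $\Pi\setminus\{\pi_b\}$ that meet $\pi_b$, and each $q\in Q$ meets $\pi_b$ in a unique node, which I write $u_q$. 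Note that, because $S'$ depends on $\pi_b$ and the coin flip only through quantities that are fixed under the Cleaning Lemma, all four bounds will come out deterministically, for \emph{every} choice of $\pi_b$ and the coin flip.

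Next I would establish the auxiliary estimate $|Q| = \Theta(\ell d)$, which feeds into the first and third bullets. Counting path--node incidences along $\pi_b$: by the Cleaning Lemma (Lemma~\ref{lem:cleaning}), $\pi_b$ has $|\pi_b| = \Theta(\ell)$ nodes, each of degree $\Theta(d)$, so $\sum_{v\in\pi_b}\deg(v) = \Theta(\ell d)$; on the other hand, since no path other than $\pi_b$ meets $\pi_b$ twice, each $q\in Q$ accounts for exactly one of these incidences while $\pi_b$ itself accounts for $|\pi_b|$ of them, so $|Q| = \Theta(\ell d) - \Theta(\ell) = \Theta(\ell d)$ (using that $d$ exceeds a large constant). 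For the \emph{first bullet}, I check that each contact node $u_q$ actually lies in $V'$: setting $u := v := u_q$ in the defining condition we have $u\le_q v$ and $|q[u_q\leadsto u_q]| = 1 < h$, so $u_q\in V'$ in both the forwards and the backwards case. Hence every $q\in Q$ has a nonempty surviving part $q\cap V'$, so $Q'$ contains exactly one element per element of $Q$ and $|Q'| = |Q| = \Theta(\ell d)$.

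For the \emph{second bullet}, the point is that $S'$ is the \emph{induced} subsystem of $S$ on $V'$, and inducing on a vertex subset destroys no incidence at a surviving node --- it only shortens paths. Concretely, for $v\in V'$ the paths of $\Pi'$ through $v$ are exactly the subpaths $\pi\cap V'$ for $\pi\in\Pi$ with $v\in\pi$, one per such $\pi$; hence $\deg_{S'}(v) = \deg_S(v) = \Theta(d)$ by the Cleaning Lemma. For the \emph{third bullet} I argue in the forwards case (the backwards case is symmetric): every $v\in V'$ is witnessed by some $q\in Q$ with $u_q\le_q v$ and $|q[u_q\leadsto v]| < h$, i.e.\ $v$ is one of $u_q$ and its fewer than $h$ successors along $q$ --- a set $N_q$ of fewer than $h$ nodes. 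Thus $V'\subseteq\bigcup_{q\in Q}N_q$ and $n' = |V'|\le h\,|Q| = O(\ell d h)$. Finally, the \emph{fourth bullet} is immediate: every path of $S'$ is a (not necessarily contiguous) subpath of a path of $S$, and all paths of $S$ have length $\Theta(\ell)$ by the Cleaning Lemma, so every path of $S'$ has length $O(\ell)$.

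There is no serious obstacle here: the lemma is essentially bookkeeping around the construction, and (as noted) all four bounds hold deterministically. The only points needing a moment's care are the structural observation that no two paths of $S$ share two nodes --- which is exactly what lets us treat each $q\in Q$ as attached to the single contact node $u_q$, and in turn makes the incidence count for $|Q|$ clean --- and the harmless off-by-one when translating the window condition $|q[u_q\leadsto v]| < h$ into ``at most $h$ nodes''; both are immediate once noticed.
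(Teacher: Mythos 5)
Your proof is correct and takes essentially the same approach as the paper's: all four bullets are read off the construction via the Cleaning Lemma (degree/length regularity), the no-$2$-bridge-plus-no-$2$-cycle observation that paths meet $\pi_b$ at most once, and the counting $|Q|=\Theta(\ell d)$. You spell out a couple of small points the paper leaves implicit (that each contact node $u_q$ itself lands in $V'$, so every $q\in Q$ has a nonempty image in $Q'$, and the off-by-one bookkeeping in the window condition), but the route is the same.
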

\begin{proof}
The fact that all nodes in $S'$ have degree $\Theta(d)$ is inherited directly from $S$, since the degrees of surviving nodes do not change in an induced subsystem.
Similarly, the fact that the maximum path length in $S'$ is $O(\ell)$ is inherited from $S$.

Since $|\pi_b| = \Theta(\ell)$, and each node has degree $\Theta(d)$, and $d$ is a sufficiently large constant, there exist $\Theta(\ell d)$ paths that intersect $\pi_b$.
We notice that these paths are pairwise-distinct: if a path $q$ hits $\pi_b$ at two different nodes, then (since $S$ has no $2$-bridges) $q, \pi_b$ must form a $2$-cycle, but from Lemma \ref{lem:2cyccleaning} there are no $2$-cycles in $S$.
Thus we have $|Q'| = \Theta(\ell d)$.
(This is one of two places where the assumption of no $2$-cycles will be useful.)

By construction every node in $V'$ is contained in a path in $Q'$, and every path in $Q'$ has length $O(h)$, so we also have $n' \le O(h \ell d)$.
\end{proof}

Although the previous lemma acknowledges that the node degrees in $S'$ do not change from $S$, we unfortunately have no such guarantee for the path lengths.
That is, the average path length $\ell'$ in $S'$ might be very different from the average path length $\ell$ in $S$, and moreover the path lengths in $S'$ might fluctuate wildly instead of all being $\Theta(\ell')$.
We unfortunately won't be able to enforce approximate length regularity in $S'$ by re-applying the cleaning lemma, either.
The problem is that our strategy in analysis will be to bound $\|S'\|_2^2$, and while the cleaning lemma gives approximate length regularity while preserving $\|S'\|$, it can significantly change the value of $\|S'\|_2^2$.

It will, however, be helpful in the following analysis to assume that $\ell'$ is at least a sufficiently large constant.
We enable this assumption using the following lemma, which provides a good bound on the size of $S$ in the case where $\ell'$ is only a constant:
\begin{lemma} \label{lem:degenSprime}
If $\mathbb{E}[\ell'] = O(1)$, then $\|S\| = O\left(n^{2/3} p^{2/3} h^{-1/3} \right)$.
\end{lemma}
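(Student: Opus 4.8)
The plan is to estimate $\mathbb{E}[\|S'\|]$ from both sides. For the upper bound, I use that $S'$ is an \emph{induced} subsystem, so surviving nodes keep their degrees: every node of $S'$ has degree $\Theta(d)$, and hence $\|S'\| = \Theta(d\,n')$ pointwise. Also $\|S'\| = p'\ell' \le p\,\ell'$ since $p' \le p$, so $\mathbb{E}[\|S'\|] \le p\,\mathbb{E}[\ell'] = O(p)$, and therefore $\mathbb{E}[n'] = O(p/d)$. I may also assume that $h$ is at least a sufficiently large constant: otherwise the claimed bound is weaker than the unconditional estimate $\|S\| \le \beta(n,p,3) = O(n^{2/3}p^{2/3} + n + p)$, which holds since $S$ has bridge girth $> 4 > 3$ (Theorem~\ref{thm:threebound}).

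The heart of the argument is the matching lower bound $\mathbb{E}[n'] = \Omega(h\,\ell\,d)$. Fix $\pi_b$ and condition on the ``forwards'' outcome. For $q \in Q$, let $u_q$ be the unique node of $q \cap \pi_b$ (uniqueness holds because two paths of $S$ share at most one node: a second shared node would produce a $2$-bridge or a $2$-cycle, forbidden by bridge girth $>4$ and Lemma~\ref{lem:2cyccleaning} respectively). Let $\text{fwin}(q) \subseteq q$ be the nodes lying weakly after $u_q$ within fewer than $h$ positions; by the definition of $V'$ we have $V' = \bigcup_{q \in Q}\text{fwin}(q)$. The key structural claim is that these windows barely overlap: \emph{if $w \in \text{fwin}(q_1) \cap \text{fwin}(q_2)$ for distinct $q_1, q_2 \in Q$, then $w \in \pi_b$}. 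Indeed, otherwise $w \ne u_{q_1}, u_{q_2}$, and $u_{q_1} \ne u_{q_2}$ (else $q_1, q_2$ would share the two nodes $u_{q_1}$ and $w$); assuming without loss of generality that $u_{q_1} <_{\pi_b} u_{q_2}$, the nodes $(u_{q_1}, u_{q_2}, w)$ with paths $(\pi_b, q_2, q_1)$ form a $3$-bridge — arcs $\pi_b : u_{q_1} \to u_{q_2}$ and $q_2 : u_{q_2} \to w$, river $q_1 : u_{q_1} \to w$ — contradicting bridge girth $>4$. Consequently each node outside $\pi_b$ lies in at most one window, while each $w \in \pi_b$ lies in exactly $\deg(w)$ windows (those of the paths through it). Writing $W := \sum_{q \in Q}|\text{fwin}(q)|$, it follows that $n' = |V'| \ge W - \sum_{w \in \pi_b}(\deg(w) - 1) \ge W - \Theta(\ell d)$.

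Next I lower bound $\mathbb{E}[W]$. For every $\pi_b$ the number of paths meeting $\pi_b$ equals $\sum_{w \in \pi_b}\deg(w) = \Theta(\ell d)$ (each meets $\pi_b$ exactly once, and $S$ is degree- and length-regular), so $|Q| = \Theta(\ell d)$; this is essentially Lemma~\ref{lem:sprimeprops}. For each such $q$, the sizes of its ``forwards'' and ``backwards'' windows sum to at least $\min(h, |q|+1) = \Omega(h)$, using $|q| = \Theta(\ell) = \Omega(h)$ since $h \le \ell$. Hence $\mathbb{E}_{\text{coin}}[W \mid \pi_b] = \Omega(h\,\ell\,d)$, and as $h$ is a large constant, $\mathbb{E}[n'] \ge \mathbb{E}[W] - \Theta(\ell d) = \Omega(h\,\ell\,d)$. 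Combining with $\mathbb{E}[n'] = O(p/d)$ gives $h\,\ell\,d^2 = O(p)$. Substituting $\ell = \|S\|/p$ and $d = \|S\|/n$ into $h\,\ell\,d^2 = h\|S\|^3/(n^2 p)$ yields $\|S\|^3 = O(n^2 p^2 / h)$, i.e.\ $\|S\| = O(n^{2/3}p^{2/3}h^{-1/3})$.

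The main obstacle is the window-disjointness claim, i.e.\ showing that any overlap beyond the trivial one at the nodes of $\pi_b$ produces a short forbidden bridge; once that is established, everything else is bookkeeping with the size identities $nd = \|S\| = p\ell$ and linearity of expectation over the choices of $\pi_b$ and the coin flip.
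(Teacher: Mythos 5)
Your proof is correct, and it reaches the same estimate $h\ell d^2 = O(p)$ as the paper but by a more streamlined route. The combinatorial core is identical — both proofs argue that the ``forward windows'' (the paper's surviving subpaths $Q'$) are pairwise node-disjoint outside $\pi_b$ because any overlap at $v\notin\pi_b$ yields a $3$-bridge $(u_{q_1}, u_{q_2}, v)$ with paths $(\pi_b, q_2, q_1)$, so $\mathbb{E}[n'] = \Omega(h\ell d)$ once $h$ exceeds a constant. The difference is what you do with this. The paper uses $n' = O(\ell d h)$ deterministically (Lemma~\ref{lem:sprimeprops}) plus $\mathbb{E}[n'] = \Omega(\ell d h)$ to conclude $n'=\Theta(\ell dh)$ with constant probability, then applies Markov's inequality to $\ell'$ and a union/intersection bound to exhibit a single outcome in which $n'=\Theta(h\ell d)$ and $\ell'=O(1)$ simultaneously, and only then reads off $p \ge p' = \|S'\|/\ell' = \Omega(h\ell d^2)$. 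You bypass this entirely by observing that $\|S'\| = \Theta(d\,n')$ holds \emph{pointwise} (degree-preservation of the induced subsystem), so $\mathbb{E}[\ell']=O(1)$ directly gives $\mathbb{E}[n'] = \Theta(\mathbb{E}[\|S'\|]/d) \le p\,\mathbb{E}[\ell']/\Theta(d) = O(p/d)$, and the two bounds on the single quantity $\mathbb{E}[n']$ can be compared without any conditioning. This is a genuine simplification: it eliminates the Markov-plus-intersection step and the appeal to the deterministic upper bound on $n'$. You are also more careful than the paper in explicitly tracking the $\Theta(\ell d)$ overcount along $\pi_b$ and in noting that $h$ must exceed a constant for the lower bound on $\mathbb{E}[n']$ to survive the subtraction; the paper needs the same caveats but leaves them implicit. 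One trivial slip: for $w\in\pi_b$ the multiplicity is $\deg(w)-1$ rather than $\deg(w)$ (since $\pi_b\notin Q$), but this only strengthens your inequality and changes nothing.
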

\begin{proof}
First, we claim that $\mathbb{E}[n'] = \Theta(\ell d h)$.
This follows by noticing two facts.
First, for any path $q \in Q$, the expected length of the corresponding path in $Q'$ (over the forwards/backwards coin flip) is $\Theta(h)$.
Second, the paths in $Q'$ are pairwise node-disjoint, except possibly on nodes in $\pi_b$.
This holds because, if we have two paths $q_1, q_2 \in Q'$ that intersect on a node $v \notin \pi_b$, then $q_1, q_2, \pi_b$ form a $3$-bridge:
\begin{center}
    \begin{tikzpicture}
    \draw [fill=black] (2, 0) circle [radius=0.15];
    \draw [fill=black] (4, 0) circle [radius=0.15];
    \draw [fill=black] (3, 1) circle [radius=0.15];
    
    \draw [ultra thick, ->] (-0.5, 0) -- (6.5, 0);
    \draw [ultra thick, ->] (1.5, -0.5) -- (3.5, 1.5);
    \draw [ultra thick, ->] (4.5, -0.5) -- (2.5, 1.5);
    
    \node at (7, 0) {$\pi_b$};
    \node at (1.4, -0.8) {$q_1$};
    \node at (4.6, -0.8) {$q_2$};

    \node at (9, 0.5) {$3$-bridge};
    \end{tikzpicture}
\end{center}
Thus we have $|Q'| = \Theta(\ell d)$ (deterministically), and each path in $Q'$ contributes $\Theta(h)$ nodes to $V'$ in expectation, so we have $\mathbb{E}[n'] = \Theta(\ell d h)$.
Additionally, from the previous lemma, we have $n' = O(\ell d h)$ (deterministically).
Together, these imply that there is positive constant probability $c$ that $n' = \Theta(\ell d h)$.

Next, assuming that $\mathbb{E}[\ell'] = O(1)$, we may apply Markov's inequality (with a sufficiently large hidden constant in the $O(1)$) to conclude that
$$\Pr[\ell' = O(1)] > 1-c.$$
Hence, by an intersection bound, there is positive probability that we simultaneously have $n' = \Theta( h \ell d)$ and $\ell' = O(1)$.
In this event, we have $\|S'\| = \Theta(h \ell d^2)$, and thus
$$p \ge p' = \Omega\left( h \ell d^2 \right).$$
Rearranging, we have
\begin{align*}
\frac{p^2 n^2}{h} &= \Omega\left(\ell p \cdot (nd)^2 \right)\\
\frac{p^2 n^2}{h} &= \Omega\left( \|S\|^3 \right)\\
\|S\| &= O\left(n^{2/3} p^{2/3} h^{-1/3} \right). \tag*{\qedhere}
\end{align*}
\end{proof}

In the rest of the proof, we will make two simplifying assumptions: (1) that $\ell'$ is at least a large enough constant (otherwise we apply the previous lemma), and (2) that the forwards/backwards coin flip comes up ``forwards:'' the only place we need to flip this coin, rather than deterministically choosing ``forwards,'' is to argue that $\mathbb{E}[n'] = \Theta(\ell d h)$ in the previous lemma.
Every lemma in the rest of the proof can be immediately proved by a symmetric argument in the case where the coin flip comes up ``backwards,'' so we assume ``forwards'' for simplicity.

\subsubsection{Analysis of $\|S'\|_2^2$ and $\|S\|_2^2$}

We next make some structural observations on the intersection patterns exhibited by paths in $S$ or $S'$.
For an ordered pair of paths $(\pi_1, \pi_2) \in \Pi^2$, we define 
$$R_S(\pi_1, \pi_2) := \left\{(x, y)\in \pi_1 \times \pi_2 \ \mid \ \text{ there exists } \pi \in \Pi \text{ with } \pi \cap \pi_1 = \{x\}, \pi \cap \pi_2 = \{y\}\right\}.$$
and $R_{S'}$ is defined similarly, with paths taken from $\Pi'$ rather than $\Pi$.
We make a few observations in order to motivate this definition. 
\begin{itemize}
\item Suppose that $\pi_1, \pi_2 \in \Pi'$ and that $\pi_1$ intersects $\pi_b$ before $\pi_2$ (i.e. $(\pi_1 \cap \pi_b) <_{\pi_b} (\pi_2 \cap \pi_b)$). Then for a path $\pi$ witnessing a pair $(x, y) \in R_{S'}(\pi_1, \pi_2)$, we must specifically have that $x <_{\pi} y$.
This follows by noticing that, if instead $y <_{\pi} x$, then $\pi_1, \pi_b, \pi_2, \pi$ together form a $4$-bridge, with $\pi_1$ as the river.

\begin{center}
    \begin{tikzpicture}
    \draw [ultra thick, ->] (0, -0.5) -- (0, 1.5);
    \draw [ultra thick, ->] (1, -0.5) -- (1, 1.5);
    \draw [ultra thick, ->] (-0.5, 0) -- (1.5, 0);
    \node at (-1, 0) {$\pi_b$};
    \node at (0, -1) {$\pi_1$};
    \node at (1, -1) {$\pi_2$};
    \draw [fill=black] (0, 0) circle [radius=0.15];
    \draw [fill=black] (1, 0) circle [radius=0.15];
    
    \draw [fill=black] (0, 1) circle [radius=0.15];
    \draw [fill=black] (1, 1) circle [radius=0.15];
    
    \draw [ultra thick, <-] (-0.5, 1) -- (1.5, 1);
    
    \node at (-1, 1) {$\pi$};

    \node at (3, 0.5) {$4$-bridge};
    
    \node at (-0.3, 1.3) {$x$};
    \node at (1.3, 1.3) {$y$}; 
    
    \end{tikzpicture}
\end{center}



\item Our next observation is that it is \emph{not} possible to have pairs $(x, y), (x', y') \in R_S(\pi_1, \pi_2)$ that strictly cross each other, with $x <_{\pi_1} x'$ and $y' <_{\pi_2} y$ as in the following picture, if these pairs $(x, y), (x', y')$ are witnessed by two different paths.
The reason for this is that otherwise, they imply a $4$-bridge, with the path intersecting $(x, y)$ as the river.

\begin{center}
    \begin{tikzpicture}
    \draw [ultra thick, ->] (0, -0.5) -- (0, 1.5);
    \draw [ultra thick, ->] (1, -0.5) -- (1, 1.5);
    \draw [ultra thick, ->] (-0.5, -0.5) -- (1.5, 1.5);
    \node at (-0.4, 0) {$x$};
    \node at (-0.4, 1) {$x'$};
    \node at (1.4, 0) {$y'$};
    \node at (1.4, 1) {$y$}; 

    \node at (0, -1) {$\pi_1$};
    \node at (1, -1) {$\pi_2$};
    \draw [fill=black] (0, 0) circle [radius=0.15];
    \draw [fill=black] (1, 0) circle [radius=0.15];
    
    \draw [fill=black] (0, 1) circle [radius=0.15];
    \draw [fill=black] (1, 1) circle [radius=0.15];
    
    \draw [ultra thick, ->] (-0.5, 1.5) -- (1.5, -0.5);

    \node at (3, 0.5) {$4$-bridge};
    \end{tikzpicture}
\end{center}

\item At first, one might worry that crossing node pairs $(x, y), (x', y') \in R(\pi_1, \pi_2)$ can arise if both node pairs are caused by a single path $\pi$, as in the following picture.
However, this can arise only if $\pi$ intersects $\pi_1$ at both $x, x'$, and $\pi$ intersects $\pi_2$ at both $y, y'$.
This would imply a $2$-cycle, which we have removed from $S$ via Lemma \ref{lem:2cyccleaning}.
So this does not occur.
(This is our last use of removing $2$-cycles in the argument.)

\begin{center}
    \begin{tikzpicture}
    \draw [ultra thick, ->] (0, -0.5) -- (0, 1.5);
    \draw [ultra thick, ->] (1, -0.5) -- (1, 1.5);
    \node at (-0.4, 0) {$x$};
    \node at (-0.4, 1) {$x'$};
    \node at (1.4, 0) {$y'$};
    \node at (1.4, 1) {$y$}; 

    \node at (0, -1) {$\pi_1 $};
    \node at (1, -1) {$\pi_2$};
    \coordinate (x) at (0, 0);
    \coordinate (y') at (1, 0);
    \coordinate (x') at (0, 1);
    \coordinate (y) at (1, 1);

    \draw [fill=black] (x) circle [radius=0.15];
    \draw [fill=black] (y') circle [radius=0.15];
    
    \draw [fill=black] (x') circle [radius=0.15];
    \draw [fill=black] (y) circle [radius=0.15];  
    
    \draw [ultra thick, ->, red] plot [smooth] coordinates {(x') (x) (y) (y')};
    
    \node at (3, 0.5) {$2$-cycle};
    \end{tikzpicture}
\end{center}

\item The previous observations imply that the node pairs in $R_S(\pi_1, \pi_2)$ are arranged roughly as in the following picture, with their points of intersection with $\pi_1, \pi_2$ increasing along both paths.

\begin{center}
    \begin{tikzpicture}
    \draw [ultra thick, ->] (0, -0.5) -- (0, 2);
    \draw [ultra thick, ->] (1, -0.5) -- (1, 2);
    \draw [ultra thick, ->] (-0.5, 0) -- (1.5, 0);
    \node at (-1, 0) {$\pi_b$};
    \node at (0, -1) {$\pi_1$};
    \node at (1, -1) {$\pi_2$};
    \draw [fill=black] (0, 0) circle [radius=0.15];
    \draw [fill=black] (1, 0) circle [radius=0.15];
    
    \draw [fill=black] (0, 0.75) circle [radius=0.15];
    \draw [fill=black] (1, 1) circle [radius=0.15];

    \draw [fill=black] (0, 1.5) circle [radius=0.15];
    \draw [fill=black] (1, 1.5) circle [radius=0.15];
    
    \draw [ultra thick, ->] (-0.5, 0.625) -- (1.5, 1.125);
    \draw [ultra thick, ->] (-0.5, 1.5) -- (1.5, 1.5);
    
    \node at (-1, 0.625) {$r_1$};
    \node at (-1, 1.5) {$r_2$};

    \node at (3, 0.5) {Can happen};
    
    \end{tikzpicture}
\end{center}
\end{itemize}

Let
$$\mathcal{R}_S := \sum \limits_{(\pi_1, \pi_2) \in \Pi^2} \left| R_S(\pi_1, \pi_2) \right|.$$
The size of $\|S\|_2^2$ can be related to $\mathcal{R}_S$ as follows:
\begin{lemma} \label{lem:arrbound}
$\|S\|_2^2 = \Theta\left( \dfrac{ \mathcal{R}_S }{d^2} \right)$
\end{lemma}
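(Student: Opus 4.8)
The plan is to evaluate $\mathcal{R}_S$ by expanding the double sum over path pairs into a sum over node pairs. First I would extract the one structural fact about $S$ that drives everything: since $S$ has bridge girth $>4$ it contains no $2$-bridge, and by Lemma~\ref{lem:2cyccleaning} it contains no $2$-cycle, so two distinct paths of $S$ meet in at most one node; equivalently, two distinct nodes lie together on at most one path, which we call $\pi_{xy}$ when it exists. A consequence is that for distinct paths $\pi_1,\pi_2$ and any path $\pi\notin\{\pi_1,\pi_2\}$, the requirement ``$\pi\cap\pi_1=\{x\}$ and $\pi\cap\pi_2=\{y\}$'' collapses to simply ``$x,y\in\pi$'', and the witness $\pi$ of a pair $(x,y)\in R_S(\pi_1,\pi_2)$ with $x\neq y$ can only be $\pi_{xy}$ itself.

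Next I would count the off-diagonal part of $\mathcal{R}_S$, i.e.\ the contribution of pairs $(x,y)$ with $x\neq y$. Such a pair contributes to $R_S(\pi_1,\pi_2)$ precisely when $x\in\pi_1$, $y\in\pi_2$, and $\pi_{xy}$ exists with $\pi_{xy}\notin\{\pi_1,\pi_2\}$; so the number of ordered path pairs $(\pi_1,\pi_2)$ it contributes to is $(\deg(x)-1)(\deg(y)-1)=\Theta(d^2)$ by degree-regularity and $d$ large. Summing over all ordered pairs of distinct nodes sharing a common path --- of which there are exactly $\sum_{\pi\in\Pi}|\pi|(|\pi|-1)=\|S\|_2^2-\|S\|$ by uniqueness of $\pi_{xy}$ --- the off-diagonal contribution is $\Theta\!\big(d^2(\|S\|_2^2-\|S\|)\big)$. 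Since every path has length $\Theta(\ell)$ with $\ell$ at least a large constant, $\|S\|\le\|S\|_2^2/2$, so this off-diagonal contribution is already $\Theta(d^2\|S\|_2^2)$, which will be the main term.

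Finally I would verify the remaining contributions are lower order. Diagonal pairs $(x,x)$ contribute to $R_S(\pi_1,\pi_2)$ exactly when $x\in\pi_1\cap\pi_2$ (a further path through $x$ always exists since degrees are large), giving a total of $\sum_x\deg(x)(\deg(x)-1)=\Theta(nd^2)$; the $\pi_1=\pi_2$ terms contribute $\sum_\pi|\pi|=\Theta(\|S\|)$. By the size identity $nd=p\ell=\|S\|$ and length-regularity $\|S\|_2^2=\Theta(p\ell^2)$, these are a $\Theta(1/(\ell d))$ and $\Theta(1/(\ell d^2))$ fraction of $d^2\|S\|_2^2$ respectively, hence negligible as $\ell,d$ are large. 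Combining gives $\mathcal{R}_S=\Theta(d^2\|S\|_2^2)$, which rearranges to the claim. The only real subtlety is the bookkeeping that isolates the off-diagonal main term from the diagonal and $\pi_1=\pi_2$ corrections, and in particular keeping track of the exact condition ``$\pi\cap\pi_i=\{x\}$'' in the definition of $R_S$ --- which is automatic off the diagonal once two distinct paths meet in at most one node, but which forces the small degree-based exclusions on the diagonal.
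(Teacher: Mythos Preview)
Your argument is correct and follows essentially the same approach as the paper: both count, for each ordered pair of distinct nodes $(x,y)$ lying on a common path, the $\Theta(d^2)$ ordered path pairs $(\pi_1,\pi_2)$ for which $(x,y)\in R_S(\pi_1,\pi_2)$, and identify this total with $\Theta(d^2\|S\|_2^2)$. The paper's proof is terser and simply absorbs the diagonal and $\pi_1=\pi_2$ corrections into the $\Theta$, whereas you explicitly isolate them and verify they are lower order; you also spell out why the witness condition ``$\pi\cap\pi_i=\{x\}$'' is automatic once two distinct paths meet in at most one node, a step the paper leaves implicit.
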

\begin{proof}
Recall we have used the cleaning lemma to assume that $\ell$ is at least a large constant, and so by approximate length-regularity, we may assume that all paths in $\Pi$ have $\ge 2$ nodes.
A given path $\pi$ contributes $+|\pi|^2$ to the value of $\|S\|_2^2$.
We may therefore only count node pairs satisfying $x <_{\pi} y$, as there are $\binom{|\pi|}{2} = \Theta(|\pi|^2)$ such node pairs.

For each such node pair $(x, y)$, there are $\Theta(d)$ paths in $\Pi$ that intersect $x$ and $\Theta(d)$ paths in $\Pi$ that intersect $y$.
Thus, this node pair $(x, y)$ appears in $\Theta(d^2)$ different sets $R_S(\pi_1, \pi_2)$ with $(\pi_1, \pi_2) \in \Pi^2$.
So the pair $(x, y)$ contributes $\Theta(d)^2$ points to the value of $\mathcal{R}_S$.
It follows that
$$\|S\|_2^2 \cdot d^2 = \Theta\left(\mathcal{R}_S\right),$$
and the lemma follows by rearranging.
\end{proof}

Consider a fixed, ordered pair of paths $(\pi_1, \pi_2) \in \Pi^2$.
When we create $S'$, let us say that an ordered node pair $(x, y)$ is \emph{charged} to the pair $(\pi_1, \pi_2)$ if:
\begin{itemize}
\item $\pi_1, \pi_2 \in Q$, and hence subpaths $\pi'_1 \subseteq \pi_1, \pi'_2 \subseteq \pi_2$ are in $Q'$, and
\item $(x, y) \in R_{S'}(\pi'_1, \pi'_2)$.
\end{itemize}

The next lemma gives a lower bound on the expected number of node pairs that get charged to $(\pi_1, \pi_2)$.
Note that, in the case where $\pi_1 \notin Q$ or $\pi_2 \notin Q$, then $0$ node pairs are charged to $(\pi_1, \pi_2)$.

\begin{lemma} \label{lem:paircharge}
For each $(\pi_1, \pi_2) \in \Pi^2$, the expected number of node pairs charged to $(\pi_1, \pi_2)$ is
$$\Omega\left(|R_S(\pi_1, \pi_2)|^2 \cdot \frac{h}{\ell p}\right).$$
\end{lemma}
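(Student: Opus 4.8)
The plan is to condition on the random choice of base path $\pi_b$, and to exploit the specific choice $\pi_b = w$ where $w$ is a witness path for one of the node pairs in $R_S(\pi_1,\pi_2)$. Write $m := |R_S(\pi_1,\pi_2)|$; if $m=0$ there is nothing to prove, so assume $m \ge 1$. By the structural observations established just above (two crossing pairs with distinct witnesses would create a $4$-bridge, and a single path producing two crossing pairs would create a $2$-cycle, which is impossible by Lemma~\ref{lem:2cyccleaning}), the pairs of $R_S(\pi_1,\pi_2)$ are pairwise non-crossing, so we may list them as $(x_1,y_1),\dots,(x_m,y_m)$ with $x_1 \le_{\pi_1} \dots \le_{\pi_1} x_m$ and $y_1 \le_{\pi_2} \dots \le_{\pi_2} y_m$ \emph{simultaneously}. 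For each $i$ fix a path $w_i \in \Pi$ with $w_i \cap \pi_1 = \{x_i\}$ and $w_i \cap \pi_2 = \{y_i\}$; a witness pins down both of its nodes, so the map $i \mapsto w_i$ is injective, and each $w_i$ differs from $\pi_1,\pi_2$ since $|\pi_1|,|\pi_2| = \Theta(\ell)$ are large by Lemma~\ref{lem:cleaning}.

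First I would show: if $\pi_b = w_i$ (probability $1/p$, since $\pi_b$ is uniform over the $p$ paths), then every index $j \ge i$ with $|\pi_1[x_i \leadsto x_j]| < h$ and $|\pi_2[y_i \leadsto y_j]| < h$ contributes a charged pair $(x_j,y_j)$. Indeed, $\pi_1$ and $\pi_2$ each meet $\pi_b = w_i$ in exactly one node ($x_i$, resp.\ $y_i$), so $\pi_1,\pi_2 \in Q$; and since $x_i \in \pi_b$, taking $u := x_i$ and $q := \pi_1$ in the definition of $V'$ (in the ``forwards'' case, which we assume) together with $x_i \le_{\pi_1} x_j$ gives $x_j \in V'$, and likewise $y_j \in V'$ via $u := y_i$, $q := \pi_2$. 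Then the path $w_j \cap V' \in \Pi'$ meets $\pi_1 \cap V'$ in exactly $\{x_j\}$ and $\pi_2 \cap V'$ in exactly $\{y_j\}$, so $(x_j,y_j) \in R_{S'}(\pi_1 \cap V', \pi_2 \cap V')$; that is, $(x_j,y_j)$ is charged to $(\pi_1,\pi_2)$ (the index $j=i$ is always included, via the degenerate window $|\pi_1[x_i \leadsto x_i]|=1<h$). Summing over the disjoint events $\{\pi_b = w_i\}$,
$$\mathbb{E}\bigl[\#\{\text{pairs charged to }(\pi_1,\pi_2)\}\bigr] \;\ge\; \frac{1}{p}\sum_{i=1}^{m}\#\bigl\{\, j \ge i \;:\; |\pi_1[x_i\leadsto x_j]| < h,\ |\pi_2[y_i\leadsto y_j]| < h \,\bigr\}.$$

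It remains to bound the right-hand sum below by $\Omega(m^2 h/\ell)$. Writing $a_i$ for the position of $x_i$ along $\pi_1$ and $b_i$ for the position of $y_i$ along $\pi_2$, the sequences $(a_i)$ and $(b_i)$ are both weakly increasing with values in $\{1,\dots,O(\ell)\}$ (using $|\pi_1|,|\pi_2|=\Theta(\ell)$). Partition this range into $N = O(\ell/h)$ consecutive buckets, each of width $\Theta(h)$ small enough that two positions in a common bucket differ by less than $h-1$; this tiles $\{1,\dots,O(\ell)\}^2$ into an $N\times N$ array of cells. Since $(a_i)$ and $(b_i)$ are both monotone, the points $(a_i,b_i)$ form a staircase occupying only $M = O(N) = O(\ell/h)$ cells. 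Any two indices $i \le j$ in a common cell satisfy $|\pi_1[x_i\leadsto x_j]|,|\pi_2[y_i\leadsto y_j]| < h$, hence are counted by the sum. Letting $c_t$ be the number of points in cell $t$ (so $\sum_t c_t = m$), the sum is at least $\sum_t \binom{c_t+1}{2} \ge \tfrac12\sum_t c_t^2 \ge \tfrac{m^2}{2M} = \Omega(m^2 h/\ell)$ by Cauchy--Schwarz. Plugging back in yields $\mathbb{E}[\cdots] = \Omega\bigl(|R_S(\pi_1,\pi_2)|^2 \cdot h/(\ell p)\bigr)$; in the ``backwards'' case the same argument runs with all orders reversed.

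The crux I expect is the middle step: confirming that the window indices $j \ge i$ really produce charged pairs. This needs the non-crossing structure to align the $\pi_1$- and $\pi_2$-orders with the index order (so $x_i \le_{\pi_1} x_j$ and $y_i \le_{\pi_2} y_j$ hold together), and a careful unwinding of the definitions of $V'$, of the induced subsystem $S'$, and of $R_{S'}$ to see that $w_j$ survives the restriction to $V'$ as a path meeting $\pi_1 \cap V'$ and $\pi_2 \cap V'$ in exactly one node each. Once this is set up, the staircase/Cauchy--Schwarz estimate is routine.
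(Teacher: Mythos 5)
Your proof is correct, and it follows the same conditioning structure as the paper's (fix a witness path $w_i$ for each pair in $R_S(\pi_1,\pi_2)$, note these witnesses are distinct, and for the event $\pi_b = w_i$ lower-bound the charged pairs by the number of indices $j\ge i$ lying within an $h$-window of $i$ on both $\pi_1$ and $\pi_2$; the unwinding of the definitions of $Q$, $V'$, the induced subsystem, and $R_{S'}$ is exactly right, and you correctly use the earlier crossing-pattern observations to get a simultaneous monotone ordering of the pairs). Where you depart from the paper is the combinatorial estimate for $\sum_i \#\{j \ge i : \text{window}\}$. The paper defines a pair to be \emph{typical} if it has at least $\Omega(mh/\ell)$ pairs close behind it, and runs a greedy scan that skips a block after each non-typical pair; since each skip advances $\ge h$ positions on one of $\pi_1,\pi_2$, only $O(\ell/h)$ skips occur, so only an $O(1/C)$-fraction of pairs are discarded, and the $\Omega(m)$ surviving typical pairs each contribute $\Omega(mh/\ell)$. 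You instead tile the $\Theta(\ell)\times\Theta(\ell)$ position grid into $O(\ell/h)\times O(\ell/h)$ cells of side $\Theta(h)$, observe that the monotone staircase of points $(a_i,b_i)$ visits only $M = O(\ell/h)$ cells, and apply Cauchy--Schwarz to the cell counts $c_t$ to get $\sum_t c_t^2 \ge m^2/M = \Omega(m^2 h/\ell)$. Both routes give $\Omega(m^2h/\ell)$; yours is a somewhat cleaner, more standard convexity argument that avoids introducing the typicality notion, while the paper's version makes explicit the structural statement that most pairs already have $\Omega(mh/\ell)$ neighbors, which is what it actually uses downstream in the expectation calculation. Either phrasing yields the lemma.
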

\begin{proof}
For pairs $(x, y), (x', y') \in R_S(\pi_1, \pi_2)$, let us say that $(x', y')$ is \emph{close behind} $(x, y)$ if
we have
$$1 \le |\pi_1[x \leadsto x']| < h \qquad \text{and} \qquad 1 \le |\pi_2[y \leadsto y']| < h.$$
We note that this definition implies that $(x, y)$ is considered to be close behind itself.

\begin{center}
    \begin{tikzpicture}
    \draw [ultra thick, ->] (0, -0.5) -- (0, 3);
    \draw [ultra thick, ->] (1, -0.5) -- (1, 3);
    \draw [ultra thick, ->] (-0.5, 0) -- (1.5, 0);
    \draw [ultra thick, ->] (-0.5, 1.625) -- (1.5, 2.125);
    \node at (0, -1) {$\pi_1$};
    \node at (1, -1) {$\pi_2$};
    \draw [fill=black] (0, 0) circle [radius=0.15];
    \draw [fill=black] (1, 0) circle [radius=0.15];
    
    \draw [fill=black] (0, 1.75) circle [radius=0.15];
    \draw [fill=black] (1, 2) circle [radius=0.15];

    \node at (-0.3, -0.3) {$x$};
    \node at (1.3, -0.3) {$y$};
    \node at (-0.3, 1.4) {$x'$};
    \node at (1.3, 1.7) {$y'$};

    \node [blue] at (-0.5, 0.7) {$< h$};
    \node [blue] at (1.5, 1) {$< h$};

    \node [align=center] at (7, 1) {$(x', y')$ is close behind $(x, y)$ if\\both marked segments contain $<h$ nodes.};
    
    \end{tikzpicture}
\end{center}

The point of this definition is that, if the path containing $(x, y)$ is selected as the base path $\pi_b$, then we will charge some node pairs to $(\pi_1, \pi_2)$, and the number of such node pairs is exactly the number of pairs in $R_S(\pi_1, \pi_2)$ that are close behind $(x, y)$.

Let us say that $(x, y)$ is \emph{typical} if, for some sufficiently large constant $C$, there are at least
$$|R_S(\pi_1, \pi_2)| \cdot \frac{h}{C\ell}$$
pairs in $R_S(\pi_1, \pi_2)$ that are close behind $(x, y)$.
If we happen to sample a base path $\pi_b$ that contains a typical pair $(x, y) \in R_S(\pi_1, \pi_2)$, then the number of pairs charged to $(\pi_1, \pi_2)$ is
$$\Omega\left(|R_S(\pi_1, \pi_2)| \cdot \dfrac{h}{\ell}\right).$$
Thus, to prove the lemma, it suffices to prove that we sample a base path $\pi_b$ that contains a typical pair $(x, y) \in R_S(\pi_1, \pi_2)$ with probability
$$\Omega\left(|R_S(\pi_1, \pi_2)| \cdot \frac{1}{p} \right);$$
that is, a constant fraction of the pairs in $R_S(\pi_1, \pi_2)$ are typical.
We show this using the following counting argument:

By our analysis of the structure of path intersections, the pairs $(x, y) \in R_S(\pi_1, \pi_2)$ may be totally ordered by positions of $x \cap \pi_1, y \cap \pi_2$.
Consider these pairs $(x, y)$ in \emph{increasing} order, that is, the first pair $(x, y)$ considered is the one that intersects $\pi_1, \pi_2$ closest to their start nodes.
When each pair $(x, y)$ is considered, if it is typical, then add $1$ to the count of typical pairs.
Otherwise, if $(x, y)$ is not typical, then we throw away the next
$$|R_S(\pi_1, \pi_2)| \cdot \frac{h}{C \ell}$$
pairs in the ordering (including $(x, y)$; these pairs are thrown out without increasing the count whether or not they are typical), and then we continue.
Each time we skip a pair $(x, y)$, notice that the next pair considered $(x', y')$ is \emph{not} close behind $(x, y)$, which means we have
$$|\pi_1[x \leadsto x']| \ge h \qquad \text{or} \qquad |\pi_2[y \leadsto y']| \ge h.$$
That is, we progress at least $h$ nodes along $\pi_1$, or at least $h$ nodes along $\pi_2$.
Since $|\pi_1| = O(\ell)$ and $|\pi_2| = O(\ell)$, we can thus only perform this skip operation $O(\ell/h)$ times. 
By unioning, it follows that only
$$O\left(|R_S(\pi_1, \pi_2)| \cdot \frac{h}{C\ell} \right) \cdot O(\ell/h) = O\left(|R_S(\pi_1, \pi_2)| \cdot \frac{1}{C\ell} \right)$$
pairs get discarded.
By choice of large enough $C$, this is only a constant fraction of the total pairs in $R_S(\pi_1, \pi_2)$, which means a constant fraction of the pairs in $R_S(\pi_1, \pi_2)$ are counted as typical.
By the previous discussion, the lemma follows.
\end{proof}

\begin{lemma} \label{lem:arrsumbound}
$\mathbb{E}\left[ \| S' \|_2^2 \right] = \Omega\left(\frac{h}{\ell p^3} \cdot \mathcal{R}_S^2 \right)$
\end{lemma}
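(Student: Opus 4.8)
The plan is to obtain the bound by summing the per‑pair charging estimate of Lemma~\ref{lem:paircharge} over all ordered pairs of paths, applying convexity (Cauchy--Schwarz), and finally translating the resulting total count of charged node pairs into a lower bound on $\|S'\|_2^2$ by an incidence‑counting argument parallel to Lemma~\ref{lem:arrbound}.

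Concretely, write $c(\pi_1,\pi_2)$ for the number of node pairs charged to $(\pi_1,\pi_2)$ when $S'$ is built. Summing Lemma~\ref{lem:paircharge} over all $(\pi_1,\pi_2)\in\Pi^2$ and using linearity of expectation,
\[
\sum_{(\pi_1,\pi_2)\in\Pi^2}\mathbb{E}\big[c(\pi_1,\pi_2)\big]\;=\;\Omega\!\left(\frac{h}{\ell p}\sum_{(\pi_1,\pi_2)\in\Pi^2}\big|R_S(\pi_1,\pi_2)\big|^2\right).
\]
There are exactly $p^2$ ordered pairs of paths, so by Cauchy--Schwarz
\[
\sum_{(\pi_1,\pi_2)\in\Pi^2}\big|R_S(\pi_1,\pi_2)\big|^2\;\ge\;\frac1{p^2}\Bigg(\sum_{(\pi_1,\pi_2)\in\Pi^2}\big|R_S(\pi_1,\pi_2)\big|\Bigg)^{\!2}\;=\;\frac{\mathcal{R}_S^{\,2}}{p^2},
\]
and therefore $\sum_{(\pi_1,\pi_2)\in\Pi^2}\mathbb{E}[c(\pi_1,\pi_2)]=\Omega\!\big(\tfrac{h}{\ell p^{3}}\,\mathcal{R}_S^{\,2}\big)$.

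It remains to bound this expected total charged count, up to constants (tracked exactly as the degree factor is in Lemma~\ref{lem:arrbound}), by $\mathbb{E}[\|S'\|_2^2]$. Fix the randomness. A pair charged to $(\pi_1,\pi_2)$ lies in $R_{S'}(\pi_1\cap V',\pi_2\cap V')$ with $\pi_1,\pi_2\in Q$, and since $S$ is $2$‑bridge‑free and, by Lemma~\ref{lem:2cyccleaning}, $2$‑cycle‑free, any two distinct paths of $S$ --- hence of the subsystem $S'$ --- meet in at most one node; as a surviving subpath of a path $\pi\in Q$ has length $\ge h\ge 2$ (it contains the $h$ vertices of $\pi$ immediately after $\pi$ crosses $\pi_b$), it determines $\pi$ uniquely. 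Thus $(\pi_1,\pi_2)\mapsto(\pi_1\cap V',\pi_2\cap V')$ is injective on the path pairs receiving a charge, so $\sum_{(\pi_1,\pi_2)}c(\pi_1,\pi_2)\le\mathcal{R}_{S'}$ (one‑vertex surviving subpaths contribute only $R_{S'}$‑sets of size $\le1$ and are lower order), and then, since $S'$ is degree‑$\Theta(d)$ regular by Lemma~\ref{lem:sprimeprops} and its paths pairwise meet in $\le 1$ node, the incidence count of Lemma~\ref{lem:arrbound} run verbatim for $S'$ relates $\mathcal{R}_{S'}$ and $\|S'\|_2^2$ exactly as there --- using the standing assumption that $\ell'$ is at least a large constant, so that the additive $\|S'\|$ correction is dominated by $\|S'\|_2^2$. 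Chaining these inequalities and taking expectations yields the stated bound. The step I expect to be the main obstacle is precisely this last translation from the number of charged incidences to $\|S'\|_2^2$: one must control the constant loss when passing between the two --- in particular, bound how many ordered path pairs a single co‑occurring pair of $S'$ can be charged to (handled by the ``meet in $\le1$ node'' property together with degree‑regularity, as in Lemma~\ref{lem:arrbound}) and discard the degenerate one‑vertex subpaths and any charges whose endpoints lie on the base path $\pi_b$. The summation and the Cauchy--Schwarz estimate above are routine.
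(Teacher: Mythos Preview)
Your summation of Lemma~\ref{lem:paircharge} over all pairs and the Cauchy--Schwarz step are correct and match the paper exactly. The genuine gap is in your translation from the total charged count $\sum_{(\pi_1,\pi_2)}c(\pi_1,\pi_2)$ to $\|S'\|_2^2$. Routing through $\mathcal{R}_{S'}$ and then invoking Lemma~\ref{lem:arrbound} for $S'$ gives $\mathcal{R}_{S'}=\Theta(d^2\,\|S'\|_2^2)$, so your chain yields only
\[
\mathbb{E}\big[\|S'\|_2^2\big]\;=\;\Omega\!\left(\frac{h}{\ell p^3 d^2}\,\mathcal{R}_S^{\,2}\right),
\]
off from the stated bound by a factor of $d^2$. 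This is not a ``constant loss'': if you trace it through Lemma~\ref{lem:shbound}, the first term degrades from $O(n^{3/4}p^{1/2})$ to $O(n^{1/2}p)$, destroying the theorem.

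You actually put your finger on the right obstacle --- ``bound how many ordered path pairs a single co-occurring pair of $S'$ can be charged to'' --- but then reach for the wrong tool. The ``meet in $\le 1$ node plus degree-regularity'' reasoning of Lemma~\ref{lem:arrbound} applies to arbitrary pairs in $(\Pi')^2$ and correctly returns $\Theta(d^2)$ such pairs per node pair. But charges only ever use pairs from $Q'\times Q'$, and the paths in $Q'$ are \emph{pairwise node-disjoint off $\pi_b$}: two of them sharing a node $v\notin\pi_b$ would form a $3$-bridge together with $\pi_b$ (this is exactly the argument inside the proof of Lemma~\ref{lem:degenSprime}). Hence any charged pair $(x,y)$ with $x,y\notin\pi_b$ is charged to a \emph{unique} $(\pi_1,\pi_2)\in Q^2$, so up to the lower-order contribution from nodes on $\pi_b$, $\sum_{(\pi_1,\pi_2)}c(\pi_1,\pi_2)$ already equals the number of distinct charged pairs. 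Each distinct charged pair then sits in its unique witness path $\rho\in\Pi'$ (by $2$-bridge- and $2$-cycle-freeness), giving $\sum c \le \|S'\|_2^2$ directly with no $d^2$ loss. That is the step the paper's one-line proof is silently using; the detour through $\mathcal{R}_{S'}$ throws away precisely this structure.
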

\begin{proof}
We can lower bound the expected value of $\|S'\|_2^2$ by counting the expected number of node pairs $(x, y)$ that get charged to some $(q_1, q_2) \in Q^2$.
By Lemma \ref{lem:paircharge}, this is
\begin{align*}
\mathbb{E}[\{|(x, y) \in V^2 \ \mid \ (x, y) \text{ charged} \}] &= \Omega\left(\sum \limits_{(\pi_1, \pi_2) \in \Pi^2} \left| R_S(\pi_1, \pi_2)\right|^2 \cdot \frac{h}{\ell p} \right)\\
&= \frac{h}{\ell p^3} \cdot \Omega\left(\sum \limits_{(\pi_1, \pi_2) \in \Pi^2} \left| R_S(\pi_1, \pi_2)\right|^2  \right) \cdot p^2\\
&=  \Omega\left(\frac{h}{\ell p^3} \cdot \left( \sum \limits_{(\pi_1, \pi_2) \in \Pi^2} \left| R_S(\pi_1, \pi_2)\right|\right)^2 \right) \tag*{Cauchy-Schwarz}\\
&=  \Omega\left(\frac{h}{\ell p^3} \cdot \mathcal{R}_S^2 \right). \tag*{\qedhere}\\
\end{align*}
\end{proof}

\subsection{Proof Wrapup}

The remainder of the proof is essentially just algebra.
Using the previous two lemmas, we have:
\begin{lemma} \label{lem:shbound}
$\|S\| = O\left( n^{3/4} p^{1/2} + n^{8/13} p^{9/13} h^{1/13}
 \right)$.
\end{lemma}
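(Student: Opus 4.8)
The plan is to assemble Lemmas~\ref{lem:2normbound}, \ref{lem:arrbound}, and \ref{lem:arrsumbound} together with the properties of the random subsystem $S'$, and then solve the resulting inequality for $\|S\|$. The first step is to apply Lemma~\ref{lem:2normbound} to $S'$. By Lemma~\ref{lem:sprimeprops}, $S'$ has at most $n' = O(\ell d h)$ nodes, at most $p$ paths, maximum path length $O(\ell)$, and bridge girth $>3$ (inherited from $S$), while by the standing assumption~(1) its average path length $\ell'$ is at least a sufficiently large constant --- the contrary case is already dispatched by Lemma~\ref{lem:degenSprime}, whose bound $O(n^{2/3}p^{2/3}h^{-1/3})$ is absorbed into the target. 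Applying Lemma~\ref{lem:2normbound} with node count $\le n'$ and path count $\le p$ (both terms of its bound being monotone in these) gives, deterministically,
$$\|S'\|_2^2 = O\left( n'\ell + p^{1/3}(n')^{4/3} \right) = O\left( \ell^2 d h + p^{1/3}\ell^{4/3}d^{4/3}h^{4/3} \right),$$
and hence the same bound for $\mathbb{E}\left[\|S'\|_2^2\right]$, since its right-hand side is a deterministic function of $n,p,\ell,d,h$.

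Next I would combine this with Lemma~\ref{lem:arrsumbound}, which gives $\mathbb{E}\left[\|S'\|_2^2\right] = \Omega\left(\frac{h}{\ell p^3}\,\mathcal{R}_S^2\right)$. Rearranging,
$$\mathcal{R}_S^2 = O\left( \frac{\ell p^3}{h}\left(\ell^2 d h + p^{1/3}\ell^{4/3}d^{4/3}h^{4/3}\right) \right) = O\left( \ell^3 d p^3 + p^{10/3}\ell^{7/3}d^{4/3}h^{1/3} \right),$$
so $\mathcal{R}_S = O\left( \ell^{3/2}d^{1/2}p^{3/2} + p^{5/3}\ell^{7/6}d^{2/3}h^{1/6}\right)$, and then Lemma~\ref{lem:arrbound} ($\|S\|_2^2 = \Theta(\mathcal{R}_S/d^2)$) yields
$$\|S\|_2^2 = O\left( \ell^{3/2}d^{-3/2}p^{3/2} + p^{5/3}\ell^{7/6}d^{-4/3}h^{1/6}\right).$$

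To finish, Cauchy--Schwarz (equivalently, length-regularity of $S$) gives $\|S\|^2/p \le \|S\|_2^2$, and substituting the size identities $\ell = \|S\|/p$ and $d = \|S\|/n$ collapses each term to an inequality purely in $\|S\|$: the first term becomes exactly $n^{3/2}$, giving $\|S\|^2/p = O(n^{3/2})$ and hence $\|S\| = O(n^{3/4}p^{1/2})$; the second term becomes $p^{1/2}\|S\|^{-1/6}n^{4/3}h^{1/6}$, giving $\|S\|^{13/6} = O(p^{3/2}n^{4/3}h^{1/6})$ and hence $\|S\| = O(n^{8/13}p^{9/13}h^{1/13})$. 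Adding the two cases proves the lemma. I do not expect a conceptual obstacle here --- the combinatorial content lives in Lemmas~\ref{lem:2normbound}, \ref{lem:arrbound}, and \ref{lem:arrsumbound} --- so the main risk is purely bookkeeping: checking that the hypotheses of Lemma~\ref{lem:2normbound} transfer to $S'$ (especially the handling of the degenerate small-$\ell'$ regime through Lemma~\ref{lem:degenSprime}), and tracking the fractional exponents correctly through the substitution of the size identities.
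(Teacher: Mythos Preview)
Your proof is correct and follows essentially the same approach as the paper: apply Lemma~\ref{lem:2normbound} to $S'$, feed the result through Lemmas~\ref{lem:arrsumbound} and~\ref{lem:arrbound}, then use Cauchy--Schwarz and the size identities to solve for $\|S\|$. One small inaccuracy: the bound $O(n^{2/3}p^{2/3}h^{-1/3})$ from Lemma~\ref{lem:degenSprime} is not ``absorbed into the target'' of Lemma~\ref{lem:shbound} --- it is a separate bound that is balanced against the $n^{8/13}p^{9/13}h^{1/13}$ term later by choosing $h$; the lemma itself is stated under the standing assumption that $\ell'$ is large, so the degenerate case simply does not arise in its proof.
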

\begin{proof}
We have
\begin{align*}
\|S\|_2^2 &= \Theta\left( d^{-2} \mathcal{R}_S \right) \tag*{Lemma \ref{lem:arrbound}}\\
&= d^{-2} \cdot O\left( \frac{\ell p^3}{h} \cdot \mathbb{E}\left[\|S'\|_2^2 \right] \right)^{1/2} \tag*{Lemma \ref{lem:arrsumbound}}\\
&= O\left( \frac{\ell p^3}{d^4 h} \cdot \mathbb{E}\left[\|S'\|_2^2 \right] \right)^{1/2}.
\end{align*}
Recall from Lemma \ref{lem:sprimeprops} that $S'$ has $O(\ell d h)$ nodes, $\le p$ paths, maximum path length $O(\ell)$, and bridge girth $>3$.
We may thus apply Lemma \ref{lem:2normbound} to conclude that the following bound holds deterministically:
$$\|S'\|_2^2 = O\left( (\ell d h)\cdot \ell + p^{1/3} (\ell d h)^{4/3} \right).$$
Using this as an upper bound for expectation, we may continue:
\begin{align*}
\|S\|_2^2 &= O\left( \frac{\ell p^3}{d^4 h} \cdot \left( \ell^2 d h + p^{1/3} (\ell d h)^{4/3} \right) \right)^{1/2}\\
&= O\left( \ell^3 p^3 d^{-3} + \ell^{7/3} p^{10/3} d^{-8/3} h^{1/3} \right)^{1/2}.
\end{align*}
Our next step will be to apply the Cauchy-Schwarz inequality, which gives $\|S\|_2^2 \cdot p \ge \|S\|^2$.
Using this, we may continue:
\begin{align*}
\frac{\|S\|^2}{p} &= O\left( \ell^3 p^3 d^{-3} + \ell^{7/3} p^{10/3} d^{-8/3} h^{1/3} \right)^{1/2}\\
\frac{\|S\|^4}{p^2} &= O\left( \ell^3 p^3 d^{-3} + \ell^{7/3} p^{10/3} d^{-8/3} h^{1/3} \right)\\
\|S\|^4 &= O\left( \ell^3 p^5 d^{-3} + \ell^{7/3} p^{16/3} d^{-8/3} h^{1/3}  \right)\\
\|S\|^4 &= O\left( n^3 p^2 + \|S\|^{-1/3} n^{8/3} p^{3} h^{1/3} \right) \tag*{$\|S\| = nd = p\ell$.}
\end{align*}
We next split into two cases, by which of these terms in the right-hand side dominate.
If the first term dominates, then we get
$$\|S\| = O\left(n^{3/4} p^{1/2}\right).$$
If the second term dominates, then we get
\begin{align*}
\|S\|^4 &= O\left( \|S\|^{-1/3} n^{8/3} p^{3} h^{1/3} \right)\\
\|S\|^{13} &= O\left( n^8 p^9 h\right)\\
\|S\| &= O\left( n^{8/13} p^{9/13} h^{1/13}\right).
\end{align*}
Summing the two cases gives our claimed bound.
\end{proof}

The next step is to choose $h$ to balance terms.
We are balancing the term $n^{8/13} p^{9/13} h^{1/13}$ in the previous lemma, which applies in the case where the expected average path length in $S'$ is at least a large constant, with the term $n^{2/3} p^{2/3} h^{-1/3}$ from Lemma \ref{lem:degenSprime} which applies in the case where the average path length in $S'$ is bounded by a constant.
The proper setting of $h$ is computed as:
\begin{align*}
n^{8/13} p^{9/13} h^{1/13} &= n^{2/3} p^{2/3} h^{-1/3}\\
h^{16/39} &= n^{2/39} p^{-1/39}\\
h &= n^{1/8} p^{-1/16}.
\end{align*}
These terms then balance at
\begin{align*}
&n^{2/3} p^{2/3} \left( n^{1/8} p^{-1/16} \right)^{-1/3}\\
=&n^{5/8} p^{11/16}.
\end{align*}
Thus, the total bound on the size of $\|S\|$ is
$$\|S\| = O\left( n^{3/4} p^{1/2} + n^{5/8} p^{11/16} \right).$$
Finally, we acknowledge that this proof used the assumption that $\ell, d$ were both at least sufficiently large constants, so the final size of $\|S\|$ also absorbs a $O(n+p)$ term.

\subsection{Bounds for $k=\infty$}
\label{sec:inf_bounds}
\begin{theorem} \label{thm:rsreverse}
$\beta(n, p, \infty) = O\left( \frac{p^2}{2^{O(\log^* p)}} + p \right)$
\end{theorem}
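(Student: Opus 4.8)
The plan is to derive this from the known upper bound on the Ruzsa--Szemer\'edi function: for an absolute constant $c>0$, every $N$-vertex graph whose edge set is a union of $O(N)$ induced matchings has $O(N^2/2^{c\log^* N})$ edges. (Equivalently, via the standard tripartite encoding, an $N$-vertex $3$-uniform hypergraph in which no $6$ vertices span $3$ edges has at most this many edges; this is the quantitative $(6,3)$-theorem, following from the graph removal lemma.) This is the reverse of the de Caen--\Szekely-type \emph{construction} behind Theorem~\ref{thm:b3rs}, which turns Ruzsa--Szemer\'edi lower bounds into lower bounds for $\beta$; here the correspondence is run in the upper-bound direction, and the point is to set it up so that the hypergraph has only $O(p)$ vertices rather than $O(n)$, which makes the bound come out in terms of $p$.

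First, normalizations and the easy regimes. Let $S=(V,\Pi)$ be a path system of bridge girth $\infty$ with $n$ nodes, $p$ paths, and $\|S\|=\Omega(\beta(n,p,\infty))$. By Lemma~\ref{lem:cleaning} we may assume $S$ is degree-regular with average degree $d$ and length-regular, and by (the argument of) Lemma~\ref{lem:2cyccleaning} we may assume $S$ has no $2$-cycles, so any two paths meet in at most one node. If $d=O(1)$ then $\|S\|=nd=O(n)$ and we are done; otherwise $d$ is super-constant. Since distinct pairs of paths meet at distinct nodes, $\sum_v \binom{\deg(v)}{2}\le \binom p2$, so $nd^2=O(p^2)$, whence $\|S\|^2=n\cdot nd^2=O(np^2)$, i.e.\ the standard incidence bound $\|S\|=O(p\sqrt n)$. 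For a suitable constant this already gives $\|S\|=O(p^2/2^{c\log^* p})$ whenever $n\le p^2/2^{\Theta(\log^* p)}$, and it gives $\|S\|=O(n)$ whenever $n\ge p^2$; so it remains to treat the regime $p^2/2^{\Theta(\log^* p)}<n<p^2$, where $d=\|S\|/n$ is super-constant but smaller than any fixed power of $p$, and where we must improve the incidence bound by a factor $2^{\Theta(\log^* p)}$ using the absence of short bridges.

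For the reduction, build a graph $G$ on vertex set $\Pi$ (so $N=p$): for each node $v$ select one near-perfect matching $M_v$ of the set $P_v:=\{\pi:v\in\pi\}$ of the $\Theta(d)$ paths through $v$, choosing its pairs according to how $v$ sits along the two matched paths; the retained pairs are the edges of $G$, and since each path has $\Theta(d)$ incidences this keeps $\Omega(nd)=\Omega(\|S\|)$ edges. The edges of $G$ then need to be partitioned into $O(p)$ induced matchings. Grouping: since each node lies on $\Theta(d)$ paths each of length $\Theta(\|S\|/p)$ with $nd^2=O(p^2)$, any node ``conflicts'' (shares a matched path, or --- as needed for the induced property --- is joined to another by a retained edge) with only a controlled number of others, which yields a proper coloring of the nodes into $O(p)$ classes, each class designed to be an induced matching of $G$. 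Applying the $(6,3)$-theorem to the resulting hypergraph on $O(p)$ vertices gives $\|S\|=\Theta\big(|E(G)|\big)=O(p^2/2^{c\log^* p})$; combined with the easy regimes this yields $\beta(n,p,\infty)=O\big(p^2/2^{\Theta(\log^* p)}+n\big)$, which is the asserted bound. (The $+n$ term is necessary, since a single path through $n$ nodes is bridge-free.)

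The step I expect to be the main obstacle is precisely the verification that the color classes are \emph{induced} matchings --- not merely matchings --- while keeping their number at $O(p)$. One cannot simply forget the path order and quote set-system girth here: a set $3$-cycle in $S$ need not be a $3$-bridge, since it may instead be realized as a \emph{directed} $3$-cycle, which bridge girth $\infty$ explicitly permits. The resolution is to use the order information in defining both the matchings $M_v$ and the coloring, and then to argue that any induced-matching violation --- two matched pairs of a color class joined by a retained edge --- would force a genuine bridge of size $3$ within a class, or of bounded size (governed by how the classes were built) across classes, contradicting bridge girth $\infty$. Making the matchings and the coloring compatible enough that every such obstruction is short, while still retaining a constant fraction of the edges and only $O(p)$ classes, is the crux.
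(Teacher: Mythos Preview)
Your plan has a genuine gap at exactly the point you flag: you do not show how to partition the edges of your auxiliary graph on $\Pi$ into $O(p)$ \emph{induced} matchings. You correctly identify the obstruction --- a set $3$-cycle in $S$ can be a directed $3$-cycle rather than a $3$-bridge, and bridge girth $\infty$ does not forbid directed cycles --- but your proposed resolution (encoding order into the matchings $M_v$ and the coloring so that every induced-matching violation yields a short bridge) is left as a heuristic without a construction, and it is not clear how to carry it out while keeping the number of classes at $O(p)$.

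The paper sidesteps this obstruction by first reducing to \emph{acyclic} path systems. Corollary~\ref{cor:inftyacyclic} (which is proved via the equivalence $\rpp(n,p)=\Theta(\beta(n,p,\infty))$ of Theorem~\ref{thm:rpreduction} together with the standard DAG reduction for reachability preservers) shows $\beta(n,p,\infty)=\Theta(\abeta(n,p,\infty))$. In an acyclic system there are no directed cycles, so every set $k$-cycle is a $k$-bridge; hence an acyclic system of bridge girth $>3$ has a bipartite incidence graph of girth $>6$, giving $\abeta(n,p,3)\le\gamgam(n,p,6)$. The symmetry $\gamgam(n,p,6)=\gamgam(p,n,6)$ then swaps the roles of $n$ and $p$, after which the de~Caen--\Szekely/Fox bound gives $O(p^2/\rs(p)+n)$ directly. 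No auxiliary matchings, coloring, or case split on the relative sizes of $n$ and $p$ are needed: the acyclicity reduction dissolves your obstruction, and the symmetry of $\gamgam$ is what makes the bound come out in $p$ rather than $n$.
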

\begin{proof}
In this proof, it will be helpful to write $\abeta(n, p, k)$ for the extremal function of \emph{acyclic} path systems of high bridge girth.
Let $S$ be an acyclic path system with $n$ nodes, $p$ paths, bridge girth $>3$, and $\|S\| = \abeta(n, p, 3)$.
Let $G$ be its incidence graph, which has $|E(G)| = \|S\|$ edges.
We claim that $G$ has girth $>6$.
To see this, we argue:
\begin{itemize}
\item Since $G$ is bipartite it does not have $3$- or $5$-cycles.

\item A $4$-cycle in $G$ implies that there are two paths $\pi_1, \pi_2 \in \Pi$ that intersect the same pair of nodes $u, v \in V$.
Since $S$ is acyclic, these paths would need to use $u, v$ in the same order, and hence they would form a $2$-bridge.
But since $S$ has bridge girth $>3$, this means $G$ may not have a $4$-cycle.

\item Similarly, a $6$-cycle in $G$ implies that there are three paths $\pi_1, \pi_2, \pi_3$ and three nodes $t, u, v$ with $t, u \in \pi_1, u, v \in \pi_2, t, v \in \pi_3$.
Since $S$ is acyclic, we may assume without loss of generality that these nodes are ordered $t, u, v$ in a topological sort.
Thus $\pi_1, \pi_2, \pi_3$ form a $3$-bridge with $\pi_3$ as the river.
Since $S$ has bridge girth $>3$, this means $G$ may not have a $6$-cycle.
\end{itemize}
Hence $G$ is a bipartite graph with $n, p$ nodes per side, girth $>6$, and $\abeta(n, p, 3)$ edges.
It follows that\footnote{In fact this reduction can be reversed, showing that these functions are equal.  But we will only need to use the inequality in one direction, so we omit the proof on the other side.}
$$\abeta(n, p, 3) \le \gamgam(n, p, 6).$$
We also notice that $\gamgam$ is symmetric in its first two parameters, and so
$$\gamgam(n, p, 6) = \gamgam(p, n, 6).$$
With these two facts in mind, we now have:
\begin{align*}
\beta(n, p, \infty) &= \Theta\left( \abeta(n, p, \infty)\right) \tag*{Corollary \ref{cor:inftyacyclic}}\\
&\le O\left( \abeta(n, p, 3) \right)\\
&\le O\left( \gamgam(n, p, 3) \right)\\
&= O\left( \gamgam(p, n, 3) \right)\\
&\le O\left( \frac{p^2}{\rs(p)} + n\right) \tag*{\cite{de1997dense, Fox11, MS19}. \qedhere}
\end{align*}
\end{proof}

\begin{theorem} \label{thm:bstarupper}
$\betastar(n, p, \infty) = O\left( \min\left\{ n^{1/2}p, np^{1/2} \right\} + n + p\right)$
\end{theorem}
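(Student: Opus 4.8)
The plan is to prove the two separate upper bounds $\betastar(n,p,\infty)\le\beta(n,p,2)$ and $\betastar(n,p,\infty)=O(\beta(p,n,2))$, and then combine them: a short case analysis shows that $\min\{\beta(n,p,2),\beta(p,n,2)\}=\Theta\big(\min\{n^{1/2}p,\,np^{1/2}\}+n+p\big)$, which is exactly the claimed bound. The first inequality is immediate, since ordered bridge girth $>\infty$ implies ordered bridge girth $>2$ and $2$-bridges are insensitive to the path ordering, so $\betastar(n,p,\infty)\le\betastar(n,p,2)=\beta(n,p,2)$ (cf.\ the remark after Theorem~\ref{thm:twobeta}).

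The content lies in the bound $\betastar(n,p,\infty)=O(\beta(p,n,2))$. Take an ordered path system $S=(V,\Pi)$ with $\le n$ nodes, $\le p$ paths, no ordered bridges, and $\|S\|=\betastar(n,p,\infty)$. Because $S$ has no ordered $2$-bridge, each ordered pair of nodes lies on at most one path of $S$; in particular no two paths of $S$ share two nodes in the same relative order. I would next pass to a subsystem $S'\subseteq S$ with $\|S'\|=\Omega(\|S\|)$ in which no two paths share more than one node at all. This is intended as an ordered analogue of the $2$-cycle cleaning lemma (Lemma~\ref{lem:2cyccleaning}): process the paths in the given order, greedily deleting a node-occurrence from a path whenever keeping it would complete a $2$-cycle with an already-processed surviving path, and amortize each deletion against a surviving occurrence of the same node via a $3$-bridge. \textbf{The main obstacle is precisely this amortization:} unlike in Lemma~\ref{lem:2cyccleaning}, a system with no \emph{ordered} bridges may well contain $2$-cycles, and even unordered $3$-bridges, so the deletion rule and the choice of charging witness must be arranged so that the $3$-bridge that is produced is \emph{ordered} (its river last in the path ordering) --- only then does it contradict the hypothesis on $S$. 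Getting this to work is the one nontrivial step.

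Granting such an $S'$, the rest is a Fisher-inequality computation. Since any two of the $\le p$ paths of $S'$ meet in at most one node, $\sum_{v}\binom{\deg_{S'}(v)}{2}\le\binom{p}{2}$, hence $\sum_v\deg_{S'}(v)^2=O(p^2+\|S'\|)$; as $S'$ has $\le n$ nodes, Cauchy--Schwarz gives $\|S'\|^2=\big(\sum_v\deg_{S'}(v)\big)^2\le n\cdot\sum_v\deg_{S'}(v)^2=O(np^2+n\|S'\|)$, so $\|S'\|=O(p\sqrt n+n)$. (Equivalently, one can transpose $S'$ --- interchanging its nodes and its paths --- into a path system on $\le p$ nodes and $\le n$ paths; linearity of $S'$ makes the transpose $2$-bridge-free, so its size is $\le\beta(p,n,2)$.) Since $\|S'\|=\Omega(\betastar(n,p,\infty))$, this yields $\betastar(n,p,\infty)=O(p\sqrt n+n)$, and combining with $\betastar(n,p,\infty)\le\beta(n,p,2)=O(np^{1/2}+n+p)$ gives $\betastar(n,p,\infty)=O\big(\min\{n^{1/2}p,np^{1/2}\}+n+p\big)$.
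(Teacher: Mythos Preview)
Your overall strategy—reduce to a system in which any two paths share at most one node, then apply the Fisher-type bound—is exactly right, and your final paragraph is correct once that reduction is in hand. The gap is precisely where you flag it: the ordered analogue of the $2$-cycle cleaning (Lemma~\ref{lem:2cyccleaning}) does not go through as written. Concretely, if paths $\pi$ and $\pi'$ both mark an earlier path $q$ at node $v$, witnessed by nodes $u,u'$ with $v<_\pi u$, $v<_{\pi'}u'$, and $u,u'<_q v$, then the resulting $3$-bridge has river $\pi$ when $u'<_q u$ and river $\pi'$ when $u<_q u'$. Which of these occurs is dictated by the internal order of $q$, not by the path ordering, so there is no reason the river should be the later of $\pi,\pi'$; hence the bridge need not be ordered, and the amortization fails. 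Processing the paths in reverse order does not help either (the river is never $q$ in this configuration).

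The paper avoids this obstacle altogether with a different device: it first invokes Corollary~\ref{cor:bstaracyclic} (proved via the online reachability preserver reduction, Theorem~\ref{thm:rpstarreduction}) to obtain an \emph{acyclic} ordered path system $S$ of size $\Omega(\betastar(n,p,\infty))$ with no ordered bridges. Acyclicity plus $2$-bridge-freeness immediately gives that any two paths share at most one node: if two paths shared nodes $u,v$, both would have to use them in the topological order, creating a $2$-bridge. From there the set-system girth bound $O(\min\{n^{1/2}p,\,np^{1/2}\}+n+p)$ follows exactly as in your last paragraph. So your endgame is the same as the paper's; what you are missing is the acyclicity reduction, which replaces (and obviates) the $2$-cycle cleaning you attempted.
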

\begin{proof}
This follows as a consequence of Corollary \ref{cor:bstaracyclic}.
Recall that it suffices to bound the maximum possible size of an \emph{acyclic} ordered path system $S$ with bridge girth $\infty$.
If we drop the path ordering in $S$, and treat it as a normal unordered path system, then it still has bridge girth $>2$.
Since $S$ is acyclic, this implies that any two paths intersect on at most one node.
So if we further drop the order of nodes within each path, and instead treat $S$ as a \emph{set} system, it has girth $>2$.
It thus satisfies
$$\|S\| = O\left( \min\left\{ n^{1/2}p, np^{1/2} \right\} + n + p\right)$$
by well-known bounds on the size of high-girth set systems (or bipartite graphs) \cite{van12}.
\end{proof}

\begin{theorem} \label{thm:bstarlower}
$\betastar(n, p, \infty) = \Omega\left(n^{2/3} p^{2/3} + n + p\right)$
\end{theorem}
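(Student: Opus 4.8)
The plan is to establish the nontrivial bound $\betastar(n, p, \infty) = \Omega\!\left(n^{2/3} p^{2/3}\right)$ by an explicit construction; the additive $\Omega(n + p)$ is witnessed by a single path through all $n$ nodes and by $p$ singleton paths, and these two systems also cover the ranges $p \le n^{1/2}$ and $p \ge \Omega(n^2)$, where $n^{2/3}p^{2/3}$ is $O(n)$ and $O(p)$ respectively. So assume $n^{1/2} \le p \le O(n^2)$.

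\textbf{Construction.} Take the nodes to be points of $\zz^2$, ordered lexicographically by $x$-coordinate and then $y$-coordinate; this makes the system acyclic. With parameters $T, m, \ell$ chosen below (and $m = \Theta(T\ell)$), for each slope $k \in \{0, \dots, T-1\}$ and each intercept $j \in \{0, \dots, m-1\}$ include the directed segment $\pi_{k,j}$ consisting of the points $(t, kt + j)$ for $t = 0, \dots, \ell - 1$, oriented by increasing $t$. This yields $p := Tm$ paths, each of length $\ell$, so $\|S\| = Tm\ell$, supported on $|V| \le T\ell^2 + m\ell$ grid points. Setting $\ell \asymp n^{2/3} p^{-1/3}$ (capped at $\ell \ge 2$), $T \asymp n^{-1/3} p^{2/3}$, and $m \asymp (np)^{1/3}$ keeps $|V| = O(n)$ and $Tm = O(p)$ while achieving $\|S\| = \Omega(n^{2/3}p^{2/3})$; one checks $T \ge 1$ precisely when $p \ge n^{1/2}$, and the cap $\ell \ge 2$ accommodates all $p$ up to $\Omega(n^2)$.

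\textbf{Ordered bridge girth.} Order the paths by slope, increasing, breaking ties arbitrarily. I claim there is no ordered bridge. Two segments of equal slope lie on distinct parallel lines, hence are disjoint; two segments of distinct slopes meet in at most one point; so any two paths share at most one node, which rules out $2$-bridges. Suppose toward a contradiction that $\pi_1, \dots, \pi_{b-1}$ (arcs) and $\pi_b$ (river) form an ordered $b$-bridge on nodes $v_1, \dots, v_b$ with $b \ge 3$, where $v_i$ precedes $v_{i+1}$ in $\pi_i$ and $v_1$ precedes $v_b$ in $\pi_b$. Writing the slope of $\pi_i$ as $(1, k_i)$, the segment structure gives $v_{i+1} - v_i = c_i(1, k_i)$ and $v_b - v_1 = c_b(1, k_b)$ with positive integers $c_i$. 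Summing the first $b-1$ equations and comparing with the last gives $\sum_{i < b} c_i = c_b$ in the first coordinate and hence $\sum_{i < b} c_i(k_i - k_b) = 0$ in the second. The $k_i$ cannot all equal $k_b$: otherwise all $b$ segments would lie on the common slope-$k_b$ line through $v_1$ (consecutive segments share a node and are parallel), but our construction places at most one segment on each line, contradicting distinctness. Then $\sum_{i<b} c_i(k_i - k_b) = 0$ with all $c_i > 0$ forces some arc $\pi_i$ with $k_i > k_b$, and that arc comes strictly after the river $\pi_b$ in the slope order — contradicting that the river is last. Hence the ordered bridge girth is $\infty$, and this system certifies $\betastar(n, p, \infty) = \Omega(n^{2/3}p^{2/3})$.

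The conceptual heart, and the one step that is not routine, is the last one: it is the slope \emph{ordering} that turns every would-be bridge into a non-ordered bridge, and this use of the ordering is essential, since (by the $\beta(n, p, \infty)$ bounds in Table~\ref{tbl:betabounds}) the \emph{unordered} bridge girth provably cannot reach $n^{2/3}p^{2/3}$ in general. Acyclicity, the single-intersection property, and the parameter arithmetic are all straightforward. (This construction is, in essence, the distance-preserver lower bound of Coppersmith and Elkin \cite{CE06}, re-read as an ordered path system.)
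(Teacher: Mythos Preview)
Your proof is correct and follows essentially the same approach as the paper's: both construct the path system from straight-line segments in $\zz^2$, order the paths by slope, and prove ordered bridge-freeness via the same ``slope-sum'' argument (summing the displacement vectors along the arcs and comparing to the river forces some arc to have strictly larger slope than the river, contradicting the ordering). Your handling of the parameter arithmetic and edge cases is slightly more explicit than the paper's, but the content is the same; the paper attributes the construction to \cite{ST83} rather than \cite{CE06}, though these are closely related.
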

\begin{proof}
This construction is a slight modification of a construction from \cite{ST83}, although the analysis is somewhat new.
We will construct an ordered path system $S = (V, \Pi)$ on $n = |V| $ nodes and $p = |\Pi|$ paths realizing this lower bound.
Let $1 \le \ell \le n^{1/2}$ be an integer parameter; all paths in our system will have length exactly $\ell$.
Let
$$V := [1, \ell] \times \left[1, \frac{n}{\ell}\right]$$
be a rectangular subset of the integer lattice $\mathbb{Z}^2$.\footnote{We will assume for convenience that $n/\ell$ and similar terms are integral; otherwise, rounding to the nearest integer affects our argument only by lower-order terms which may be ignored.}
Our paths in $\Pi$ will correspond to lines in $\mathbb{Z}^2$.
The starting points of our lines (paths) are captured by the set
$$X := \{1\} \times \left[1, \frac{n}{2\ell}\right].$$
The slopes of our lines are captured by the set
$$W := \left\{(1, i) \mid i \in \left[1, \frac{n}{2\ell^2}\right]\right\}.$$
For each $x \in X$ and $\vec{w} \in W$, we add the path
$$\left(x, x + \vec{w}, x + 2\vec{w}, \dots, x + (\ell-1)\vec{w}\right)$$
to $\Pi$.
To order our paths: for paths $\pi, \pi' \in \Pi$, we assign $\pi < \pi'$ if $\Vec{w}_y < \Vec{w}'_y$, where $\vec{v}_y$ denotes the $y$-component of vector $\vec{v} \in \mathbb{Z}^2$; if $\Vec{w}_y = \Vec{w}'_y$ then the tie may be broken arbitrarily.
Intuitively, this orders our paths in $\Pi$ by increasing value of the slope of the corresponding line in $\mathbb{Z}^2$.
This completes the construction, and we now check its parameters.
We have
$$p = |\Pi| = |X||W| = \Theta\left(\frac{n^2}{\ell^3}\right).$$
Additionally, all paths in $\Pi$ have length exactly $\ell$, and so
$$\|S\| = p\ell = p \cdot \Theta\left(\frac{n^{2/3}}{p^{1/3}}\right) = \Theta\left(n^{2/3}p^{2/3}\right).$$

Now it only remains to verify that $S$ is an ordered bridge-free path system.
Let $\pi \in \Pi$ be a path constructed via start point $x \in X$ and slope $\vec{w} \in W$, and suppose for the sake of contradiction that $\pi$ is the river for an ordered bridge in $S$.
Since $S$ is layered, this implies there is a collection of vectors $\vec{v}_1, \vec{v}_2, \dots, \vec{v}_{\ell-1} \in W$ (not necessarily distinct), which correspond to the differences among adjacent nodes along the path formed by the arcs of the bridge, and such that
$$\sum_{i=1}^{\ell-1} \vec{v}_i = (\ell-1)\vec{w}.$$
Moreover, the vectors $\{\vec{v}_i\}$ may not all be identical to $\vec{w}$.
This implies that there exists at least one vector $\vec{v}_i$ with $(\vec{v}_i)_y > \vec{w}_y$, and there exists at least one vector $\vec{v}_i$ with $(\vec{v}_j)_y < \vec{w}_y$.
However, since we order our paths by increasing slope, the arc corresponding to $\vec{v}_i$ would be placed later in the ordering than the path corresponding to $\vec{w}$.
Thus this arc cannot participate as an arc of an ordered bridge with $\vec{w}$ corresponding to the river.
This completes the contradiction, and we conclude that $\pi$ has no ordered bridge.
\end{proof}

Finally, we show that this lower bound is conditionally tight in the setting $p=n$.
We first give some background on the relevant condition.
An \emph{ordered graph} is a simple graph with a total ordering on its vertices.
A natural problem is to investigate how classic results from extremal graph theory extend to this setting.
Let:
\begin{itemize}
\item $\text{ex}(n, H)$ be the \Turan{} function of the graph $H$; that is, the maximum possible number of edges in an $n$-node graph that does not contain $H$ as a subgraph.
\item  $\text{ex}_<(n, H)$ be the \emph{ordered} \Turan{} function of the \emph{ordered} graph $H$, defined analogously.
\end{itemize}
These extremal functions satisfy the basic inequality 
$ \text{ex}_<(n, H) \geq \text{ex}(n, \overline{H})$, where $\overline{H}$ denotes the (unordered) graph underlying the ordered graph $H$.
Tardos \cite{Tardos18} asked how high the ratio can be, that is, the value of
$$\max_H \dfrac{\text{\normalfont ex}_<(n, H)}{\text{\normalfont ex}(n, \overline{H})}.$$
The current lower bound is $\Omega(n^{1/3 - \varepsilon})$, and the current upper bound is $O(n^{1-\varepsilon})$.
A reasonable hypothesis could be that the lower bound is closer to the correct answer:
\begin{hyp}
\label{hyp:ordered_gap}
Let $H$ be an ordered graph with greater than two vertices and at least one edge. Then
$\frac{\text{\normalfont ex}_<(n, H)}{\text{\normalfont ex}(n, \overline{H})} = O(n^{1/3})$.
\end{hyp} 
Under this hypothesis, our new lower bound for $\beta^*(n, p, \infty)$ is nearly-tight when $p = n$. 
To be clear, we do not necessarily think there is evidence that Hypothesis \ref{hyp:ordered_gap} is true.
However, we do think that it represents a natural limitation on current methods in the theory of ordered graphs, and a significant new idea will be needed to prove or refute it.
Thus, our point is simply that it is likely beyond the reach of current techniques to significantly improve our lower bound on $\beta^*(n, n, \infty)$ (if it is improvable at all).

\begin{theorem} \label{thm:hypbstarlb}
    Under Hypothesis \ref{hyp:ordered_gap}, $\beta^*(n, n, \infty) = \Thetaish(n^{4/3})$. 
\end{theorem}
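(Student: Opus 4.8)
The plan is as follows. The lower bound $\betastar(n,n,\infty)=\Omega(n^{4/3})$ is immediate by specializing Theorem~\ref{thm:bstarlower} to $p=n$, so the whole content is the conditional matching upper bound $\betastar(n,n,\infty)=\widetilde O(n^{4/3})$. The strategy is to encode an extremal ordered path system as an \emph{ordered graph}, translate ``ordered bridge girth $\infty$'' into the avoidance of one fixed ordered pattern whose underlying unordered graph is a forest, and then feed this into Hypothesis~\ref{hyp:ordered_gap}.

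First I would normalize the system. By Corollary~\ref{cor:bstaracyclic} it suffices to bound $\|S\|$ for an \emph{acyclic} ordered path system $S=(V,\Pi)$ of ordered bridge girth $\infty$ with $n$ nodes and $n$ paths, and by the Cleaning Lemma (Lemma~\ref{lem:cleaning}) I may assume $S$ is approximately degree- and length-regular, so that its average degree and its average length both equal $d:=\|S\|/n$; if $d=O(1)$ then $\|S\|=O(n)$ and we are done, so assume $d$ is at least a large constant. Fix a topological order $\prec$ on $V$. (Since $S$ is acyclic with no ordered $2$-bridge, any two paths share at most one node, but the target bound genuinely requires forbidding bridges of \emph{all} sizes, not just small ones: forbidding $2$- and $3$-bridges alone cannot beat the unconditional $\betastar(n,n,\infty)\le\betastar(n,n,2)=\Theta(n^{3/2})$.) Next I would build an ordered graph $G$ with $\Theta(n)$ vertices and $\Theta(\|S\|)$ edges out of the incidence structure of $S$ --- concretely, the incidence bipartite graph of $S$, ordered by listing the nodes first in the order $\prec$ and then the paths in the given path order, so that $G$ is an ordered graph on $2n$ vertices with exactly $\|S\|$ edges. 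An ordered subgraph of $G$ then encodes a set of nodes and paths of $S$ together with a prescribed relative order on the nodes and a prescribed relative order on the paths; the exact choice of encoding is dictated by what the next step needs.

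The heart of the argument --- and the step I expect to be hardest --- is to exhibit a \emph{fixed} ordered graph $H$ whose underlying unordered graph $\overline H$ is a forest such that ``$S$ has ordered bridge girth $\infty$'' forces ``$G$ contains no ordered copy of $H$''. Morally, an ordered copy of $H$ in $G$ would display a partial ``staircase'' of arc-paths all appearing earlier in the path order than a common path playing the role of a river, and one wants to argue that any such configuration either already is an ordered $b$-bridge for some $b$, or can be completed inside $S$ to one, contradicting bridge-freeness. The two tensions to resolve are (i) that ordered bridges occur in all sizes $b$ while $H$ must have bounded size, so $H$ has to be designed so that merely \emph{containing} it forces an honest bridge rather than just resembling one of a fixed size, and (ii) that $\overline H$ must be a forest --- which is exactly what makes the resulting bound come out as $n^{4/3}$ rather than something larger --- so $H$ cannot encode a bridge ``directly'' as a cycle in $G$, and the closing-up argument must do real work.

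Granting this claim, the rest is bookkeeping. Hypothesis~\ref{hyp:ordered_gap} gives
\[
\|S\| \;=\; \Theta(|E(G)|) \;\le\; \ex_{<}\!\big(\Theta(n),H\big) \;=\; O\!\big(n^{1/3}\big)\cdot \ex\!\big(\Theta(n),\overline H\big),
\]
and since $\overline H$ is a forest, $\ex(N,\overline H)=O(N)$ with a constant depending only on $H$ (a graph on $N$ vertices with more than $|V(H)|\cdot N$ edges contains, after iteratively deleting vertices of degree $<|V(H)|$, a subgraph of minimum degree $\ge|V(H)|$, into which any forest on $|V(H)|$ vertices embeds greedily). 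Hence $\|S\|=O(n^{1/3}\cdot n)=O(n^{4/3})$; adding back the $O(n)$ term set aside above, $\betastar(n,n,\infty)=O(n^{4/3})$, which together with the lower bound gives $\betastar(n,n,\infty)=\Thetaish(n^{4/3})$ (the tilde absorbing whatever polylogarithmic slack the formal argument carries). The main obstacle, to repeat, is isolating the forbidden forest pattern $H$ and proving that every ordered occurrence of it in $G$ can be ``closed up'' into an ordered bridge of $S$.
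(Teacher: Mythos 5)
Your framework matches the paper's: reduce to an acyclic ordered path system via Corollary~\ref{cor:bstaracyclic}, pass to the ordered incidence graph with the node-side ordered by a topological sort and the path-side ordered by the path order (and nodes before paths), and then feed a forbidden ordered pattern into Hypothesis~\ref{hyp:ordered_gap}. The departure is in the choice of pattern, and that is where the proposal breaks.

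You want a \emph{fixed} ordered graph $H$ whose underlying graph $\overline H$ is a \emph{forest}, and you flag this as the step you expect to be hardest. It is not merely hard; it is unachievable. An ordered $k$-bridge in $S$ corresponds exactly to a $2k$-cycle in the incidence graph (alternating nodes $v_1,\dots,v_k$ and paths $\pi_1,\dots,\pi_k$, with $\pi_k$ last in the path order and $v_1,\dots,v_k$ increasing). Any forest pattern is, in effect, such a cycle with at least one edge deleted, say the edge $\pi_k\!-\!v_1$; but the incidence graph of an ordered-bridge-free system can perfectly well contain the resulting $(2k)$-vertex path $v_1,\pi_1,v_2,\dots,v_k,\pi_k$ with the required orderings, since nothing in bridge-freeness forces the closing incidence $v_1\in\pi_k$ to hold. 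So ``$G$ contains an ordered copy of $H$'' does not force a bridge, and there is no ``closing-up argument'' to do the real work you are hoping for. Moreover, your motivation for insisting on a forest — that otherwise $\ex(N,\overline H)$ is too large to give $n^{4/3}$ — misreads the situation: for the even cycle $\overline{H_k}=C_{2k}$ one has $\ex(N,C_{2k})=O(kN^{1+1/k})$, so taking $k=\Theta(\log n)$ makes $N^{1/k}$ a constant and the bound is already $\widetilde O(N)$. That polylogarithmic slack is precisely why the statement is $\Thetaish(n^{4/3})$ rather than $\Theta(n^{4/3})$. The paper's proof does exactly this: it forbids, for every $k\ge 2$, the ordered $2k$-cycle $H_k$ whose occurrence in $G_S$ would reconstitute an ordered $k$-bridge (using the three ordering constraints $v_1<\pi_1$, $v_i<v_j$ for $i<j$, $\pi_i<\pi_k$ for $i<k$), applies the hypothesis to $H_k$ with $\overline{H_k}=C_{2k}$, and sets $k=\log n$. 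Replace your forest hunt with that cycle pattern and the rest of your bookkeeping goes through.
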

\begin{proof}
Note that by Theorem \ref{thm:bstarlower}, $\beta^*(n, n, \infty) = \Omega(n^{4/3})$. We will prove that under Hypothesis \ref{hyp:ordered_gap}, $\beta^*(n, n, \infty) = O(n^{4/3} \log n)$. 

Let $S = (V, \Pi)$ be an acyclic ordered path system with $n$ nodes,  $n$ paths, no ordered bridges, and size $\|S\| = \Omega(\beta^*(n, n, \infty))$. ($S$ must exist by the cleaning lemma and by Corollary \ref{cor:bstaracyclic}.)
Since $S$ is acyclic, there is a total order $\sigma_1$ on $V$ such that for all $s, t \in V$,  $s <_{\sigma_1} t$ if $s <_{\pi} t$ for some $\pi \in \Pi$.  
Likewise, let $\sigma_2$ be the total order on $\Pi$ in the ordered path system $S$.

Let $G_S$ be the \textit{incidence graph} corresponding to $S$.
Recall that this means: $G_S = (L \cup R, E)$ is the bipartite graph such that $L := V$,  $R := \Pi$, and for all $v \in V$ and $\pi \in \Pi$, $(v, \pi) \in E$ if vertex $v$ is contained in path $\pi$ in $S$. Note that $|E| = \|S\|$. Define the following total order on the vertices  $L \cup R$ of $G_S$:
\begin{itemize}
    \item If $u, v \in L$, then apply ordering $\sigma_1$. 
    \item If $u, v \in R$, then apply ordering $\sigma_2$.
    \item If $u \in L$ and $v \in R$, then let $u < v$. 
\end{itemize}
We claim that graph $G_S$ (with vertices ordered as above) does not contain any simple, ordered $2k$-cycles $H_k$ of form 
$$
H_k := (v_1, \pi_1, v_2, \pi_2, \dots, v_k, \pi_k),
$$
where 
\begin{itemize}
    \item $v_1 < \pi_1$,
    \item $v_i < v_j$ if $i < j \in [1, k]$,
    \item $\pi_i < \pi_k$ for $i \in [1, k-1]$,
    \item and $k \geq 2$. 
\end{itemize}
Suppose for the sake of contradiction that such an ordered graph $H_k$ is contained in $G_S$. Then since $G_S$ is bipartite and $v_1 < \pi_1$ in $H_k$, we must have that $v_i \in L$ and $\pi_i \in R$  for $i \in [1, k]$. Additionally, by the choice of edges $E$ in $G_S$, it follows that $v_i, v_{i+1} \in \pi_i$ for $i \in [1, k-1]$ and $v_1, v_k \in \pi_k$ in $S$. Moreover, by our choice of ordering $\sigma_1$, since $v_i < v_{i+1}$ in $H$ and $v_i, v_{i+1} \in \pi_i$ it follows that $v_i <_{\pi_i} v_{i+1}$ and $v_1 <_{\pi_k} v_k$ in $S$. Finally, since $\pi_i < \pi_k$ for all $i \in [1, k-1]$ in $H_k$,  path $\pi_k$ comes last in the total order $\sigma_2$ of paths $\Pi$ in $S$, so nodes $v_1, \dots, v_k \in V$ and paths $\pi_1, \dots, \pi_k \in \Pi$ correspond to an ordered $k$-bridge in $S$. This contradicts our assumption that $S$ has no ordered bridges, so we conclude that $G_S$ does not contain $H_k$. 
Consequently,  the size of $S$ is at most
$$
\|S\| = |E| \leq \text{ex}_<(2n, H_k).
$$

By known bounds on the extremal function of cycles,  $\text{ex}(n, \overline{H_k}) = O(k n^{1+1/k})$  \cite{bondy1974cycles}. Then under Hypothesis \ref{hyp:ordered_gap},
$$
\|S\| \leq \text{ex}_<(2n, H) \leq \text{ex}(2n, \overline{H}) \cdot  O(n^{1/3}) = O(kn^{4/3 +1/k}). 
$$
Taking $k = \log n$ completes the proof.
\end{proof}

\section{Extremal Reductions to Bridge Girth}

In this section, we discuss various objects in network design where the extremal state-of-the-art upper or lower bounds on size can be reduced to extremal functions of bridge girth.

\subsection{Distance Preservers}

Recall Definition \ref{def:dps} for the formal definition of distance preservers and their associated extremal function $\dpp$.
Our goal is to prove:
\begin{theorem} \label{thm:distpreschain}
$\Omega\left(\beta^*(n, p, \infty)\right) \le \dpp(n, p) \le \beta(n, p, 2)$.
\end{theorem}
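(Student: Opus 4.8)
The statement is a two‑sided bound, and the two directions are of very different character. The upper bound $\dpp(n,p)\le\beta(n,p,2)$ is the classical ``consistency'' phenomenon for distance preservers and I would treat it quickly; the lower bound $\dpp(n,p)=\Omega(\betastar(n,p,\infty))$ is the real content, and I would prove it by realizing an extremal \emph{ordered} path system as a family of unique shortest paths in a carefully weighted DAG.

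\textbf{Upper bound (brief).} Given a directed weighted graph $G$ on $n$ nodes and a demand set $P$ with $|P|=p$, I would perturb the edge weights (or break ties by a fixed vertex order) so that shortest paths are unique, let $\pi_{s,t}$ be the unique shortest $s\leadsto t$ path for each $(s,t)\in P$, and take $H:=\bigcup_{(s,t)\in P}E(\pi_{s,t})$, which is a distance preserver. Uniqueness makes this family \emph{consistent}: if two of these paths both contain $u$ and then $v$, their $u\leadsto v$ subpaths coincide. The standard structural consequence, implicit in Coppersmith--Elkin \cite{CE06}, is that $E(H)$ is the edge set of a path system on $\le n$ nodes and $\le p$ paths with no $2$‑bridge, so $|E(H)|\le\beta(n,p,2)$; and since a $2$‑bridge is insensitive to path order, $\beta(n,p,2)=\betastar(n,p,2)$. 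The quantitative bookkeeping is in Appendix~\ref{app:twobounds}. I do not expect this to be the hard part.

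\textbf{Lower bound.} By Corollary~\ref{cor:bstaracyclic} together with the cleaning lemma, I may fix an \emph{acyclic} ordered path system $S=(V,\Pi)$ with $\le n$ nodes, $\le p$ paths, ordered bridge girth $\infty$, and $\|S\|=\Omega(\betastar(n,p,\infty))$; fix a topological order on $V$ and let $\operatorname{ord}(\pi)\in\{1,\dots,|\Pi|\}$ be the position of $\pi$ in the path ordering. First observe that every $2$‑bridge is an ordered $2$‑bridge (take the later of its two paths as the river), so $S$ has no $2$‑bridge and any two paths meet in at most one node. Now build a directed graph $G$ on vertex set $V$ (padded with isolated vertices to $n$): for every $\pi\in\Pi$ and every pair of nodes consecutive along $\pi$, add the corresponding directed edge with weight $B^{\operatorname{ord}(\pi)}$, where $B:=n+1$. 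Because $S$ has no $2$‑bridge, each edge is contributed by a unique path, so the weight is well defined, $G$ is acyclic, and $|E(G)|=\|S\|-|\Pi|$; take $P$ to be the (pairwise distinct) endpoint pairs of the paths in $\Pi$, so $|P|=|\Pi|\le p$.

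The heart of the argument is the claim that \emph{each $\pi\in\Pi$ is the unique shortest $G$‑path between its endpoints}. To see this, suppose $\rho$ is a competing walk from $s$ to $t$ (the endpoints of $\pi=\pi^{(k)}$) with $\rho\ne\pi$ and $\operatorname{len}(\rho)\le\operatorname{len}(\pi)$. Since $\pi$ has at most $n-1<B$ edges, each of weight $B^{k}$, we get $\operatorname{len}(\pi)<B^{k+1}$, so every edge of $\rho$ has weight $<B^{k+1}$, i.e.\ lies on a path of order $\le k$. List the nodes $\rho$ visits that also lie on $\pi$, in $\rho$‑order; by acyclicity they are $\pi$‑increasing, and since $\rho\ne\pi$ some sub‑walk $\sigma$ of $\rho$ between two consecutive such nodes $y,y'$ (with $y<_\pi y'$) uses no edge of $\pi$. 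Summarizing $\sigma$ as a concatenation of maximal runs along single paths $Q_0,Q_1,\dots$ with $Q_j\ne Q_{j+1}$ and all $Q_j\ne\pi$, and repeatedly replacing two runs on a common path $Q_a=Q_{a'}$ by the single $Q_a$‑subpath joining them (which exists as $G$ is acyclic) and re‑merging, I may assume the $Q_j$ are distinct. This exhibits an ordered bridge of $S$ with river $\pi$ and arcs $Q_0,\dots$: the arcs are distinct paths $\ne\pi$, each carrying an edge used by $\rho$ hence of order $\le k$, so (being $\ne\pi$) of order $<k$, whence the river $\pi$ comes last. That contradicts ordered bridge girth $\infty$, proving the claim. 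Consequently any distance preserver of $(G,P)$ must retain every edge of every $\pi\in\Pi$ (deleting one strictly increases the distance between its endpoints), so it has at least $\|S\|-|\Pi|\ge\|S\|-p$ edges; hence $\dpp(n,p)\ge\|S\|-p=\Omega(\betastar(n,p,\infty))$, the $-p$ loss being absorbed since $\betastar(n,p,\infty)=\Omega(n^{2/3}p^{2/3})$ dominates $p$ except in the narrow range $p=\Theta(n^2)$, which one checks directly from $\dpp(n,p)=\Omega(n+\min\{p,n^2\})$.

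\textbf{Expected main obstacle.} The delicate point is the claim and, within it, the extraction of an ordered bridge from a hypothetical shortcut: one must argue that the weighting $B^{\operatorname{ord}(\pi)}$ forces any competitive alternative path to use only \emph{low‑order} paths, which is precisely what guarantees that the extracted bridge is \emph{ordered} with the river last — this is exactly where ``ordered bridge girth $\infty$'' (rather than plain ``bridge girth $\infty$'') gets used, and it is also why the theorem involves $\betastar$ rather than $\beta$ on the left. The summarize‑and‑shortcut step is routine but must be handled carefully; acyclicity of $S$ (and hence of $G$) keeps it clean. The upper bound I expect to cause no trouble beyond citing the known consistency argument.
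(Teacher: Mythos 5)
Your lower bound argument is correct and uses a genuinely different (and somewhat cleaner) realization than the paper's. The paper processes paths in order, adding edges and choosing ``sufficiently large'' weights so that no previously-placed demand pair acquires a shortcut; you instead assign the explicit weight $B^{\operatorname{ord}(\pi)}$ with $B=n+1$ to every edge contributed by $\pi$, and extract the ordered bridge directly from a hypothetical competitor $\rho$. Your extraction (locate a sub-walk $\sigma$ between consecutive $\pi$-nodes of $\rho$ using no $\pi$-edge, decompose $\sigma$ into maximal runs on distinct paths, read off an ordered bridge with river $\pi$) is sound, and the geometric weights make the ``arcs have strictly smaller order'' step immediate; this is a nice explicit alternative to the paper's inductive weight choice. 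You also invoke Corollary \ref{cor:bstaracyclic} to work in a DAG, which the paper does not need but which does keep the merging step clean.

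However, your upper bound has a genuine gap, and it is precisely the gap the paper's proof exists to close. You write that after tie-breaking, the family $\{\pi_{s,t}\}$ is consistent, and that ``the standard structural consequence \ldots is that $E(H)$ is the edge set of a path system on $\le n$ nodes and $\le p$ paths with no $2$-bridge.'' This does not follow. Consistency says overlapping unique shortest paths must \emph{agree} on their common stretch; a $2$-bridge only requires that two paths both contain $u$ before $v$, which happens whenever two of the $\pi_{s,t}$ share even a single edge. So the path system $\{\pi_{s,t}\}$ for a generic instance is consistent but full of $2$-bridges, and $\|S\|=\sum_{(s,t)}|\pi_{s,t}|$ can vastly exceed $\beta(n,p,2)$ --- which is fine, since $\dpp$ counts the \emph{union}, not the sum, but then the chain to $\beta(n,p,2)$ breaks. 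There is no off-the-shelf way to re-decompose $E(H)$ into $\le p$ paths with no $2$-bridge. The paper's Independence Lemma (Lemma \ref{lem:dpindep}) is the missing ingredient: it performs an ``edge-skip'' surgery on a worst-case instance to produce a new $n$-node instance, still realizing $\dpp(n,p)$, whose demand pairs have pairwise \emph{edge-disjoint} unique shortest paths; only then does $2$-bridge-freeness hold and $\dpp(n,p)\le\beta(n,p,2)$ follow. The paper flags this explicitly as the ``more interesting part of the proof,'' so the shortcut you describe as not expected ``to cause any trouble'' is exactly where the work lies.
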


We begin by proving the upper bounds of Theorem \ref{thm:distpreschain}.
As a warmup, let us restrict attention to a very specific kind of distance preserver input $G = (V, E, w), P$ that enjoys a property that we will call \emph{independence} among the demand pairs:
\begin{definition} [Independence]
For a graph $G = (V, E, w)$ and set of demand pairs $P \subseteq V \times V$, we say that $P$ is \emph{independent} in $G$ if for all $(s, t) \in P$ there is a unique shortest path $\pi(s, t)$ in $G$, and these paths are pairwise edge-disjoint.
\end{definition}

An independent input instance $G = (V, E, w), P$ can be naturally associated to a path system $S = (V, \Pi)$, where the paths in $\Pi$ are precisely the node sequences corresponding to the unique shortest paths for demand pairs in $P$.
We claim that this system $S$ is $2$-bridge-free.
To see this, notice that $S$ has a $2$-bridge iff there are two distinct nodes $u, v \in V$ and two distinct paths $\pi_1, \pi_2 \in \Pi$ that both contain $u$ and then $v$ (in that order).
On one hand, we cannot have $\pi_1, \pi_2$ coincide on their $u \leadsto v$ subpaths, as this would imply that the associated paths in $G$ share edges, violating independence.
On the other hand, we cannot have that $\pi_1, \pi_2$ use distinct $u \leadsto v$ subpaths, as this would violate the property that $\pi_1, \pi_2$ are each unique shortest paths in the underlying graph.
Thus it is not possible for $S$ to have a $2$-bridge, and so we have
$$\|S\| \le \beta(n, p, 2).$$
The natural distance preserver for $G, P$ is obtained by overlaying the unique edge-disjoint shortest paths for the demand pairs in $P$, and it has exactly $\|S\| - p$ edges.\footnote{The $-p$ term arises since $\|S\|$ counts the number of \emph{nodes} in each path, while for the distance preserver we count the number of \emph{edges} in each path.}
Thus any such independent instance $G, P$ has a distance preserver on $\beta(n, p, 2) - p$ edges, which satisfies Theorem \ref{thm:distpreschain}.

This part of the proof is not exactly surprising, and it is essentially a rephrasing of the well-known fact that unique shortest paths in graphs exhibit \emph{consistency}.
The more interesting part of the proof is to show that independence of input instances may be assumed without loss of generality.
This is accomplished in the following lemma:

\begin{lemma} [Independence Lemma for Distance Preservers] \label{lem:dpindep}
For any positive integers $n, p$, there exists an $n$-node graph $G$ and a set of $|P| \le p$ \textbf{independent} demand pairs such that the minimal distance preserver of $G, P$ has exactly $\dpp(n, p)$ edges.
\end{lemma}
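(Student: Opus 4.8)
The plan is to start from a worst-case distance-preserver instance and transform it, without ever shrinking the minimal distance preserver, into one where the demand pairs are independent. First I would reduce to the case where the host graph is its own minimal distance preserver: if $(G_0, P_0)$ realizes $\dpp(n,p)$, replace $G_0$ by its minimal distance preserver $H_0$ while keeping $P_0$. Since $H_0 \subseteq G_0$ preserves the distances of $P_0$, every distance preserver of $(H_0, P_0)$ is also one of $(G_0, P_0)$, so the minimal distance preserver of $(H_0, P_0)$ cannot be smaller than $\dpp(n,p)$; hence it equals $H_0$ itself, which has $\dpp(n,p)$ edges and at most $n$ vertices, and now every edge lies on a shortest path between some demand pair.

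Second, I would apply a symbolic (lexicographic) perturbation to the edge weights so that shortest paths become unique between every pair of vertices. The point to check is that this preserves the extremal value: a perturbation-shortest path is in particular an original-shortest path, so any subgraph preserving the \emph{perturbed} distances of $P_0$ also preserves the original ones; thus the perturbed instance requires at least $\dpp(n,p)$ edges, and at most $\dpp(n,p)$ by definition of $\dpp$, hence exactly $\dpp(n,p)$. Let $H := \bigcup_{(s,t)\in P_0}\pi(s,t)$ be the union of the now-unique shortest paths, which is the minimal distance preserver of the perturbed instance, so $|E(H)| = \dpp(n,p)$. A classical consequence of uniqueness is that this path family is \emph{consistent}: if two of the paths share an edge, then they coincide on a whole contiguous subpath, because a subpath of a unique shortest path is the unique shortest path between its endpoints. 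However, the paths need not be pairwise edge-disjoint, so the instance is not yet independent.

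The heart of the argument is to trade $P_0$ for a new demand set $P'$ that \emph{is} independent while still forcing every edge of $H$. The plan is to carve $E(H)$ into edge-disjoint, contiguous pieces, each of which is a subpath of some $\pi(s,t)$ (hence a unique shortest path in $G$ between its own endpoints), and to let $P'$ be the set of endpoint pairs of these pieces. Then the unique shortest paths of the pairs in $P'$ are pairwise edge-disjoint with union exactly $H$, so $(G, P')$ is independent, lives on at most $n$ nodes, and its minimal distance preserver is $H$, with $\dpp(n,p)$ edges. Since $\dpp(n,p)$ is the maximum possible, this matches exactly, and the lemma follows \emph{provided} the decomposition uses at most $p$ pieces, so that $|P'|\le p$.

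Bounding the number of pieces by $p$ is the main obstacle. The naive approach --- process $\pi(s_1,t_1),\dots,\pi(s_p,t_p)$ in order and, for the $i$-th path, peel off its maximal not-yet-covered subpaths --- only gives an $O(p^2)$ bound, since by consistency the already-covered part of $\pi(s_i,t_i)$ is a union of up to $i-1$ contiguous subintervals, so its complement splits into up to $i$ pieces; this would merely yield $\dpp(n,p)\le\beta(n,p^2,2)$ and lose a polynomial factor. To reach $p$ pieces I would exploit the finer structure of a consistent family --- at each vertex the paths through it split according to the local shortest-path tree --- and set up a charging argument amortizing every newly created piece against either a demand pair of $P_0$ or a branching/merging event of the family, of which there are only $p$ in total; equivalently, one chooses the processing order guided by these shortest-path trees so that each path contributes only one net new piece. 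I expect the bookkeeping in this charging step (and verifying it is tight enough to avoid constant-factor or polynomial loss in the piece count) to be the technically delicate part of the proof.
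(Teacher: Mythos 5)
Your plan follows the paper's proof through the first two reductions (passing to a graph that is its own minimal distance preserver, and perturbing to make shortest paths unique), but then diverges sharply at exactly the place you flag as ``the main obstacle,'' and that obstacle is fatal to your route. You keep the host graph $G$ fixed and try to trade the demand set for the endpoints of a decomposition of $\bigcup \pi(s,t)$ into edge-disjoint contiguous pieces, each a unique shortest path. The difficulty is that a consistent family of $p$ unique shortest paths may genuinely require $\Omega(p^2)$ pieces in any such decomposition, not $O(p)$. Consider $k$ ``row'' paths $\pi_1,\dots,\pi_k$ and $k$ ``column'' paths $\sigma_1,\dots,\sigma_k$, where $\pi_i$ and $\sigma_j$ share exactly one edge $e_{ij}$ and are otherwise edge-disjoint (a layered-DAG gadget graph on $\Theta(k^2)$ nodes realizes this with unique shortest paths). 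Every piece must be a contiguous subpath of one of the $2k$ paths, so each shared edge $e_{ij}$ is covered either ``from $\pi_i$'' or ``from $\sigma_j$''; whichever way it goes, each such assignment breaks the other path into an additional fragment. A short counting argument over rows and columns shows the decomposition has at least $k^2-2k = \Omega(p^2)$ pieces, matching the $O(p^2)$ bound you derived from the greedy peeling. Your proposed fix hinges on the claim that there are ``only $p$ branching/merging events in total,'' but in this example each path merges with $k$ others, so there are $\Theta(p^2)$ such events; the amortization has no foothold. Nothing in the charging sketch uses extremality of the instance to rule this out, so the gap is real.

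The paper avoids the problem entirely by letting the \emph{graph} change rather than the demand set grow. When the unique shortest path $\pi(s,t)$ crosses a boundary between a uniquely-used edge $(x,y)$ and a shared edge $(y,z)$, the paper deletes $(x,y)$ and inserts a new edge $(x,z)$ of weight $w(x,y)+w(y,z)$. This ``edge skip'' keeps $|E(G)|$ constant, leaves every other unique shortest path untouched (since $(x,y,z)$ was the unique shortest $x\leadsto z$ path and only $\pi(s,t)$ used $(x,y)$), keeps the other edge $(y,z)$ required, and strictly shortens $\pi(s,t)$. Combined with deleting any demand pair that no longer uniquely uses any edge, the process terminates with every surviving demand path uniquely using all of its edges --- that is, independence --- and the demand set has only shrunk, so $|P|\le p$ for free. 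Allowing the host graph to morph is the idea your approach is missing; without it, the decomposition route loses a polynomial factor in $p$, exactly as you feared.
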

\begin{proof}
Let $G, P$ be a (not necessarily independent) input instance on $n$ nodes and $p$ paths realizing $\dpp(n, p)$.
We may assume without loss of generality that $G$ itself has exactly $\dpp(n, p)$ edges (otherwise, replace $G$ with a distance preserver of $G, P$ and then perform the following analysis).
We may also assume without loss of generality that every demand pair in $P$ has a unique shortest path in $G$.
This follows by the standard method of random reweighting: that is, for each edge $e$, randomly choose a real number in the range $[0, \eps]$ and add this number to $w(e)$.
Shortest path ties are broken with probability $1$, and if we choose $\eps>0$ small enough, the changes in edge weights will not cause a previously non-shortest path to become a shortest path.

The instance $G, P$ might still not be independent, because the unique shortest paths for the pairs in $P$ might overlap on edges.
To remove overlap, we will further modify $G, P$ by executing either of the following two steps until neither one is possible.
In the following, for a demand pair $(s, t) \in P$ we will write $\pi(s, t)$ for its unique shortest path, and we will say that $\pi(s, t)$ \emph{uniquely uses} an edge $e$ if $e \in \pi(s, t)$ and there is no other demand pair $(s', t') \in P$ with $e \in \pi(s', t')$.
\begin{itemize}
%

    \item If there exists a demand pair $(s, t) \in P$ that does not uniquely use any edges, we delete $(s, t)$ from $P$.
    Note that it is still the case that every edge in $G$ is used by at least one unique shortest path for a demand pair.
    
    \item Suppose that there exists a demand pair $(s, t) \in P$ and a sequence of three contiguous nodes $(x, y, z) \in \pi(s, t)$, such that $\pi(s, t)$ uniquely uses one of the two edges $\{(x, y), (y, z)\}$, but the other of these two edges is used by another unique shortest path $\pi(s', t')$ as well.
    For ease of notation we will assume that $(x, y)$ is the edge uniquely used by $\pi(s, t)$; the other case is symmetric.
    We then add $(x, z)$ to $G$ as a new edge, and we set its weight to $w(x, z) := w(x, y) + w(y, z)$, and we delete the edge $(x, y)$.
    Notice that:
    \begin{itemize}
    \item We add $(x, z)$ to $G$ and we remove $(x, y)$ from $G$, so the number of edges in $G$ stays the same.  The demand pair $(s, t)$ still has a unique shortest path, which now uses the edge $(x, z)$ in place of the $2$-path $(x, y, z)$.
    \item No other unique shortest path besides $\pi(s, t)$ is affected by this change to the structure of the edges in $G$.
    This holds because $(x, y, z)$ must be the unique $x \leadsto z$ shortest path, and $\pi(s, t)$ is the only path that uses $(x, y)$, and therefore $\pi(s, t)$ is the only path that contains the nodes $(x, z)$ in that order.
    \item The other edge $(y, z)$ is still used by a unique shortest path, since by hypothesis we have $(y, z) \in \pi(s', t')$ for some other demand pair $(s', t')$.
    \end{itemize}
\end{itemize}

\begin{figure}[htbp]
  \begin{center}
    \includegraphics[width=5.0in]{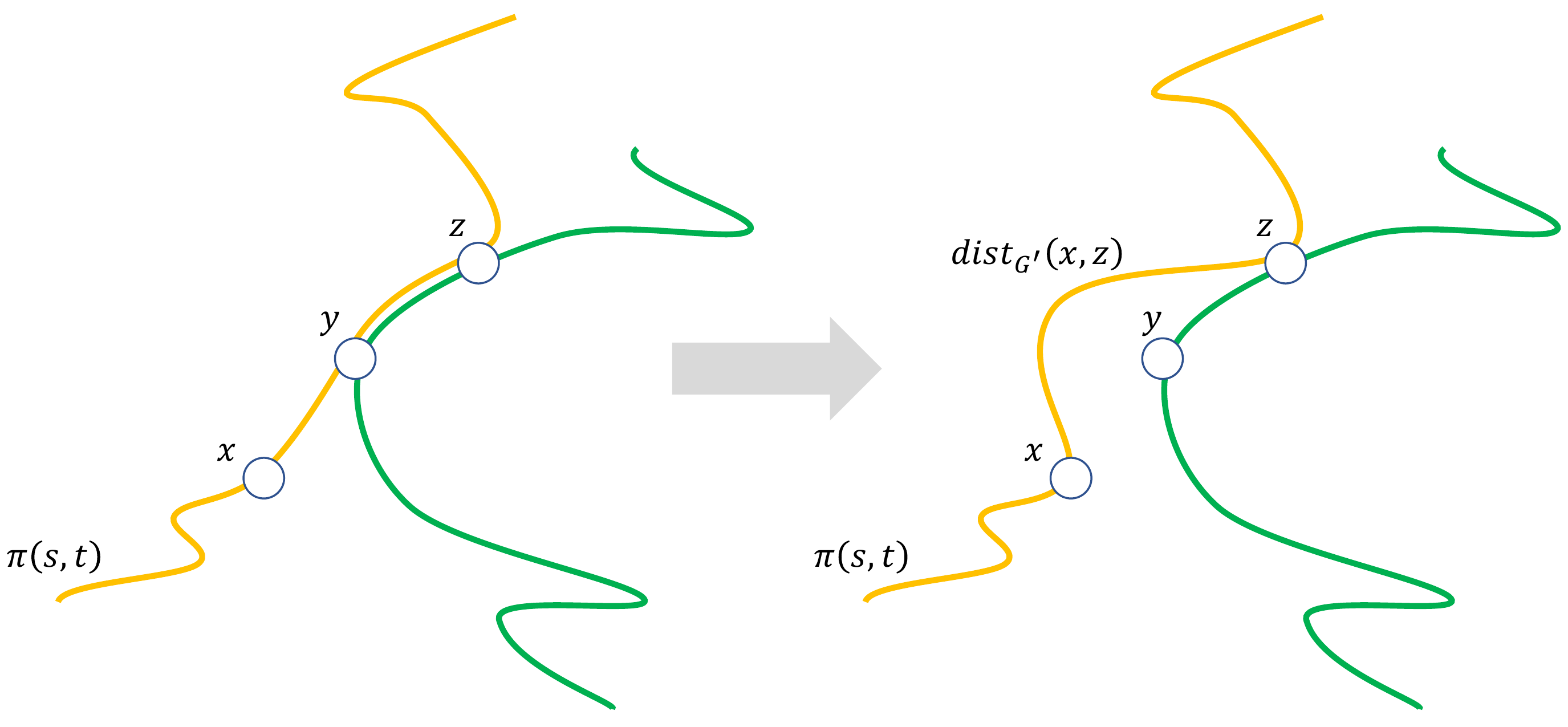}
    \end{center}
    \caption{The ``edge skip'' operation used in the second case of the proof of Lemma \ref{lem:dpindep}}
    \label{Figs:independence}
\end{figure}
    
In either case, the sum of lengths of the paths $\{\pi(s, t)\}_{(s, t) \in P}$ decreases by at least 1, and thus the process eventually terminates.
Once it terminates, every unique shortest path $\pi(s, t)$ uniquely uses at least one edge, and it does not contain two consecutive edges where one is uniquely used and the other is not.
Therefore $\pi(s, t)$ uniquely uses \emph{all} of its edges, which implies independence.
\end{proof}





This implies:
\begin{lemma} \label{thm:dirdps}
$\dpp(n, p) \le \beta(n, p, 2)$.
\end{lemma}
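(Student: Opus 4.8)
The plan is to assemble this lemma from the two pieces already in hand: the warmup observation that an \emph{independent} distance preserver instance induces a $2$-bridge-free path system, and the Independence Lemma (Lemma \ref{lem:dpindep}) guaranteeing that some worst-case instance is independent. Concretely, I would first apply Lemma \ref{lem:dpindep} to obtain an $n$-node weighted digraph $G = (V, E, w)$ together with a set $P$ of $|P| \le p$ independent demand pairs whose minimal distance preserver has exactly $\dpp(n, p)$ edges. I would then build the path system $S = (V, \Pi)$ whose paths are precisely the node sequences of the unique shortest paths $\pi(s, t)$ for $(s, t) \in P$; thus $S$ has at most $n$ nodes and at most $p$ paths, and the overlay of these (pairwise edge-disjoint) shortest paths is a valid distance preserver of $G, P$ with $\|S\| - |P|$ edges.

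The one substantive step to carry out is verifying that $S$ has bridge girth $> 2$. Suppose for contradiction that nodes $v_1 \ne v_2$ and distinct paths $\pi_1, \pi_2 \in \Pi$ form a $2$-bridge, so each of $\pi_1, \pi_2$ contains $v_1$ followed by $v_2$. Restricting $\pi_i$ to the segment between $v_1$ and $v_2$ gives a shortest $v_1 \leadsto v_2$ path in $G$, since a subpath of a shortest path is shortest. Moreover this segment must be \emph{the} unique shortest $v_1 \leadsto v_2$ path: any alternative shortest $v_1 \leadsto v_2$ path could be spliced into the relevant $\pi(s, t)$ to produce a second shortest path for that demand pair, contradicting uniqueness. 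Hence the two segments are identical edge sequences, so $\pi_1$ and $\pi_2$ share those edges, contradicting edge-disjointness (independence). Therefore $S$ is $2$-bridge-free.

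It then follows that $\|S\| \le \beta(n, |P|, 2) \le \beta(n, p, 2)$, where the last inequality uses that $\beta$ is nondecreasing in its path parameter (a system with $\le |P| \le p$ paths is a fortiori one with $\le p$ paths). Combining with the preceding paragraph, $\dpp(n, p) \le \|S\| - |P| \le \beta(n, p, 2)$, which is the claim.

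I do not expect a real obstacle here: all the difficulty of the upper bound in Theorem \ref{thm:distpreschain} lives in the Independence Lemma, which may be assumed. The only delicate points are (i) that subpaths of unique shortest paths are themselves unique shortest paths — which is exactly what collapses the ``distinct $v_1 \leadsto v_2$ segments'' case into the overlap case handled by independence — and (ii) the harmless monotonicity $\beta(n, |P|, 2) \le \beta(n, p, 2)$, needed because Lemma \ref{lem:dpindep} only guarantees $|P| \le p$ rather than $|P| = p$.
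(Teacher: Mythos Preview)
Your proposal is correct and follows essentially the same route as the paper: invoke the Independence Lemma to obtain a worst-case independent instance, observe (via the consistency/uniqueness-of-shortest-paths argument in the warmup) that the associated path system is $2$-bridge-free, and read off $\dpp(n,p) \le \|S\| - |P| \le \beta(n,p,2)$. Your explicit handling of the monotonicity $\beta(n,|P|,2)\le\beta(n,p,2)$ to accommodate $|P|\le p$ is a nice touch that the paper glosses over.
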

\begin{proof}
By the independence lemma (\ref{lem:dpindep}), there exists an $n$-node graph and a set of $|P| \le p$ independent demand pairs that have $\dpp(n, p)$ edges in their union.
By an earlier discussion, any such instance can be associated to a $2$-bridge-free path system of size $\dpp(n, p) + p$.
It follows that $\dpp(n, p) \le \beta(n, p, 2)$.
\end{proof}



Now we turn to the lower bounds:
\begin{lemma} \label{lem:dplower}
$\Omega(\beta^*(n, p, \infty)) \le \dpp(n, p)$
\end{lemma}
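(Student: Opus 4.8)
The plan is to realize an extremal ordered path system of bridge girth $\infty$ as the family of unique shortest paths of a distance preserver instance, so that \emph{every} edge of the instance is forced into any preserver.

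First I would invoke Corollary~\ref{cor:bstaracyclic} to take an \emph{acyclic} ordered path system $S = (V,\Pi)$ with $\le n$ nodes, $\le p$ paths, ordered bridge girth $\infty$, and $\|S\| = \Theta(\betastar(n,p,\infty))$. Fix a topological order $x\colon V \to [n]$, so every path is $x$-increasing, and write $\Pi = \{\pi_1 < \pi_2 < \cdots\}$ for the path ordering. Define a weighted directed graph $G$ on vertex set $V$: for each $\pi_j = (u^j_1,\dots,u^j_{k_j})$ add the edges $(u^j_i, u^j_{i+1})$ for $1 \le i < k_j$, each with weight $n^{\,j}$, and add the demand pair $(u^j_1, u^j_{k_j})$ to $P$. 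Since $S$ has no ordered bridge it in particular has no $2$-bridge, so no two distinct paths share a consecutive pair of nodes; hence the edge sets of the $\pi_j$ partition $E(G)$ (so $|E(G)| = \|S\| - p$) and distinct paths have distinct endpoint pairs (so $|P| \le p$). Note also that $G$ is a DAG all of whose directed paths are $x$-increasing.

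The crux is to show that for each $j$, the path $\pi_j$ is the \emph{unique} shortest $u^j_1 \leadsto u^j_{k_j}$ path in $G$. Let $\pi'$ be any such path and let $J$ be the largest index of a path owning an edge of $\pi'$. If $J \ge j+1$, then the weight of $\pi'$ is at least $n^{J} \ge n^{j+1} > (k_j - 1)\,n^{j} = \mathrm{weight}(\pi_j)$ (using $k_j \le n$), so $\pi'$ is strictly longer than $\pi_j$. If $J \le j$, consider the maximal sub-walks of $\pi'$ that use no edge of $\pi_j$; each such sub-walk uses only edges owned by paths of index $< j$, and its endpoints lie on $\pi_j$ (they are endpoints of $\pi_j$-edges of $\pi'$, or $u^j_1$, or $u^j_{k_j}$) and, being $x$-increasing, occur on $\pi_j$ in increasing order. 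If some sub-walk is non-trivial, then it connects two distinct nodes $c <_{\pi_j} d$ of $\pi_j$ using only paths of index $< j$; taking such a connecting sequence of minimum length forces its nodes and its paths to be distinct, so $c,d$ and the intermediate nodes together with these paths and $\pi_j$ form an ordered bridge with river $\pi_j$ (the arcs have index $< j$, hence precede $\pi_j$ in the ordering), contradicting bridge-freeness. Hence no sub-walk is non-trivial, so $\pi'$ consists only of edges of $\pi_j$, and therefore $\pi' = \pi_j$.

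Given this, each demand pair $(u^j_1, u^j_{k_j})$ has a unique shortest path, namely $\pi_j$, so every edge of $G$ lies on the unique shortest path of some demand pair and must belong to every distance preserver of $(G,P)$; thus $(G,P)$ admits no preserver smaller than $G$ itself, giving $\dpp(n,p) \ge |E(G)| = \|S\| - p$. When $\|S\| \ge 3p$ this is $\Omega(\|S\|) = \Omega(\betastar(n,p,\infty))$; the remaining corner case, as well as the $\Omega(n+p)$ part of $\betastar(n,p,\infty)$, is absorbed by the trivial bounds $\dpp(n,p) = \Omega(n)$ (a single path with one demand pair) and $\dpp(n,p) = \Omega(p)$ (a transitive tournament, valid since $p \le n^2$). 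The main obstacle is the $J \le j$ case: converting an arbitrary alternative path in $G$ that uses only paths earlier than $\pi_j$ into a bona fide ordered bridge — with distinct nodes, distinct arcs, and the river last in the ordering — and it is precisely here that ordered-bridge-freeness (not merely $2$-bridge-freeness) is used.
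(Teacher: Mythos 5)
Your proof is correct and takes essentially the same approach as the paper: realize the ordered path system as a family of edge-disjoint unique shortest paths by giving later paths much heavier edges, with ordered-bridge-freeness (river = $\pi_j$, arcs = earlier paths) ruling out any competing $s_j\leadsto t_j$ path that uses only earlier edges. Your version is a more explicit rendition of the paper's argument — you invoke Corollary~\ref{cor:bstaracyclic} to pin down acyclicity and use concrete weights $n^j$ in place of the paper's ``sufficiently large weights chosen incrementally'' — but the underlying construction and the role of ordered-bridge-freeness are the same.
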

\begin{proof}
Let $S = (V, \Pi)$ be an ordered path system with $n$ nodes, $p$ paths, no ordered bridges, and size $\|S\| = \beta^*(n, p, \infty)$.
We will use $S$ to construct an independent instance $G, P$ that requires $\|S\| - p$ edges for any distance preserver.

Starting with an empty graph $G = (V, \emptyset)$, consider the paths in $\Pi$ in their order in $S$.
When a path $\pi(s, t) \in \Pi$ is considered, add $(s, t)$ as a demand pair to $P$, and then add all consecutive pairs of nodes on $\pi(s, t)$ as new \textit{directed} edges in $G$.
Since $S$ is $2$-bridge-free, the paths $\{\pi(s, t)\}$ are pairwise edge-disjoint, and moreover since $S$ has no ordered bridges, each path $\pi(s, t)$ is the unique $s \leadsto t$ path in $G$ at the time it is added.
Hence it is the unique \emph{shortest} $s \leadsto t$ path at the time it is added, regardless of the edge weights we assign to its edges.
We may therefore choose sufficiently large weights for the new edges on $\pi(s, t)$,  so that 
no previously-added demand pair will gain a new shortest path using any edges in $\pi(s, t)$.

Once all paths in $\Pi$ have been considered, we have unique edge-disjoint shortest paths in $G$ for all $p$ demand pairs, and the union of these paths contains 
$$\betastar(n, p) - p = \Omega(\betastar(n, p))$$
edges.
Thus, we can interpret this graph $G$ together with the set $P$ holding the endpoints of the paths in $\Pi$ as an input instance for distance preservers.
All edges in paths in $\Pi$ must remain in a distance preserver, which implies
\begin{align*}
\Omega\left(\betastar(n, p)\right) \le \dpp(n, p). \tag*{\qedhere}
\end{align*}
\end{proof}

\subsection{Shortest Path Oracles}

We next prove an \emph{incompressibility theorem} for distance preservers.
We consider shortest path oracles, which are the natural data structure version of distance preservers:
\begin{definition} [Path Oracles]
Given a directed graph $G = (V, E, w)$ and a set of demand pairs $P$, a shortest path oracle is a data structure that, when queried with $(s, t) \in P$, can report a shortest $s \leadsto t$ path in $G$ (or ``no path'' if none exists).

We define $\texttt{SPO}(n, p)$ as the smallest integer such that every $n$-node graph and set of $|P|=p$ demand pairs has a path oracle on $\le \texttt{SPO}(n, p)$ bits.
\end{definition}

Note that a distance preserver of $G, P$ on $m$ edges implies a shortest path oracle of $G, P$ on $O(m \log n)$ bits.
Consequently, we have
$$\texttt{SPO}(n, p) \le O\left( \dpp(n, p) \log n \right).$$
The following theorem states that we cannot expect much smaller shortest path oracles in general.

\begin{theorem} \label{thm:shortestpathoracle}
$\Omega\left( \dpp(n, p) \right) \le \texttt{SPO}(n, p) \le O\left( \dpp(n, p) \log n \right)$.
\end{theorem}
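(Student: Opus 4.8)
The upper bound $\texttt{SPO}(n,p) \le O(\dpp(n,p)\log n)$ is already observed above (store a minimum distance preserver explicitly and run a shortest-path computation inside it at query time), so the task is the lower bound $\texttt{SPO}(n,p) \ge \Omega(\dpp(n,p))$, which I would prove by an incompressibility/counting argument. The starting point is the Independence Lemma for distance preservers (Lemma \ref{lem:dpindep}): it gives an $n$-node weighted graph $G=(V,E,w)$ and a set $P$, $|P|\le p$, of \emph{independent} demand pairs whose unique shortest paths $\{\pi(s,t)\}_{(s,t)\in P}$ are pairwise edge-disjoint. Since independence makes every edge forced, the minimum distance preserver of $G,P$ is exactly the union $H^*=\bigcup_{(s,t)\in P}\pi(s,t)$, so $|E(H^*)|=\dpp(n,p)=:m$ and the $p$ paths edge-partition $E(H^*)$; I would simply take $G=H^*$.

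\paragraph{Building the family.} From this one rigid instance I would construct a large family $\mathcal{F}$ of instances, all on the \emph{same} vertex set $V$ (hence $\le n$ nodes) and the \emph{same} demand set $P$ --- keeping the node count fixed is the crucial point, since $m$ can be as large as $\Theta(n^2)$ and we cannot afford a fresh vertex per encoded bit. For each pair $(s,t)$ with $\pi(s,t)=(v_0,\dots,v_k)$ and each \emph{odd} index $j$ with $0<j<k$, introduce a candidate ``bypass'' edge $(v_{j-1},v_{j+1})$ of weight $w(v_{j-1},v_j)+w(v_j,v_{j+1})-\delta$ for a fixed, sufficiently small $\delta>0$; restricting to odd indices guarantees the bypasses along any single path never conflict (no two skip adjacent vertices). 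An instance of $\mathcal{F}$ is obtained by choosing, independently for each candidate bypass, whether to include it in the edge set. Since each $\pi(s,t)$ of length $\ell$ contributes $\Omega(\ell)$ independent binary choices and $\sum_{(s,t)}|\pi(s,t)|=\Theta(m)$, we get $|\mathcal{F}|=2^{\Omega(m)}$ (a routine pruning handles the rare case of two candidate bypasses falling on the same vertex pair). The key structural claim to verify is: for $\delta$ small relative to the strictly positive gap between $\mathrm{len}(\pi(s,t))$ and the length of every other $s\leadsto t$ path in $G$ (e.g.\ $\delta$ less than that margin divided by $n$), every instance in $\mathcal{F}$ still has, for each pair, a \emph{unique} shortest $s\leadsto t$ path, namely $\pi(s,t)$ with all present bypasses on it applied. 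In particular the oracle's answer to query $(s,t)$ reveals exactly which bypasses lie on $\pi(s,t)$, so distinct members of $\mathcal{F}$ induce distinct tuples of shortest-path answers.

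\paragraph{Counting step.} For each $I\in\mathcal{F}$ let $D(I)$ be a minimum-size shortest-path oracle. Querying $D(I)$ on all $\le p$ pairs of $P$ reproduces the full answer tuple of $I$, and by the structural claim $I\mapsto(\text{answer tuple})$ is injective on $\mathcal{F}$; hence the bit strings $\{D(I)\}_{I\in\mathcal{F}}$ are pairwise distinct, so some $I\in\mathcal{F}$ has $|D(I)|\ge\log_2|\mathcal{F}|=\Omega(m)$. Since every $I$ lives on $\le n$ nodes and $\le p$ pairs, this yields $\texttt{SPO}(n,p)=\Omega(m)=\Omega(\dpp(n,p))$.

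\paragraph{Main obstacle.} The hard part is the structural claim --- that the perturbed instances keep \emph{unique} shortest paths and that those paths uniquely decode the bypass pattern. One must rule out both ties and ``rogue'' shortest paths arising from interactions between bypass edges and other demand paths (which need not be vertex-disjoint from $\pi(s,t)$), and this is exactly what forces the ``odd indices only'' restriction and the careful choice of $\delta$. A secondary point worth flagging is that the naive route --- query the oracle on all pairs, union the returned paths, and call it a distance preserver --- is circular: it does not on its own bound the oracle's length, so the detour through a family of $2^{\Omega(m)}$ instances is essential. (If one instead wished to route through $\betastar(n,p,\infty)$ via Lemma \ref{lem:dplower}, the same perturbation applies, but Lemma \ref{lem:dpindep} is preferable since it directly supplies $m=\dpp(n,p)$.)
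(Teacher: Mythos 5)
Your proposal is correct and uses the same high-level template as the paper: invoke the independence lemma (Lemma \ref{lem:dpindep}) to get a maximally rigid instance, manufacture a family of $2^{\Omega(\dpp(n,p))}$ instances on the same $\le n$ nodes and $\le p$ demand pairs whose shortest-path answer tuples are pairwise distinct, and conclude by a counting/pigeonhole argument. Where you differ is in the construction of the family. You keep $G$ fixed and encode bits by \emph{adding} a chosen subset of shortcut (``bypass'') edges, each with a $-\delta$ perturbation to force the shortest path to prefer the shortcut, then argue that $\delta$ can be chosen small enough to leave each demand pair with a unique shortest path. The paper instead encodes bits by choosing, for each path $\pi(s,t)$, which internal nodes to keep, and then builds a \emph{new} graph $G_{S'}$ from scratch whose edges are the consecutive pairs along the chosen subpaths, each weighted $\dist_G(u,v)$. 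The two constructions are roughly dual (you skip nodes by adding edges; the paper skips nodes by deleting them and rebuilding), and both encode roughly one bit per internal path node, so both give $|\mathcal{F}| = 2^{\Theta(\dpp(n,p)-p)}$.

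The tradeoff is in the uniqueness argument. The paper's device of setting edge weights to $\dist_G(u,v)$ makes every retained subpath \emph{exactly} tight (length $=\dist_G(s,t)$) without any perturbation, and the proof that no other path in $G_{S'}$ ties with it follows cleanly from consistency plus edge-disjointness: any tying path must expand to $\pi(s,t)$, and a chord $(u,v)$ coming from another demand pair's subpath would force two unique shortest paths to share the $u\leadsto v$ segment, contradicting independence. Your perturbation approach is also sound, but it carries extra bookkeeping: when you bound the margin $\eps$, you need the gap between $\mathrm{len}(\pi(s,t))$ and all other $s\leadsto t$ \emph{walks} in $G$ (not just simple paths), since expanding a bypass in a simple path of $G'$ can produce a non-simple walk in $G$; this infimum is still positive (bounded below by $\min(\eps, \text{girth contribution})$) so the argument survives, but it needs to be stated carefully. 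Two other remarks: your worry about ``two candidate bypasses falling on the same vertex pair'' is vacuous --- consistency plus edge-disjointness already rules out $u,v$ appearing in the same order on two distinct demand paths, so no pruning is needed; and both proofs share the mild degenerate-regime caveat that when $\dpp(n,p)$ is within a constant factor of $p$ (i.e., almost all paths have length $O(1)$), the number of encoded bits can be $o(\dpp(n,p))$, so the stated $\Omega(\dpp(n,p))$ should be read as holding in the non-degenerate regime.
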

\begin{proof}
Let $G = (V, E, w)$ be an $n$-node graph and let $P$ be a set of $|P| = p$ independent demand pairs such that the minimal distance preserver of $G, P$ has exactly $\dpp(n, p)$ edges. Note that such $G$ and $P$ exist by the independence lemma for distance preservers (Lemma \ref{lem:dpindep}). We associate with $G, P$ a path system $S = (V, \Pi)$ where the paths $\pi(s, t)$ in $\Pi$ are precisely the node sequences corresponding to the unique shortest paths for demand pairs $(s, t)$ in $P$. Note that
$$\|S\| = \sum_{\pi \in \Pi}|\pi| =  \dpp(n, p) + p.$$
Let $S' = (V, \Pi')$ be a path system obtained from $S$ by replacing each $s \leadsto t$ path $\pi(s, t)$ in $\Pi$ with an arbitrary $s\leadsto t$ subpath $\pi'(s, t)$. Let $\mathcal{S}$ denote the set of all path systems $S'$ generated in this way. Note that the total number of pairs $(v, \pi) \in V \times \Pi$ such that $v$ is an internal node in $\pi$ is  $\dpp(n, p) - p$, so $|\mathcal{S}| = 2^{\dpp(n, p) - p}$.

For each path system $S' \in \mathcal{S}$, we construct a graph $G_{S'} = (V, E', w')$ as follows. For each path $\pi'(s, t) \in \Pi'$, we add all consecutive pairs of nodes on $\pi'(s, t)$ to $E'$. For each edge $e = (u, v) \in E'$, we assign the weight $\dist_G(u, v)$ to the edge $(u, v)$ in $G_{S'}$. Observe that by our choice of weights, for all $s, t \in V$,   $\dist_G(s, t) \leq \dist_{G_{S'}}(s, t)$. Additionally, for every $(s, t) \in P$, the path in $G_{S'}$ corresponding to $\pi'(s, t) \in \Pi'$ has path length exactly $\dist_G(s, t)$. Since $G, P$ is independent, $\pi(s, t)$ is a unique shortest $s \leadsto t$ path in $G$, and so $\pi'(s, t)$ is a unique shortest $s \leadsto t$ path in $G_{S'}$.

We have shown that every demand pair $(s, t) \in P$ has a unique shortest path in $G_{S'}$ that corresponds exactly to path $\pi'(s, t) \in \Pi'$ of $S'$. Then any shortest path oracle for $G_{S'}, P$ will have to output the path $\pi'(s, t)$ when queried with $(s, t)$. Now consider the graph family $\mathcal{G} = \{G_{S'}\}_{S' \in \mathcal{S}}$. Any two distinct graphs $G_1, G_2 \in \mathcal{G}$ will require distinct shortest path oracle data structures, since the corresponding path systems $S_1', S_2'\in \mathcal{S}$ are distinct. Consequently, at least one of the graphs in $\mathcal{G}$ will need 
$$\Omega\left(\log |\mathcal{G}|\right) = \Omega\left(\dpp(n, p)\right)$$
bits to represent its shortest path oracle data structure.
\end{proof}


\subsection{Reachability Preservers}

Here we will prove:
\begin{theorem} \label{thm:rpreduction}
$\rpp(n, p) = \Theta(\beta(n, p, \infty))$.
\end{theorem}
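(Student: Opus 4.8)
The plan is to establish the two inequalities $\rpp(n,p) = \Omega(\beta(n,p,\infty))$ and $\rpp(n,p) = O(\beta(n,p,\infty))$ separately, mirroring the structure of the distance-preserver argument (Theorem~\ref{thm:distpreschain}), with essentially all of the work falling on the upper bound.

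\textbf{Lower bound.} I would take a path system $S = (V,\Pi)$ with $\le n$ nodes, $\le p$ paths, no bridges of any size, and $\|S\| = \beta(n,p,\infty)$; by Corollary~\ref{cor:inftyacyclic} we may assume $S$ is acyclic and fix a topological order $\prec$ on $V$. Build a DAG $G$ on vertex set $V$ by inserting, for each $\pi \in \Pi$, an edge between each consecutive pair of nodes of $\pi$, and let $P$ consist of the (first node, last node) pairs of the paths, so $|P| \le p$. These $\|S\| - |P|$ edge slots are pairwise distinct, since two paths sharing a consecutive pair $(u,v)$ would be a $2$-bridge. The key claim is that each $\pi = (s,\dots,t)$ is the \emph{unique} $s\leadsto t$ path in $G$: any other simple $s\leadsto t$ path branches off $\pi$ at some node and rejoins at a later node ($\prec$-later), and the nodes along this ``bubble'', together with $\pi$ as the river, form a bridge (the arcs are distinct from $\pi$ because they carry edges leaving $\pi$, and acyclicity supplies all the required node orderings). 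Hence every edge of $G$ is forced in any reachability preserver, so $\rpp(n,p) \ge \|S\| - |P| = \Omega(\beta(n,p,\infty))$ after absorbing the $-|P|$ term using $\beta(n,p,\infty) = \Omega(n+p)$ and $\rpp(n,p) = \Omega(n+p)$. (This direction is essentially implicit in~\cite{AB18}.)

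\textbf{Upper bound.} The heart of the matter is an \emph{independence lemma} for reachability preservers, analogous to Lemma~\ref{lem:dpindep}: for all $n,p$ there exist an $n$-node graph $G$ and a set $P$ of $\le p$ demand pairs such that $G$ has no redundant edges for $P$, $|E(G)| = \rpp(n,p)$, and $E(G)$ decomposes into pairwise edge-disjoint directed paths $\{\pi(s,t)\}_{(s,t)\in P}$, where each $\pi(s,t)$ is an $s\leadsto t$ path built from cut edges for $(s,t)$. Granting this, let $S = \{\pi(s,t)\}$, so $\|S\| = |E(G)| + |P| = \rpp(n,p)+|P|$. Then $S$ has bridge girth $\infty$: a $b$-bridge with river $\pi(s_b,t_b)$ and arcs $\pi(s_1,t_1),\dots,\pi(s_{b-1},t_{b-1})$ on nodes $v_1,\dots,v_b$ would yield an $s_b\leadsto t_b$ walk following $\pi(s_b,t_b)$ to $v_1$, then the arc sub-paths from $v_1$ to $v_b$, then $\pi(s_b,t_b)$ from $v_b$ to $t_b$; since the arcs are edge-disjoint from $\pi(s_b,t_b)$, this walk avoids some edge $e$ of $\pi(s_b,t_b)$ (one strictly between $v_1$ and $v_b$, or the edge $(v_1,v_b)$ if they are consecutive), so $e$ is not a cut edge for $(s_b,t_b)$, and by edge-disjointness $e$ lies on no other $\pi(s',t')$ and hence is a cut edge for no other pair --- contradicting that $G$ has no redundant edges. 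Therefore $\|S\| \le \beta(n,p,\infty)$, giving $\rpp(n,p) = \|S\| - |P| \le \beta(n,p,\infty)$.

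\textbf{Proving the independence lemma; the main obstacle.} I would prove it starting from a minimal reachability preserver $G_0$ of a worst-case instance (so $|E(G_0)| = \rpp(n,p)$ and every edge is a cut edge for some pair), choosing one path per demand pair, and repeatedly applying two local operations in the spirit of Lemma~\ref{lem:dpindep}: \emph{(a)} delete a demand pair all of whose cut edges are cut edges for other pairs as well (this keeps every edge a cut edge for some remaining pair and changes no edges); \emph{(b)} an ``edge-skip'' that, when a chosen path $\pi(s,t)$ has an edge private to it adjacent to a shared edge --- say $(x,y)$ private and $(y,z)$ shared --- adds $(x,z)$, deletes $(x,y)$, and reroutes $\pi(s,t)$ through $(x,z)$ (the symmetric case is analogous). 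Deleting $(x,y)$ is safe because, being private, it is not a cut edge for any other pair, and $|E(G)|$ is unchanged ($+1-1$); since (b) strictly decreases $\sum_{(s,t)}|\pi(s,t)|$, the process terminates, and at termination every chosen path uses a private edge (no (a)) with no private edge adjacent to a shared one (no (b)), forcing each chosen path to be entirely private, hence the decomposition is edge-disjoint. The step I expect to require the most care --- and the genuine obstacle relative to the distance-preserver proof --- is making this rigorous without a metric: there, unique shortest paths plus random perturbation give both a canonical choice of $\pi(s,t)$ and the guarantee that deleting a privately-used edge only affects its own pair, whereas here one must track cut-edge/ownership structure directly, argue that operation (a) never leaves an edge that is a cut edge for \emph{no} remaining pair (which would break the invariant $|E(G)| = \rpp(n,p)$, since then $G$ would contain removable edges), and confirm the terminal $G$ still has no redundant edges --- precisely the property the bridge-freeness argument above consumes. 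I also anticipate needing to verify the edge-skip does not create new redundancies, adjusting the ownership bookkeeping accordingly.
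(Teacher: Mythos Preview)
Your architecture matches the paper's: a direct lower bound, and an upper bound via an independence lemma for reachability preservers. There is one minor issue and one substantial gap.

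Minor: your lower bound invokes Corollary~\ref{cor:inftyacyclic} to assume $S$ is acyclic, but that corollary is \emph{derived from} Theorem~\ref{thm:rpreduction}, so the citation is circular. The paper argues path-uniqueness directly from bridge-freeness; your bubble argument can be adapted similarly (the rejoin point must be later along $\pi$, else $\rho$ would revisit a node of $\pi[s\leadsto u]$).

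The substantial gap is in the independence lemma. You correctly locate the danger: after an edge-skip replacing $(x,y)$ by $(x,z)$, the edge $(y,z)$ may cease to be required by \emph{any} pair, because the new edge $(x,z)$ may give another demand pair $(s',t')$ a route bypassing $(y,z)$. You propose to ``adjust the ownership bookkeeping accordingly,'' but this is exactly where the distance-preserver argument fails to transfer and a new idea is needed. The paper's fix has three ingredients you do not have: (i) a preliminary reduction to DAGs (contracting strongly connected components, costing $O(n)$ edges), which supplies a topological order; (ii) a preprocessing pass that \emph{modifies the demand pairs themselves} --- for each $(s,t)$, replace it by $(u,t)$ where $(u,v)$ is the first uniquely-required edge along the chosen path, so that afterwards every surviving pair uniquely requires the first edge of its path; and (iii) among all available skips $(x,y,z)$, always pick one whose middle node $y$ is topologically earliest. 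Ingredients (ii) and (iii) together force that, at the moment of any skip, every pair requiring $(y,z)$ uniquely requires its entire prefix up to $y$; in particular $s'$ cannot reach $x$, so the new edge $(x,z)$ is useless to $(s',t')$ and $(y,z)$ stays required. Without these, the skip is not safe and your argument stalls exactly where you anticipated.
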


We note that $\rpp(n, p)$ is only well-defined in the range $p \le O(n^2)$, so naturally we prove Theorem \ref{thm:rpreduction} only in this parameter range.
We again start with the upper bound, and we will need another independence lemma.
We will overload the word ``independent'' for the analogous definition for reachability preservers:
\begin{definition} [Independence]
For a graph $G = (V, E)$ and set of demand pairs $P \subseteq V \times V$, we say that $P$ is \emph{independent} (in the context of reachability preservers) if for all $(s, t) \in P$ there is a unique path $\pi(s, t)$, and these paths are pairwise edge-disjoint.
\end{definition}

\begin{lemma} [Independence Lemma for Reachability Preservers]
For any positive integers $n, p$, there exists an $n$-node graph $G$ and a set of $|P|=p$ \textbf{independent} demand pairs such that any reachability preserver of $G, P$ has exactly $\rpp(n, p)$ edges.
\label{lem:rpindep}
\end{lemma}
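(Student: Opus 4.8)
The plan is to follow the template of the Independence Lemma for distance preservers (Lemma~\ref{lem:dpindep}), but more work is needed because a reachability preserver only has to contain \emph{some} $s \leadsto t$ path for each demand pair, not a unique shortest one. First I would take a worst-case instance $G, P$ on $n$ nodes and $p$ demand pairs, and assume (exactly as in Lemma~\ref{lem:dpindep}) that $G$ itself has $\rpp(n,p)$ edges, so that $G$ is a minimal reachability preserver of $(G,P)$; in particular every edge of $G$ is \emph{critical}, meaning its deletion disconnects some demand pair. Then I would fix a canonical $s \leadsto t$ path $\pi(s,t)$ for each demand pair --- for concreteness, perturb the edge weights generically and take the unique shortest path. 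Since $G$ is minimal, each edge lies on the canonical path of one of its critical witnesses, so the canonical paths together cover all of $G$, and the associated path system $S$ has $\|S\| = \rpp(n,p) + p$.

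Next I would run the edge-skip procedure of Lemma~\ref{lem:dpindep} verbatim: delete a demand pair whose canonical path uniquely uses no edge, and whenever a canonical path traverses consecutive nodes $x, y, z$ with $(x,y)$ uniquely used by that path but $(y,z)$ shared, add a new skip edge $(x,z)$ and delete $(x,y)$. As there, these moves keep $|E(G)| = \rpp(n,p)$, keep every edge on some canonical path, and strictly decrease $\sum_{(s,t)} |\pi(s,t)|$, so the process halts with the canonical paths pairwise \emph{edge-disjoint}; in the distance-preserver proof one stops here, and the path system has bridge girth $>2$. For reachability this is not enough: we additionally need each $\pi(s,t)$ to be the \emph{unique} (not merely unique-shortest) $s \leadsto t$ path of $G$, which --- since the canonical paths are edge-disjoint --- is equivalent to $S$ having bridge girth $\infty$. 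The equivalence is the crux observation: any alternate $s \leadsto t$ path in the edge-disjoint union $G$ must be a splice of subpaths of several canonical paths, and its splice nodes $v_1 <_{\pi(s,t)} v_b$ together with the spliced arcs form a $b$-bridge with $\pi(s,t)$ as river; conversely a $b$-bridge produces such an alternate path. So the remaining task is an extra cleaning phase that removes \emph{all} bridges, not just $2$-bridges: given a bridge with river $\pi(s,t)$ and an arc-walk joining $v_1$ to $v_b$ that avoids $\pi(s,t)$, reroute $\pi(s,t)$ across that arc-walk and discard the now-unused internal segment of $\pi(s,t)$ between $v_1$ and $v_b$ (re-running the edge-skip cleanup to absorb the $2$-bridges this creates). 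This should strictly decrease $\sum_{(s,t)} |\pi(s,t)|$, hence terminate, leaving $S$ bridge-free and the instance independent.

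The step I expect to be the main obstacle is making the bridge-removal phase preserve the edge count. Rerouting a river across an already-present arc-walk leaves the river's old internal segment unused, and since after cleaning every edge is uniquely used by its canonical path, that segment must be deleted --- which drops $|E(G)|$ below $\rpp(n,p)$ and destroys worst-case optimality. Resolving this is the heart of the lemma, and I would handle it by a global minimality argument rather than a naive rewriting: among all worst-case instances, all minimal reachability preservers of them, and all choices of canonical paths, pick one minimizing $\sum_{(s,t)} |\pi(s,t)|$; then show that any surviving bridge yields a strictly-smaller-potential configuration of the \emph{same} optimal edge count (intuitively, a worst-case reachability instance can never ``waste'' an edge on the internal segment of a river that is short-circuited by a bridge), contradicting minimality. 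Verifying that this exchange move is always available and always strictly improves the potential without changing $|E(G)|$ is where the real content lies; the rest --- edge bookkeeping, restoring $|P|$ to exactly $p$ by padding with trivial demands $(v,v)$, and keeping $\le n$ nodes --- is routine, exactly as in Lemma~\ref{lem:dpindep}. Once $S$ is bridge-free it follows that $\|S\| - p = \rpp(n,p)$ with $S$ of bridge girth $\infty$, which is the independence needed for the upper bound $\rpp(n,p) = O(\beta(n,p,\infty))$ of Theorem~\ref{thm:rpreduction}.
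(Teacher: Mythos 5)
There is a genuine gap in your proposal, and the paper resolves it with an idea you do not have. The root issue is your choice of ``uniquely used'' as the cleaning invariant: you track whether an edge appears on exactly one canonical path. For distance preservers this is the right notion (uniqueness of shortest paths gives it teeth), but for reachability an edge being on exactly one canonical path says nothing about whether that edge is actually forced into the preserver --- the demand pair might have many other $s \leadsto t$ paths that avoid it. This is precisely why you end up needing a separate bridge-removal phase, and precisely why that phase threatens to drop $|E(G)|$ below $\rpp(n,p)$: after your edge-skip pass, you still cannot rule out that some uniquely-used edges are simply deletable. Your proposed fix (pick an instance minimizing $\sum |\pi(s,t)|$ and derive a contradiction from a surviving bridge) is only asserted, not carried out; the step ``the orphaned segment must be deleted'' does not follow, since its edges might still be required by a demand pair that merely chose a different canonical path, so there is no contradiction with optimality.

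The paper sidesteps the entire second phase by cleaning with respect to a stronger invariant. It defines $(s,t)$ to \emph{require} an edge $e$ if \emph{every} $s \leadsto t$ path uses $e$, and to \emph{uniquely require} $e$ if additionally no other demand pair requires $e$. Working with ``required'' throughout keeps the property ``every edge of $G$ is required by some pair'' as a loop invariant, which is exactly what makes $G$ its own unique minimal reachability preserver and fixes $|E(G)|=\rpp(n,p)$. The paper's cleanup also contains several ingredients you omit and which are essential for the required-edge argument to close: a preliminary reduction to DAGs (via SCC contraction, from \cite{AB18}); a step that replaces each demand pair $(s,t)$ by $(u,t)$ where $(u,v)$ is the first uniquely required edge on $\pi(s,t)$, so that after cleanup every path's first edge is uniquely required; and, when performing the edge-skip $(x,y,z) \mapsto (x,z)$, a rule to pick the candidate with $y$ earliest in topological order, which is what lets one argue that $(y,z)$ remains required by $(s',t')$ after the swap (because $s'$ still cannot reach $x$). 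At the end of the paper's process every demand pair uniquely requires every edge on its path, which immediately gives both edge-disjointness and uniqueness of paths --- i.e., independence --- in a DAG, with no bridge-removal pass at all. Your crux observation (bridge-freeness of the path system $\iff$ uniqueness of the demand paths) is correct and is exactly what the paper is after; the missing idea is the ``required edge'' framework that lets the greedy cleanup achieve it directly.
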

\begin{proof}
The proof is somewhat analogous to Lemma \ref{lem:dpindep}, but it requires an additional technical ingredient.
The reason for the change in proof is essentially that for distance preservers we can assume that demand pairs have unique shortest paths, but for reachability preservers we cannot immediately make the analogous assumption that each demand pair has a unique path.

Let $G, P$ be a (not necessarily independent) instance realizing $\rpp(n, p)$.
First, we will use a helpful reduction from \cite{AB18}, allowing us to assume that $G$ is a DAG.
If not, we may consider each strongly connected component, add an in- and out-BFS tree from an arbitrary node to preserve reachability among all node pairs in that component, and then contract the component into a single super-node.
The resulting contracted graph is a DAG, and it suffices to build a reachability preserver on this graph.\footnote{The contraction step costs $O(n)$ edges, which may be safely ignored since we already have $\rpp(n, p) = \Omega(n)$, e.g.\ by considering a path on input for which a reachability preserver must keep $n-1$ edges.}

Next, let us introduce some terminology.
We will say that a demand pair $(s, t) \in P$ \emph{requires} an edge $e$ if every $s \leadsto t$ path includes $e$.
We will say that $(s, t)$ \emph{uniquely requires} $e$ if it requires $e$, and there is no other demand pair that also requires $e$.
We may assume without loss of generality that every edge in $G$ is required by at least one demand pair (or else that edge may be removed from $G$).
Thus $G$ itself is the unique reachability preserver of $G, P$, so it has $\rpp(n, p)$ edges.
We then further modify $G, P$ by the following steps:
\begin{itemize}
\item For each demand pair $(s, t)$, considered in arbitrary order, choose any $s \leadsto t$ path $\pi(s, t)$.  Then:
\begin{itemize}
\item If $(s, t)$ does not uniquely require any edge in $\pi(s, t)$, delete $(s, t)$ from $P$.

\item Otherwise, let $(u, v) \in \pi(s, t)$ be the first edge uniquely required by $(s, t)$.
Replace the demand pair $(s, t)$ with $(u, t)$, and replace $\pi(s, t)$ with its $u \leadsto t$ suffix.
\end{itemize}
We note that every edge in $G$ is still required by at least one demand pair, and in the end every remaining demand pair $(s, t)$ uniquely requires the \emph{first} edge in $\pi(s, t)$.

\item Next, repeat the following until no longer possible.
Find a demand pair $(s, t) \in P$ and a contiguous $3$-node subpath $(x, y, z) \subseteq \pi(s, t)$, such that the demand pair $(s, t)$ uniquely requires $(x, y)$ but it does not uniquely require $(y, z)$.
If there are several possible choices of $\{(s, t), (x, y, z)\}$, then we will specifically need to consider one in which the node $y$ comes as early as possible in the topological ordering of nodes in $G$.
(There may still be several possible choices using this same minimal node $y$, in which case we can choose among these arbitrarily.)
We then delete $(x, y)$ from $G$, and add $(x, z)$ to $G$ as a new edge We modify the path $\pi(s, t)$ by replacing its subpath $(x, y, z)$ with the single edge $(x, z)$. After this operation, the number of edges in $G$  stays the same.

For correctness, we now need to argue that after this change, it is still the case that every edge in $G$ is required by at least one demand pair.
We have:
\begin{itemize}
\item The new edge $(x, z)$ is uniquely required by $(s, t)$. This follows from the fact that $(x, y)$ was required by $(s, t)$, and so when $(x, y)$ is deleted (but before $(x, z)$ is added), there is no $s \leadsto t$ path.
When $(x, z)$ is added there is an $s \leadsto t$ path again, which implies that every $s \leadsto t$ path uses the edge $(x, z)$.

\item We also claim that the edge $(y, z)$ is still required by at least one demand pair.
We argue this as follows. Before our modification of $G$, we know that the demand pairs $(s,t)$ and $(s',t')$ each uniquely requires all edges in their prefixes $s\leadsto y$ and $s'\leadsto y$, respectively. This means that $s'$ still cannot reach $x$, and thus $(s',t')$ cannot use the new edge $(x,z)$ and still requires the edge $(y,z)$.

%
\end{itemize}
Each time we repeat this step, the sum of lengths of the  paths $\{\pi(s, t)\}_{(s, t) \in P}$ decreases by $1$.
Therefore, we halt after finitely many steps.
\end{itemize}
To summarize, once this process halts, every demand pair $(s, t)$ has the property that $\pi(s, t)$ uniquely requires its first edge, and moreover for any two consecutive edges $(x, y), (y, z) \in \pi(s, t)$, if $(s, t)$ uniquely requires the first edge $(x, y)$, then it also uniquely requires the second edge $(y, z)$.
Together, these properties imply that each demand pair $(s, t)$ uniquely requires every edge on its path $\pi(s, t)$, which implies independence. 
\end{proof}

\begin{lemma}
$\rpp(n, p) \le O(\beta(n, p, \infty))$.
\end{lemma}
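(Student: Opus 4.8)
The plan is to combine the Independence Lemma for Reachability Preservers (Lemma~\ref{lem:rpindep}) with the definition of $\beta(n,p,\infty)$. First I would invoke Lemma~\ref{lem:rpindep} to fix an $n$-node graph $G$ and a set $P$ of $|P| = p$ independent demand pairs for which every reachability preserver has exactly $\rpp(n,p)$ edges; as in the construction in the proof of that lemma, I may also take $G$ to be a DAG and fix a topological order $\prec$ on $V$. By independence, each $(s,t)\in P$ has a unique $s\leadsto t$ path $\pi(s,t)$ in $G$, and these $p$ paths are pairwise edge-disjoint. Let $S = (V,\Pi)$ be the path system whose paths are exactly these node sequences. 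Their union is itself a reachability preserver of $G,P$, hence has $\rpp(n,p)$ edges; since the paths are edge-disjoint and a path on $|\pi|$ nodes contributes $|\pi|-1$ edges, this gives $\|S\| = \sum_{\pi\in\Pi}|\pi| = \rpp(n,p)+p$. So it suffices to prove that $S$ has no bridges at all, since then $\rpp(n,p) = \|S\|-p \le \beta(n,p,\infty)$.

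The heart of the proof is to show $S$ is bridge-free. Suppose toward a contradiction that $S$ contains a $b$-bridge (for some $b\ge 2$) with distinct nodes $v_1,\dots,v_b$, arcs $\pi_1,\dots,\pi_{b-1}$, and river $\pi_b$, where $\pi_b = \pi(s,t)$ for some $(s,t)\in P$. Since $G$ is acyclic, every path respects $\prec$; thus $v_1\prec v_2\prec\cdots\prec v_b$, and every node on the subpath $R_i := \pi_i[v_i\leadsto v_{i+1}]$ lies in the closed interval $[v_i,v_{i+1}]$ under $\prec$. These intervals meet only at the shared junction nodes $v_2,\dots,v_{b-1}$, so the concatenation $R := R_1R_2\cdots R_{b-1}$ is a \emph{simple} $v_1\leadsto v_b$ path in $G$. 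Splice $R$ into the river to form $P^\ast := \pi_b[s\leadsto v_1]\,R\,\pi_b[v_b\leadsto t]$; a second interval argument (the $s\leadsto v_1$ prefix has nodes $\preceq v_1$, the internal nodes of $R$ lie strictly between $v_1$ and $v_b$, and the $v_b\leadsto t$ suffix has nodes $\succeq v_b$) shows $P^\ast$ is a simple $s\leadsto t$ path. Finally, $R$ uses the first edge of $R_1$, an edge of $\pi_1$, which by edge-disjointness and $\pi_1\ne\pi_b$ does not lie on $\pi_b$; hence $P^\ast\ne\pi_b$. This produces two distinct $s\leadsto t$ paths in $G$, contradicting independence, so $S$ has bridge girth $\infty$ and the bound follows.

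The step I expect to be the main obstacle — or at least the one needing the most care — is the appeal to acyclicity: both interval arguments that upgrade $R$ and $P^\ast$ from walks to genuine simple paths rely on $G$ being a DAG with a fixed topological order, so it is essential that the worst-case instance from Lemma~\ref{lem:rpindep} can be taken acyclic (which is exactly why that lemma's proof first contracts strongly connected components). On a general independent instance the spliced object $P^\ast$ could revisit a node and fail to be an $s\leadsto t$ path, breaking the uniqueness contradiction. The remaining points are routine bookkeeping: checking $\|S\|-p = \rpp(n,p)$, and observing $\beta(n,p,\infty) = \Omega(n+p)$ so that the additive $+p$ (and the $O(\cdot)$ in the statement) cost nothing.
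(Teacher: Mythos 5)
Your proof is correct and follows the same approach as the paper: invoke the Independence Lemma (Lemma~\ref{lem:rpindep}), interpret the unique edge-disjoint demand paths as a path system $S$ with $\|S\| = \rpp(n,p)+p$, and argue that $S$ has no bridges because a bridge would produce a second $s\leadsto t$ path for some demand pair, contradicting uniqueness. The paper compresses that last step into a single sentence; your topological-order/splicing argument is exactly the justification behind it, and you correctly flag that the acyclicity guaranteed in the proof of Lemma~\ref{lem:rpindep} is what makes the spliced walk a genuine simple $s\leadsto t$ path.
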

\begin{proof}
By our independence lemma, there is an $n$-node graph $G = (V, E)$ and set of $|P|=p$ independent demand pairs for which any reachability preserver has at least $\rpp(n, p)$ edges.
We can naturally associate $G,P$ to a path system $S = (V, \Pi)$ by associating each demand $(s, t) \in P$ to the unique $s \leadsto t$ path in $G$.
We thus have $\|S\|=\rpp(n, p) + p$.
Moreover, this path system cannot have bridges, since the paths for demand pairs in $P$ are unique.
Thus we have constructed a $\infty$-bridge-free path system of size $\|S\| \ge \rpp(n, p)$, and the lemma follows. 
\end{proof}

We now turn to the lower bound:
\begin{lemma}
$\rpp(n, p)\geq \Omega\left(\beta(n, p,\infty)\right) $.
\end{lemma}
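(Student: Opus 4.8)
The plan is to reverse the construction behind the upper bound (Lemma~\ref{lem:rpindep}): start from an extremal bridge-free path system and realize it as a reachability-preserver instance in which every edge is forced.

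First I would fix a path system $S = (V, \Pi)$ with $n$ nodes, $p$ paths, no bridges, and $\|S\| = \beta(n, p, \infty)$. By Corollary~\ref{cor:inftyacyclic} we may assume without loss of generality that $S$ is \emph{acyclic}, losing only a constant factor; fix a topological order of $V$. Build a directed graph $G = (V, E)$ by adding, for each $\pi \in \Pi$ and each consecutive pair of nodes $u, v$ along $\pi$, the directed edge $(u, v)$, and let $P$ be the set of (source, sink) endpoint pairs of the paths in $\Pi$, so $|P| \le p$. Since $S$ is acyclic, $G$ is a DAG. Since $S$ has no $2$-bridge, no two distinct paths of $\Pi$ both contain the same ordered pair $(u, v)$; in particular no two paths share an edge, so $|E| = \|S\| - p$ (path $\pi$ contributes exactly $|\pi| - 1$ fresh edges).

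The substantive step is to show that for every demand pair $(s, t) \in P$, the path $\pi(s, t) \in \Pi$ is the \emph{unique} $s \leadsto t$ path in $G$. Suppose $\rho \neq \pi(s,t)$ were another such path; let $w_i$ be the first node at which $\rho$ leaves $\pi(s,t)$ and $w_j$ the next node of $\pi(s,t)$ that $\rho$ returns to, so the subpath $A$ of $\rho$ from $w_i$ to $w_j$ and the subpath $B = \pi(s,t)[w_i \leadsto w_j]$ are internally vertex-disjoint $w_i \leadsto w_j$ paths in $G$, and the internal nodes of $A$ lie off $\pi(s,t)$; hence no edge of $A$ is carried by $\pi(s,t)$. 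Choosing such a counterexample minimal in $|E(A)|$, one verifies (using that distinct paths of $\Pi$ cannot share an edge) that the carrier paths of $A$'s edges fall into maximal consecutive runs, each run a distinct path of $\Pi$ — a repeated or split run would shortcut $A$ into a strictly shorter counterexample. Taking one representative arc per run, together with $\pi(s,t)$ as the river (it contains $w_i$ before $w_j$), produces a bridge in $S$ with pairwise distinct arcs all distinct from the river, contradicting bridge girth $> k$ for all $k$. Thus $(G, P)$ is \emph{independent}, $G$ is its own unique reachability preserver, and $\rpp(n, p) \ge |E| = \|S\| - p = \beta(n, p, \infty) - p$.

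To upgrade $\beta(n,p,\infty) - p$ to $\Omega(\beta(n,p,\infty))$, I would separately handle the degenerate regime where $\beta(n,p,\infty) = O(p)$: there it suffices to note the trivial lower bound $\rpp(n,p) = \Omega(p)$, valid since $p = O(n^2)$ (e.g., a complete bipartite DAG on $\Theta(\sqrt p) \times \Theta(\sqrt p)$ vertices, with all $p$ single-edge demand pairs, forces $\Omega(p)$ edges). Combining the two bounds gives $\rpp(n,p) \ge \max\{\beta(n,p,\infty) - p,\ \Omega(p)\} = \Omega(\beta(n,p,\infty))$, as desired.

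As for difficulty: the construction, the edge count, and the degenerate-regime fix are all routine; the one place to be careful is the uniqueness claim, specifically extracting an \emph{honest} bridge (with distinct arcs, all distinct from the river) from an alternative path. Handling the distinctness of the arc-paths cleanly — via the minimal counterexample and the "consecutive runs" bookkeeping above — is the step that needs to be written out carefully, though it uses nothing beyond the definition of a bridge and $2$-bridge-freeness.
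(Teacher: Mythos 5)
Your construction and overall plan match the paper's: build $G$ from an extremal bridge-free system $S$ by chaining consecutive nodes on each path into directed edges and taking each path's endpoints as a demand pair, then argue that all edges are forced. The edge count, the appeal to $2$-bridge-freeness for edge-disjointness, and the handling of the $-p$ slack are all fine, and your minimal-counterexample bookkeeping for the uniqueness claim (via carrier runs) is a correct and more explicit version of the claim the paper merely asserts in one sentence --- \emph{given} acyclicity.

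The gap is the acyclicity assumption itself. Corollary~\ref{cor:inftyacyclic} cannot be cited here: its proof runs through Theorem~\ref{thm:rpreduction}, of which the present lemma is one of the two halves, so the reduction would be circular. And acyclicity is not a harmless normalization --- without it the forcing argument genuinely fails. Take $\pi=(s,m,t)$, $\pi_1=(b,t,s,a)$, $\pi_2=(a,b)$: no two paths share an ordered pair of nodes (so no $2$-bridge), a short case check rules out any $3$-bridge, and with only three paths no $\geq 4$-bridge is possible, so the bridge girth is $\infty$; yet $G$ contains two $s\leadsto t$ paths, $s\to m\to t$ and $s\to a\to b\to t$, so $\pi$'s edges are not forced and the minimal reachability preserver has $4 < \|S\|-p = 6$ edges. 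This is exactly the case your run argument cannot handle: the carrier $\pi_1$ reappears (carrying both $s\to a$ and $b\to t$), but its re-entry node $b$ \emph{precedes} its exit node $a$ along $\pi_1$, so there is no shortcut subsegment of $\pi_1$ and no usable river ordering. To repair the proof you would need to establish $\beta(n,p,\infty)=\Theta(\abeta(n,p,\infty))$ by a route that does not pass through $\rpp$ (the paper's own proof of this lemma silently elides the same issue), or find a forcing argument that tolerates cycles.
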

\begin{proof}
Let $S=(V, \Pi)$ be a $\infty$-bridge-free path system with $n$ nodes, $p$ paths, and size $\|S\| = \beta(n, p, \infty)$.
Let $G = (V, E)$ be the directed graph over the same vertex set, where we put an edge $(u, v) \in E$ iff there is a path in $\Pi$ where the nodes $u, v$ appear consecutively (in that order).
Define demand pairs $P$ to be the set of node pairs $(s, t)$ that are endpoints of the paths in $\Pi$.

Since $S$ is $\infty$-bridge-free, for every demand pair $(s, t) \in P$ there is a unique $s \leadsto t$ path in $G$, and these paths are pairwise edge-disjoint.
Thus, it is necessary and sufficient for a reachability preserver to keep all edges contained in these paths.
The number of edges contained in these paths is exactly $$\|S\| - p = \beta(n, p, \infty) - p.$$
This proves that
$$\beta(n, p, \infty)-p \le \rpp(n, p),$$
and we then notice that always $\beta(n, p, \infty)\geq \Omega(p)$,
and so in fact we have
\begin{align*}
\beta(n, p, \infty) \le O\left(\rpp(n, p)\right). \tag*{\qedhere}
\end{align*}
\end{proof}

The previous two lemmas imply Theorem \ref{thm:rpreduction}.
We now turn to its consequences.
\begin{corollary} \label{cor:binftylb}
For all positive integers $d$, we have $\beta(n, p, \infty) = \Omega\left( n^{\frac{2}{d+1}} p^{\frac{d-1}{d}} \right)$.
\end{corollary}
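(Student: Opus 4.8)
The plan is to use the reduction of Theorem~\ref{thm:rpreduction}, $\beta(n,p,\infty)=\Theta(\rpp(n,p))$, which lets us prove the corollary by exhibiting a suitable lower-bound object for either function. The classical construction behind the reachability-preserver lower bound of Alon and Bodwin~\cite{AB18} (extending the pairwise distance-preserver construction of Coppersmith and Elkin~\cite{CE06}) is, read in the language of this paper, already an $\infty$-bridge-free path system, so the cleanest route is to build that path system directly: for each positive integer $d\ge 2$ I want $S=(V,\Pi)$ with $n$ nodes, $p$ paths, no bridges of any size, and $\|S\|=\Omega(n^{2/(d+1)}p^{(d-1)/d})$, after which Theorem~\ref{thm:rpreduction} hands back the reachability-preserver statement as well. (For $d=1$ the bound is the trivial $\beta(n,p,\infty)\ge n$, witnessed by a single path through all nodes; and, as usual, the statement is understood in the range of $p$ where the claimed bound dominates $\Omega(n+p)$.)

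For the construction I would work in $\mathbb{Z}^d$. Let $V$ be the lattice points of a $d$-dimensional box whose side lengths are chosen as functions of $n,p$ so that $|V|=n$, and let the paths be the pairwise disjoint lattice segments cut out inside the box by a family of lines whose direction vectors are drawn from a set $D$ of primitive vectors in the positive orthant. Each pair (direction, admissible segment) contributes one path, so $p$ is about $|D|$ times the number of segments per direction, and $\|S\|$ is the number of node--path incidences, i.e.\ about $p$ times the common segment length $\ell$. I would then record the containment constraints tying together the box dimensions, $\ell$, and the magnitude and cardinality of $D$ (so that segments fit inside the box, so that distinct segments on a common line are disjoint, and so that the system is acyclic), and optimize the remaining free parameters; this yields $\|S\|=\Theta(n^{2/(d+1)}p^{(d-1)/d})$. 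The optimization is routine but somewhat involved --- a higher-dimensional analogue of the two-dimensional parameter bookkeeping in the proof of Theorem~\ref{thm:bstarlower}.

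The substantive step, and the one I expect to be the main obstacle, is proving that $S$ has no bridges of any size --- equivalently, that each lattice segment is the \emph{unique} path between its endpoints in the associated DAG. If a path $\pi$ were the river of a $b$-bridge with arcs $\pi_1,\dots,\pi_{b-1}$ and bridge nodes $v_1,\dots,v_b$, then $v_b-v_1$ --- a positive integer multiple of $\pi$'s direction vector --- would equal $\sum_{i=1}^{b-1}(v_{i+1}-v_i)$, a non-negative integer combination of direction vectors of $D$ other than $\pi$'s, counted with multiplicity. So $D$ must be chosen so that no element of $D$, nor any positive integer multiple of one, is such a combination of the remaining elements; the standard device is to put $D$ in strictly convex position --- for instance lattice points on a strictly convex surface, such as a paraboloid, positioned in the positive orthant --- after which a short convexity argument (using that the coefficients of any such relation are positive integers, so they sum to at least two) rules out every relation. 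Making this argument airtight, confirming that it forbids bridges of \emph{every} size $b\ge 2$ rather than only $2$-bridges, and nailing down the precise parameter trade-off are exactly the verifications carried out in~\cite{CE06,AB18}, which I would either cite or transcribe into the path-system notation of Section~\ref{sec:inf_bounds}.
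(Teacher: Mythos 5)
Your proposal is correct and matches the paper's own (very short) proof, which simply invokes Theorem~\ref{thm:rpreduction} together with the reachability-preserver lower bound of~\cite{AB18} and remarks that one could equivalently interpret the~\cite{AB18} construction directly as a bridge-girth-$\infty$ path system --- which is precisely the reconstruction you sketch. One small caution if you were to flesh out the details: strict convexity of $D$ by itself rules out alternative $s\leadsto t$ walks with the \emph{same} number of steps as the river, but the containment constraints you mention in passing (box shape, near-equal norms of the direction vectors) also do real work in excluding longer alternative walks, so they are part of the bridge-freeness argument and not merely parameter bookkeeping; as you say, the full verification lives in~\cite{CE06,AB18}.
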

\begin{proof}
Follows from Theorem \ref{thm:rpreduction} and from plugging in the state-of-the-art lower bounds on $\rpp(n, p)$ from \cite{AB18} (which, in turn, are directly based on the distance preserver lower bounds from \cite{CE06}).
We remark that one does not \emph{really} need Theorem \ref{thm:rpreduction} to prove this lower bound on $\beta(n, p, \infty)$, in the sense that it is straightforward to interpret the reachability preserver lower bound construction from \cite{AB18} directly as a lower bound against $\beta(n, p, \infty)$.
\end{proof}

The following corollary uses the equivalence between $\beta(n, p, \infty)$ and reachability preservers more directly.
It shows that the extremal path systems realizing the lower bound for $\beta(n, p, \infty)$ have some extra structure: they must in fact be \emph{acyclic}.
\begin{corollary} \label{cor:inftyacyclic}
For all $n, p$, there exists an \textbf{acyclic} path system $S$ with $n$ nodes, $p$ paths, bridge girth $\infty$, and size $\|S\| = \Theta(\beta(n, p, \infty))$.
\end{corollary}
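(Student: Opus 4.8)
The plan is to read the reachability preserver machinery ``backwards.'' Recall that Theorem \ref{thm:rpreduction} gives $\rpp(n, p) = \Theta(\beta(n, p, \infty))$, so it suffices to produce an \emph{acyclic} $\infty$-bridge-free path system of size $\Omega(\rpp(n, p))$ on at most $n$ nodes and at most $p$ paths; padding afterwards with isolated nodes and singleton (one-node) paths brings the counts up to exactly $n$ and $p$, increasing the size by only $O(n+p) = O(\beta(n, p, \infty))$ and creating neither a bridge nor a directed cycle.

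The starting point is the independence lemma for reachability preservers (Lemma \ref{lem:rpindep}). The first thing I would observe is that the graph $G$ produced there can be taken to be a DAG. Indeed, its proof begins by invoking the reduction of \cite{AB18} to make $G$ acyclic, and every subsequent modification preserves acyclicity: deleting a demand pair removes edges only, and the ``edge skip'' operation replaces a two-edge subpath $(x,y,z)$ of $\pi(s,t)$ by a single edge $(x,z)$, where $x$ precedes $z$ in the existing topological order, so the same topological order still witnesses that $G$ is a DAG. Hence Lemma \ref{lem:rpindep} yields a DAG $G$, a set of at most $p$ independent demand pairs $P$, and, for each $(s,t)\in P$, the unique $s\leadsto t$ path in $G$, these paths being pairwise edge-disjoint and having $\rpp(n,p)$ edges in their union.

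Next I would associate to $G,P$ the path system $S = (V, \Pi)$ whose paths are precisely these unique $s\leadsto t$ paths, exactly as in the proof of the upper bound $\rpp(n,p)\le O(\beta(n,p,\infty))$. Uniqueness of the paths forces $S$ to be $\infty$-bridge-free. Since $G$ is a DAG, any topological order of $V$ restricts on each path $\pi\in\Pi$ to the order of $\pi$, so $S$ is acyclic. Comparing the number of nodes and edges in the union of the paths, $\|S\| = \rpp(n,p) + |\Pi|$, so $\rpp(n,p) \le \|S\| \le \rpp(n,p) + p$; using Theorem \ref{thm:rpreduction} together with $\beta(n,p,\infty) = \Omega(p)$, this gives $\|S\| = \Theta(\beta(n,p,\infty))$. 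After padding $S$ up to exactly $n$ nodes and $p$ paths as described above, it remains acyclic, $\infty$-bridge-free, and of size $\Theta(\beta(n,p,\infty))$, completing the proof.

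There is no genuinely hard step here: the corollary is essentially a bookkeeping consequence of results already in place. The one point that warrants an explicit remark — and the place I would be careful — is that acyclicity of the extremal system comes ``for free,'' because the DAG structure is already baked into Lemma \ref{lem:rpindep} through the \cite{AB18} reduction, so one need not engineer it separately.
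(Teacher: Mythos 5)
Your proposal is correct and takes essentially the same approach as the paper: both reduce to the reachability preserver on DAGs via the SCC-contraction trick of \cite{AB18}, and both observe that a DAG instance yields an acyclic path system. The only difference is organizational — the paper introduces the intermediate extremal functions $\overrightarrow{\rpp}(n,p)$ and $\abeta(n,p,\infty)$ and chains three $\Theta$-equivalences, whereas you fold the DAG structure directly into the proof of Lemma \ref{lem:rpindep} by checking that each modification step preserves acyclicity; both paths go through the same lemmas.
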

\begin{proof}
Let $\abeta(n, p, \infty)$ be the maximum possible size of an $n$-node, $p$-path, \emph{acyclic} path system of bridge girth $\infty$.
Let $\overrightarrow{\rpp}(n, p)$ be the maximum number of edges needed for a reachability preserver of an $n$-node DAG and $p$ demand pairs.
We first notice that, by exactly the same reduction as in Theorem \ref{thm:rpreduction}, we have
$$\abeta(n, p, \infty) = \Theta\left(\overrightarrow{\rpp}(n, p)\right).$$
Next, it is proved in \cite{AB18} that
$$\overrightarrow{\rpp}(n, p) = \Theta\left(\rpp(n, p)\right).$$
That is, they show a reduction from finding reachability preservers in general graphs to DAGs. To briefly summarize this reduction, suppose we are given a graph $G$ and demand pairs $P$, and we wish to construct a reachability preserver.
For each strongly-connected component $C$, choose an arbitrary node $c \in C$ and add two trees in $C$ rooted at $c$; one with edges pointing away from $c$, and one with edges pointing towards $c$.
Thus, reachability is preserved between all pairs of nodes in $C$, and we can contract $C$ into a single node before proceeding.
This reduction costs at most $2n = O(\rpp(n, p))$ edges in total.

Finally, from Theorem \ref{thm:rpreduction} we have
$$\rpp(n, p) = \Theta(\beta(n, p, \infty)).$$
Putting the parts together, we have
$$\abeta(n, p, \infty) = \Theta(\beta(n, p, \infty)),$$
as required.
\end{proof}

\subsection{Path Oracles}

We next prove an \emph{incompressibility theorem} for reachability preservers, much like the one proved previously for distance preservers.
We consider path oracles, which are the natural data structure version of reachability preservers:
\begin{definition} [Path Oracles]
Given a directed graph $G$ and a set of demand pairs $P$, a path oracle is a data structure that, when queried with $(s, t) \in P$, can report an $s \leadsto t$ path in $G$ (or ``no path'' if none exists).

We define $\texttt{PO}(n, p)$ as the smallest integer such that every $n$-node graph and set of $|P|=p$ demand pairs has a path oracle on $\le \texttt{PO}(n, p)$ bits.
\end{definition}

Note that a reachability preserver of $G, P$ on $m$ edges implies a path oracle of $G, P$ on $O(m \log n)$ bits, by simply writing down a description of the reachability preserver.
Consequently, we have
$$\texttt{PO}(n, p) = O(\rpp(n, p) \log n).$$
The following theorem states that we cannot expect much smaller path oracles in general.

\begin{theorem} \label{thm:pathoracle}
$\Omega(\rpp(n, p)) \le \texttt{PO}(n, p) \le O(\rpp(n, p) \log n)$.
\label{thm:rpp_incompress}
\end{theorem}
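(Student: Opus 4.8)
The plan is to mirror the incompressibility argument for distance preservers (Theorem~\ref{thm:shortestpathoracle}), substituting the Independence Lemma for reachability preservers (Lemma~\ref{lem:rpindep}) for its distance-preserver analogue (Lemma~\ref{lem:dpindep}), and substituting the ``bridge-free path system $\Rightarrow$ unique $s\leadsto t$ path in the induced digraph'' characterization from the lower bound of Theorem~\ref{thm:rpreduction} for the ``unique shortest path'' reasoning. The upper bound $\texttt{PO}(n,p) \le O(\rpp(n,p)\log n)$ is immediate, and is already noted before the statement: store any reachability preserver of $G,P$ on $\le \rpp(n,p)$ edges explicitly using $O(\rpp(n,p)\log n)$ bits, and answer a query $(s,t)$ by a path search inside the stored subgraph.

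For the lower bound, I would proceed as follows. By Lemma~\ref{lem:rpindep}, fix an $n$-node graph $G$ and a set $P$ of $|P|=p$ \emph{independent} demand pairs such that every reachability preserver of $G,P$ has exactly $\rpp(n,p)$ edges, and associate to $G,P$ the path system $S=(V,\Pi)$ whose paths are the unique $s\leadsto t$ paths; then $\|S\| = \rpp(n,p)+p$ and $S$ has bridge girth $\infty$ (as in Theorem~\ref{thm:rpreduction}). Let $\mathcal{S}$ be the set of path systems $S'=(V,\Pi')$ obtained from $S$ by replacing each $\pi(s,t)\in\Pi$ with an arbitrary $s\leadsto t$ subpath $\pi'(s,t)$ (these choices are made independently across demand pairs). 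For each $S'\in\mathcal{S}$, build the directed graph $G_{S'}=(V,E')$ with an edge for every consecutive pair of nodes along some subpath of $S'$, keeping the same demand set $P$. Since $S'\subseteq S$, the system $S'$ still has bridge girth $\infty$, so by the argument in the lower bound of Theorem~\ref{thm:rpreduction} each demand pair $(s,t)$ has a \emph{unique} $s\leadsto t$ path in $G_{S'}$, and that path is exactly $\pi'(s,t)$. Hence any path oracle for $G_{S'},P$ must answer query $(s,t)$ with $\pi'(s,t)$, so two distinct members $S',S''\in\mathcal{S}$, which differ in the chosen subpath for some demand pair, force distinct oracle data structures. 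Consequently some instance $G_{S'},P$ requires $\Omega(\log|\mathcal{S}|)$ bits. The number of internal-node slots across all paths of $S$ is $\|S\| - 2p = \rpp(n,p) - p$, and a subpath is determined by which internal nodes are retained, so $|\mathcal{S}| = 2^{\rpp(n,p)-p}$; using $\rpp(n,p) = \Omega(p)$ this yields $\texttt{PO}(n,p) = \Omega(\rpp(n,p)-p) = \Omega(\rpp(n,p))$.

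The only step with genuine content beyond bookkeeping is verifying that each modified digraph $G_{S'}$ still realizes unique $s\leadsto t$ routes equal to $\pi'(s,t)$ — that is, that shrinking the paths (and inserting ``shortcut'' edges bypassing deleted internal nodes) does not create a spurious alternative route between the endpoints of some demand pair. This is the point where one must invoke that a subsystem of a bridge-free path system is still bridge-free (a $b$-bridge in $S'$ lifts, via the super-paths and order preservation under subsequences, to a $b$-bridge in $S$), so the characterization from Theorem~\ref{thm:rpreduction} applies verbatim to $G_{S'}$. A minor secondary point is confirming that distinct subpath choices give distinct path systems and hence distinct oracle outputs, which is immediate since a subsequence is determined by its set of retained internal nodes.
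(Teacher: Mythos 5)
Your proposal is correct and takes essentially the same approach as the paper: the paper's proof is literally a one-line pointer to the distance-preserver argument (Theorem~\ref{thm:shortestpathoracle}) with the substitutions ``Lemma~\ref{lem:rpindep} for Lemma~\ref{lem:dpindep}, $\rpp$ for $\dpp$, `unique path' for `unique shortest path','' and your sketch fills in exactly those details, including the needed observation that subsystems preserve bridge girth $\infty$ so that the uniqueness-of-routes characterization from Theorem~\ref{thm:rpreduction} still applies to each $G_{S'}$.
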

\begin{proof}
The upper bound follows from the previous discussion.
The lower bound follows follows from an argument identical to that of Theorem \ref{thm:shortestpathoracle}, except we use the independence lemma for reachability preservers (Lemma \ref{lem:rpindep}), and we replace $\dpp$ with $\rpp$, and we replace ``unique shortest path'' with ``unique path.''
\end{proof}

\subsection{Online Reachability Preservers}

Here we discuss the online version of the reachability preserver problem.
There are several ways to reasonably define such online versions; this one is a slight variation of the one introduced \cite{GLQ21} in the context of online directed Steiner forest algorithms.\footnote{More specifically: we allow the adversary to add edges to the graph $A$ throughout the game, whereas \cite{GLQ21} essentially require the adversary to commit to a graph in preprocessing.}
\begin{definition} [Online Reachability Preservers]
The online reachability preserver game is the following two-player game, between a \emph{builder} and an \emph{adversary}:
\begin{itemize}
\item The adversary starts with an $n$-node directed graph $A = (V, \emptyset)$, and the builder starts with an $n$-node directed graph $B = (V, \emptyset)$.
Both graphs are initially empty.
The builder is trying to minimize the final number of edges in $B$, and the adversary is trying to maximize the final number of edges in $B$.

\item Repeat the following for $p$ rounds:
\begin{itemize}
\item (Adversary's Turn) The adversary adds any number of edges to $A$, and then names a pair of nodes $(s, t)$ such that an $s \leadsto t$ path in $A$ exists.

\item (Builder's Turn) The builder must respond by choosing a set of edges that are currently in $A$, and adding those edges to $B$.
Afterwards, we require that an $s \leadsto t$ path must exist in $B$.
\end{itemize}

\item The value of the game is the final number of edges in the graph $B$.
\end{itemize}
We define $\rpp^*(n, p)$ as the min-max value of this game, where the adversary is maximizing and the builder is minimizing (the value of the game), relative to parameters $n, p$.
\end{definition}

In the same way that (offline) reachability preservers are captured by $\beta(n, p, \infty)$, as in Theorem \ref{thm:rpreduction}, we claim that online reachability preservers are captured by $\beta^*$:
\begin{theorem} \label{thm:rpstarreduction}
$\beta^*(n, p, \infty) = \Theta(\rpp^*(n, p))$
\end{theorem}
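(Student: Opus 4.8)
The plan is to prove the two inequalities $\rpp^*(n,p) = O(\beta^*(n,p,\infty))$ and $\rpp^*(n,p) = \Omega(\beta^*(n,p,\infty))$ separately, mirroring the structure of Theorem~\ref{thm:rpreduction} for the offline case but accounting for the fact that the adversary reveals demand pairs one at a time and the builder must commit to edges before seeing the future. The key observation that connects the online setting to \emph{ordered} bridge girth is that the temporal order in which demand pairs are processed induces a natural ordering on the paths of any associated path system, and the "river comes last" condition in an ordered bridge corresponds exactly to the situation where a later builder choice can reuse earlier-committed edges.

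For the lower bound $\rpp^*(n,p) = \Omega(\beta^*(n,p,\infty))$, I would have the adversary play according to a fixed ordered path system $S = (V,\Pi)$ with $n$ nodes, $p$ paths, no ordered bridges, and $\|S\| = \beta^*(n,p,\infty)$ (which may be taken acyclic by Corollary~\ref{cor:bstaracyclic}). Process the paths $\pi_1 < \pi_2 < \dots < \pi_p$ in their order in $S$. In round $i$, the adversary adds to $A$ all consecutive-node edges along $\pi_i$ (those not already present) and then names the endpoint pair $(s_i,t_i)$ of $\pi_i$. The crucial claim is that, no matter how the builder has played in earlier rounds, the only $s_i \leadsto t_i$ path available in $A$ is $\pi_i$ itself: if some other $s_i \leadsto t_i$ walk existed using edges from earlier paths, tracing it out would exhibit an ordered bridge with $\pi_i$ as river (the river is last since it is added in round $i$, after all the arcs), contradicting ordered-bridge-freeness. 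Hence the builder is forced to add all $|\pi_i|-1$ edges of $\pi_i$ to $B$, and since the paths are pairwise edge-disjoint (no $2$-bridges), the final graph $B$ has $\|S\| - p = \beta^*(n,p,\infty) - p = \Omega(\beta^*(n,p,\infty))$ edges.

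For the upper bound $\rpp^*(n,p) = O(\beta^*(n,p,\infty))$, I would analyze the builder strategy that, in each round, adds to $B$ the edges of an arbitrary $s_i \leadsto t_i$ path $\pi_i$ currently in $A$, chosen greedily to reuse as many edges already in $B$ as possible. The resulting graph $B$ is then the overlay of an ordered path system $S=(V,\Pi)$ whose paths are the $\pi_i$ in round order, and $|E(B)| \le \|S\|$. The main obstacle — and the place where an \emph{independence lemma} analogous to Lemma~\ref{lem:rpindep} and Lemma~\ref{lem:dpindep} is needed — is that this system need not a priori have ordered bridge girth $\infty$: paths may overlap (creating $2$-bridges) or two arcs plus a later river may form an ordered bridge. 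The fix is to prove an \emph{online independence lemma}: one shows that there is an adversary strategy realizing $\rpp^*(n,p)$ for which the builder can be forced to produce an ordered, edge-disjoint path system with no ordered bridges, by the same kind of edge-skip / demand-trimming surgery used offline, carried out respecting the temporal ordering (trimming each demand to the portion it uniquely requires, contracting $2$-node subpaths that are redundant, and deleting demands that require no unique edge), so that after cleanup the builder's committed edges form exactly the edge set of an ordered-bridge-free path system of size $\rpp^*(n,p) + p$. I expect verifying that this surgery can be performed \emph{online} — i.e.\ that the order in which demands arrive does not obstruct the trimming argument and that the "river last" structure is preserved throughout — to be the delicate technical heart of the proof; everything else is a direct transcription of the offline arguments with $\beta$ replaced by $\beta^*$ and "unordered bridge" replaced by "ordered bridge."
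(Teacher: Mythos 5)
Your lower bound argument is correct and essentially identical to the paper's: the adversary plays the ordered path system in path-order, and at round $i$ the absence of ordered bridges (with $\pi_i$ as river, last in the ordering) forces $\pi_i$ to be the unique $s_i \leadsto t_i$ path, so the builder must commit all $|\pi_i|-1$ edges.

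Your upper bound, however, has a genuine gap, and it is in the place you flagged as "the delicate technical heart." You propose to take the full builder paths $\pi_1,\dots,\pi_p$ as an ordered path system $S$, observe that this $S$ may have ordered bridges, and then prove an "online independence lemma" to clean it up. But this surgery is never carried out, and more importantly it is not how the paper proceeds — and the paper's route makes clear why yours is the hard way around. The paper does \emph{not} track the full paths $\pi_i$. Instead, in each round it records only the \emph{increment}: if the (minimal) builder adds edge set $E_i = \pi_i \setminus E(B)$, the auxiliary path $q_i$ consists of exactly those vertices that receive a new incoming edge in round $i$, ordered by appearance along $\pi_i$. This gives $|q_i| = |E_i|$ exactly, hence $\|S\| = |E(B)|$ with equality (not just $|E(B)| \le \|S\|$, which is useless since $\|S\|$ could be much larger than $\beta^*$). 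Ordered-bridge-freeness of this increment system then follows \emph{directly} from minimality, with no surgery: if $q_i$ were the river of an ordered bridge via nodes $s,t \in q_i$, then the arcs already certify an $s \leadsto t$ path in $B$ before round $i$, so the builder could have omitted the edge entering $t$, contradicting minimality of $E_i$. The independence-lemma machinery from Lemma~\ref{lem:rpindep} / Lemma~\ref{lem:dpindep}, which you expected to need, is a feature of the \emph{offline} reductions where the builder must preserve reachability for \emph{all} demands against an already-fixed graph; in the online game the temporal ordering plus minimality does that work for free. So the missing idea in your proposal is the switch from "full path per round" to "new vertices per round" as the paths of the auxiliary system; without it, your plan leaves an unresolved (and unnecessary) online independence lemma as a dangling dependency.
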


First we will prove an upper bound for $\beta^*(n,p,k)$:
\begin{lemma}
$\rpp^*(n, p) \ge \Omega\left(\beta^*(n, p, \infty)\right)$.
\end{lemma}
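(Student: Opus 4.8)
The plan is to show that the builder cannot do better than $\Omega(\beta^*(n,p,\infty))$ edges by having the adversary "play out" a worst-case ordered bridge-free path system against any builder strategy. First I would let $S = (V,\Pi)$ be an ordered path system with $n$ nodes, $p$ paths, no ordered bridges, and $\|S\| = \beta^*(n,p,\infty)$; by the cleaning lemma (and since ordered bridge girth is monotone) we may work with this $S$ directly. Since an ordered path system with no ordered bridges of any size in particular has no ordered $2$-bridge, and since $2$-bridges are insensitive to ordering, $S$ has no (unordered) $2$-bridge, hence any two paths in $\Pi$ are edge-disjoint when we read each path as a sequence of consecutive-node edges. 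Moreover, for the same reason as in the offline reduction (Theorem~\ref{thm:rpreduction}) — but now crucially using the ordering — when the paths $\pi_1,\dots,\pi_p$ are revealed \emph{in the total order of $S$}, each path $\pi_i$ is the \emph{unique} $s_i \leadsto t_i$ path in the subgraph built from edges of $\pi_1,\dots,\pi_i$ (here $(s_i,t_i)$ are the endpoints of $\pi_i$). Indeed, if at the time $\pi_i$ is added there were an alternative $s_i \leadsto t_i$ walk using only edges of earlier paths together with $\pi_i$'s edges, one could extract from it (together with $\pi_i$ as the river) an ordered bridge whose arcs are all earlier than $\pi_i$, contradicting ordered bridge-freeness.

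With this in hand, the adversary's strategy is: in round $i$, add all consecutive-node edges of $\pi_i$ to $A$ (the adversary is allowed to add any number of edges), then name the pair $(s_i,t_i)$. Since $\pi_i$ is the unique $s_i \leadsto t_i$ path in $A$ at that moment (as argued above, because the only edges in $A$ are those of $\pi_1,\dots,\pi_i$), and since $B \subseteq A$ at all times, the builder is forced to add \emph{every} edge of $\pi_i$ to $B$ in order to create an $s_i \leadsto t_i$ path in $B$. Therefore, after $p$ rounds, $B$ contains all edges of all paths in $\Pi$, and the number of such edges is exactly
$$\|S\| - p = \beta^*(n,p,\infty) - p.$$
Since always $\beta^*(n,p,\infty) \ge \beta(n,p,\infty) \ge \Omega(p)$ (e.g.\ by $p$ singleton paths, or because the $n+p$ additive term is present in all our bounds), we get that the value of the game is $\Omega(\beta^*(n,p,\infty))$ against any builder, which is exactly $\rpp^*(n,p) \ge \Omega(\beta^*(n,p,\infty))$.

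The main obstacle — and the one place the argument genuinely differs from the offline case — is verifying the uniqueness claim: that when paths are revealed in the order given by $S$, each $\pi_i$ is the unique $s_i \leadsto t_i$ path in the graph accumulated so far. This is where "ordered" bridge-freeness (rather than full bridge-freeness) is exactly the right hypothesis: an alternative path at step $i$ would be stitched together from subpaths of $\pi_1,\dots,\pi_{i-1}$ and possibly $\pi_i$, and reversing $\pi_i$ to close the cycle yields an ordered bridge with river $\pi_i$ (last in the order) and all arcs among $\pi_1,\dots,\pi_{i-1}$ (earlier in the order). Care is needed to handle the case where the alternative walk reuses edges of $\pi_i$ itself or revisits nodes; but since $S$ has no $2$-bridge the paths are edge-disjoint, and a shortest alternative walk can be taken to be a simple path, so one extracts a genuine ordered $b$-bridge for some $2 \le b \le i$, contradicting the hypothesis. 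Once this lemma is established the rest is the bookkeeping above.
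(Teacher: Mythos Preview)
Your strategy is exactly the paper's: have the adversary play an ordered bridge-free system $S$ realizing $\beta^*(n,p,\infty)$ path by path in its given order, and claim that at round $i$ the path $\pi_i$ is the \emph{unique} simple $s_i\leadsto t_i$ path in the adversary's current graph $A$, so the builder must add every edge of $\pi_i$. The paper asserts this uniqueness in one sentence; you at least flag that ``care is needed'' and sketch how an alternative path would yield an ordered bridge with $\pi_i$ as river.

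That extraction does not go through in general, and the uniqueness claim is false as stated. The obstruction is that an alternative $s_i\leadsto t_i$ path can draw several edges from the \emph{same} earlier path $\pi_j$, while a $b$-bridge requires $b$ pairwise-distinct paths. Concretely, take $\pi_1=(v_3,v_4,v_1,v_2)$, $\pi_2=(v_2,v_3)$, $\pi_3=(v_1,v_4)$ in this order. One checks directly that this system has no $2$-bridge and no $3$-bridge (and only three paths, so no larger bridge), hence ordered bridge girth $\infty$. Yet after round~3 the adversary's graph contains the second $v_1\leadsto v_4$ path $(v_1,v_2,v_3,v_4)$, whose three edges come from $\pi_1,\pi_2,\pi_1$; the builder, having been forced to add exactly those edges in rounds 1--2, adds nothing in round~3. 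Your ``shortest alternative walk'' argument stalls here because when the repeated arc $\rho=\pi_1$ occurs at positions $j<l$, the four nodes $v_j,v_{j+1},v_l,v_{l+1}$ on $\rho$ can sit in the order $v_l<_\rho v_{l+1}<_\rho v_j<_\rho v_{j+1}$, and then no shortcut is available. So both your argument and the paper's share the same gap. One clean repair is to first reduce (by a direct argument, not via Corollary~\ref{cor:bstaracyclic}, which is downstream of this lemma) to an \emph{acyclic} $S$: in the acyclic case a topological order forces $v_j<v_{j+1}\le v_l<v_{l+1}$ on any repeated arc, the shortcut $v_j\to v_{l+1}$ is always available, and your minimality argument then does produce an ordered bridge.
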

\begin{proof}
The strategy of the adversary works as follows.
First, they think of an ordered path system $S = (V, \Pi)$ with $n$ nodes, $p$ paths, no ordered bridges, and size $\|S\| = \beta^*(n, p, \infty)$.
Let $\pi_i \in \Pi$ denote the $i^{th}$ path in the ordering.
In each round $i$ of the game, the adversary considers $\pi_i$, and adds each consecutive pair of nodes along $\pi_i$ as a new edge in $A$.
Then, they name the endpoints $(s, t)$ of $\pi$ as the pair for this round.
Since $S$ has no ordered bridges, currently $\pi_i$ is the unique simple $s \leadsto t$ path in $A$.
Thus the builder has no choice but to add the $|\pi|-1$ edges corresponding to $\pi$ to $B$.
In total, the builder thus adds $\beta^*(n, p, \infty) - p = \Omega\left( \beta^*(n, p, \infty) \right)$
edges to $B$.
\end{proof}

Next, we prove a matching lower bound:

\begin{lemma}
$\rpp^*(n, p) \le \beta^*(n, p, \infty)$
\end{lemma}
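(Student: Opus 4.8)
The plan is to analyze an optimal builder strategy and show the final graph $B$ corresponds to an ordered path system of bridge girth $\infty$. First I would fix the builder's strategy: in round $i$, when the adversary adds edges to $A$ and names a pair $(s_i, t_i)$ with an $s_i \leadsto t_i$ path in $A$, the builder responds by adding the edges of a \emph{single} $s_i \leadsto t_i$ path $\pi_i$ chosen from the edges currently in $A$ — and crucially, the builder picks $\pi_i$ to reuse as many edges already in $B$ as possible, e.g.\ by running a shortest-path computation in $A$ where edges already in $B$ have weight $0$ and all other edges have weight $1$. This guarantees the final graph $B$ is a union of $p$ paths $\pi_1, \ldots, \pi_p$, one added per round, so $\|B\| \le \|S_B\|$ where $S_B = (V, \{\pi_1, \ldots, \pi_p\})$ is a path system with $\le n$ nodes and $\le p$ paths, ordered by round number. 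Since the number of edges in $B$ is at most $\sum_i (|\pi_i| - 1) = \|S_B\| - p \le \|S_B\|$, it suffices to show $S_B$ has ordered bridge girth $\infty$, whence $\|S_B\| \le \beta^*(n, p, \infty)$.

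The key step is to argue that $S_B$ (with its round ordering) contains no ordered bridge. Suppose for contradiction that there is an ordered $b$-bridge: nodes $v_1, \ldots, v_b$ and paths $\pi_{j_1}, \ldots, \pi_{j_{b-1}}$ (the arcs) together with a river $\pi_{j_b}$, where $v_i$ precedes $v_{i+1}$ along arc $\pi_{j_i}$ for each $i \le b-1$, $v_1$ precedes $v_b$ along $\pi_{j_b}$, and the river comes last in the round ordering, i.e.\ $j_b > j_i$ for all $i \le b-1$. Consider the moment the builder is about to add $\pi_{j_b}$ in round $j_b$. At that point, all arcs $\pi_{j_1}, \ldots, \pi_{j_{b-1}}$ are already fully present in $B$ (they were added in earlier rounds), and hence their edges are all in $A$ as well. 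Concatenating the $v_i \leadsto v_{i+1}$ subpaths of the arcs yields a walk from $v_1$ to $v_b$ using only edges already in $B$; by shortcutting repeated vertices this gives a \emph{simple} $v_1 \leadsto v_b$ path entirely inside $B \subseteq A$ of total $B$-weight $0$. But $\pi_{j_b}$ contains $v_1$ before $v_b$, and the builder's rule charges weight $1$ for every edge of $\pi_{j_b}$ not already in $B$. The hard part — and the main obstacle — is to conclude that this forces the builder to have chosen a cheaper route and thereby avoid adding all of $\pi_{j_b}$'s edges; I would handle this by having the builder's path-selection rule be: among all valid $s_{j_b} \leadsto t_{j_b}$ paths in $A$, pick one minimizing the number of new edges, and then route the $v_1 \leadsto v_b$ portion through the existing $B$-subpath. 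This substitution strictly decreases the number of new edges (the river's $v_1 \leadsto v_b$ segment had $\ge 1$ node strictly between $v_1$ and $v_b$ or was at least not entirely in $B$), contradicting minimality.

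Thus no ordered bridge can survive in $S_B$, so $\|S_B\| \le \beta^*(n,p,\infty)$, and the builder's strategy guarantees the final edge count of $B$ is at most $\beta^*(n, p, \infty) - p \le \beta^*(n,p,\infty)$. Since the adversary is fixed but arbitrary, this bounds the value of the game for every adversary, giving $\rpp^*(n,p) \le \beta^*(n, p, \infty)$ as claimed. I would be careful about two technical points in the writeup: (i) ensuring the $v_1 \leadsto v_b$ walk built from arc subpaths can always be shortcut to a simple path without leaving $B$ (true since $B$ is a fixed graph at that moment), and (ii) handling the degenerate case $b = 2$, where the "bridge" is just two paths sharing an edge-ordered pair $(v_1, v_2)$ — but here the arc $\pi_{j_1}$ already contains a $v_1 \leadsto v_2$ edge-walk in $B$, so the builder would never add a fresh edge $(v_1, v_2)$ for the river, again by the minimal-new-edge rule. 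Both of these are routine once the builder's strategy is pinned down as above.
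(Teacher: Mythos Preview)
Your overall strategy is right in spirit, but there is a genuine gap in the claim that $S_B$ has no ordered bridges. The problem is that you record the \emph{full} paths $\pi_i$, including segments that were already in $B$ when round $i$ began. Consider the trivial scenario where the adversary adds a single edge $(a,b)$ in round~1, names the pair $(a,b)$, and then names the same pair $(a,b)$ in every subsequent round. Your builder picks $\pi_i = (a,b)$ each time (zero new edges for $i \ge 2$, so this is optimal under your rule), and now $S_B$ contains $p$ identical copies of the length-$2$ path $(a,b)$, which already form ordered $2$-bridges. More generally, whenever the river's $v_1 \leadsto v_b$ segment happens to lie entirely inside the current $B$, your ``substitution'' does not strictly decrease the number of new edges, so minimality gives no contradiction. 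Your parenthetical justification (``the river's $v_1 \leadsto v_b$ segment had $\ge 1$ node strictly between $v_1$ and $v_b$ or was at least not entirely in $B$'') is exactly the step that fails.

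The paper fixes this by associating to round $i$ not the full path $\pi_i$ but only the subsequence $q_i$ of nodes into which a \emph{new} edge was added in that round. Then $|q_i| = |E_i|$ is the number of new edges, so $\|S\| = |E(B)|$ exactly (no slack), and crucially every node of $q_i$ witnesses a genuinely new edge. Now if $q_i$ were the river of an ordered bridge with last bridge node $t$, there is a new edge $(u,t) \in E_i$, but the arcs already provide an $s \leadsto t$ path in $B$; hence the builder could have added $E_i \setminus \{(u,t)\}$ instead, contradicting minimality of $E_i$. The key difference from your attempt is that minimality is invoked at the level of the \emph{edge set} $E_i$ (any minimal set works, not just one coming from a shortest path), and the path system records only what is genuinely new.
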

\begin{proof}
In the online reachability preserver game, we will assume only that the builder adds a \emph{minimal} set of edges in each round.
That is, when the builder adds edge set $E_i$ in round $i$, we assume that there is no proper subset $E'_i \subsetneq E_i$ that could have been added instead, and still satisfy the adversary's demand.
We claim that, so long as the builder's choices satisfy this property, they will add $\le \beta^*(n, p, \infty)$ edges to $B$ in total.

Indeed, given a sequence of choices made by a builder and an adversary, let us track an auxiliary ordered path system $S$ as follows.
The vertex set $V$ of $S$ is the same as the vertex set of the graphs $A, B$ in the game.
We next describe the paths of $S$.
By minimality of the builder's choices, their selected edge set $E_i$ may be interpreted by considering a simple path $\pi_i$ between the adversary's demand pair, and setting
$E_i := \pi_i \setminus E(B).$
In round $i$, we add a path $q_i$ to the auxiliary path system $S$, where $q_i$ is the sequence of vertices $v$ for which there is an edge in $E_i$ entering $v$, ordered by appearance in $\pi_i$.
We have $|q_i| = |E_i|$, and therefore $\|S\| = |E(B)|$.
So it only remains to show that $S$ has no ordered bridges, and thus $|E(B)| \le \beta^*(n, p, \infty)$.

Suppose for contradiction that $S$ has an ordered bridge, with path $q_i$ as its river, and nodes $s, t \in q_i$ participating in the bridge.
Since $t \in q_i$, there exists an edge of the form $(u, t) \in E_i$.
However, since nodes $s, t$ participate in the bridge, there exists an $s \leadsto t$ path in $B$ before round $i$, that is, the one corresponding to the arcs of said bridge.
Therefore, the builder could have added $E_i \setminus \{(u, t)\}$ in round $i$.
This contradicts minimality of the builder's choices, and thus $S$ has no ordered bridges.
\end{proof}

Analogous to Corollary \ref{cor:inftyacyclic}, the following corollary implies that, without loss of generality, 
\emph{acyclic} graphs are enough to achieve lower bounds for $\beta^*(n, p, \infty)$: 
\begin{corollary} \label{cor:bstaracyclic}
For all $n, p$, there exists an acyclic ordered path system $S$ with $n$ nodes, $p$ paths, ordered bridge girth $\infty$, and size $\|S\| = \Omega(\beta^*(n, p, \infty))$.
\end{corollary}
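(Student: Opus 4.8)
The plan is to mirror the proof of Corollary~\ref{cor:inftyacyclic}, replacing (offline) reachability preservers with the \emph{online} reachability preserver game, which is the object that $\betastar$ captures (Theorem~\ref{thm:rpstarreduction}). Write $\abetastar(n,p,\infty)$ for the maximum possible size of an $n$-node, $p$-path, \emph{acyclic} ordered path system of ordered bridge girth $\infty$. Since an acyclic ordered path system of ordered bridge girth $\infty$ is in particular an ordered path system of ordered bridge girth $\infty$, we trivially have $\abetastar(n,p,\infty) \le \betastar(n,p,\infty)$, so it suffices to prove the reverse inequality $\abetastar(n,p,\infty) = \Omega(\betastar(n,p,\infty))$; adding isolated nodes and singleton paths to an extremal acyclic system (neither of which can participate in any bridge, so ordered bridge girth and acyclicity are preserved) then yields a system with exactly $n$ nodes and $p$ paths.

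\textbf{Step 1.} Introduce the \emph{DAG-restricted} online reachability preserver game, whose min-max value I denote $\overrightarrow{\rpp}^*(n,p)$, identical to the online reachability preserver game except that the adversary must keep $A$ a DAG at all times. I claim $\abetastar(n,p,\infty) = \Theta\big(\overrightarrow{\rpp}^*(n,p)\big)$, via essentially verbatim transcriptions of the two lemmas proving Theorem~\ref{thm:rpstarreduction}. For the lower bound, the adversary plays an acyclic ordered path system $S$ achieving $\abetastar(n,p,\infty)$: because the union of all paths of $S$ is a DAG, every intermediate graph $A$ is a DAG, so this is a legal DAG-restricted strategy. For the upper bound, the auxiliary ordered path system built from a minimal-builder transcript is acyclic whenever $A$ stays a DAG, since every path $\pi_i$ used by the builder, and hence every auxiliary path $q_i$, respects a fixed topological order of the (monotonically growing, hence still a DAG) final graph $A$.

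\textbf{Step 2.} Show $\overrightarrow{\rpp}^*(n,p) = \Theta\big(\rpp^*(n,p)\big)$. One direction is immediate: restricting the adversary to DAGs can only lower the value, so $\overrightarrow{\rpp}^*(n,p) \le \rpp^*(n,p)$. For the other direction I adapt the strongly-connected-component contraction reduction of Abboud and Bodwin~\cite{AB18} to the online setting. The builder, facing a general adversary, maintains the SCC decomposition of the current graph $A$, and for each component an in-tree and an out-tree rooted at a representative vertex (all kept inside $B$), while internally simulating a DAG-game builder against the condensation of $A$. Whenever the adversary adds edges and some components merge, the builder patches the affected in-/out-trees together using a constant number of the newly added edges per merge; since components only ever merge, over the whole game this costs $O(n)$ tree edges in total. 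On a demand $(s,t)$, the builder forwards the corresponding demand between the representatives of the components of $s$ and $t$ to the simulated DAG-builder, copies the response into $B$, and routes from $s$ up to its representative and from $t$'s representative down to $t$ using the already-present trees; correctness follows exactly as in~\cite{AB18}. The total edge count is $O\big(\overrightarrow{\rpp}^*(n,p)\big) + O(n) = O\big(\overrightarrow{\rpp}^*(n,p)\big)$, using $\overrightarrow{\rpp}^*(n,p) = \Omega(n)$ (a single Hamiltonian-path demand in a DAG forces $n-1$ edges).

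\textbf{Step 3.} Chain the estimates: combining Step 1, Step 2, and Theorem~\ref{thm:rpstarreduction},
\[
\abetastar(n,p,\infty) = \Theta\big(\overrightarrow{\rpp}^*(n,p)\big) = \Theta\big(\rpp^*(n,p)\big) = \Theta\big(\betastar(n,p,\infty)\big),
\]
which gives $\abetastar(n,p,\infty) = \Omega(\betastar(n,p,\infty))$, and an extremal acyclic ordered path system for $\abetastar$, padded to exactly $n$ nodes and $p$ paths as above, is the desired $S$. The main obstacle is Step 2: the contraction reduction of~\cite{AB18} is stated statically, whereas here SCCs form and merge over the course of the game, so one must verify that dynamically maintaining in-/out-trees through a sequence of merges (and keeping the simulated condensation consistent as representatives become identified) still costs only $O(n)$ additional edges. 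Everything else is a direct transcription of the arguments already given for Corollary~\ref{cor:inftyacyclic} and Theorem~\ref{thm:rpstarreduction}.
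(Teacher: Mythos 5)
Your proposal is correct in spirit and relies on the same core idea as the paper's proof: go through the online reachability preserver game (Theorem~\ref{thm:rpstarreduction}) and observe that contracting directed cycles costs the builder only $O(n)$ edges in total, so playing on a DAG is essentially without loss of generality. The paper's proof, however, is considerably more direct. It handles the $\beta^*(n,p,\infty)=O(n)$ regime with a trivial one-long-path construction, and for the remaining case it does not introduce a separate DAG-restricted game or maintain per-component in/out trees: it simply argues that whenever the adversary's new edge would close a cycle $C$, the builder can add all of $C$ and contract it, paying $|C|$ edges while reducing the node count by $|C|-1$, so the total overhead is $O(n)$; shifting perspective, the adversary may therefore avoid closing cycles at a cost of only $O(n)$ value, which means their near-optimal strategy keeps the underlying graph acyclic. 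Your Step~2 re-derives this from the builder's side via an online version of the SCC-contraction of~\cite{AB18}, which works but is more elaborate than needed, and your claim of ``a constant number of newly added edges per merge'' is slightly imprecise (one new edge can collapse a long chain of SCCs at once), although the bookkeeping does come out to $O(n)$ in aggregate because each merge event strictly reduces the number of condensed nodes. Your Step~1 correctly observes that the minimal-builder transcript from the proof of Theorem~\ref{thm:rpstarreduction} produces an \emph{acyclic} auxiliary ordered path system when $A$ stays a DAG, which is the same acyclicity observation the paper is implicitly using.
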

\begin{proof}
In the parameter regime where $\beta^*(n, p, \infty) = O(n)$, one can take $S$ to be any ordered path system with one path of length $n$ and the remaining paths of length 1, and so the claim is trivial.
In the following, we assume that $\beta^*(n, p, \infty) \ge cn$ for a sufficiently large constant $c$.

By Theorem \ref{thm:rpstarreduction}, it suffices to prove that in the online reachability preserver game, the adversary has a (near-)optimal strategy in which the underlying graph $G$ is always acyclic.
To show this, imagine the following strategy that the builder could use.
Any time the adversary adds an edge to $G$ that completes a directed cycle $C$, the builder immediately adds all edges in $C$ to their reachability preserver, and then for the rest of the game they treat $C$ as a single contracted supernode.
Since each contraction step costs $|C|$ edges and reduces the number of nodes in $G$ by $|C|-1$, the builder pays only $O(n)$ edges in total for these contraction steps, which is negligible.

We now shift perspective back to the adversary.
Any time the adversary \emph{would} add an edge $(u, v)$ to the graph that completes a directed cycle $C$, they could instead omit $(u, v)$ and contract $C$ into a single supernode in their internal representation of the graph.
By the above analysis, this is without loss of generality, and affects the overall min/max value of the game by at most $O(n)$, which is negligible.
Thus, the adversary never completes a directed cycle in $G$, and the theorem follows.
\end{proof}

\subsection{Shortcut Sets and Exact Hopsets}

Here we show that the lower bounds for reachability and distance preservers can be extended to shortcut sets and exact hopsets, respectively.
The proofs are similar to each other in spirit.

\begin{definition} [Shortcut Sets]
For a directed graph $G = (V, E)$, a $\Delta$-diameter-reducing shortcut set is a set of additional directed edges $H$ such that every edge $(u, v) \in H$ is in the transitive closure of $G$, and
$$\max \limits_{s, t \in V, \text{ exists } s \leadsto t \text{ path in } G} \dist_{G \cup H}(s, t) \le \Delta.$$
We write $\sss(n, p)$ for the smallest integer $\Delta$ such that every $n$-node graph has an $\Delta$-diameter-reducing shortcut set of $|H|=p$ edges.
\end{definition}






\begin{theorem} \label{thm:sslb}
$\sss(n, p) = \Omega\left(\frac{\beta(n, p, \infty)}{p} \right)$
\end{theorem}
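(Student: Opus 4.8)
The plan is to reinterpret a lower-bound path system for $\beta(n,p,\infty)$ as a graph and show that $p$ shortcut edges cannot substantially reduce its diameter; this recovers the Huang--Pettie lower bound \cite{HP18} in a black-box way. By Corollary~\ref{cor:inftyacyclic}, fix an \emph{acyclic} path system $S=(V,\Pi)$ with $n$ nodes, $p$ paths, bridge girth $\infty$, and $\|S\|=\Theta(\beta(n,p,\infty))$; write $\Pi=\{\pi_1,\dots,\pi_p\}$, let $(s_i,t_i)$ be the endpoints of $\pi_i$, and let $G=(V,E)$ be the graph obtained by joining consecutive nodes on the paths of $S$ (as in Theorem~\ref{thm:rpreduction}). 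We may assume $\beta(n,p,\infty)\ge Cp$ for a large constant $C$, since otherwise $\beta(n,p,\infty)/p=O(1)$ and the bound is trivial. I would first record two structural facts. \textbf{(i) Shortcuts are local.} As in the proof of Theorem~\ref{thm:rpreduction}, $\pi_i$ is the unique $s_i\leadsto t_i$ path in $G$; since $G$ is acyclic, an extension argument (any other $s_i\leadsto w$ path would combine with the portion of $\pi_i$ after $w$ to form a second $s_i\leadsto t_i$ path) shows that the prefix of $\pi_i$ up to $w$ is the \emph{unique} $s_i\leadsto w$ path in $G$ for every node $w$ on $\pi_i$. Hence, for any shortcut set $H$, a shortest $s_i\leadsto w$ path in $G\cup H$, after each $H$-edge is replaced by a path realizing it in $G$, becomes a walk from $s_i$ to $w$ in the acyclic graph $G$ and therefore equals this prefix; in particular it visits only nodes of $\pi_i$ and uses only edges of $H$ whose two endpoints both lie on $\pi_i$, together with edges of $\pi_i$ itself. \textbf{(ii) Each shortcut edge serves one path.} If $(u,v)\in H$ has both endpoints on $\pi_i$ --- necessarily with $u<_{\pi_i}v$, since $G$ is acyclic and $u$ reaches $v$ --- then it cannot also have both endpoints on a different $\pi_j$, as $\{u,v\}$ would be a $2$-bridge. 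Writing $q_i$ for the number of edges of $H$ with both endpoints on $\pi_i$, this gives $\sum_i q_i\le |H|$.

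Next I would isolate the core single-path estimate: if $P=(v_0\to v_1\to\cdots\to v_L)$ is a directed path and $Q$ is a set of $q$ forward edges $(v_a,v_b)$ with $a<b$, then $\dist_{P\cup Q}(v_0,v_a)\le\Delta$ for all $a$ forces $L\le (q+1)\Delta+q$. Set $r(a):=\dist_{P\cup Q}(v_0,v_a)$, so $r(0)=0$ and $r(a)\le r(a-1)+1$ via the path edge $(v_{a-1},v_a)$; call $a$ an \emph{up-step} if $r(a)=r(a-1)+1$. For a non-up-step $a$ a shortest $v_0\leadsto v_a$ path cannot end with $(v_{a-1},v_a)$, so it ends with an edge of $Q$ into $v_a$; distinct non-up-steps use distinct such edges, so there are at most $q$ non-up-steps. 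Telescoping $r(L)-r(0)=\sum_{a=1}^{L}(r(a)-r(a-1))$ and bounding each of the at most $q$ downward steps by $\Delta$ shows the number of up-steps is at most $r(L)+q\Delta\le(q+1)\Delta$, and adding the at most $q$ non-up-steps gives $L\le(q+1)\Delta+q$.

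Finally I would combine these. Let $H$ be any shortcut set on $p$ edges making the diameter of $G\cup H$ equal to $\Delta$. Fix $i$ and apply the single-path estimate to $\pi_i$ together with its $q_i$ internal shortcut edges: by fact~(i) these are exactly the edges relevant to the distances $\dist_{G\cup H}(s_i,w)$ for $w$ on $\pi_i$, each of which is $\le\Delta$, so $|\pi_i|-1\le(q_i+1)\Delta+q_i\le(q_i+1)(\Delta+1)$. Summing over $i$ and using $\sum_i(|\pi_i|-1)=\|S\|-p=\Omega(\beta(n,p,\infty))$ (here the assumption $\beta\ge Cp$ is used) together with $\sum_i(q_i+1)=(\sum_i q_i)+p\le 2p$, we obtain $\Omega(\beta(n,p,\infty))\le 2p(\Delta+1)$, i.e.\ $\Delta=\Omega(\beta(n,p,\infty)/p)$. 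As $G$ has $n$ nodes and $H$ was an arbitrary shortcut set on $p$ edges, this proves $\sss(n,p)=\Omega(\beta(n,p,\infty)/p)$. I expect the main obstacle to be establishing fact~(i) cleanly --- i.e.\ showing that in $G\cup H$ a shortest path between two nodes of $\pi_i$ only ever uses ``local'' shortcut edges --- since once the instance is decoupled into independent single-path shortcutting problems, the single-path estimate is a standard argument in the spirit of classical path-shortcut lower bounds.
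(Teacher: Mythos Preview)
Your proof is correct, but it takes a genuinely different route from the paper's.

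The paper's argument is a short pigeonhole. It starts from a system with $p+1$ (not $p$) paths and applies the Cleaning Lemma (Lemma~\ref{lem:cleaning}) so that every path has length $\Theta(\ell)$ with $\ell=\Theta(\beta(n,p,\infty)/p)$. Since any ordered node pair $(x,y)$ can satisfy $x<_{\pi} y$ for at most one $\pi$ (no $2$-bridges), each of the $p$ shortcut edges is ``useful'' for at most one of the $p+1$ demand pairs; hence some pair $(s,t)$ receives no shortcut at all, and its unique path of length $\Theta(\ell)$ survives in $G\cup H$. This gives the bound in one line once length-regularity is in place.

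Your approach avoids both the $p{+}1$ trick and the Cleaning Lemma: you allow arbitrary path lengths and allow every path to receive shortcuts, but you control the total effect via the single-path estimate $L\le (q+1)\Delta+q$ and then sum over all paths using $\sum_i q_i\le |H|$. This is more work (you need fact~(i) and the up-step/down-step counting), but it is self-contained and shows directly that the \emph{average} path length, not just the length of some regular path, already forces the diameter bound. Your use of Corollary~\ref{cor:inftyacyclic} (acyclicity) to justify fact~(i) is the right move; the DAG structure is what makes the expansion-of-$H$-edges argument yield a simple path equal to the unique prefix of $\pi_i$, and hence forces every relevant $G$-edge and $H$-edge to lie on $\pi_i$. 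Both proofs ultimately hinge on the same ``one shortcut edge per path'' consequence of $2$-bridge-freeness, but the paper spends that fact on pigeonhole while you spend it on the inequality $\sum_i q_i\le p$.
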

\begin{proof}
Let $S = (V, \Pi)$ be an $\infty$-bridge-free path system on $n = |V|$ nodes and $p+1 = |\Pi|$ paths, such that\footnote{This latter equality is intuitive, but for completeness it is formally proved in Lemma \ref{lem:paramadjust} in the appendix.}
$$\|S\| = \Theta\left(\beta(n, p+1, \infty)\right) = \Theta(\beta(n, p, \infty)).$$
The average path length in $S$ is thus
$$\ell = \Theta\left( \frac{\beta(n, p, \infty)}{p+1} \right) = \Theta\left( \frac{\beta(n, p, \infty)}{p} \right).$$
By the Cleaning Lemma (Lemma \ref{lem:cleaning}), we may further assume without loss of generality that all paths in $S$ have length $\Theta(\ell)$.
Let us now associate to $S$ a directed graph $G$ as usual, by including a directed edge $(u, v)$ for each pair of nodes $u, v$ that appear consecutively on any path in $S$ (so $|E(G)| = \Theta(\beta(n, p, \infty))$).
We also define a set $P$ of $|P|=p+1$ demand pairs as the endpoints of the paths in $S$.
Recall that shortcut sets require diameter reduction among \emph{all} node pairs, rather than just a specific set of demand pairs, but nonetheless it will be helpful to focus our analysis on these demand pairs.

Since $S$ has bridge girth $\infty$, for each demand pair $(s, t) \in P$ there is a unique simple $s \leadsto t$ path in $G$; let us denote this path by $\pi(s, t)$.
Since these paths $\{\pi(s, t)\}$ do not have $2$-bridges, we additionally have that for any ordered pair of nodes $(x, y)$, there is at most one demand pair $(s, t)$ with $x <_{\pi(s, t)} y$.
In particular, let $H$ be an arbitrary shortcut set of size $|H|=p$.
Since $|P| = p+1$ but $|H|=p$, there exists a demand pair $(s, t) \in P$ such that there is no $(x, y) \in H$ with $x <_{\pi(s, t)} y$.
It follows that $\pi(s, t)$ remains the unique simple $s \leadsto t$ path in the graph $G \cup H$.
The number of hops in this path is
$$|\pi(s, t)|-1 = \Theta(\ell) = \Omega\left(\frac{\beta(n, p, \infty)}{p} \right),$$
and so the hopset $H$ cannot reduce diameter below this threshold.
\end{proof}

We now give an analogous proof for exact hopsets.

\begin{definition} [Exact Hopsets]
For a directed weighted graph $G = (V, E)$, a $\beta$-hop exact hopset is a set of additional directed weighted edges $H$ such that every edge $(u, v) \in H$ has weight $w(u, v) = \dist_G(u, v)$, and for all node pairs $s, t$, there exists a shortest $s \leadsto t$ path in $G \cup H$ that uses at most $\beta$ edges.

We write $\ehh(n, p)$ for the smallest integer $\beta$ such that every $n$-node graph has a $\beta$-hop exact hopset of $|H| = p$ edges.
\end{definition}

Although we give the following theorem in terms of $\beta^*$, we note that the following proof essentially shows that
$$\ehh(n, p) \ge \Omega\left(\frac{\dpp(n, p)}{p}\right) \geq \Omega\left( \frac{\beta^*(n, p, \infty)}{p} \right)$$
where the latter equality is from Theorem \ref{thm:distpreschain}.

\begin{theorem} \label{thm:exacthopsets}
$\ehh(n, p) = \Omega\left( \frac{\beta^*(n, p, \infty)}{p} \right).$
\end{theorem}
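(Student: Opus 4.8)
The plan is to mirror the lower bound for shortcut sets (Theorem \ref{thm:sslb}), with diameter replaced by hop-length and reachability replaced by exact distance. First I would fix an ordered path system $S = (V, \Pi)$ with $n$ nodes, $p+1$ paths, no ordered bridges, and size $\|S\| = \Theta(\betastar(n, p+1, \infty)) = \Theta(\betastar(n, p, \infty))$; the last equality holds because $\betastar$ is monotone in its path parameter and changes by at most a constant factor when the number of paths is doubled (split an optimal system for $2p$ paths into two halves of $p$ paths each, one of which has at least half the size), hence also under a change from $p$ to $p+1$. By the Cleaning Lemma (Lemma \ref{lem:cleaning}), applied to ordered path systems, I may further assume all paths of $S$ have length $\Theta(\ell)$ with $\ell = \Theta(\betastar(n,p,\infty)/p)$. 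Next I would invoke the graph construction from the proof of Lemma \ref{lem:dplower}: process the paths of $\Pi$ in their order, adding each consecutive node pair along a path as a directed edge with weight chosen large enough that no earlier demand pair acquires a shorter path. This yields a weighted digraph $G$ with $|E(G)| = \|S\| = \Theta(\betastar(n,p,\infty))$ edges, together with the set $P$ of the $p+1$ endpoint pairs, such that each $(s,t) \in P$ has $\pi(s,t) \in \Pi$ as its unique shortest $s \leadsto t$ path in $G$, and distinct such paths are pairwise edge-disjoint.

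I would then extract two structural facts: (i) since a subpath of a unique shortest path is itself a shortest path, for any $(s,t) \in P$ and any $x <_{\pi(s,t)} y$ we have $\dist_G(x,y)$ equal to the length of the subpath $\pi(s,t)[x \leadsto y]$; and (ii) since $S$ is $2$-bridge-free, any ordered node pair $(x,y)$ satisfies $x <_{\pi(s,t)} y$ for \emph{at most one} pair $(s,t) \in P$. Now take an arbitrary exact hopset $H$ with $|H| = p$. Every edge $(x,y) \in H$ has weight $\dist_G(x,y)$, so $\dist_{G \cup H} = \dist_G$. The crucial claim is: if a shortest $s \leadsto t$ path $P'$ in $G \cup H$ uses a hopset edge $(x,y)$, then necessarily $x <_{\pi(s,t)} y$. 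Indeed, replacing every hopset edge on $P'$ by an equal-weight shortest path in $G$ produces an $s \leadsto t$ path in $G$ of the same, hence shortest, length, which therefore equals the unique shortest path $\pi(s,t)$; in particular $x$ and $y$ lie on $\pi(s,t)$ in that order. Consequently each edge of $H$ can lie on a shortest $s\leadsto t$ path for at most one demand pair in $P$ — the unique one, if any, with that edge's endpoints in order along its path. Since $|H| = p < p+1 = |P|$, pigeonhole gives a pair $(s,t) \in P$ such that no edge of $H$ lies on any shortest $s \leadsto t$ path of $G \cup H$; hence every such shortest path avoids $H$, lies entirely in $G$, and therefore equals $\pi(s,t)$, which uses $|\pi(s,t)| - 1 = \Theta(\ell) - 1 = \Omega(\betastar(n,p,\infty)/p)$ edges. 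Thus $H$ is not a valid $\beta$-hop exact hopset for any $\beta$ below this threshold, giving $\ehh(n,p) = \Omega(\betastar(n,p,\infty)/p)$; in the degenerate regime $\betastar(n,p,\infty) = O(p)$ the bound is vacuous but still correct.

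The step I expect to be the main obstacle is the crucial claim above — ruling out that the hopset edges $H$ conspire to create a genuinely new low-hop shortest path between $s$ and $t$. The right move is to ``uncontract'' each hopset edge into an equal-weight shortest path of $G$, which, combined with uniqueness of shortest paths in $G$, pins every shortest $s \leadsto t$ path of $G \cup H$ down to $\pi(s,t)$; pairing this with $2$-bridge-freeness is what converts $|H| < |P|$ into a demand pair whose shortest path is untouched. The only other care needed is the $p$ versus $p+1$ bookkeeping and the constant-factor reduction $\betastar(n,p+1,\infty) = \Theta(\betastar(n,p,\infty))$.
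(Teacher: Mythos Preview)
The proposal is correct and follows essentially the same route as the paper: start from an ordered bridge-free system on $p+1$ paths, apply the Cleaning Lemma, build the weighted graph of Lemma~\ref{lem:dplower}, and pigeonhole exactly as in Theorem~\ref{thm:sslb}. Your ``uncontract'' justification of the crucial claim is more explicit than the paper's one-line deferral to Theorem~\ref{thm:sslb}; just note that replacing hopset edges a priori produces a \emph{walk}, which coincides with $\pi(s,t)$ because the edge weights in the construction of Lemma~\ref{lem:dplower} may be taken strictly positive.
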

\begin{proof}
Let $S = (V, \Pi)$ be an ordered path system with $n$ nodes, $p+1$ paths, ordered bridge girth $\infty$, and size $\|S\| = \Omega\left(\beta^*(n, p+1, \infty)\right)$.
So the average path length in $S$ is
$$\ell = \Omega\left(\frac{\beta^*(n, p+1, \infty)}{p+1}\right) = \Omega\left(\frac{\beta^*(n, p, \infty)}{p}\right).$$
By the Cleaning Lemma (Lemma \ref{lem:cleaning}), we may assume without loss of generality that all paths in $S$ have length $\Theta(\ell)$.
As in Lemma \ref{lem:dplower}, we may associate $S$ to a directed weighted $n$-node graph $G$ and set of demand pairs $P$ such that there is a unique shortest path for each demand pair, and these shortest paths are pairwise edge-disjoint.

Now let $H$ be an arbitrary exact hopset of $G$ of size $|H| \le p$.
The rest of the proof is identical to Theorem \ref{thm:sslb}.
In particular, since $|P|=p+1$ and $|H|=p$, there exists a demand pair $(s, t) \in P$ such that there is no edge $(x, y) \in H$ where both $x, y$ lie along the unique shortest path $\pi(s, t)$.
Thus $\pi(s, t)$ remains the unique shortest $s \leadsto t$ path in the graph $G \cup H$, and it has
$$\Theta(\ell) = \Omega\left(\frac{\beta^*(n, p, \infty)}{p}\right).$$
edges.
So $H$ cannot reduce the number of hops below this value.
\end{proof}

\subsection{Approximate Distance Preservers and Evidence for Conjecture \ref{conj:beta_log} \label{sec:adp}}

Next, we present evidence in favor of the first part of Conjecture \ref{conj:beta_log}.
We do so by considering approximate distance preservers (also sometimes called pairwise spanners):

\begin{definition} [$\alpha$-Approximate Distance Preservers] \label{def:apdps}
Let $G=(V, E, w)$ be a directed weighted graph, $P \subseteq V \times V$ a set of demand pairs, and $\alpha\geq 1$ a parameter.
An $\alpha$-approximate distance preserver is a subgraph $H \subseteq G$ in which, for all $(s, t) \in P$, we have $\dist_H(s, t) \leq \alpha \cdot \dist_G(s,t)$.

We define $\apdpp(n, p, \alpha)$ as the least integer such that every $n$-node graph and set of $|P|=p$ demand pairs has an $\alpha$-approximate distance preserver on $\le \apdpp(n,p, \alpha)$ edges.
\end{definition}

\begin{lemma} \label{lem:adplower}
$\beta(n, p, k) \le O(\apdpp(n, p, 2k/\ell))$, where $\ell=\beta(n,p,k)/p$ is the average path length in a system realizing $\beta(n,p,k)$.
\end{lemma}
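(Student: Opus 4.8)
The plan is to realize an extremal path system for $\beta(n,p,k)$ as a unit-weight digraph whose vertices and edges are those of the system, with the path endpoints as demand pairs, and then to argue that the bridge-girth bound forces every $(2k/\ell)$-approximate distance preserver of this instance to keep $\Omega(\beta(n,p,k))$ edges. We may assume $\alpha := 2k/\ell > 1$, i.e.\ $\ell < 2k$, since the statement is vacuous otherwise. First I would fix $S=(V,\Pi)$ with $n$ nodes, $p$ paths, bridge girth $>k$ and $\|S\|=\beta(n,p,k)$, and clean it via Lemma~\ref{lem:cleaning} so that every path has $\Theta(\ell)$ nodes, where $\ell=\beta(n,p,k)/p$; I would also reduce to acyclic $S$ (as in the $k=\infty$ case, this only helps, and makes the ``run'' arguments below go through cleanly). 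Since $k\ge 2$, $S$ has no $2$-bridge, so no ordered pair of nodes is consecutive in two distinct paths; hence the digraph $G$ that has an edge $u\to v$ for each pair $u,v$ consecutive (in that order) on a path of $\Pi$ has exactly $\|S\|-p=\Omega(\beta(n,p,k))$ edges, and each edge is \emph{owned} by exactly one path of $\Pi$. Put weight $1$ on every edge, let $P$ be the set of endpoint pairs of the paths in $\Pi$, so $|P|\le p$, and observe $\dist_G(s,t)\le |\pi|-1=\Theta(\ell)$ whenever $\pi\in\Pi$ has endpoints $(s,t)$.

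The core is a structural lemma about how any walk can differ from its ``home path.'' Fix $\pi\in\Pi$ with endpoints $(s,t)$ and any simple $s\leadsto t$ path $\sigma$ in $G$ with $O(k)$ edges. Decompose $\sigma$ into maximal \emph{runs} (each lying on a single path of $\Pi$), and group the maximal stretches of runs that avoid $\pi$ into \emph{detours}; since $\sigma$ starts and ends on $\pi$, each detour connects two nodes $a<_\pi b$ of $\pi$. I claim each detour uses at least $k$ pairwise distinct paths other than $\pi$: by acyclicity, any repeated arc inside a detour can be eliminated without increasing the number of distinct arcs (merge the two $Q$-runs into one $Q$-subpath $a'\to b'$), and a detour using $\le k-1$ distinct arcs, together with $\pi$ as the river, is a $b'$-bridge with $b'\le k$, contradicting bridge girth $>k$. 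Consequently each detour contributes $\ge k$ edges, so $\sigma$ contains only $O(1)$ detours; and since a walk with at least one detour has $\ge k$ edges while $\pi$ itself has $\Theta(\ell)<2c_2k$ edges, the shortest such $\sigma$ gives $\dist_G(s,t)=\Omega(\min\{k,\ell\})=\Omega(\ell)$, using $\ell<2k$.

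For the wrapup, fix an $\alpha$-approximate distance preserver $H$, and for each demand $(s_i,t_i)$ pick a witnessing $s_i\leadsto t_i$ path $\sigma_i\subseteq H$ with $|\sigma_i|\le \alpha\,\dist_G(s_i,t_i)=O(k)$; by the structural lemma $\sigma_i$ has $\ge\dist_G(s_i,t_i)=\Omega(\ell)$ edges and only $O(1)$ detours. Say $\pi_i$ is \emph{well covered} if at least $c\ell$ of its owned edges lie in $H$. If at least half the paths are well covered, then since the owned-edge sets are pairwise disjoint, $|E(H)|\ge \tfrac p2\cdot c\ell=\Omega(\beta(n,p,k))$ and we are done. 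Otherwise at least half the paths are poorly covered; for each such $\pi_i$, $\sigma_i$ must route $\Omega(\ell)$ of its edges through detours, hence through $\Omega(\ell)$ edges of $H$ not owned by $\pi_i$, spread over $\ge k$ distinct other paths. Charging these detour edges and using that $pk=\Omega(p\ell)=\Omega(\beta(n,p,k))$ since $\ell<2k$, one again concludes $|E(H)|=\Omega(\beta(n,p,k))$.

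The main obstacle I anticipate is precisely this last charging step: one must bound how many of the poorly-covered witnessing routes $\sigma_i$ can detour through a single edge (equivalently, through a single other path). This should follow by choosing the $\sigma_i$ canonically — e.g.\ each one following its own home path $\pi_i$ as far as possible — and then invoking bridge-freeness once more to show that two long ($\ge k$-arc) detours anchored at nodes of two different paths cannot share too much; the rest of the argument is the routine ``bridge $\Rightarrow$ forbidden'' bookkeeping already used for $\beta(n,p,2)$ and $\beta(n,p,3)$ elsewhere in the paper.
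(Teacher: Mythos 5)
Your construction (the unit-weight digraph $G$ with one edge per consecutive node pair in a path, demand pairs being path endpoints) and your bridge-extraction for detours are the same ingredients the paper uses, and you correctly identify that the vacuous regime is $\alpha \le 1$. But there are two concrete problems, one of which means you have not actually proven the lemma.

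The fatal one is the endgame. You fix an arbitrary $\alpha$-approximate preserver $H$, pick witness paths $\sigma_i$, split into a ``well-covered / poorly-covered'' dichotomy, and then need a charging argument bounding how many detours can route through a single shared edge --- an argument you explicitly flag as incomplete. None of this machinery is needed, and the gap you acknowledge is exactly what the paper avoids. The observation you already essentially have in your second paragraph is that if $H$ is missing even a single edge $(u,v)$, then (with $\pi$ the unique owner of $(u,v)$, endpoints $s,t$) any $s \leadsto t$ path in $H$ must contain a detour, hence $\dist_H(s,t)\ge k$, while $\dist_G(s,t)=|\pi|-1=O(\ell)$. Since $\ell<2k$ (with the right cleaning constants), the stretch for $(s,t)$ already exceeds $\alpha=2k/\ell$. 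So $G$ itself is the \emph{unique} $\alpha$-approximate preserver, and $\apdpp(n,p,\alpha) \ge |E(G)| = \Omega(\beta(n,p,k))$; you are done. The ``poorly covered'' case you try to handle cannot arise for a valid preserver $H$ --- the dichotomy is illusory --- and the whole charging problem evaporates.

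The second issue is the acyclicity reduction. You invoke ``as in the $k=\infty$ case'' to pass to an acyclic $S$, but Corollary \ref{cor:inftyacyclic} is proved only for bridge girth $\infty$ (via the reachability-preserver equivalence), and there is no analogue for finite $k$ in the paper; assuming it silently loses a potentially polynomial factor. You lean on acyclicity in two places: to merge repeated arcs within a detour while keeping the result a path, and to ensure detour endpoints satisfy $a <_\pi b$. The second use is replaceable without acyclicity: among the $\pi$-nodes visited by $\sigma$ in $\sigma$-order, since every one has $\pi$-position either $\le \text{pos}_\pi(u)$ or $\ge \text{pos}_\pi(v)$, and the sequence starts at $s$ and ends at $t$, there must be $\sigma$-consecutive $\pi$-nodes $a,b$ with $a\le_\pi u$ and $v\le_\pi b$, whence $a<_\pi b$, and the $\sigma$-segment between them avoids $\pi$ internally and gives the desired bridge with $\pi$ as river. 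The repeated-arc cleanup takes a bit more care without acyclicity, but is localized; the point is you should not be assuming acyclicity for free here.
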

\begin{proof}
Let $S = (V, \Pi)$ be a path system realizing $\beta(n, p, k)$.
By Lemma~\ref{lem:cleaning}, we may assume that all path lengths in $S$ are at most
$$\ell' < \frac{\beta(n,p,k)}{2p}.$$
Interpret the endpoints of the paths in $\Pi$ as demand pairs, and consider the unweighted directed graph $G$ that contains an edge $(u, v)$ iff there is a path that uses the nodes $u, v$ consecutively.
The size of $G$ is
$$|E(G)| = \Theta(\|S\|) = \Theta\left( \beta(n, p, k) \right).$$
Moreover, we claim that $G$ is the unique approximate distance preserver of $G, P$ with error parameter $\alpha = 2k/\ell$.
To see this, let $H \subsetneq G$ be a subgraph that does not contain an edge $(u, v)$ from $G$, and let $\pi \in \Pi$ be a path with endpoints $s, t$ that uses $u, v$ consecutively.
The distance from $s$ to $t$ in $H$ must be at least $k$.
Hence, the distance increases by a factor of
$$\frac{k}{\ell'} > \frac{2pk}{\beta(n, p, k)} = \frac{2k}{\ell}.$$
It follows that $H$ is not a $2k/\ell$ approximate distance preserver of $G$, and the lemma follows.
\end{proof}

We note that this lemma is \emph{not} interesting for constant $k$, since we will clearly have $\ell = \Omega(k)$ in this regime and so the approximation factor is meaningless.
Rather, the most interesting regime for this lemma is when 
$k\geq\omega(1)$ but $k \ne \infty$.
This lemma \emph{might} imply new bounds in this regime.
In particular, if Conjecture~\ref{conj:beta_log} is false, then choosing $k=\log n$ implies new lower bounds against $\apdpp$ (in the regime where $\ell = \Theta(\log n)$ as well).
Or, stated in the contrapositive, and using the fact that $\rpp(n, p) = \Theta(\beta(n, p, \infty))$ from Theorem \ref{thm:rpreduction}:
\begin{theorem}\label{eq_conj}
    If $\apdpp(n,p,\alpha)=O(\rpp(n,p))$ for all fixed $\alpha > 1$ then Conjecture~\ref{conj:beta_log} is true.
\end{theorem}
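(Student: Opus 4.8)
The plan is to combine Lemma~\ref{lem:adplower} with Theorem~\ref{thm:rpreduction} and the hypothesis. By Theorem~\ref{thm:rpreduction} we have $\rpp(n,p) = \Theta(\beta(n,p,\infty))$, and since any path system of bridge girth $\infty$ also has bridge girth $>\log n$, the inequality $\beta(n,p,\log n) \ge \beta(n,p,\infty)$ holds trivially; so it suffices to prove the reverse bound $\beta(n,p,\log n) = O(\beta(n,p,\infty)) = O(\rpp(n,p))$.

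First I would fix $n,p$, let $S$ be a path system realizing $\beta(n,p,\log n)$, and let $\ell = \beta(n,p,\log n)/p$ be its average path length. Applying Lemma~\ref{lem:adplower} with $k=\log n$ gives $\beta(n,p,\log n) = O(\apdpp(n,p,2\log n/\ell))$. In the main case $\ell \le \log n$, the approximation parameter $2\log n/\ell$ is a fixed constant at least $2$, so — using that $\apdpp(n,p,\alpha)$ is non-increasing in $\alpha$ — we get $\beta(n,p,\log n) = O(\apdpp(n,p,2))$, which the hypothesis bounds by $O(\rpp(n,p)) = O(\beta(n,p,\infty))$; combined with the trivial direction this gives $\beta(n,p,\log n) = \Theta(\beta(n,p,\infty))$. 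This is precisely the regime (average path length $\Theta(\log n)$) singled out in the discussion above, and it is the conceptual heart of the argument: whenever the extremal bridge-girth-$>\log n$ system has short paths, the reduction is one line.

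It remains to handle the complementary case $\ell > \log n$, where the approximation parameter produced by the lemma is no longer bounded away from $1$. Here I would first invoke the Cleaning Lemma (Lemma~\ref{lem:cleaning}) so that all paths of $S$ have length $\Theta(\ell)$, and then subdivide every path into consecutive windows of length roughly $(\log n)/10$. Since the windows of a single path are pairwise node-disjoint and consecutive members of a bridge must share a node, any bridge among the windows lifts to a bridge of $S$; hence the windowed system $\widetilde S$ still has bridge girth $>\log n$, has the same size $\|\widetilde S\| = \|S\| = \Theta(\beta(n,p,\log n))$, has $\widetilde p = \Theta(\beta(n,p,\log n)/\log n)$ paths, and all of its paths have length $\Theta(\log n)$. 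Re-running the argument of Lemma~\ref{lem:adplower} on $\widetilde S$ — the graph built from $\widetilde S$ has diameter $O(\log n)$, while deleting any of its edges forces an $\Omega(\log n)$-hop detour, a genuine constant-factor blow-up — yields $\beta(n,p,\log n) = O(\apdpp(n,\widetilde p, O(1))) = O(\rpp(n,\widetilde p))$ via the hypothesis. The last step is to convert this into a bound of the form $O(\rpp(n,p))$: this uses the known extremal upper bounds for $\rpp$ (Theorems~\ref{thm:rpreduction} and \ref{thm:bnp4upper}), the elementary subadditivity $\rpp(n,q) = O(\lceil q/q'\rceil\,\rpp(n,q'))$ for $q \ge q'$, and the facts $\beta(n,p,\infty) \ge \max\{n,p\}$ and $\beta(n,p,\log n) \ge n$ (in particular, if $\beta(n,p,\log n) = O(n)$ the claim is immediate since $\beta(n,p,\infty) \ge n$, so we may assume $\beta(n,p,\log n) = \omega(n)$). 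I expect this final bookkeeping — checking that the inflation from $p$ to $\widetilde p$ is absorbed across every sub-range of $p$ — to be the main obstacle; everything before it is essentially the one-line reduction of the main case.
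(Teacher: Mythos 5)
Your main case ($\ell \le \log n$) is exactly what the paper has in mind — the discussion just before the theorem explicitly restricts to ``the regime where $\ell = \Theta(\log n)$'' — and that step is correct: with $2\log n/\ell \ge 2$, monotonicity of $\apdpp$ in $\alpha$ plus the hypothesis give $\beta(n,p,\log n)\le O(\apdpp(n,p,2)) = O(\rpp(n,p)) = O(\beta(n,p,\infty))$. The complementary case does genuinely need an argument (e.g.\ at $p=n$ the extremal system has $\ell \ge \beta(n,n,\infty)/n = \Omega(n^{1/6}) \gg \log n$ by Cor.~\ref{cor:binftylb}), but your windowing step for it fails at the crucial assertion that ``any bridge among the windows lifts to a bridge of $S$.'' Your two premises only show that \emph{consecutive} windows in a $\widetilde S$-bridge come from different paths of $S$; a bridge in $\widetilde S$ can reuse a path of $S$ through two \emph{non-consecutive} windows, and the lifted object then uses that path twice, which is not a bridge. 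Concretely, take $S$ with paths $\pi_a = (v_3,v_4,v_1,v_2)$, $\pi_b=(v_2,v_3)$, $\pi_c=(v_1,v_4)$; one checks directly that $S$ has no $2$-bridge or $3$-bridge, so $S$ is bridge-free. Splitting $\pi_a$ into windows $(v_3,v_4)$ and $(v_1,v_2)$ yields $\widetilde S$ with paths $(v_1,v_2),(v_2,v_3),(v_3,v_4),(v_1,v_4)$, which is a $4$-bridge with $(v_1,v_4)$ as river. So windowing can create $O(1)$-bridges from a bridge-free source, $\widetilde S$ need not have bridge girth $>\log n$, and Lemma~\ref{lem:adplower} cannot be re-run on it.

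Even granting the bridge girth of $\widetilde S$, the ``final bookkeeping'' you flag is a real obstacle, not arithmetic: the subadditivity $\rpp(n,q)=O(\lceil q/q'\rceil\rpp(n,q'))$ runs the wrong way. From $O(\rpp(n,\widetilde p))$ with $\widetilde p = \Theta(\beta(n,p,\log n)/\log n) = \Theta(p\ell/\log n) \gg p$ it yields $O((\widetilde p/p)\rpp(n,p)) = O((\ell/\log n)\rpp(n,p))$, and absorbing the $\ell/\log n$ factor would require $\rpp(n,p)=\Omega(p\log n)$, which is false in general (e.g.\ $\rpp(n,p)=\Theta(p)$ once $p$ approaches $n^2$). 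So the portion of your proposal that matches the paper is fine, but the $\ell \gg \log n$ case — which the paper's informal sketch also glosses over — is not resolved by the proposal: the windowed system does not retain the bridge girth you need, and even if it did the path-count conversion would not close.
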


We think this premise is plausible.
In support, we note that results in \cite{KP22} could be interpreted as the analogous statement for \emph{undirected} weighted distance preservers, and that results in \cite{BW23} imply that for directed hopsets, the state-of-the-art bounds for error $\alpha=\infty$ (reachability) and for $(1+\eps)$ essentially match.

\section{Flow-Cut and Directed Steiner Forest Integrality Gaps}
\label{sec:int_gap}

In this section, we give polynomial integrality gap lower bounds for the standard LP relaxations of the directed multicut, directed sparsest cut, and directed Steiner forest problems as a function of $\beta(n, p, \infty)$. The duals of the LP relaxations of directed multicut and directed sparsest cut correspond to the well-studied maximum multicommodity flow  and maximum concurrent flow problems, so the integrality gaps for these LPs correspond to \textit{flow-cut gaps}. We also give new integrality gap lower bounds for the standard LP relaxation of directed Steiner forest. This LP relaxation can be interpreted as a generalization of minimum-cost flow to multiple demand pairs.

\subsection{Directed Edge Multicut and Statement of First Result}

In the directed multicut problem, we are given a directed graph $G = (V, E)$ and a set of $p$ demand pairs $P = \{(s_i, t_i) \in V\times V \mid i \in [1, p]\}$, and the objective is to find a minimum subset of $E$ whose removal separates all pairs of vertices $(s_i, t_i)$ in $P$. 
It will be helpful to phrase this as an integer program, as follows:
\begin{itemize}
\item For each edge $e \in E$, let $x_e \in \{0, 1\}$ be an indicator variable that takes value $1$ if $e$ is in the multicut solution and else $0$.
\item For each demand pair $(s_i, t_i) \in P$, denote by $\Pi_i$ the set of directed paths from $s_i$ to $t_i$ in $G$.
\item Then the multicut problem is equivalent to minimizing $\sum_e x_e$, subject to the constraint $\sum_{e \in \pi} x_e \geq 1$ for all $i \in [1, p]$ and $\pi \in \Pi_i$.
\end{itemize}
We will write $\mcut(G, P)$ for the value of the directed multicut problem on inputs $G, P$.
A natural LP relaxation of directed multicut is to let $x_e \geq 0$, so that cuts on edges can be fractional.
We will write the fractional value as $\fmcut(G, P)$, and we state the LP formally as: 
\begin{center}
\begin{tabular}{|c|c|}
\hline 
$\begin{array}{crccc}
(\boldsymbol{P}_{LP}) & \fmcut\\
\min & \sum_{e\in E} x_e \\
\text{s.t.} & \sum_{e\in \pi} x_e & \ge & 1 & \forall i\in [p],\pi\in\Pi_i\\
 & x_e & \ge & 0 & \forall  e\in E
\end{array}$  & $\begin{array}{crccc}
(\boldsymbol{D}_{LP}) & \mmflow\\
\max & \sum_{i\in [p]}\sum_{\pi\in \Pi_i} f_{\pi}\\
\text{s.t.} & \sum_{\pi:e\in\pi}f_{\pi} & \le & 1 & \forall e\in E\\
 & f_{\pi} & \ge & 0 &  \forall i\in [p],\pi\in \Pi_i 
\end{array}$\tabularnewline
\hline 
\end{tabular}
\par\end{center}

The dual program $\boldsymbol{D}_{LP}$ of this LP relaxation is equivalent to the maximum multicommodity flow problem.
In this problem, we get a directed graph $G$ and set of demand pairs $P$ on input, and for each demand pair $(s_i, t_i) \in P$ we choose a flow $f_i$ that has $s_i$ as its source, $t_i$ as its sink, and which is conserved at all other nodes.
Among the $p$ flows $\{f_1, \dots, f_p\}$, their \emph{total} must respect the edge capacity constraints; that is, the sum of flows on each edge must be $\le 1$.
The \emph{value} of flow $f_i$ is the amount of flow created at $s_i$ and destroyed at $t_i$.
Subject to these constraints, the goal is to maximize the sum of flow values.\footnote{The dual LP given here works by choosing a scalar for each $s_i \leadsto t_i$ \emph{path}, which implicitly defines flow values on the edges by the sum of the scalars on paths that contain that edge.}
We write $\mmflow(G, P)$ for this maximized sum of flow values.

For any $G, P$, we have:
$$\mcut(G, P) \ge \fmcut(G, P) = \mmflow(G, P).$$
The min-cut/max-flow theorem states that we have equality in the special case where $|P|=1$, but in general the functions can be quite different.
The largest possible gap between them is the \emph{flow-cut gap}, captured by the following function:
\begin{definition} [$\fcg$]
The function $\fcg(n)$ is the least integer $k$ such that, for every $n$-node directed graph $G$ and set of demand pairs $P$ (of any size), we have
$$\mcut(G, P) \le k \cdot \mmflow(G, P).$$
\end{definition}

Since $\fmcut(G, P) = \mmflow(G, P)$, we may equivalently interpret this function as the integrality gap of \mcut.
We refer to \cite{chuzhoy2009polynomial} for further discussion and proofs of all of the claims in the previous discussion.
Our goal is to prove:


\begin{theorem}
For all $n$, the flow-cut gap satisfies
$$\fcg(\beta(n, n, \infty)) = \Omega\left(\dfrac{\beta(n, n, \infty)}{n \log n}\right).$$
\label{thm:fcg}
\end{theorem}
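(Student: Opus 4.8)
The plan is to reproduce the lower bound construction of Chuzhoy and Khanna \cite{chuzhoy2009polynomial}, but to feed in an arbitrary extremal path system for $\beta(n,n,\infty)$ in place of the highly structured path system they build internally. First I would fix a path system $S = (V, \Pi)$ with $|V| = n$, $|\Pi| = n$, bridge girth $\infty$, and $\|S\| = \Theta(\beta(n, n, \infty))$. By Corollary \ref{cor:inftyacyclic} I may take $S$ acyclic, which gives a topological order on $V$ playing the role of the ``layers'' used by Chuzhoy and Khanna, and by the Cleaning Lemma (Lemma \ref{lem:cleaning}) I may assume every path has length $\Theta(\ell)$ and every node degree $\Theta(d)$, where $\ell = d = \Theta(\beta(n,n,\infty)/n)$; approximate regularity will substitute for the exact symmetry their construction enjoys for free.

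Next I would build the directed multicut instance $G, P$ following \cite{chuzhoy2009polynomial}: roughly, $G$ has $O(\|S\|) = O(\beta(n,n,\infty))$ vertices (this is the ``$n$'' in the definition of $\fcg$), with one vertex per incidence $(v,\pi)$, $v \in \pi$, a routing gadget at each node $v \in V$ linking the incidences $(v,\pi)$ over the paths $\pi \ni v$, and a terminal apparatus producing the demand set $P$; each demand has a \emph{canonical route} that realizes one path of $\Pi$ through the incidence vertices in topological order. The crux of why bridge girth $\infty$ is the right hypothesis is that it forces these canonical routes to be the only routes of their kind --- any deviation of a would-be route from a canonical one would trace out a bridge in $S$ --- which is exactly the structural property that Chuzhoy and Khanna extract from their bespoke system.

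The bulk of the work is then to re-establish their two quantitative estimates from these weaker hypotheses. For the cut lower bound I would show $\mcut(G, P) = \widetilde{\Omega}(\|S\|)$: since the canonical routes are forced, and their overlap pattern inherits the ``spread'' of a system of high bridge girth (no small set of nodes lies on many of them), no set of $o(\|S\|/\polylog(n))$ edges can meet a canonical route of every demand. For the flow upper bound I would show $\mmflow(G, P) = O(n \log n)$, by charging any feasible multicommodity flow against the $n$ node gadgets and proving that each gadget carries only $O(\log n)$ units of flow in total; bridge girth $\infty$ is used again here, to rule out the congestion-avoiding reroutings that would otherwise let flow bypass the gadgets. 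Dividing the two bounds yields $\fcg(\beta(n,n,\infty)) = \Omega(\beta(n,n,\infty)/(n\log n))$, after a routine constant-factor adjustment of parameters to absorb the hidden constants.

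I expect the flow upper bound to be the main obstacle. Chuzhoy and Khanna's proof that $\mmflow$ stays near $n$ leans on the layered, symmetric geometry of their system to localize congestion into the gadgets; recovering it from nothing but acyclicity, approximate regularity, and bridge girth $\infty$ is precisely the ``generalize the analysis'' step. Getting it to close --- in particular controlling, for every fractional flow rather than just integral ones, how flow can shuttle between the node gadgets along incidence vertices shared by several canonical routes --- is where the real care is needed; the cut lower bound should be comparatively routine once the canonical routes are pinned down.
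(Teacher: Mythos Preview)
Your sketch misidentifies both the construction and where the difficulty lies. The Chuzhoy--Khanna instance is not an incidence graph of $S$ with local routing gadgets; it is a \emph{product} of two pieces. The first piece $G_S$, built from the path system on $n$ nonterminal vertices, is where the black-box $\beta(n,n,\infty)$ system plugs in, and its only job is to force every source--sink path to be canonical (this is exactly where bridge girth $\infty$ is used). The second piece $H$ is a \emph{random} layered graph on $d \approx \beta(n,n,\infty)/n$ vertices, taken verbatim from \cite{chuzhoy2009polynomial}; its only job is to make the integral vertex cut large (Lemma~\ref{lem:big_multicut}: with high probability no $d/16$ vertices of $H$ disconnect a constant fraction of its terminal pairs). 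The final graph has nonterminal vertex set $V_H \times V_S$ of size $N = nd$. Your proposal has no analogue of $H$.

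This is why you have the two halves backwards. The fractional bound is the trivial half: every source--sink path in $G$ is canonical (via $G_S$) and therefore inherits length $\ge d' = d/(2\log d)$ from $H$'s layering, so assigning weight $1/d'$ to each nonterminal vertex is already a feasible fractional vertex cut of cost $N/d'$ --- no flow or congestion argument is needed. The integral bound $\Omega(N)$ is the hard half, and it comes from the probabilistic robustness of $H$ combined with a counting argument over product slices (Lemma~\ref{lem:large_part}, Claims~\ref{claim:J}--\ref{claim:J_i}), not from bridge girth. Your ``spread'' argument cannot deliver it: in your own incidence construction, if canonical routes are indeed the only routes (as you argue from bridge girth $\infty$), then removing one incidence vertex per path --- just $p = n$ vertices, not $\|S\|$ --- already separates every demand.
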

As in \cite{chuzhoy2009polynomial}, we will actually prove something slightly stronger: this gap holds even when the multicut solution only needs to disconnect a $(1 - \epsilon)$-fraction of the demand pairs, for some $\epsilon > 0$. 
Our proof of Theorem \ref{thm:fcg} will closely follow the previous directed multicut integrality gap lower bound construction of \cite{chuzhoy2009polynomial}.

\subsection{Directed Vertex Multicut}

We will begin by lower bounding an integrality gap for the directed \emph{vertex} multicut problem.
This problem is defined as \mcut, except instead of deleting edges to disconnect demand pairs, we delete non-terminal \emph{vertices}.\footnote{A terminal vertex is one that appears as either endpoint of a demand pair.  We assume the input is such that no demand pair $(s, t)$ has an edge going directly from $s$ to $t$, so that the cut exists.}
The least number of nonterminal vertices required to disconnect demand pairs $P$ in a graph $G$ is written $\vmcut(G, P)$.
We will also consider the LP relaxation $\vfmcut$, defined as follows:

\begin{center}
\begin{tabular}{|c|}
\hline 
$\vfmcut$ \\
$\begin{array}{crccc}
\min & \sum_{v\in V} x_v \\
\text{s.t.} & \sum_{v \in \pi \cap V} x_v & \ge & 1 & \forall i\in [p],\pi\in\Pi_i\\
 & x_v & \ge & 0 & \forall  v\in V
\end{array}$
\tabularnewline
\hline
\end{tabular}
\end{center}

As noted in \cite{chuzhoy2009polynomial}, the integrality gap of the vertex multicut problem is at most the integrality gap of the (edge) multicut problem.
In particular, we will use the following lemma:
\begin{lemma} 
\label{lem:vmcut_reduction}
[\cite{chuzhoy2009polynomial}]
Suppose we can construct an $n$-node graph $G$ and set of demand pairs $P$ with
$$\dfrac{\vmcut(G, P)}{\vfmcut(G, P)} \ge k.$$
Then $\fcg(n) \ge k$.
\end{lemma}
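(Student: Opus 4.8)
The plan is to use the standard \emph{vertex-splitting} reduction, which converts an integrality gap for directed vertex multicut into one for directed edge multicut. Given the instance $(G, P)$ guaranteed by the hypothesis, I would build an edge-multicut instance $(G', P')$ as follows: replace each vertex $v$ of $G$ by two vertices $v^{\mathrm{in}}, v^{\mathrm{out}}$ joined by a directed edge $e_v := (v^{\mathrm{in}}, v^{\mathrm{out}})$, replace each original edge $(u, v)$ by $(u^{\mathrm{out}}, v^{\mathrm{in}})$, and set $P' := \{(s^{\mathrm{out}}, t^{\mathrm{in}}) : (s, t) \in P\}$. Then $G'$ has $2|V(G)| \le 2n$ vertices, and every directed $s^{\mathrm{out}} \leadsto t^{\mathrm{in}}$ path in $G'$ arises from a directed $s \leadsto t$ path in $G$ by traversing the edge $e_v$ in place of each interior vertex $v$ and a transition edge $(u^{\mathrm{out}}, v^{\mathrm{in}})$ for each original edge $(u,v)$ used.

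To lower bound the flow-cut gap it suffices to establish the two inequalities $\mcut(G', P') \ge \vmcut(G, P)$ and $\fmcut(G', P') \le \vfmcut(G, P)$. The second is the easy direction: taking an optimal fractional vertex multicut $x^*$ of $(G, P)$ (with $x^*_v = 0$ on terminals) and setting $y_{e_v} := x^*_v$ and $y \equiv 0$ on all transition edges yields a feasible fractional edge multicut of $(G', P')$ of the same value, since each $s^{\mathrm{out}} \leadsto t^{\mathrm{in}}$ path of $G'$ contains exactly the edges $e_v$ for $v$ on the corresponding $G$-path. For the first inequality I would start from a \emph{minimal} edge multicut $F$ of $(G', P')$ and apply a ``pushing'' step: whenever $F$ contains a transition edge $(u^{\mathrm{out}}, v^{\mathrm{in}})$, every $G'$-path through it also uses both $e_u$ and $e_v$, so $F$ can be modified to use whichever of $e_u, e_v$ corresponds to a cuttable (non-terminal) vertex without increasing $|F|$ or destroying feasibility; after doing this $F \subseteq \{e_v : v \text{ non-terminal}\}$, and $\{v : e_v \in F\}$ is a vertex multicut of $(G, P)$ of size $|F|$. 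Combining these bounds with LP duality $\fmcut(G', P') = \mmflow(G', P')$ gives
$$\frac{\mcut(G', P')}{\mmflow(G', P')} \;\ge\; \frac{\vmcut(G, P)}{\fmcut(G', P')} \;\ge\; \frac{\vmcut(G, P)}{\vfmcut(G, P)} \;\ge\; k,$$
so $\fcg(|V(G')|) \ge k$, i.e.\ $\fcg(2n) \ge k$. The harmless factor of $2$ in the vertex count is absorbed in our applications, where $n$ ranges over an infinite family and only polynomial bounds are sought, which is why the lemma is stated as $\fcg(n) \ge k$.

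The step I expect to be the main obstacle is exactly the pushing argument when cut edges touch terminal vertices: one must ensure that a minimal cut never relies on a transition edge both of whose endpoints are terminals, and that the swap always has a non-terminal vertex available to push onto. This is where the lemma's standing assumption that no demand pair $(s,t)$ has a direct edge $s \to t$ is used, together with the (implicit) assumption that the vertex multicut instance is feasible, i.e.\ every $s \leadsto t$ path has a cuttable interior vertex. This bookkeeping is carried out in \cite{chuzhoy2009polynomial}, and I would reproduce it; the rest of the reduction is routine.
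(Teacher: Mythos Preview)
The paper does not actually prove this lemma; it is simply attributed to \cite{chuzhoy2009polynomial} and quoted without argument. Your proposal supplies the standard vertex-splitting reduction, which is indeed the argument used in \cite{chuzhoy2009polynomial}, and your identification of the delicate step (pushing cut edges off of transition edges when both endpoints are terminals) is accurate. One small point: your reduction produces a graph on $2n$ vertices, so what you literally prove is $\fcg(2n)\ge k$; since $\fcg$ is nondecreasing this does not directly yield $\fcg(n)\ge k$, but as you note the factor of $2$ is immaterial for the polynomial lower bounds the paper targets, and the lemma as stated is implicitly absorbing this constant.
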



Thus, our lower bound on $\fcg$ will work by lower bounding the integrality gap for $\vmcut$.
The authors of \cite{chuzhoy2009polynomial} used the same strategy.
Specifically, they constructed a directed graph $G$ and a set of demand pairs $P$ such that: 
\begin{enumerate}
\item\label{property:pathlength} For every pair $(s, t) \in P$,  the shortest $s \leadsto t$ path in $G$ is of length at least $L = \widetilde{\Omega}(n^{1/7})$.
    \item\label{property:cut} A vertex cut of size $\Omega(n)$ is required to separate the set of pairs $P$.
\end{enumerate}
By property~\ref{property:pathlength}, if we assign a $1/L$-fraction cut to each vertex in $G$, then we obtain a valid fractional vertex cut of $G$ of size $\widetilde{O}(n^{6/7})$. Together with property~\ref{property:cut} this implies an integrality gap of $\widetilde{\Omega}(n^{1/7})$. The authors of \cite{chuzhoy2009polynomial} construct the graph $G$ by combining a certain random graph $H$ and a graph $\mathcal{L}$ called a labeling scheme, which carries a set of demand pairs with long, unique paths.
Our proof of Theorem \ref{thm:fcg} largely follows their construction and analysis, but we generalize some piece of the argument to show that the particular properties of the graph $\mathcal{L}$ are not really needed, and instead we can use an \emph{arbitrary} bridge-free path system with $n$ nodes and $n$ paths that realizes the bound $\beta(n, n, \infty)$ bound. 


\subsection{Multicut Flow-Cut Gap Construction}
\label{sec:multicut_const}

Let $n$ be a parameter.
Let $S = (V, \Pi)$ be a path system with $n$ nodes, $p=n$ paths, bridge girth $\infty$, and size $\Omega(\beta(n, n, \infty))$.
By the cleaning lemma (Lemma \ref{lem:cleaning}) we may assume without loss of generality that $S$ is approximately degree-regular and length-regular. Specifically, we may assume every node has degree $\Theta(d)$, where $d$ is the average degree of $S$.
Since $p\ell = nd$ and $p=n$, $d$ is also the average path length in $S$; we thus also have that all paths have length $\Theta(d)$.
Note that $d = \omega(1)$, by applying known lower bounds on $\beta(n, n, \infty)$.

Our next step is to use $S$ to build a corresponding graph $G_S$ that will inherit the long, unique paths property of $S$.
We will use graph $G_S$ as a black box version of the labeling scheme $\mathcal{L}$ in \cite{chuzhoy2009polynomial}.

\begin{lemma} 
The set of paths $\Pi$ can be partitioned into $d$ nonempty sets $\Pi_1, \Pi_2, \dots, \Pi_{d}$ such that $|\Pi_i| = n/d$ for $i \in [1, d]$ and at least $\Omega(d)$  sets $\Pi_i$ satisfy the property that\footnote{We assume for convenience that $n$ is divisible by $d$.  If not, some part sizes may be rounded up or down while only affecting the following argument by lower-order terms.} 
\[\left|\bigcup_{\pi \in \Pi_i} \pi \right| = \Omega(n). \]
\label{lem:large_part}
\end{lemma}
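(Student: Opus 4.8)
The plan is to build the partition by a direct probabilistic argument that, somewhat surprisingly, uses only the degree-regularity of $S$ and not the bridge-girth hypothesis. First I would partition the $p = n$ paths of $\Pi$ uniformly at random into $d$ parts $\Pi_1, \dots, \Pi_d$, each of size exactly $n/d$ (divisibility is assumed as in the footnote, otherwise part sizes are rounded and this changes the counts only by lower-order terms). Since $d = o(n)$ — indeed $d = \Theta(\beta(n,n,\infty)/n)$ and $\beta(n,n,\infty) = o(n^2)$ — each part has size $n/d \ge 1$ and is therefore nonempty, which already gives two of the three required properties.

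The core of the argument is a per-node coverage estimate for a fixed part $\Pi_i$. Fix a node $v$; by degree-regularity $\deg(v) = \Theta(d)$, so $\deg(v) \ge c_1 d$ for a universal constant $c_1 > 0$. The probability that no path of $\Pi_i$ contains $v$ is the hypergeometric quantity $\binom{n - \deg(v)}{n/d}\big/\binom{n}{n/d} = \prod_{j=0}^{\deg(v)-1}\frac{n - n/d - j}{n - j} \le (1 - 1/d)^{\deg(v)} \le (1 - 1/d)^{c_1 d}$, where the first inequality bounds each factor by $1 - 1/d$ (as $n - j \le n$). Since $d = \omega(1)$ by the known lower bounds on $\beta(n,n,\infty)$, the quantity $(1-1/d)^{c_1 d}$ is bounded above by a universal constant $c_2 < 1$ once $d$ exceeds a suitable absolute constant. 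Hence $\Pr[v \in \bigcup_{\pi \in \Pi_i}\pi] \ge c := 1 - c_2 > 0$, and summing over all $n$ nodes gives $\mathbb{E}[\,|\bigcup_{\pi \in \Pi_i}\pi|\,] \ge cn$ for every index $i$.

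Summing this bound over the $d$ parts by linearity of expectation yields $\mathbb{E}\!\left[\sum_{i=1}^{d} |\bigcup_{\pi \in \Pi_i}\pi|\right] \ge cnd$, so there exists a fixed partition with $\sum_{i=1}^{d} |\bigcup_{\pi \in \Pi_i}\pi| \ge cnd$. For this partition, call a part $\Pi_i$ \emph{good} if $|\bigcup_{\pi \in \Pi_i}\pi| \ge (c/2)n$. The bad parts contribute at most $d \cdot (c/2)n = (c/2)nd$ to the sum, so the good parts contribute more than $(c/2)nd$; since each part contributes at most $n$, there must be more than $(c/2)d = \Omega(d)$ good parts. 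This is exactly the claimed statement, with each good part satisfying $|\bigcup_{\pi \in \Pi_i}\pi| = \Omega(n)$.

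I do not expect a real obstacle: the argument is elementary once it is set up this way. The only points requiring care are the hypergeometric tail estimate $\binom{n-\deg(v)}{n/d}/\binom{n}{n/d} \le (1-1/d)^{\deg(v)}$ (immediate from the product formula, using also that $n - \deg(v) \ge n/d$ for large $n$ since $d = o(n)$, with the degenerate case $\deg(v) > n - n/d$ only helping), and verifying that $d$ is above the constant threshold that makes $(1-1/d)^{c_1 d}$ bounded away from $1$, which is guaranteed by $d = \omega(1)$. Keeping the final step as a deterministic counting over the $d$ parts — rather than invoking a concentration inequality — lets all the constants remain explicit and avoids any dependence between the parts of a single random partition.
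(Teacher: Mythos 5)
Your proof is correct and follows essentially the same approach as the paper's: a uniformly random equitable partition, a per-node coverage estimate showing each node lands in a fixed part with constant probability (via degree-regularity), linearity of expectation to get $\mathbb{E}\bigl[\,|\bigcup_{\pi\in\Pi_i}\pi|\,\bigr] = \Omega(n)$, and a counting step to extract $\Omega(d)$ good parts. The only cosmetic difference is in the last step, where you sum the expected union sizes across parts and count deterministically, whereas the paper applies a reverse-Markov argument per part and then linearity over the number of good parts; both routes are equivalent, and your explicit hypergeometric bound just spells out what the paper leaves as ``constant probability.''
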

\begin{proof}

Uniformly at random, partition $\Pi$ into $d$ sets $\Pi_i$ for $i \in [1, d]$, each containing $n/d$ paths.
Fix a node $v \in V$ and an index $i \in [1, d]$.
Note that the probability a node $v$ belongs to a randomly chosen path in $\Pi$ is at least $\Omega(d/n)$, since $S$ is approximately degree-regular.
Then $v$ belongs to a path in $\Pi_i$ with constant probability.
It follows that
$\mathbb{E}\left[ |\Pi_i| \right] = \Omega(n).$
Since the maximum possible size is $|\Pi_i| \le n$, by Markov's inequality we have $|\Pi_i| = \Omega(n)$ with constant probability.
Thus, over all choices of $i \in [d]$, the expected number of parts $\Pi_i$ satisfying $|\Pi_i| = \Omega(n)$ is $\Omega(d)$.
So there exists a possible partition in which $\Omega(d)$ parts all satisfy $|\Pi_i| = \Omega(n)$.
%
\end{proof}

\paragraph{Construction of $G_S$.}

We build our graph $G_S = (V_S \cup V'_S, E_S)$ corresponding to path system $S$ as follows.
Let $V_S$ denote the set of nonterminal nodes of $G_S$, and let $V_S := V$.  Let $V'_S$ denote the set of terminal nodes $s_{i, j}, t_{i, j}$, where $i \in [1, d]$ and $j \in [1, n / d]$.  Add all terminal pairs $(s_{i, j}, t_{i, j})$ to our set of demand pairs $P_S$. 
Let $\Pi_1, \Pi_2, \dots, \Pi_{d}$ be the partition of $\Pi$ as specified in Lemma \ref{lem:large_part}.  For every path $\pi \in \Pi_i$, if edge $e$ is in the transitive closure of path $\pi$, then add $e$ to edge set $E_i^S$. 
Here, we say that $e$ is in the transitive closure of $\pi$ if $\pi$ contains $e$ as a (possibly noncontiguous) subsequence. 
Additionally, for $i \in [1, d]$, order the $n/d$ paths in $\Pi_i$ arbitrarily, and let $\pi_{i, j}$ denote the $j$th path in $\Pi_i$, for $i \in [1, d]$ and $j \in [1, n/d]$. 
Add to edge set $E_i^S$ an edge from $s_{i, j}$ to the first vertex of $\pi_{i, j}$. Likewise, add to edge set $E_i^S$ an edge from the last vertex of $\pi_{i, j}$ to  $t_{i, j}$. 
We define the edge set $E_S$ of $G_S$ to be $E_S = \cup_i E_i^S$.  We refer to the edges of $G_S$ belonging to $E_i^S$ as edges of type $i$. We say that an $s_{i, j} \leadsto t_{i, j}$ path is canonical if it is composed exclusively of edges of type $i$. The following properties of $G_S$ follow immediately from  our choice of path system $S$.

\begin{observation}
\label{obs:G_S_props}
    Graph $G_S$ has the following properties:
    \begin{enumerate}
        \item Every $s_{i, j} \leadsto t_{i, j}$ path in $G_S$ is a canonical path, for all $(s_{i, j}, t_{i, j}) \in P_S$.
        \item There exists an $s_{i, j} \leadsto t_{i, j}$ path of length at least $\Omega(d)$ in $G_S$, for all $(s_{i, j}, t_{i, j}) \in P_S$.
    \end{enumerate}
\end{observation}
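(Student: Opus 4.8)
The plan is to prove the two parts separately, deriving each from bridge-freeness of $S$. First I would reduce to the acyclic case: by Corollary~\ref{cor:inftyacyclic} we may assume $S$ is acyclic, fix a topological order $\sigma$ of $V$ consistent with every path of $S$, and note that $G_S$ is then a DAG (put all sources $s_{i,j}$ first, then $V$ in $\sigma$-order, then all sinks $t_{i,j}$). The only structural fact I need about terminals is that $s_{i,j}$ has a unique out-edge, to the first vertex $a$ of $\pi_{i,j}$, and $t_{i,j}$ has a unique in-edge, from the last vertex $b$ of $\pi_{i,j}$, both of type $i$; hence every $s_{i,j}\leadsto t_{i,j}$ path in $G_S$ has the shape $s_{i,j}, a, \dots, b, t_{i,j}$ with all interior vertices in $V$. (By Lemma~\ref{lem:cleaning} we may assume $S$ is length-regular, so $|\pi_{i,j}| = \Theta(d) = \omega(1)$ using the known lower bounds on $\beta(n,n,\infty)$ from Corollary~\ref{cor:binftylb}; in particular $a \ne b$.)

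Property~2 is then immediate: walking along $\pi_{i,j}$ through its consecutive vertices from $a$ to $b$ uses only edges in the transitive closure of $\pi_{i,j}\in\Pi_i$, i.e.\ only type-$i$ edges, and prepending $s_{i,j}\to a$ and appending $b\to t_{i,j}$ gives a canonical $s_{i,j}\leadsto t_{i,j}$ path of length $|\pi_{i,j}|+1 = \Omega(d)$.

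For Property~1 the heart of the matter is the claim: \emph{if $\pi\in\Pi_i$ and $u <_\pi v$, then every directed $u\leadsto v$ path in the subgraph of $G_S$ induced on $V$ consists only of type-$i$ edges.} Property~1 follows by applying this with $(u,v,\pi)=(a,b,\pi_{i,j})$ together with the two type-$i$ terminal edges. I would prove the claim by strong induction on the number $m$ of edges of a $u\leadsto v$ path $w_0=u, w_1, \dots, w_m=v$. The base case $m=1$ is clear since $(u,v)$ lies in the transitive closure of $\pi$. For $m\ge 2$: since $S$ is $2$-bridge-free, each edge $(w_{\ell-1},w_\ell)$ has a \emph{unique} witness path $\tau_\ell\in\Pi$ with $w_{\ell-1}<_{\tau_\ell} w_\ell$. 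Were $\tau_1,\dots,\tau_m,\pi$ pairwise distinct, then $(w_0,\dots,w_m)$ with arcs $\tau_1,\dots,\tau_m$ and river $\pi$ would be an $(m+1)$-bridge, contradicting bridge girth $\infty$. So two of $\tau_1,\dots,\tau_m,\pi$ coincide, and I would split into cases. If $\tau_\ell=\pi$ for some $\ell$, then $(w_{\ell-1},w_\ell)$ is type $i$, and since $w_{\ell-1},w_\ell\in\pi$ and $\sigma$ is consistent with $\pi$, the sub-paths $w_0,\dots,w_{\ell-1}$ and $w_\ell,\dots,w_m$ are shorter $u\leadsto w_{\ell-1}$ and $w_\ell\leadsto v$ paths whose endpoints lie on $\pi$ in the correct order, so induction applies. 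Otherwise $\tau_\ell=\tau_{\ell'}$ for some $\ell<\ell'$ with no $\tau$ equal to $\pi$; acyclicity then forces $w_{\ell-1}<_{\tau_\ell}w_{\ell'}$, so $(w_{\ell-1},w_{\ell'})$ is a genuine edge of $G_S$ and the ``shortcut'' path $w_0,\dots,w_{\ell-1},w_{\ell'},\dots,w_m$ is a strictly shorter $u\leadsto v$ path. By induction all its edges are type $i$, in particular $(w_{\ell-1},w_{\ell'})$, which by uniqueness of witnesses forces $\tau_\ell\in\Pi_i$; then the sub-path $w_{\ell-1},\dots,w_{\ell'}$ is a strictly shorter $w_{\ell-1}\leadsto w_{\ell'}$ path witnessed by $\tau_\ell\in\Pi_i$, to which induction also applies. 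Combining the two inductive conclusions shows every edge of $w_0,\dots,w_m$ is type $i$.

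The main obstacle is exactly this last induction: a non-canonical $s_{i,j}\leadsto t_{i,j}$ path \emph{morally} yields a forbidden bridge, but turning it into an honest bridge requires the witness paths and the river to be distinct, which an arbitrary path need not provide. The resolution is the bookkeeping above: collapsing repeated witnesses via transitive-closure shortcuts (where acyclicity is used essentially) and peeling off the river whenever it appears among the arcs, each step reducing to strictly shorter sub-paths. The remaining ingredients -- the terminal-edge structure, Property~2, and the base case -- are routine.
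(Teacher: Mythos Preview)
Your proof is correct and considerably more careful than the paper's own treatment, which simply asserts that the two properties ``follow immediately from our choice of path system $S$'' without further argument. Your induction is the natural way to make Property~1 rigorous, correctly handling the subtlety that the witness paths $\tau_\ell$ along an $a\leadsto b$ walk need not be pairwise distinct (so one cannot directly read off a bridge).

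One minor wrinkle worth noting: in Case~2 of your induction, when $\ell=1$ and $\ell'=m$, the ``sub-path'' $w_{\ell-1},\dots,w_{\ell'}$ is the original path of length $m$, so your second inductive call would be circular as written. But this boundary case is in fact vacuous: the shortcut edge is then $(w_0,w_m)=(u,v)$, and since both $\tau_1$ and $\pi$ witness it, uniqueness of witnesses forces $\tau_1=\pi$, contradicting the Case~2 premise that no $\tau$ equals $\pi$. With this one-line observation added, the induction is complete.
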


Recall that graph $G_S$ is intended to replace the \textit{labeling scheme} graph $\mathcal{L}$ in the argument of \cite{chuzhoy2009polynomial}. 
Our final graph $G$ will be the product of graph $G_S$ and an additional graph $H$ that we construct next. Roughly, the properties of $G_S$ summarized above will ensure that our final graph $G$ has long shortest paths between all demand pairs, and therefore has a small fractional vertex multicut. The graph $H$ will be a well-connected random graph, which will roughly ensure that our final graph $G$ has a large minimum vertex multicut of its demand pairs. These two properties together will ensure that $G$ has a large integrality gap between its fractional and integral vertex multicut. The graph $H$ will be essentially identical to the graph $H$ given  in \cite{chuzhoy2009polynomial}, but with different construction parameters.

\paragraph{Construction of $H$ (c.f. Section 3.1.2 of \cite{chuzhoy2009polynomial}).}
We build $H = (V_H \cup V_H', E_H)$ as follows. Let $V_H$ denote the set of nonterminal nodes of $H$, and let  $V_H := \{v_1, \dots, v_d\}$. Additionally, graph $H$ will have $d$ distinct pairs of terminal nodes $P_H := \{ (s_i, t_i) \mid i \in [1, d]\}$ as demand pairs, with $V_H'$ denoting the set of all terminal nodes.  Graph $H$ will be defined as the union of graphs $H_i = (V \cup \{s_i, t_i\}, E_i^H)$ for $i \in [1, d]$. 

We construct graph $H_i$ for $i \in [1, d]$ as follows. Graph $H_i$ has terminal nodes $s_i, t_i$ and will contain $d' := d /(2 \log d)$ layers each containing $\log d$ nonterminal nodes. We denote the layers as $L^1_i, L^2_i, \dots, L^{d'}_i \subseteq V_H$ and construct them sequentially as follows. To construct the $j$th layer $L^j_i$ for $j \in [1, d']$, select uniformly at random $\log d$ distinct nodes from $V_H \setminus (L^1_i \cup L^2_i \cup \dots \cup L^{j-1}_i)$. Note that by construction, $| \cup_j L^j_i | \leq d / 2$. Now  define the set of edges $E_i^H$ of $H_i$ as follows. Add an edge from $s_i$ to every vertex in $L^1_i$. Likewise, add an edge from every vertex in $L^{d'}_i$ to $t_i$. Finally, add an edge from every vertex in layer $L^j_i$ to every vertex in layer $L^{j+1}_i$ for $j \in [1, d' - 1]$. This concludes the construction of $H_i$. We define $H$ to be $\cup_{i} H_i$. Likewise, we define $E_H = \cup_{i} E_i^H$ to be the edge set of $H$. We refer to the edges in $H$ belonging to $E_i^H$ as edges of type $i$. We say that an $s_i \leadsto t_i$ path in $H$ is a canonical path if it contains only edges of type $i$. The following properties of graph $H$ will be used in the analysis. 
\begin{observation}
    Graph $H$ has the following properties:
    \label{obs:H_props}
\begin{enumerate}
    \item Every canonical $s_i \leadsto t_i$ path in $H$ contains at least $d'$ nonterminal nodes. 
    \item $\Omega(d)$ nonterminal nodes must be removed from $H$ to disconnect a constant fraction of demand pairs  in $P_H$. 
\end{enumerate}
\end{observation}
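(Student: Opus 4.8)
The plan is to prove the two parts separately. Part (1) is a direct structural consequence of the layered construction of $H$, while part (2) is a probabilistic argument — run during the (randomized) construction of $H$ — showing that the random layering makes the demand pairs robust to vertex deletions, in the spirit of \cite{chuzhoy2009polynomial} but with our parameters (layer width $\log d$ and $d' = d/(2\log d)$ layers).

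For part (1), fix a demand pair $(s_i, t_i) \in P_H$ and a canonical $s_i \leadsto t_i$ path $\pi$, i.e., one using only type-$i$ edges. By construction the type-$i$ edges form a layered graph: the only edges leaving $s_i$ go to $L^1_i$, the only edges entering $t_i$ come from $L^{d'}_i$, and for $1 \le j \le d'-1$ the type-$i$ edges between $L^j_i$ and $L^{j+1}_i$ all point forward; moreover the layers $L^1_i, \dots, L^{d'}_i$ are pairwise disjoint since $L^j_i$ is drawn from $V_H \setminus (L^1_i \cup \dots \cup L^{j-1}_i)$. Hence $\pi$ must traverse the layers in the order $s_i, L^1_i, L^2_i, \dots, L^{d'}_i, t_i$, visiting at least one nonterminal node in each of the $d'$ layers, so $\pi$ contains at least $d'$ nonterminal nodes.

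For part (2), we show that with probability $1 - o(1)$ over the construction of $H$, no set $X \subseteq V_H$ of size at most $\delta d$ (for a small absolute constant $\delta$) disconnects more than a $c$-fraction of the pairs in $P_H$, for an appropriate constant $c > 0$; we then fix any such realization of $H$. Condition on an arbitrary $X$ with $|X| \le \delta d$. For a fixed index $i$, the pair $(s_i,t_i)$ remains connected in $H \setminus X$ as soon as some layer $L^j_i$ is not entirely contained in $X$, since then a canonical path through one surviving node per layer still exists. Now $L^j_i$ is a uniformly random $\log d$-subset of a pool $V_H \setminus (L^1_i \cup \dots \cup L^{j-1}_i)$ of size at least $d/2$ (as $\sum_j |L^j_i| = d/2$), so
\[
\Pr\big[L^j_i \subseteq X\big] \;\le\; \binom{|X|}{\log d}\Big/\binom{d/2}{\log d} \;\le\; (3\delta)^{\log d}
\]
for $d$ large enough; choosing $\delta$ a sufficiently small constant makes this at most $d^{-5}$, and a union bound over the $d' \le d$ layers gives $\Pr[(s_i,t_i)\text{ disconnected by }X] \le d^{-4}$. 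The subgraphs $H_1, \dots, H_d$ are constructed independently, so these events are mutually independent over $i$, and the number of disconnected pairs is stochastically dominated by $\mathrm{Bin}(d, d^{-4})$. Therefore
\[
\Pr\big[X\text{ disconnects} \ge cd\text{ pairs}\big] \;\le\; \binom{d}{cd}\,(d^{-4})^{cd} \;\le\; 2^{d}\, d^{-4cd}.
\]
Taking a union bound over all $\binom{d}{\delta d} \le 2^{d}$ candidate sets $X$, the probability that some set of size $\le \delta d$ disconnects $\ge cd$ pairs is at most $2^{2d} d^{-4cd} = o(1)$ for $d$ large. Hence a good realization of $H$ exists, in which disconnecting any constant fraction of $P_H$ requires removing $\Omega(d)$ nonterminal nodes; since $|V_H| = d$ this is tight.

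The main obstacle is the quantitative balance in part (2): we need the per-pair disconnection probability for a \emph{fixed} cut set $X$ to be a small enough inverse-polynomial in $d$ that, after the Chernoff-type bound over the $d$ independent trials and the $2^{\Theta(d)}$-size union bound over all candidate sets $X$, the total failure probability still vanishes. This is exactly what forces the layer width to be $\Theta(\log d)$ and the constant $\delta$ to be chosen small; one also has to absorb the mild dependence from sampling layers without replacement, which is handled by the $\ge d/2$ lower bound on the residual pool together with the $3\delta$ slack above. The remaining details are routine bookkeeping inherited from \cite{chuzhoy2009polynomial}.
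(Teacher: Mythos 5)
Your proof is correct and follows essentially the same route as the paper: part (1) is immediate from the disjoint, layered structure of $H_i$, and part (2) is a reprise of the paper's Lemma~\ref{lem:big_multicut} (proved in Appendix~\ref{app:big_multicut}), which is itself inherited from Chuzhoy--Khanna. Your version is in fact slightly tidier on the union bound over candidate cut sets $X$, which the paper's appendix leaves implicit. One small slip to fix: the sentence ``the pair $(s_i,t_i)$ remains connected \dots\ as soon as \emph{some} layer $L^j_i$ is not entirely contained in $X$'' should read ``as long as \emph{no} layer $L^j_i$ is entirely contained in $X$'' --- connectivity requires a survivor in \emph{every} layer, and disconnection is the event $\exists j : L^j_i \subseteq X$. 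Your probability computation (union bound over $j$ on $\Pr[L^j_i \subseteq X]$) already uses the correct logic, so only the prose needs correcting.
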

Property~\ref{property:pathlength} is immediate from construction, and property~\ref{property:cut} holds for $H$ with high probability, as proven in \cite{chuzhoy2009polynomial} and proven in a slightly stronger form below.
\begin{lemma}[c.f. Lemma $3.1$ of \cite{chuzhoy2009polynomial}]
\label{lem:big_multicut} Fix an $\varepsilon > 0$. For sufficiently large $d$, the following holds with probability $\ge 1-2^{-d}$:

There does not exist a set $\mathcal{S} \subseteq V_H$ of nonterminal nodes of $H$, of size $|\mathcal{S}| \leq d/16$, such that for more than $(1-\varepsilon)d$ distinct indices $i \in [1, d]$, removing $\mathcal{S}$ disconnects the demand pair $(s_i, t_i) \in P_H$ in graph $H_i$.
\end{lemma}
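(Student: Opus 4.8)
The plan is to follow the argument for Lemma~3.1 of \cite{chuzhoy2009polynomial}, isolating one structural fact about the graphs $H_i$ and then running a union bound over all candidate cuts. The structural fact is that each $H_i$ is a layered DAG in which $s_i$ is joined to every vertex of $L^1_i$, consecutive layers $L^j_i, L^{j+1}_i$ induce a complete bipartite graph, and every vertex of $L^{d'}_i$ is joined to $t_i$. Hence every $s_i \leadsto t_i$ path in $H_i$ uses exactly one vertex from each layer, and conversely any choice of one surviving vertex per layer extends to such a path. Therefore, for a set $\mathcal{S} \subseteq V_H$ of nonterminal vertices, deleting $\mathcal{S}$ disconnects $(s_i, t_i)$ in $H_i$ \emph{if and only if} $L^j_i \subseteq \mathcal{S}$ for some $j \in [1, d']$. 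This observation is immediate from the construction, and it reduces the whole lemma to a probabilistic estimate about random subsets.

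With this in hand, I would fix a set $\mathcal{S} \subseteq V_H$ with $|\mathcal{S}| \le d/16$ and, for each $i \in [1, d]$, let $A_i(\mathcal{S})$ be the event that deleting $\mathcal{S}$ disconnects $(s_i, t_i)$ in $H_i$. By construction each layer $L^j_i$ is a uniformly random $(\log d)$-element subset of the set of vertices of $V_H$ not yet used by the earlier layers $L^1_i, \dots, L^{j-1}_i$, a set of size at least $d - d'\log d \ge d/2$. Hence $\Pr[L^j_i \subseteq \mathcal{S}] \le \big((d/16)/(d/2)\big)^{\log d} = (1/8)^{\log d} = d^{-3}$, and a union bound over the $d' = d/(2\log d)$ layers gives $\Pr[A_i(\mathcal{S})] \le d' \cdot d^{-3} \le d^{-2}$. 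Since the random choices defining distinct $H_i$'s use independent randomness, the events $\{A_i(\mathcal{S})\}_{i \in [d]}$ are mutually independent, so the probability that $\mathcal{S}$ disconnects more than $(1-\varepsilon)d$ of the pairs is at most $\binom{d}{\lceil (1-\varepsilon)d\rceil} (d^{-2})^{(1-\varepsilon)d} \le 2^d d^{-2(1-\varepsilon)d}$. A final union bound over the at most $2^d$ sets $\mathcal{S}$ of size at most $d/16$ then yields a total failure probability of at most $4^d d^{-2(1-\varepsilon)d}$, which drops below $2^{-d}$ as soon as $2(1-\varepsilon)\log_2 d \ge 3$, i.e.\ for all $d$ sufficiently large in terms of $\varepsilon$.

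The main (and essentially only) obstacle is the exponent bookkeeping: the argument works precisely because the per-pair disconnection probability $\Pr[A_i(\mathcal{S})]$ is at most $1/\poly(d)$, so that $q^{(1-\varepsilon)d}$ with $q \le d^{-2}$ overwhelms both the $\binom{d}{(1-\varepsilon)d} \le 2^d$ ways of choosing which pairs are cut and the $\le 2^d$ ways of choosing $\mathcal{S}$. This is why the hypothesis restricts to cuts of size $\le d/16$ (a larger fraction such as $d/4$ would only give $q = \Theta(1/\log d)$, which is too weak), and why each $H_i$ is built with $d' = d/(2\log d)$ layers so that the ground set for each layer never shrinks below $d/2$. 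Given these design choices, the remaining content — the characterization of vertex cuts in a chain of complete bipartite graphs, and the independence/union-bound estimate — is routine, and mirrors the proof of Lemma~3.1 of \cite{chuzhoy2009polynomial}.
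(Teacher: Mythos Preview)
The proposal is correct and follows essentially the same approach as the paper's proof (which in turn follows Chuzhoy--Khanna): characterize disconnection in $H_i$ as some layer $L^j_i$ being fully contained in $\mathcal{S}$, bound $\Pr[L^j_i\subseteq\mathcal{S}]\le(1/8)^{\log d}=d^{-3}$, union-bound over the $d'$ layers to get $\Pr[A_i(\mathcal{S})]\le d^{-2}$, then use independence across $i$ and a final union bound over cuts. Your write-up is in fact slightly more explicit than the paper's appendix, since you carry out the union bound over the at most $2^d$ candidate sets $\mathcal{S}$ rather than leaving it implicit.
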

\begin{proof}
We defer the proof of this lemma to Appendix \ref{app:big_multicut} since it follows from the same argument as Lemma 3.1 of \cite{chuzhoy2009polynomial}.
\end{proof}

Note that $d$ is sufficiently large, since we assumed that $d  = \omega(1)$. Then we may assume that property~\ref{property:cut} of $H$, as formalized in Lemma \ref{lem:big_multicut}, holds for the specific graph $H$ we will use in our construction of $G$. 
Note that while all canonical $s_i \leadsto t_i$ paths in $H$ are of length at least $d'$ by property~\ref{property:pathlength} of $H$, this is not true in general for all $s_i \leadsto t_i$ paths in $H$. We will see that composing graph $H$ with graph $G_S$ will allow us to ensure that the shortest paths between all demand pairs are of length  $d'$ in the final graph $G$. 

\paragraph{Construction of $G$.}
 We now construct our final graph $G = (V_G \cup V_G', E)$ by composing $G_S$ and $H$ in a natural way. Let graphs $G_S = (V_S \cup V'_S, E_S)$ and $H = (V_H \cup V'_H, E_H)$ be as defined previously. Let $V_G := V_H \times V_S$ be the set of nonterminal vertices of $G$.  We let $V_G'$ denote the set of terminal vertices of $G$ and let $V'_G := V'_S$. Likewise, we let $P$ denote the set of $n$ demand pairs in $G$ and let $P := P_S$. 

 The set of edges $E$ of $G$ are defined as follows.  Let $(x, x')$ and  $(y, y')$ be nonterminal vertices in $V_G$.
 \begin{itemize}
 \item We add edge $((x, x'), (y, y'))$ to set $E_i$ if $(x, y) \in E_i^H$ and $(x', y') \in E_i^S$. 
 
 \item For $s_{i, j} \in V_G'$ and $(x, x') \in V_G$ we add edge  $(s_{i, j}, (x, x'))$ to $E_i$ if $(s_i, x) \in E_i^H$ and $(s_{i, j}, x') \in E_i^S$.
 
 \item Finally, we add edge $((x, x'), t_{i, j})$ to $E_{i}$ if $(x, t_i) \in E_i^H$ and $(x', t_{i, j}) \in E_i^S$. We let $E = \cup_{i}E_i$.
 \end{itemize}
 As with $G_S$ and $H$, we refer to the edges in $E_i$ as edges of type $i$, and we say an $s_{i, j} \leadsto t_{i, j}$ path in $G$ is canonical if its composed of only type $i$ edges. 
This completes the construction of $G$.

The number of nonterminal vertices in graph $G$ is $N := |V_H| \cdot |V_S| = dn = \beta(n, p, \infty)$. Then the value of $d'$ is 
\[
d' = \frac{d}{2 \log d} \geq \frac{N}{2n\log N}  = \Omega\left(  
\frac{\beta(n, p, \infty)}{n \log n}\right).
\]

In the following section, we will show that the  multicut integrality gap of $G$ is at least $d'$, which will complete our lower bound. 

\subsection{Integrality Gap Analysis of $G$}
We now analyze the gap between the fractional cost of a vertex multicut of $G, P$ and the cost of an integral vertex multicut of $G, P$. Our analysis largely follows that of \cite{chuzhoy2009polynomial}.

\paragraph{Fractional solution.} 
Assign a fractional cut of $1/d'$ to each nonterminal node in $G$. We will show  that this is a valid fractional vertex multicut of $G$ of size $N/d'$. It is clear that the size of this prospective cut is $N/d'$ as desired. The validity of this fractional cut will be immediate from the following claim.

\begin{claim}
For all $(s_{i, j}, t_{i, j}) \in P$, any $s_{i, j} \leadsto t_{i, j}$ path in $G$ contains at least $d'$ nonterminal nodes.
\label{claim:long_paths_in_G}
\end{claim}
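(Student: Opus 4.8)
The plan is to leverage the product structure of $G$: a nonterminal vertex of $G$ is a pair whose second coordinate lives in $G_S$ and whose first coordinate lives in $H$, and every type-$i$ edge of $G$ projects to a type-$i$ edge of $G_S$ (on second coordinates) and to a type-$i$ edge of $H$ (on first coordinates). I would first record that, by Corollary~\ref{cor:inftyacyclic}, we may take the path system $S$ to be acyclic (the cleaning lemma only deletes nodes/paths, so acyclicity is preserved), hence $G_S$ is a DAG: order its vertices as (the $s$-terminals, then $V$ in a topological order of $S$, then the $t$-terminals), and note that every edge of $G_S$ respects this order. The crucial consequence is that \emph{every} edge of $G$ strictly advances the second coordinate along this topological order — indeed $G_S$ has no self-loops, so no edge of $G$ can fix its second coordinate.

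Next I would prove a canonicity statement: any $s_{i,j}\leadsto t_{i,j}$ path $\rho$ in $G$ uses only type-$i$ edges. By the previous paragraph the second coordinates strictly increase along $\rho$, hence are pairwise distinct, so projecting $\rho$ onto its second coordinate yields a \emph{simple} $s_{i,j}\leadsto t_{i,j}$ path in $G_S$. By Observation~\ref{obs:G_S_props}(1) this projected path is canonical, i.e.\ uses only type-$i$ edges of $G_S$; since an edge of $\rho$ of type $i'$ would project to a type-$i'$ edge of $G_S$, $\rho$ itself uses only type-$i$ edges.

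Finally I would project $\rho$ onto its first coordinate. This gives an $s_i\leadsto t_i$ walk using only type-$i$ edges, i.e.\ a walk in the layered graph $H_i$. Since $H_i$ has $d'$ layers $L_i^1,\dots,L_i^{d'}$ of nonterminal vertices with edges only from $s_i$ into $L_i^1$, from $L_i^j$ into $L_i^{j+1}$, and from $L_i^{d'}$ into $t_i$, any $s_i\leadsto t_i$ walk in $H_i$ is forced to pass through exactly one vertex of each of the $d'$ layers, hence through at least $d'$ distinct nonterminal vertices of $H$. As the first-coordinate map sends nonterminal vertices of $G$ to nonterminal vertices of $H$, $\rho$ must contain at least $d'$ nonterminal vertices, which is the claim (and in fact exactly $d'$).

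The one genuinely delicate point — already isolated into Observation~\ref{obs:G_S_props}(1) — is that a single node of $G_S$ is shared by many of the parts $\Pi_{i'}$, so a priori a path in $G_S$ could ``switch tracks'' between edge types; ruling this out is exactly where bridge girth $\infty$ is used (a type switch inside $G_S$ while still joining two nodes of $\pi_{i,j}$ would exhibit a bridge in $S$ with river $\pi_{i,j}$), together with acyclicity of $S$ (which is what lets us convert the walk-shadow of $\rho$ back into a simple path so that the observation applies). Everything after that is routine bookkeeping about the strictly-layered structure of $H_i$.
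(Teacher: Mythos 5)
Your proof is correct and takes essentially the same route as the paper's: project onto $G_S$ (deduce the path uses only type-$i$ edges via Observation~\ref{obs:G_S_props}(1)), then project onto $H$ (traverse the $d'$ layers of $H_i$ via Observation~\ref{obs:H_props}(1)). What you add is the explicit step that $S$ may be taken acyclic so that the second-coordinate projection of a simple $G$-path is itself a \emph{simple} path in $G_S$ (so that the observation applies as stated rather than to a walk) — the paper leaves this implicit. One small inaccuracy in your side-remark: the cleaning lemma does not ``only delete nodes/paths,'' it also splits nodes and paths; splitting still preserves acyclicity, so your conclusion stands, but the stated reason is slightly off.
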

\begin{proof}
Observe that every $s_{i, j} \leadsto t_{i, j}$ path $\pi$ for $(s_{i, j}, t_{i, j}) \in P$ corresponds to a $s_i \leadsto t_i$ path in $H$ and a  $s_{i, j} \leadsto t_{i, j}$ path in $G_S$. Then every $s_{i, j} \leadsto t_{i, j}$ path in $G$ is canonical, since every  $s_{i, j} \leadsto t_{i, j}$ path in $G_S$ is canonical, by property 1 of Observation \ref{obs:G_S_props}. If a $s_{i, j} \leadsto t_{i, j}$ path in $G$ is canonical, then the corresponding $s_i \leadsto t_i$ path in $H$ is canonical by the definition of $G$, and so by property 1 of Observation \ref{obs:H_props}, $\pi$  has at least $d'$ nonterminal nodes.
\end{proof}

\paragraph{Integral solution.}  We will show that $\Omega(N)$ vertices must be removed from $G$ to separate all demand pairs in $P$. Before proving this, we must first introduce some notation and prove some intermediate results. For the remainder of this section, fix $\mathcal{S} \subseteq V_G$ to be any subset of non-terminal vertices of $G$ with $|\mathcal{S}| \leq \lambda N$, where $\lambda > 0$ is a sufficiently small constant to be specified later. 
For any set $U \subseteq V_G$,  we let $U^H$ denote the preimage of $U$ in $H$. Namely, 
$$U^H := \{v \in V_H \mid (v, v') \in U \}.$$ 
For $v \in V_S$, we define the set $\mathcal{S}_v \subseteq V_G$ as
$$\mathcal{S}_v := \mathcal{S} \cap (V_H \times \{v\}).$$
Now for $i \in [1, d]$ and $v \in V_S$, we say that the pair $(i, v) \in [1, d] \times V_S$ is $H$-good if $\mathcal{S}_v^H$ does not disconnect demand pair $(s_i, t_i) \in P_H$ in graph $H_i$.

\begin{claim}
Fix an $\varepsilon > 0$. There are at least $(1-\varepsilon)(1-16\lambda)N$ pairs $(i, v) \in [1, d] \times V_S$ that are $H$-good. 
\end{claim}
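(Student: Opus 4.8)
Here is the plan. The proof is a one-step averaging argument over the $n$ ``columns'' $V_H\times\{v\}$ (for $v\in V_S$), which partition the nonterminal vertex set $V_G=V_H\times V_S$ of $G$, together with a per-column application of Lemma~\ref{lem:big_multicut}. First I would record the basic bookkeeping: $N=dn$, $|V_S|=n$, $|V_H|=d$, and for a fixed $v$ the map $(w,v)\mapsto w$ is a bijection from $\mathcal{S}_v=\mathcal{S}\cap(V_H\times\{v\})$ onto $\mathcal{S}_v^H\subseteq V_H$, so $|\mathcal{S}_v^H|=|\mathcal{S}_v|$ and $\sum_{v\in V_S}|\mathcal{S}_v^H|=|\mathcal{S}|\le \lambda N=\lambda dn$. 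Then I would call a column $v$ \emph{light} if $|\mathcal{S}_v^H|\le d/16$ and \emph{heavy} otherwise; since the $|\mathcal{S}_v^H|$ sum to at most $\lambda dn$, Markov's inequality gives that at most $\lambda dn/(d/16)=16\lambda n$ columns are heavy, hence at least $(1-16\lambda)n$ columns are light.

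Next I would count the $H$-good pairs arising from a single light column. For a light $v$, the set $\mathcal{S}_v^H$ consists of at most $d/16$ nonterminal vertices of $H$, so Lemma~\ref{lem:big_multicut} applies to it. Applying that lemma with $\varepsilon$ replaced by $1-\varepsilon$ (we may assume $\varepsilon<1$, as otherwise the claimed bound is non-positive and there is nothing to prove) shows that removing $\mathcal{S}_v^H$ from $H$ disconnects at most $(1-(1-\varepsilon))d=\varepsilon d$ of the demand pairs $(s_i,t_i)\in P_H$; equivalently, for at least $(1-\varepsilon)d$ indices $i\in[1,d]$ the set $\mathcal{S}_v^H$ does \emph{not} disconnect $(s_i,t_i)$ in $H_i$, i.e.\ at least $(1-\varepsilon)d$ of the pairs $(i,v)$ are $H$-good. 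Summing this over the $\ge(1-16\lambda)n$ light columns, and simply discarding the heavy columns, yields at least $(1-\varepsilon)(1-16\lambda)\cdot dn=(1-\varepsilon)(1-16\lambda)N$ $H$-good pairs, which is exactly the statement.

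The one step that requires care — and essentially the only place any content enters — is the direction in which Lemma~\ref{lem:big_multicut} is used: as literally phrased it only forbids a set of size $\le d/16$ from disconnecting \emph{more than} a $(1-\varepsilon)$-fraction of the $P_H$-pairs, whereas the count above needs the stronger conclusion that such a set disconnects \emph{at most} an $\varepsilon$-fraction. This is obtained precisely by invoking the lemma with parameter $1-\varepsilon$ rather than $\varepsilon$, which is legitimate because the lemma's guarantee holds with probability $\ge 1-2^{-d}$ for every fixed positive value of that parameter, so we may fix the random graph $H$ once and for all so that the (finitely many) instances of the lemma we need hold simultaneously. Everything else is routine averaging, so I do not expect any further obstacle.
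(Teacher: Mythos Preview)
Your proposal is correct and follows essentially the same argument as the paper: partition $\mathcal{S}$ into columns $\mathcal{S}_v$, use the total-size bound to limit the number of heavy columns by Markov, and apply Lemma~\ref{lem:big_multicut} column-by-column on the light ones. Your explicit observation about the parameter direction in Lemma~\ref{lem:big_multicut} (invoking it with $1-\varepsilon$ in place of $\varepsilon$) is a point the paper glosses over; it is the right way to reconcile the lemma's literal phrasing with the ``at least $(1-\varepsilon)d$ $H$-good indices per light column'' conclusion actually needed.
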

\begin{proof}
Note that there are  $N = dn$ pairs $(i, v) \in  [1, d] \times V_S$. For any $v \in V_S$, we know that if $|\mathcal{S}_v^H| \leq d/16$, then there are at least $(1-\varepsilon)d$ pairs $(i, v)$ that are $H$-good  by Lemma \ref{lem:big_multicut}. Moreover, since $|\mathcal{S}| \leq \lambda N$, it follows that there are fewer than $16\lambda n$ vertices $v \in V_S$ such that  $|\mathcal{S}_v^H| > d/16$. Then the number of pairs  $(i, v) \in [1, d] \times V_S$ that are $H$-good is at least
\[(1-\varepsilon)(1-16\lambda )N. \qedhere\] 
\end{proof}

We say that a pair $(i, v) \in [1, d] \times V_S$ is \textit{$S$-good} if there is a path $\pi$ in $\Pi_i$ such that $v \in \pi$.

\begin{claim}
There are  $\Omega(N)$ pairs $(i, v) \in [1, d] \times V_S$ that are $S$-good. 
\label{clm:s_good}
\end{claim}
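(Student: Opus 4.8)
The claim follows almost immediately from Lemma~\ref{lem:large_part}, which governs the fixed partition $\Pi = \Pi_1 \cup \dots \cup \Pi_d$ used in the construction of $G_S$. The plan is to observe that, for a fixed index $i \in [1,d]$, the number of pairs $(i, v) \in [1,d] \times V_S$ that are $S$-good is exactly $\left| \bigcup_{\pi \in \Pi_i} \pi \right|$, since $(i,v)$ is $S$-good precisely when $v$ lies on some path of $\Pi_i$ (and recall $V_S = V$). Then I would invoke the conclusion of Lemma~\ref{lem:large_part}: there are $\Omega(d)$ indices $i$ for which $\left| \bigcup_{\pi \in \Pi_i} \pi \right| = \Omega(n)$.

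\textbf{Main step.} Summing the count of $S$-good pairs over just these $\Omega(d)$ ``large'' indices $i$, each contributing $\Omega(n)$ distinct values of $v$, gives at least
$$\Omega(d) \cdot \Omega(n) = \Omega(dn) = \Omega(N)$$
$S$-good pairs in total, using $N = |V_H| \cdot |V_S| = dn$. This completes the proof. One minor point to state carefully is that distinctness across the outer sum is automatic: the pairs $(i,v)$ and $(i',v)$ are distinct whenever $i \ne i'$, so there is no double-counting when we take the union over the large parts $i$.

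\textbf{Expected obstacle.} There is essentially no technical obstacle here; the content is entirely front-loaded into Lemma~\ref{lem:large_part}. The only thing to be mindful of is bookkeeping — making sure we are counting pairs $(i,v)$ rather than just vertices $v$, and that the $\Omega(d)$ ``large'' parts and the $\Omega(n)$ covered nodes combine multiplicatively because the index $i$ ranges freely. Everything else is a one-line substitution.

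\begin{proof}
Recall that $V_S = V$, so a pair $(i, v) \in [1,d] \times V_S$ is $S$-good exactly when $v$ appears on some path in $\Pi_i$. Hence, for each fixed $i \in [1,d]$, the number of $S$-good pairs with first coordinate $i$ equals $\left| \bigcup_{\pi \in \Pi_i} \pi \right|$.

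By Lemma \ref{lem:large_part}, there are $\Omega(d)$ indices $i \in [1,d]$ for which $\left| \bigcup_{\pi \in \Pi_i} \pi \right| = \Omega(n)$. Summing over just these indices, and noting that pairs with distinct first coordinates are distinct, the total number of $S$-good pairs is at least
$$\Omega(d) \cdot \Omega(n) = \Omega(dn) = \Omega(N),$$
where the last equality uses $N = |V_H| \cdot |V_S| = dn$.
\end{proof}
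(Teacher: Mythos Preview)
Your proof is correct and takes essentially the same approach as the paper: both invoke Lemma~\ref{lem:large_part} to get $\Omega(d)$ indices $i$ with $\left|\bigcup_{\pi \in \Pi_i} \pi\right| = \Omega(n)$, then sum over those indices to obtain $\Omega(dn) = \Omega(N)$ $S$-good pairs. Your write-up is in fact slightly more explicit about the bookkeeping (identifying the count for fixed $i$ with $\left|\bigcup_{\pi \in \Pi_i} \pi\right|$ and noting there is no double-counting across distinct $i$), but the argument is identical in substance.
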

\begin{proof}
By Lemma \ref{lem:large_part} at least $\Omega(d)$ of the sets $\Pi_i$, $i \in [1, d]$,  satisfy $|\cup_{\pi \in \Pi_i} \pi| \geq \Omega(n)$.  Then the number of pairs $(i, v) \in  [1, d] \times V_S$ such that $(i, v)$ is $S$-good is  $\Omega(N)$.   
\end{proof}

Let $\lambda_1 > 0$ be a constant such that for sufficiently large $N$ at least $\lambda_1 N$ pairs $(i, v)$ in $[1, d] \times V_S$  are $S$-good. By Claim \ref{clm:s_good}, such a $\lambda_1$ must exist. We choose $\lambda$ to be  
\[
\lambda := \lambda_1 / 32.
\]
We say that a pair $(i, v) \in [1, d] \times V_S$ is \textit{$G$-good} if $(i, v)$ is $H$-good and $S$-good. 
Let $J \subseteq [1, d]$  be the set of  all indices $j \in [1, d]$ such that the number of $G$-good pairs in $\{j\} \times V_S$ is at least $\lambda_1^2 / 8  \cdot n$.

\begin{claim}
$|J| \geq \lambda_1^2 / 8 \cdot d$. 
\label{claim:J}
\end{claim}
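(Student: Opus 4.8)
The plan is a straightforward double-counting (averaging) argument, exactly in the spirit of the two preceding claims. I would start from the lower bound on the total number of $G$-good pairs. By definition $(i,v)$ is $G$-good iff it is both $H$-good and $S$-good, so I need to intersect the two ``good'' sets. The previous claims give: at least $(1-\varepsilon)(1-16\lambda)N$ pairs are $H$-good, and at least $\lambda_1 N$ pairs are $S$-good. Taking $\varepsilon$ small enough (say $\varepsilon < 1/4$) and recalling $\lambda = \lambda_1/32$, the number of $H$-good pairs is at least $(1-\varepsilon)(1-16\lambda)N \ge \tfrac12 N$, and hence the number of pairs that are \emph{not} $H$-good is at most $\tfrac12 N \le (1-\tfrac{\lambda_1}{2})N$ for $\lambda_1 \le 1$; subtracting from the $\lambda_1 N$ $S$-good pairs, at least $\lambda_1 N - \tfrac12 N$... — actually the clean way is: number of $G$-good pairs $\ge$ (number of $S$-good) $-$ (number of non-$H$-good) $\ge \lambda_1 N - (1 - (1-\varepsilon)(1-16\lambda))N$. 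With $\varepsilon, \lambda$ small constants this last subtracted quantity is a small constant times $N$, so the number of $G$-good pairs is $\ge c\lambda_1 N$ for an explicit constant $c$; being a touch more careful with the constants one gets at least $\tfrac{\lambda_1}{2} N$ $G$-good pairs (this is where the specific choice $\lambda = \lambda_1/32$ is calibrated).

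Next I would run the standard averaging over the index $j \in [1,d]$. There are $d$ columns $\{j\} \times V_S$, each of size $n$, so each contributes at most $n$ $G$-good pairs. If fewer than $\tfrac{\lambda_1^2}{8} d$ indices $j$ were ``heavy'' (i.e. had at least $\tfrac{\lambda_1^2}{8} n$ $G$-good pairs in their column), then the total number of $G$-good pairs would be at most
\[
\frac{\lambda_1^2}{8} d \cdot n + d \cdot \frac{\lambda_1^2}{8} n = \frac{\lambda_1^2}{4} dn = \frac{\lambda_1^2}{4} N,
\]
where the first term bounds the contribution of the heavy columns (at most $n$ each) and the second bounds the contribution of the light columns (at most $\tfrac{\lambda_1^2}{8} n$ each). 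Since $\lambda_1 \le 1$ we have $\tfrac{\lambda_1^2}{4} N < \tfrac{\lambda_1}{2} N$, contradicting the lower bound from the previous paragraph. Hence $|J| \ge \tfrac{\lambda_1^2}{8} d$, as claimed.

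I don't anticipate a real obstacle here — the only thing to be careful about is bookkeeping the constants so that the ``$G$-good'' lower bound ($\Theta(\lambda_1 N)$) genuinely dominates the averaging loss ($\Theta(\lambda_1^2 N)$, which is smaller since $\lambda_1 < 1$), and double-checking that the choice $\lambda = \lambda_1/32$ (made just before the claim) is exactly what makes the $H$-good count exceed $(1-\tfrac{\lambda_1}{2})N$. If the constants don't quite line up one simply shrinks $\varepsilon$ and $\lambda$ further; nothing structural changes. The argument uses only Claims on $H$-good and $S$-good pairs already established, plus elementary counting.
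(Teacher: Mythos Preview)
Your proposal is correct and follows essentially the same approach as the paper: intersect the $H$-good and $S$-good sets via an inclusion–exclusion/overlap bound, then average over the $d$ columns $\{j\}\times V_S$. One small bookkeeping note: with the paper's specific choice $\varepsilon=\lambda_1/2$ (and $\lambda=\lambda_1/32$) the overlap bound gives at least $\lambda_1^2/4\cdot N$ $G$-good pairs, not $\lambda_1/2\cdot N$ as you state; but $\lambda_1^2/4\cdot N$ is exactly enough to contradict your averaging upper bound of (strictly) less than $\lambda_1^2/4\cdot N$, and as you note, shrinking $\varepsilon$ further would also close the gap.
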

\begin{proof}
Let $\varepsilon = \lambda_1/2$. Since there are at least $(1-\varepsilon)(1-16\lambda )N$ $H$-good pairs and at least $\lambda_1 N$ $S$-good pairs, by an overlap argument it follows that there are at least
$$
\left((1 - \varepsilon)(1 - 16\lambda) + \lambda_1 - 1\right)N = ((1-\lambda_1/2)^2 + \lambda_1 - 1)N = \lambda_1^2 / 4 \cdot N
$$
$G$-good pairs. Now note that for any $i \in J$, there are at most $n$ $G$-good pairs in $\{i\} \times V_S$. Likewise, for any $i \not \in J$, there are at most $\lambda_1^2/8 \cdot n$ $G$-good pairs in $\{i\} \times V_S$. Then there are at most 
$$n|J| + \lambda_1^2/8 \cdot n(d - |J|)$$ 
$G$-good pairs. We obtain the following inequality:
\[
n|J| + \lambda_1^2/8 \cdot n (d - |J|) \geq \lambda_1^2/4 \cdot N
\]
Solving for $|J|$, we conclude that $|J| \geq \lambda_1^2/8 \cdot d = \Omega(d)$. 
\end{proof}

We need one more claim before we can prove that $\mathcal{S}$ is not a valid multicut.  

\begin{claim}
    For every $i \in J$, there are $\Omega\left(\frac{n}{d}\right)$ paths $\pi \in \Pi_i$ such that at least $d'$ pairs in $\{i\} \times \pi$ are $G$-good. 
    \label{claim:J_i}
\end{claim}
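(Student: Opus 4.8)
The plan is a double-counting argument within a fixed index $i \in J$, leveraging the definition of $J$ together with the structure imposed by Lemma \ref{lem:large_part}. First I would fix $i \in J$ and recall that, by definition of $J$, there are at least $\lambda_1^2/8 \cdot n$ pairs in $\{i\} \times V_S$ that are $G$-good; write $G_i \subseteq V_S$ for the set of nodes $v$ such that $(i,v)$ is $G$-good, so $|G_i| \ge \lambda_1^2/8 \cdot n = \Omega(n)$. Every $G$-good node $v \in G_i$ is in particular $S$-good for index $i$, so there is some path $\pi \in \Pi_i$ with $v \in \pi$; thus each $v \in G_i$ is "covered" by at least one path of $\Pi_i$, and we may assign it to such a path. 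This means the $|\Pi_i| = n/d$ paths of $\Pi_i$ collectively cover at least $|G_i| = \Omega(n)$ of the $G$-good nodes (counting each node once via its assignment).

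Next I would argue that not too much of this covering mass can be concentrated on paths that contain few $G$-good nodes. Say a path $\pi \in \Pi_i$ is \emph{rich} if it contains at least $d'$ nodes $v$ with $(i,v)$ $G$-good, and \emph{poor} otherwise. Each poor path contributes fewer than $d'$ assigned $G$-good nodes. Using the bound $d' = \Omega(\beta(n,p,\infty)/(n\log n))$ together with $\beta(n,n,\infty) = nd$ (so $d' = \Theta(d/\log d)$) and the regularity assumptions — in particular that all paths have length $\Theta(d)$, hence $d' = O(d) = O(|\pi|)$ — the total contribution of the at most $n/d$ poor paths is at most $(n/d) \cdot d' = O(n d' / d) = O(n / \log d) = o(n)$. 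Since the poor paths alone cannot account for $\Omega(n)$ covered $G$-good nodes, the rich paths must, and each rich path covers at most $|\pi| = O(d)$ nodes, so the number of rich paths is at least $\Omega(n) / O(d) = \Omega(n/d)$, which is the desired bound.

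The main subtlety — and where I would be most careful — is getting the quantitative relationship between $d'$, $d$, and the average/maximum path length exactly right so that the contribution of poor paths is genuinely a lower-order term. This hinges on $d' = o(d)$, which follows from $d' = \Theta(d/\log d)$ as computed in the construction, combined with the length-regularity guarantee from the cleaning lemma (Lemma \ref{lem:cleaning}) that every path in $S$ has length $\Theta(d)$; without this regularity one could worry about a few very long paths absorbing most of the covering mass while still being "poor," so it is essential that path lengths are uniformly $\Theta(d)$. A second minor point is that the assignment of each $G$-good node to a single covering path is what lets us treat the covering counts additively across $\Pi_i$; this is legitimate precisely because we are proving only a lower bound on the number of rich paths, so undercounting (by assigning each node once rather than to all paths containing it) is harmless.
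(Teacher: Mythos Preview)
Your proposal is correct and follows essentially the same double-counting argument as the paper: both bound the total number of $G$-good nodes covered by $\Pi_i$, observe that the at most $n/d$ ``poor'' paths (those with fewer than $d'$ $G$-good nodes) contribute at most $(n/d)\cdot d' = O(n/\log d) = o(n)$, and conclude that the remaining $\Omega(n)$ mass must be spread over rich paths of length $O(d)$ each, yielding $\Omega(n/d)$ rich paths. The paper writes this as a single explicit inequality $cd|J_i| + d'(n/d - |J_i|) \ge \lambda_1^2/8 \cdot n$ and solves for $|J_i|$, whereas you phrase it via an assignment and asymptotic notation, but the content is the same.
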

\begin{proof}
Fix an $i \in J$. By the definition of $J$, there are at least $\lambda_1^2/8 \cdot n$ $G$-good pairs in $\{i\} \times V_S$. Now let $J_i \subseteq \Pi_i$ denote the set of all paths $\pi$ in $\Pi_i$ such that at least $d'$ pairs in $\{i\} \times \pi$ are $G$-good. Since $S$ is approximately length-regular by the cleaning lemma, each path $\pi$ in $J_i$ is of length at most $|\pi| \leq cd$ for some constant $c \geq 1$, and therefore $\{i\} \times \pi$ has at most $cd$ $G$-good pairs.  Likewise, each path $\pi  \in \Pi_i \setminus J_i$ has at most $d'$ $G$-good pairs. Recall that $(i, v) \in \{i\} \times V_S$ is $G$-good only if $v \in \pi$ for some $\pi \in \Pi_i$. Then by the above discussion there are at most  
$$cd|J_i| + d'\left(\frac{n}{d} - |J_i|\right)$$  
$G$-good pairs in $\{i\} \times V_S$.  
We obtain the following inequality:  
\[
cd |J_i| + d' \left(\frac{n}{d} - |J_i|\right) \geq \lambda_1^2 / 8 \cdot n 
\]
Using the fact that $d' = d /(2 \log d) \leq  \lambda_1^2 / 16 \cdot d  $ for sufficiently large $d$, we conclude that $$|J_i| \geq  \frac{\lambda_1^2}{16c} \cdot \frac{n}{d} \qedhere$$  
\end{proof}


We will now show that a constant fraction of the demand pairs $P$ remain connected in $G \setminus \mathcal{S}$. 

\begin{lemma}
$\Omega(n)$ demand pairs in $P$ are connected in $G \setminus \mathcal{S}$.  
\label{lem:big_cut_fail}
\end{lemma}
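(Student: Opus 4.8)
The plan is to produce, for a linear-sized family of demand pairs, an explicit $s_{i,j}\leadsto t_{i,j}$ path in $G$ that avoids $\mathcal{S}$. I target the \emph{candidate} pairs: those $(s_{i,j},t_{i,j})$ with $i\in J$ and $\pi_{i,j}\in J_i$, where $J$ and $J_i$ are the index sets from Claims~\ref{claim:J} and~\ref{claim:J_i}. Since $|J|=\Omega(d)$ and $|J_i|=\Omega(n/d)$ for every $i\in J$, there are $\sum_{i\in J}|J_i|=\Omega(n)$ candidate pairs, so it suffices to show that all but an arbitrarily small constant fraction of them remain connected in $G\setminus\mathcal{S}$.

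Fix a candidate pair $(s_{i,j},t_{i,j})$ and set $\pi:=\pi_{i,j}$. Recall (Observation~\ref{obs:G_S_props}(1), Observation~\ref{obs:H_props}(1), Claim~\ref{claim:long_paths_in_G}) that every $s_{i,j}\leadsto t_{i,j}$ path in $G$ is canonical and uses exactly one vertex from each of the $d'$ layers of $H_i$; moreover its first internal vertex has $V_S$-coordinate $\pi[1]$ (forced by the unique edge out of $s_{i,j}$ in $G_S$) and its last internal vertex has $V_S$-coordinate $\pi[\text{last}]$ (forced by the unique edge into $t_{i,j}$). By definition of $J_i$, at least $d'$ nodes of $\pi$ are $G$-good, hence $H$-good; list them in path order as $g_1<_\pi g_2<_\pi\cdots<_\pi g_m$, $m\ge d'$. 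The crucial consequence of $H$-goodness of a node $v$ is that $\mathcal{S}_v^H$ contains no full layer $L_i^k$ of $H_i$ (a full layer is a minimum $s_i$-$t_i$ vertex cut of $H_i$, since consecutive layers are complete bipartite), so for every $k$ there is some $x\in L_i^k$ with $(x,v)\notin\mathcal{S}$. I would now assemble a path as follows: take $V_S$-coordinates $u_1=\pi[1]$, $u_{d'}=\pi[\text{last}]$, and $u_2,\dots,u_{d'-1}$ equal to any $d'-2$ of the good nodes $g_1,\dots,g_m$ other than $\pi[1],\pi[\text{last}]$, listed in path order. These are distinct and strictly increasing along $\pi\in\Pi_i$, so each consecutive pair $(u_k,u_{k+1})$ is a transitive-closure edge of $\pi$, i.e.\ an edge of type $i$ in $G_S$; together with the forced terminal edges of $G_S$ and the fact that consecutive layers of $H_i$ are complete bipartite, \emph{any} choice $x_k\in L_i^k$ yields a directed $s_{i,j}\leadsto t_{i,j}$ walk $s_{i,j}\to(x_1,u_1)\to\cdots\to(x_{d'},u_{d'})\to t_{i,j}$ in $G$, which is a simple path because $L_i^1,\dots,L_i^{d'}$ are pairwise disjoint. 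Using $H$-goodness of $u_2,\dots,u_{d'-1}$, I pick each $x_k\in L_i^k\setminus\mathcal{S}_{u_k}^H$ for $k\in\{2,\dots,d'-1\}$; then every internal vertex avoids $\mathcal{S}$ except possibly the two forced endpoint vertices, and if additionally $\mathcal{S}_{\pi[1]}^H\not\supseteq L_i^1$ and $\mathcal{S}_{\pi[\text{last}]}^H\not\supseteq L_i^{d'}$ then $x_1,x_{d'}$ can also be chosen avoiding $\mathcal{S}$, so $(s_{i,j},t_{i,j})$ is connected in $G\setminus\mathcal{S}$.

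\textbf{Main obstacle.} The delicate step is controlling the candidate pairs whose forced endpoints are blocked, i.e.\ those with $\mathcal{S}_{\pi_{i,j}[1]}^H\supseteq L_i^1$ or $\mathcal{S}_{\pi_{i,j}[\text{last}]}^H\supseteq L_i^{d'}$ (these are the \emph{only} way a candidate pair can fail, by the previous paragraph). Each such blockage pins down an entire column $L_i^k\times\{v\}$ of size $|L_i^k|=\log d$ inside $\mathcal{S}$, and the plan is to show only a small constant fraction of candidates can be blocked this way: the cleaning-lemma regularity of $S$ bounds how many paths of $\Pi_i$ can share a given first or last node, and a union bound over the random layers $L_i^k$ (each node of $V_H$ lies in only $O(\log n)$ of them with high probability, as in the proof of Lemma~\ref{lem:big_multicut}) limits how efficiently the budget $|\mathcal{S}|\le\lambda N$ can be concentrated on such columns; choosing $\lambda$ small enough then leaves $\Omega(n)$ unblocked candidates, all connected in $G\setminus\mathcal{S}$, which is the lemma. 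Everything except this endpoint-counting step is a black-box consequence of the structural facts already established (the layered shape of $H_i$, canonicity in $G_S$, and the abstract ``$\ge d'$ good nodes per path'' guarantee), which is exactly the point of the generalization of \cite{chuzhoy2009polynomial}.
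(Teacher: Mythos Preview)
Your overall structure matches the paper's: you target the candidate pairs with $i\in J$ and $\pi_{i,j}\in J_i$, count $\Omega(n)$ of them via Claims~\ref{claim:J} and~\ref{claim:J_i}, and for each one build a canonical path through $(w_k,u_k)$ with $w_k\in L_i^k\setminus\mathcal{S}_{u_k}^H$. The divergence is that you take the construction of $G_S$ too literally. In the paper's proof the terminal $s_{i,j}$ is treated as having a type-$i$ edge to \emph{every} node of $\pi_{i,j}$ (and likewise every node of $\pi_{i,j}$ has an edge to $t_{i,j}$); the sentence ``all edges in the transitive closure of $\pi_{i,j}$ are added to $E_i^S$, so in particular $(s_{i,j},u_1),(u_k,t_{i,j})\in E_i^S$'' makes this explicit. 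Under that reading the $V_S$-coordinates $u_1,\dots,u_{d'}$ are \emph{not} forced at the ends: you simply take them to be any $d'$ of the $G$-good nodes of $\pi_{i,j}$ in path order, and the path $s_{i,j}\to(w_1,u_1)\to\cdots\to(w_{d'},u_{d'})\to t_{i,j}$ avoids $\mathcal{S}$ outright. Every candidate pair is then connected, with no exceptions to handle.

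Your ``main obstacle'' is therefore self-inflicted, and more importantly your sketched resolution of it does not work. The parenthetical claim that ``each node of $V_H$ lies in only $O(\log n)$ of [the layers $L_i^k$]'' is false: for each fixed $i$ the layers cover $d'\log d=d/2$ of the $d$ nodes of $V_H$, so a typical node of $V_H$ lies in $\Theta(d)$ layers across all $i$, not $O(\log n)$. Without that bound the union-bound budget argument you outline does not control the number of blocked candidates (the naive count gives $O(\lambda d^2 n/\log d)$ potential blockages, which swamps the $\Omega(n)$ candidates). You also cannot appeal to fresh randomness of $H$ here, since $H$ has already been fixed via Lemma~\ref{lem:big_multicut} and only the specific property stated there is available. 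The clean fix is exactly the one the paper uses: read the terminal edges of $G_S$ as going to all nodes of $\pi_{i,j}$, and the obstacle vanishes.
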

\begin{proof}
Fix an $i \in J$ and a $j \in [1, n/d]$ such that path $\pi_{i, j}$ satisfies the property of Claim \ref{claim:J_i}. (Recall that $\pi_{i, j}$ is the canonical $s_{i, j} \leadsto t_{i, j}$ path in $G_S$.) Now let $u_1, u_2, \dots, u_{d'}$ be a set of $d'$ vertices in $\pi_{i, j}$ (listed in the order in which they appear in $\pi_{i, j}$) such that all pairs in $\{i\} \times \{u_1, \dots, u_{d'}\}$ are $G$-good. 
Recall that by the construction of $G_S$, all edges in the transitive closure of $\pi_{i, j}$ are added to $E_i^S$ in graph $G_S$, so in particular, $(s_{i, j}, u_1), (u_k, t_{i, j}) \in E_i^S$ and  $(u_k, u_{k+1}) \in E_i^S$ for all $k \in [1, d' - 1]$.  

Now for all $k \in [1, d']$, we claim that there exists a vertex in $L^k_i \times \{u_k\} \subseteq V_G$ that survives in $G \setminus \mathcal{S}$, i.e. $(L^k_i \times \{u_k\}) \setminus \mathcal{S} \neq \emptyset$. Recall that $L^k_i$ denotes the vertices in the $k$th layer of $H_i$. We know that $(i, u_k)$ is a $G$-good pair, so it is also an $H$-good pair, which implies that $L^k_i \not \subseteq \mathcal{S}_v^H$.
(If $L^k_i \subseteq \mathcal{S}^H$, then $\mathcal{S}_v^H$ would disconnect $(s_i, t_i)$ in $H_i$, a contradiction.) 
Let $w_k$ denote a vertex in $L^k_i \setminus \mathcal{S}_v^H \in V_H$ for $k \in [1, d']$.  Note that since $w_k \in L_i^k$ for $k \in [1, d']$, it follows from the construction of $H$ that $(w_k, w_{k+1}) \in E_i^H$ for $k \in [1, d' - 1]$.  

Let $x_k = (w_k, u_k) \in V_G$ for $k \in [1, d']$. By the discussion in the last paragraph, $x_k \in V_G$   survives after $\mathcal{S}$ is removed from $G$, i.e. $x_k \in G \setminus \mathcal{S}$ for $k \in [1, d']$.  Furthermore, since $(u_k, u_{k+1}) \in E_i^S$ and $(w_k, w_{k+1}) \in E_i^H$ for $k \in [1, d' - 1]$, by the construction of $G$ it follows that edge $(x_k, x_{k+1}) \in E$ survives in $G$ after $\mathcal{S}$ is removed, for $k \in [1, d' - 1]$. Finally, note that by the construction of $G$, edges $(s_{i, j}, x_1)$ and $(x_{d'}, t_{i, j})$ are in $E$. Consequently, $(s_{i, j}, x_1, x_2, \dots, x_{d'}, t_{i, j})$ is a valid $s_{i, j} \leadsto t_{i, j}$ path in $G \setminus \mathcal{S}$,  so demand pair $(s_{i, j}, t_{i, j}) \in P$ is connected in $G \setminus \mathcal{S}$. Since $|J| = \Omega(d)$ by Claim \ref{claim:J} and for all $i \in J$  there are $\Omega\left(\frac{n}{d}\right)$ paths $\pi$ in $\Pi_i$ satisfying the property of Claim \ref{claim:J_i}, we conclude that  $\Omega(n)$ of the $|P| = n$ demand pairs in $P$ are connected in $G\setminus \mathcal{S}$. 
\end{proof}
We assumed $\mathcal{S}$ was an arbitrary vertex set of size $|\mathcal{S}| \leq N/1600$, so we conclude by Lemma \ref{lem:big_cut_fail} that any vertex multicut of $G, P$ is of size $\Omega(N)$. Then since $G, P$ has a fractional vertex multicut of cost $N/d'$,  we obtain an integrality gap of $\Omega(d')$ for the minimum vertex multicut problem. Additionally, even if the multicut solution only needs to disconnect a constant fraction  $(1-\varepsilon)n$ of the $n$ demand pairs in $P$ for some sufficiently small $\varepsilon > 0$, the size of the vertex multicut of $G, P$ remains $\Omega(N)$ by Lemma \ref{lem:big_cut_fail}.  (We will make use of this fact in the sparsest cut flow-cut gap argument.) 
Theorem \ref{thm:fcg} is immediate from the above discussion and Lemma \ref{lem:vmcut_reduction}.

\subsection{Sparsest Cut Flow-Cut Gap}

In the directed sparsest cut problem, we are given a directed graph $G = (V, E)$ and a set of $p$ demand pairs $P = \{(s_i, t_i) \in V \times V \mid i \in [1, p] \}$, and the objective is to find a subset $E'$ of $E$ that minimizes the ratio $|E'| / |P_{E'}|$, where $P_{E'}$ is the subset of $P$ that is disconnected in graph $G \setminus E'$. It will be helpful to phrase this as an integer program, as follows:
\begin{itemize}
    \item For each edge $e \in E$, let $x_e \in \{0, 1\}$ be an indicator variable that takes value 1 if $e$ is in the solution $E'$.  
    \item For each demand pair $(s_i, t_i) \in P$, denote by $\Pi_i$ the set of directed paths from $s_i$ to $t_i$ in $G$.
    \item For each $i \in [1, p]$, let $h_i \in \{0, 1\}$ be an indicator variable that takes value 1 if source-sink pair $(s_i, t_i)$ is disconnected in $G \setminus E'$, i.e. $(s_i, t_i) \in P_{E'}$. 
    \item Let $D := \sum_{i=1}^k h_i$ be the total number of disconnected pairs $|P_{E'}|$. For $e \in E$, let $x_e' := x_e/D$, and for $i \in [1, p]$, let $h_i' := h_i/D$. 
    \item Then the sparsest cut problem is equivalent to minimizing $\sum_e x_e'$ subject to $\sum_i h_i' \geq 1$ and $\sum_{e \in \pi} x_e' \geq h_i'$ for all $i \in [1, p]$ and $\pi \in \Pi_i$.
\end{itemize}

We will write $\textsc{SCut}(G, P)$ for the value of the directed sparsest cut problem on inputs $G, P$. A natural LP relaxation of directed sparsest cut is to let $x_e \geq 0$, so that cuts on edges can be fractional. We will write  the fractional value as $\widehat{\textsc{SCut}}(G, P)$, and we will state the LP formally as: 
\begin{center}
\begin{tabular}{|c|c|}
\hline 
$\begin{array}{crccc}
(\boldsymbol{P}_{LP}) & \widehat{\textsc{SCut}}\\
\min & \sum_{e\in E} x_e' \\
\text{s.t.} & \sum_{e\in \pi} x_e' & \ge & h_i' & \forall i\in [p],\pi\in\Pi_i\\
& \sum_{i=1}^p h_i' & \geq & 1 & \\
 & x_e', h_i' & \ge & 0 & \forall  e\in E, \forall i \in [p]
\end{array}$  & $\begin{array}{crccc}
(\boldsymbol{D}_{LP}) & \text{\textsc{MCFlow}}\\
\max & \lambda\\
\text{s.t.} & \sum_{\pi \in \Pi_i}f_{\pi} & \ge & \lambda & \forall i \in [p] \\
& \sum_{\pi : e \in \pi}  f_{\pi} & \le & 1 & \forall e\in E \\
 & f_{\pi} & \ge & 0 &  \forall i\in [p],\pi\in \Pi_i 
\end{array}$\tabularnewline
\hline 
\end{tabular}
\par\end{center}

The dual program $\boldsymbol{D}_{LP}$ of this LP relaxation is equivalent to the maximum concurrent flow problem. 
In this problem, we are given a directed graph $G$ and a set of demand pairs $P$ on input, and for each demand pair $(s_i, t_i) \in P$ we choose a flow $f_i$ that has $s_i$ as its source, $t_i$ as its sink, and which is conserved at all other nodes. As with minimum multicut, the sum of the flows on each edge must be $\leq 1$, and the value $|f_i|$ of each flow $f_i$ is the amount of flow created at $s_i$ and destroyed at $t_i$. Subject to these constraints, the goal is to maximize $\lambda = \min_i |f_i|$, the least amount of flow routed from $s_i$ to $t_i$ for any $(s_i, t_i) \in P$. 
We write $\textsc{MCFlow}(G, P)$ to denote this maximized $\lambda$. We define the following function to capture the flow-cut gap between maximum concurrent flow and sparsest cut:
\begin{definition} \label{def:scg}
    The function $\scg(n)$ is the least integer $k$ such that, for every $n$-node directed graph $G$ and set of demand pairs $P$ (of any size), we have
    $$
    \textsc{SCut}(G, P) \leq k \cdot \textsc{MCFlow}(G, P).
    $$
\end{definition}
Since $\widehat{\textsc{SCut}}(G, P) = \textsc{MCFlow}(G, P)$ by LP duality, we may equivalently interpret this function as the integrality gap of $\widehat{\textsc{SCut}}$. We refer to \cite{chuzhoy2009polynomial} for further discussion.  Using our graph $G$  from the proof of Theorem \ref{thm:fcg} and a standard reduction argument from \cite{chuzhoy2009polynomial}, we can prove:


\begin{theorem}
For all $n$, the flow-cut gap between maximum concurrent flow and sparsest cut satisfies
$$
\scg(\beta(n, n, \infty)) = \Omega\left( \frac{\beta(n, n, \infty)}{n \log n} \right).
$$
\label{thm:fcg_sc}
\end{theorem}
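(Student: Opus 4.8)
The plan is to recycle the graph $G$ and demand‑pair set $P$ built in Section~\ref{sec:multicut_const}: recall that $G$ has $N := \beta(n, n, \infty)$ nonterminal vertices, $|P| = n$ demand pairs, the ``long paths'' property (Claim~\ref{claim:long_paths_in_G}: every $s_{i,j}\leadsto t_{i,j}$ path contains at least $d' = \Omega(\beta(n,n,\infty)/(n\log n))$ nonterminal vertices), and --- crucially --- the robust multicut property recorded at the end of Section~\ref{sec:multicut_const}: there is a constant $\varepsilon>0$ such that any set of at most $\lambda N$ nonterminal vertices fails to separate more than a $(1-\varepsilon)$-fraction of the $n$ demand pairs (Lemma~\ref{lem:big_cut_fail}). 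As in the multicut case, it suffices to exhibit an integrality gap of $\Omega(d')$ for the \emph{vertex} sparsest cut LP on $(G,P)$; a standard vertex‑to‑edge gadget reduction (identical in form to Lemma~\ref{lem:vmcut_reduction}) then transfers this to the edge version $\widehat{\textsc{SCut}}$ on an $O(N)$-node graph, and LP duality $\widehat{\textsc{SCut}} = \textsc{MCFlow}$ converts it into the claimed lower bound on $\scg(\beta(n,n,\infty))$.

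For the fractional side, I would assign every nonterminal vertex of $G$ the value $1/(d' n)$ and set $h_i' = 1/n$ for each demand pair. The path constraints $\sum_{v\in\pi} x_v' \ge h_i'$ hold because every demand‑pair path passes through at least $d'$ nonterminal vertices (Claim~\ref{claim:long_paths_in_G}), and $\sum_i h_i' = 1$; hence the fractional vertex sparsest cut of $(G,P)$ is at most $N/(d' n)$. The integral side is the substance of the proof: I would argue that every nonterminal vertex set $\mathcal{S}$ has sparsity $|\mathcal{S}|/|P_{\mathcal{S}}| = \Omega(N/n)$, which combined with the fractional bound gives an integrality gap of $\Omega(d')$. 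This is exactly the place where we invoke the ``standard reduction'' of \cite{chuzhoy2009polynomial} from robust multicut lower bounds to sparsest cut lower bounds: starting from a hypothetical sparse cut of sparsity $r$, one repeatedly peels off sparse cuts from the residual instance --- which, because $G$ is a product of the random layered graph $H$ with $G_S$, inherits enough of the same structure that its sparsest cut remains $O(r)$ --- and after $O(\log n)$ rounds one has separated a $(1-\varepsilon)$-fraction of the $n$ demand pairs using only $O(rn)$ vertices; by Lemma~\ref{lem:big_cut_fail} this forces $rn = \Omega(N)$, i.e.\ $r = \Omega(N/n)$. Chaining the two bounds yields gap $\;\ge\; \tfrac{\Omega(N/n)}{N/(d'n)} = \Omega(d') = \Omega\!\big(\beta(n,n,\infty)/(n\log n)\big)$.

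The main obstacle is precisely this last step: verifying that the Chuzhoy--Khanna sparsest‑cut reduction still goes through when the labeling‑scheme graph is replaced by the black‑box graph $G_S$ arising from an arbitrary $\beta(n,n,\infty)$-extremal path system. Concretely, one must check that removing a (nearly) sparsest cut from $G$ leaves a residual graph on which the residual demand pairs still enjoy a robust multicut lower bound proportional to the number of surviving nonterminal vertices, so that the peeling accounting closes; equivalently, one wants a ``scale‑robust'' strengthening of Lemma~\ref{lem:big_cut_fail} (any cut separating a $\gamma$-fraction of pairs has size $\Omega(\gamma N)$ for $\gamma$ down to a small constant), which our generalized analysis of $G$ should supply by re‑running the $H$-good / $S$-good / $G$-good counting argument of Section~\ref{sec:multicut_const} with the fraction treated as a parameter. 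Everything else --- the fractional solution, the vertex‑to‑edge reduction, and the bookkeeping needed to express the final bound in terms of $\beta(n,n,\infty)$ and $\log n$ --- is routine and mirrors the multicut argument of Section~\ref{sec:multicut_const}, and as there the gap persists even when the cut need only separate a $(1-\varepsilon)$-fraction of the demand pairs.
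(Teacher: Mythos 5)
Your high-level scaffolding (same graph $G$, same demand pairs $P$, fractional assignment $x_v'=1/(d'n)$ and $h_i'=1/n$, vertex-to-edge reduction, invoke Lemma~\ref{lem:big_cut_fail} at the end) matches the paper's Appendix~\ref{app:sc}. But the core of your integral-side argument is wrong, and in a way that is worth isolating precisely.

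You claim that ``every nonterminal vertex set $\mathcal{S}$ has sparsity $|\mathcal{S}|/|P_{\mathcal{S}}| = \Omega(N/n)$,'' and that the way to prove it is a scale-robust version of Lemma~\ref{lem:big_cut_fail} (``any cut separating a $\gamma$-fraction of pairs has size $\Omega(\gamma N)$''). Both statements are false for the instance $(G, P)$ as constructed. The source $s_{i,j}$ has exactly $|L^1_i \times \{\pi_{i,j}[1]\}| = \log d$ out-neighbors in $G$, so deleting those $\log d$ nonterminal vertices disconnects the single demand pair $(s_{i,j}, t_{i,j})$: this is a cut of sparsity $\log d$, whereas $N/n = d$. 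So the sparsest cut of $(G,P)$ is $O(\log d)$, not $\Omega(N/n)$, and at $\gamma = 1/n$ your scale-robust bound would demand a cut size of $\Omega(d)$ to disconnect a single pair, which is off by a polynomial factor. In fact the fractional sparsest cut of $(G,P)$ equals the integral one up to constants (for a single demand pair the fractional vertex min-cut equals the integral one by max-flow/min-cut), so the integrality gap of the \emph{specific} instance $(G,P)$ is $O(1)$. No direct bound on $\textsc{SCut}(G,P)$ can give you the theorem.

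The argument that actually works is different in kind: it bounds the function $\scg(N)$ from below rather than the sparsest cut of a single instance. One assumes $\widehat{\textsc{SVCut}}$ has integrality gap at most $g(N)$ on \emph{all} instances with at most $N$ nonterminal nodes, and derives a contradiction. In each peeling round, you do not argue that the residual's integral sparsest cut ``remains $O(r)$'' (there is no reason for that to hold, and your appeal to structural inheritance from the product $H \times G_S$ doesn't supply it); instead you argue that the residual's \emph{fractional} sparsest cut stays at most $\varepsilon^{-1}N/(nd')$ --- this is immediate because Claim~\ref{claim:long_paths_in_G} survives vertex deletion, with the $\varepsilon^{-1}$ accounting for the smaller demand set --- and then you apply the assumed bound $g(N)$ to get an integral cut $\mathcal{S}_i$ on the residual of size at most $\varepsilon^{-1}(N/(nd'))\,g(N)\,p_i$, where $p_i$ is the number of pairs it disconnects. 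The accounting then closes via the telescoping bound $\sum_i p_i \le n$, with no need to count rounds or pay a $\log n$ factor: the total is $|\mathcal{S}| = O(N/d')\,g(N)$, and since $\mathcal{S}$ disconnects $(1-\varepsilon)n$ pairs, Lemma~\ref{lem:big_cut_fail} (in the form already proved for a fixed $\varepsilon$) gives $|\mathcal{S}| = \Omega(N)$, hence $g(N) = \Omega(d')$. The large integrality gap is witnessed not by $(G,P)$ itself but by one of the residual sub-instances produced during the peeling.
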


We defer the proof of this theorem to Appendix \ref{app:sc} since it is essentially identical to the argument in Section 3.2 of \cite{chuzhoy2009polynomial}.

\subsection{Integrality Gap of the Flow LP of Directed Steiner Forest \label{sec:dsfgap}}

In the Directed Steiner Forest problem, we are given a  weighted, directed graph $G = (V, E, w)$ with weight $w: E \mapsto \mathbb{R}_{\geq 0}$, and a set of $p$ demand pairs $P = \{(s_i, t_i) \in V \times V \mid i \in [1, p] \}$. We are asked to return a subgraph $H \subseteq G$ minimizing $\sum_{e \in E(H)}w_e$, subject to there being a directed path from $s_i$ to $t_i$ in $H$ for all $(s_i, t_i) \in P$. 
For each edge $e \in E$, let $x_e \in \{0, 1\}$ be an indicator variable that takes value 1 if $e$ is in the solution subgraph  and else 0. We can rephrase the Directed Steiner Forest problem as the following (informal) integer program:
\begin{center}
\begin{tabular}{|c|}
\hline 
 \textsc{DSF IP}  \\
$\begin{array}{clccc}
\min & \sum_{e \in E} w_e x_e & \\
\text{s.t.} & \text{edge capacities $\{x_e\}_{e \in E}$ support a one unit $s$-$t$ flow $f_{s, t}$ in $G$} & \forall (s, t) \in P\\
& x_e \in  \{0, 1\}
\end{array}$
\tabularnewline
\hline
\end{tabular}
\end{center}
We will write $\textsc{DSF}(G, P)$ for the value of the Directed Steiner Forest problem on inputs $G, P$. A natural LP relaxation of Directed Steiner Forest is to let $x_e \geq 0$, so that edge capacities can be fractional. We refer to this LP relaxation as the ``flow LP'' of Directed Steiner Forest, since we can interpret it as a generalization of the $s$-$t$ minimum-cost flow problem to multiple demand pairs.  We will write $\widehat{\textsc{DSF}}(G, P)$ for the value of the flow LP on inputs $G, P$. Now let $\Pi_i$ denote the set of all $s_i \leadsto t_i$-paths in $G$  for all $i \in [1, p]$. 
We formally state the flow LP as:
\begin{center}
\begin{tabular}{|c|}
\hline 
 \text{Flow LP for DSF} 
 \\

$\begin{array}{clccc}
\min & \sum_{e\in E} w_ex_e \\
\text{s.t.} & \sum_{\pi \in \Pi_{i}} f_{\pi}^{i}   & \ge & 1 & \forall i \in [1, p]  \\
& \sum_{\pi \in \Pi_{i}} f_{\pi}^{i} & \leq & x_e & \forall e \in E, i \in [1, p] \\
 & x_e & \ge & 0 & \forall  e\in E \\
 & f_{\pi}^{i} & \ge & 0 & \forall i \in [1, p], \pi \in \Pi_{i}
\end{array}$
\tabularnewline
\hline
\end{tabular}
\end{center}

We define the following function to capture the integrality gap of the flow LP of Directed Steiner Forest as a function of the sizes of the inputs $G$ and $P$. 

\begin{definition}[\dsfg] The function $\dsfg(n, p)$ is the least integer $k$ such that for every $n$-node weighted, directed graph $G$ and set of demand pairs $P$ of size $|P| = p$, we have 
$$
\textsc{DSF}(G, P) \leq k \cdot \widehat{\textsc{DSF}}(G, P). 
$$
\end{definition}
 
Our goal is to prove: 

\begin{theorem}
For all $n$ and $p \in [1, n^{2 - o(1)}]$,
$$
\dsfg(n, p) = \Omega\left(\frac{\beta(n, p, \infty)}{n^{3/2}}\right).
$$
In particular, $\dsfg(n, n^{2-o(1)}) = \Omega\left(n^{1/2 - o(1)}\right)$.
\label{thm:dsf_gap}
\end{theorem}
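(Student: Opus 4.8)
The plan is to exhibit a single Directed Steiner Forest instance $(G,P,w)$ on $\Theta(n)$ vertices with $|P|=p$ demand pairs for which the integral optimum $\textsc{DSF}(G,P)$ is $\Omega(\beta(n,p,\infty))$ while the flow-LP optimum $\widehat{\textsc{DSF}}(G,P)$ is $O(n^{3/2})$; dividing then gives the theorem, and instantiating at $p=n^{2-o(1)}$ gives the concrete $n^{1/2-o(1)}$ bound. First I would take a path system $S=(V,\Pi)$ with $\le n$ nodes, $p$ paths, bridge girth $\infty$, and $\|S\|=\Theta(\beta(n,p,\infty))$; by Corollary~\ref{cor:inftyacyclic} and the Cleaning Lemma (Lemma~\ref{lem:cleaning}) I may assume $S$ is acyclic, every node has degree $\Theta(d)$, and every path has length $\Theta(\ell)$, where $nd=p\ell=\Theta(\beta(n,p,\infty))$. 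Let $G_S$ be the digraph on $V$ obtained by joining consecutive nodes of each path of $\Pi$; exactly as in the reachability-preserver reduction (Theorem~\ref{thm:rpreduction}), these $\Theta(\beta(n,p,\infty))$ edges are pairwise disjoint (two paths sharing an edge would form a $2$-bridge) and each path is the unique route for its endpoint pair inside $G_S$. I would give every $G_S$-edge weight $1$, let $P$ be (essentially) the set of path endpoints, and then superimpose on the same $O(n)$ vertices an auxiliary ``spreading gadget'' $L$, whose edge weights are chosen large enough that no \emph{integral} solution can profitably substitute gadget edges for the weight-$1$ backbone.

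For the integral lower bound I would argue $\textsc{DSF}(G,P)=\Omega(\beta(n,p,\infty))$: since the gadget edges are too heavy to replace backbone edges in any integral subgraph, a feasible integral solution must, after deleting the gadget, still connect every demand pair inside $G_S$, and because the backbone paths are forced and edge-disjoint this retains $\Omega(\|S\|)=\Omega(\beta(n,p,\infty))$ edges of weight $1$.

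The heart of the proof is the fractional upper bound $\widehat{\textsc{DSF}}(G,P)=O(n^{3/2})$. Here I would write down an explicit feasible fractional solution in which each demand pair $(s_i,t_i)$ spreads its unit of flow evenly over $\Theta(\sqrt{n})$ nearly disjoint routes that traverse the gadget $L$ together with short pieces of backbone paths, engineered so that \emph{every} edge of $G$ --- backbone edge of $G_S$ or gadget edge of $L$ --- carries at most $O(1/\sqrt{n})$ units of flow from any single commodity. Then one can take $x_e=O(1/\sqrt{n})$ for every edge, and since $G$ has $O(n)$ vertices and hence $O(n^2)$ edges of bounded weight, the LP objective is $\sum_e w_e x_e = O(n^2\cdot n^{-1/2})=O(n^{3/2})$. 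I expect the design of $L$ and of this routing scheme to be the main obstacle: one must make the $\Theta(\sqrt{n})$ routes simultaneously available to all $p$ commodities with every load $O(1/\sqrt{n})$, while keeping each individual route expensive enough that an integral solution committing to it would pay more than it does by keeping the $\Omega(\beta(n,p,\infty))$ backbone --- that is, one must reconcile ``fractionally cheap and spreadable'' with ``integrally useless'' on the same graph. This is the characteristic difficulty of polynomial integrality-gap constructions for directed network design, and it is why the construction should parallel that of Chuzhoy and Khanna \cite{chuzhoy2009polynomial}; in particular the gadget $L$ should be layered so that the spreading happens in a controlled, shared way across layers.

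Combining the two bounds yields $\dsfg(n,p)\ge \textsc{DSF}(G,P)/\widehat{\textsc{DSF}}(G,P)=\Omega(\beta(n,p,\infty)/n^{3/2})$; the hypothesis $p\le n^{2-o(1)}$ is what allows the path system and the gadget to co-exist on $O(n)$ vertices and keeps us in the regime where $\beta(n,p,\infty)$ is large. Finally, taking $p=n^{2-o(1)}$ and applying Corollary~\ref{cor:binftylb} with $d=\log n$ gives $\beta(n,n^{2-o(1)},\infty)=\Omega\!\big(n^{2/(d+1)}(n^{2-o(1)})^{(d-1)/d}\big)=n^{2-o(1)}$, so that $\dsfg(n,n^{2-o(1)})=\Omega(n^{1/2-o(1)})$, as claimed.
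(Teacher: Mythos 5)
There is a genuine gap, and it is located exactly where you flagged the ``main obstacle.'' You correctly identify the problem --- a single gadget $L$ that is simultaneously cheap for fractional routing and useless for integral routing --- but you do not resolve it, and the quantities you write down are inconsistent. For the fractional bound you need every edge to carry $x_e = O(1/\sqrt{n})$ \emph{and} to have bounded weight so that $\sum_e w_e x_e = O(n^2 \cdot n^{-1/2})$, yet for the integral bound you need the gadget weights ``large enough that no integral solution can profitably substitute gadget edges for the weight-$1$ backbone.'' If the gadget weights are $O(1)$, an integral solution can use them; if they are superconstant, the fractional objective is no longer $O(n^{3/2})$. So the two halves of your argument place contradictory demands on $w$, and the gadget as described cannot exist. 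You also assume a Chuzhoy--Khanna-style layered gadget is needed; the paper does not use one for DSF at all.

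The paper sidesteps the tension with a quite different construction. It reduces to the \emph{vertex} Directed Steiner Forest LP via Claim~\ref{claim:vdsf_lp}, so the objective is a count of non-terminal vertices, not a sum of edge weights, and no weight-tuning tradeoff ever arises. It then applies a \emph{source-restricted} cleaning lemma (Lemma~\ref{lemma:sr_cleaning_lemma}) so that the bridge-girth-$\infty$ path system can be assumed to have all its paths starting in a small set $X$ of sources, $|X| = \Theta(p/d)$. For each $x \in X$ it introduces a single new sink $y_x$ and takes $(x, y_x)$ as the demand pair. Because $x$ has degree $\Theta(d)$ and the $\Theta(d)$ paths out of $x$ are internally vertex-disjoint (a $2$-bridge would otherwise appear), the pair $(x, y_x)$ automatically has $\Theta(d)$ internally disjoint routes, so $x_v = O(1/d)$ on every non-terminal vertex is a feasible fractional solution of value $O(n/d)$. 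No spreading gadget is needed; the ``many disjoint routes per demand pair'' come for free from the degree regularity. On the integral side, any feasible solution must keep $\Omega(n)$ edges of the backbone (each demand requires $\Omega(\ell)$ edges, they are edge-disjoint, and $|P|\cdot\ell = \Theta(n)$), and a graph with $\Omega(n)$ edges among the non-terminal vertices has $\Omega(\sqrt{n})$ such vertices. The gap is $\Omega(\sqrt{n}) / O(n/d) = \Omega(d/\sqrt{n}) = \Omega(\beta(n,p,\infty)/n^{3/2})$, matching your target numerically but via a different split (integral $\Omega(\sqrt{n})$ vs.\ fractional $O(n/d)$, rather than your integral $\Omega(\beta)$ vs.\ fractional $O(n^{3/2})$). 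The source-restriction and the passage to vertex DSF are the two ideas your proposal is missing, and together they replace the gadget you were unable to construct.
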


As with our flow-cut gap lower bounds, we will actually achieve this integrality gap by lower bounding the integrality gap of \textit{Vertex} Directed Steiner Forest, which we define below. 

\begin{definition}[Vertex Directed Steiner Forest] In the Vertex Directed Steiner Forest problem, we are given an $n$-node graph $G = (V \cup V', E)$, where $V$ denotes the nonterminal vertices of $G$ and $V'$ denotes the terminal vertices of $G$ and $V \cap V' = \emptyset$. We are also given a  set of demand pairs $P \subseteq V' \times V'$. We are asked to return a subgraph $H\subseteq G$ minimizing $|V(H) \cap V|$, subject to there being a directed path from $s$ to $t$ in $H$ for all $(s, t) \in P$.    
\end{definition}
We write $\textsc{VDSF}(G, P)$ for the value of the Vertex Directed Steiner Forest problem on inputs $G, P$. We will also consider a natural LP relaxation $\widehat{\textsc{VDSF}}$, defined as follows.
\begin{center}
\begin{tabular}{|c|}
\hline 
$\widehat{\textsc{VDSF}}$ \\
$\begin{array}{clccc}
\min & \sum_{v \in V} x_v & \\
\text{s.t.} & \text{node capacities $x_v$  support a one unit $s$-$t$ flow $f_{s, t}$ in $G$} & \forall (s, t) \in P\\
& x_v \geq 0
\end{array}$
\tabularnewline
\hline
\end{tabular}
\end{center}
Let $\widehat{\textsc{VDSF}}(G, P)$ denote the value of the  $\widehat{\textsc{VDSF}}$ LP on inputs $G, P$. To lower bound $\dsfg$, it will suffice to lower bound the integrality gap of $\widehat{\textsc{VDSF}}$.  
Specifically, we will need the following lemma: 
\begin{claim}
Suppose we can construct an $n$-node graph $G$ and a set of demand pairs $P$ of size $p$ with
$$
\frac{\textsc{VDSF}(G, P)}{\widehat{\textsc{VDSF}}(G, P)} \geq k.
$$
Then $\dsfg(2n, p) \geq k$.
\label{claim:vdsf_lp}
\end{claim}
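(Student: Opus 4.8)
The plan is to reduce Vertex Directed Steiner Forest to (edge) Directed Steiner Forest by the textbook node-splitting gadget, and to verify that this gadget preserves the \emph{exact} values of both the integer program and its flow-LP relaxation. Given the $n$-node instance $G = (V \cup V', E)$ and demand set $P$ realizing $\textsc{VDSF}(G,P)/\widehat{\textsc{VDSF}}(G,P) \ge k$, I would build an edge-weighted graph $G'$ as follows: keep every terminal $t \in V'$ as a single vertex; replace every nonterminal $v \in V$ by two copies $v_{\mathrm{in}}, v_{\mathrm{out}}$ joined by a directed edge $(v_{\mathrm{in}}, v_{\mathrm{out}})$ of weight $1$; and redirect each original edge $(u,v) \in E$ so that it leaves the ``out'' copy of $u$ and enters the ``in'' copy of $v$ (using the terminal itself whenever $u$ or $v$ is a terminal), giving all these redirected edges weight $0$. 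The demand set is unchanged. Since $G$ has at most $n$ nonterminals, $G'$ has at most $2n$ vertices, and it carries $|P| = p$ demand pairs.

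Next I would check the two value equalities. For the integral side, a subgraph of $G'$ has positive weight exactly from the split edges it contains, and the set of split edges present corresponds bijectively to a set $U \subseteq V$ of nonterminals; moreover an $s \leadsto t$ path exists in the subgraph iff the corresponding path (with each $v_{\mathrm{in}} \to v_{\mathrm{out}}$ contracted) exists in the vertex-induced subgraph $G[V' \cup U]$. Hence minimum-weight feasible subgraphs of $G'$ are in value-preserving correspondence with minimum-cardinality feasible nonterminal sets in $G$, so $\textsc{DSF}(G', P) = \textsc{VDSF}(G, P)$. For the fractional side, a solution $\{x_v\}$ of $\widehat{\textsc{VDSF}}$ with commodity flows $f_{s,t}$ gives a flow-LP solution on $G'$ by setting $x_{(v_{\mathrm{in}}, v_{\mathrm{out}})} := x_v$, assigning each weight-$0$ edge the (at most one unit) per-commodity flow it carries, and routing the same commodity flows through the split edges; the objective is unchanged because only weight-$1$ edges contribute. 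Conversely, any flow-LP solution on $G'$ restricts to a $\widehat{\textsc{VDSF}}$ solution of the same objective by reading $x_v$ off the edge $(v_{\mathrm{in}}, v_{\mathrm{out}})$, since each commodity's flow through $v$ equals its flow on that edge. Thus $\widehat{\textsc{DSF}}(G', P) = \widehat{\textsc{VDSF}}(G, P)$.

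Combining the two equalities, $G'$ is an instance on at most $2n$ vertices and $p$ demand pairs whose integrality gap equals $\textsc{VDSF}(G,P)/\widehat{\textsc{VDSF}}(G,P) \ge k$; padding $G'$ with isolated vertices up to exactly $2n$ vertices leaves the gap unchanged, so $\dsfg(2n, p) \ge k$, as claimed. The whole argument is a routine gadget reduction, and the only point genuinely requiring care is the fractional direction of the equivalence — specifically, making sure the weight-$0$ edges can always be given enough capacity to support the flows without affecting the objective, which is automatic since each commodity routes a single unit and those edges have weight $0$; the remainder is bookkeeping.
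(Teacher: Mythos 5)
Your proof is correct and uses the same node-splitting gadget as the paper's proof in Appendix C.3 (your $v_{\mathrm{in}}, v_{\mathrm{out}}$ are the paper's $v^+, v^-$), preserving both the integral and fractional values. The only cosmetic difference is that you establish two-sided equalities $\textsc{DSF}(G',P) = \textsc{VDSF}(G,P)$ and $\widehat{\textsc{DSF}}(G',P) = \widehat{\textsc{VDSF}}(G,P)$, whereas the paper proves only the one-sided inequalities $\textsc{VDSF}(G,P) \le \textsc{DSF}(G_1,P_1)$ and $\widehat{\textsc{DSF}}(G_1,P_1) \le \widehat{\textsc{VDSF}}(G,P)$, which already suffice.
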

\begin{proof}
    This claim follows from a standard reduction of maximum flow in node-capacitated graphs to maximum flow in edge-capacitated graphs. We defer the proof to Appendix \ref{app:vdsf_lp}.
\end{proof}

 Our Vertex Directed Steiner Forest instance $G, P$ will have the following two properties:
\begin{enumerate}
    \item For all $(s, t) \in P$, there are at least $\Omega(\beta(n, p, \infty)/n)$  pairwise internally vertex-disjoint $s \leadsto t$ paths in $G$. 
    \item Any feasible subgraph $H \subseteq G$ must satisfy $|V(H) \cap V| = \Omega(\sqrt{n})$.
\end{enumerate}

Note that by property 1, if we assign a fractional node capacity of $x_v = cn / \beta(n, p, \infty)$ for some sufficiently large constant $c > 0$ to each vertex $v \in V$, then we obtain a feasible solution to $\widehat{\textsc{VDSF}}$ LP of size $O(n^2 / \beta(n, p, \infty))$. This, together with property 2, implies an integrality gap for Vertex Directed Steiner Forest on the $G, P$ of size $\Omega(\sqrt{n} / (n^2 / \beta(n, p, \infty))) = \Omega\left(\frac{\beta(n, p, \infty)}{n^{3/2}}\right)$, as desired.

To construct $G, P$, we will start with a  path system $S$ on $n$ nodes and $p$ paths, and with bridge girth $\infty$ and size $\|S\| = \beta(n, p, \infty)$. 
To obtain our desired construction, we will need to modify $S = (V, \Pi)$ so that it is \textit{source-restricted} with respect to a set of source nodes $X \subseteq V$. 
We say:
\begin{definition} [Source-Restricted Path Systems]
A path system $S = (V, \Pi)$ is source-restricted with respect to some $X \subseteq V$ if every path $\pi \in \Pi$ has its first node in $X$ and every following node in $V \setminus X$.
\end{definition}
To obtain our desired source-restricted path system, we will use the following modified cleaning lemma. 

\begin{lemma}[Source-Restricted Cleaning Lemma]
\label{lemma:sr_cleaning_lemma}
For all $n, p, k$, there exists a path system $S$ with $\le n$ nodes, $\le p$ paths, bridge girth $>k$, $\|S\| = \Omega(\beta(n, p, k))$, and the following two additional properties:
\begin{itemize}
\item $S$ satisfies the properties of the original Cleaning Lemma (Lemma \ref{lem:cleaning}); that is:
\begin{itemize}
\item (Approximately Degree-Regular) All nodes have degree $\Theta(d)$, where $d$ is the average degree in $S$, and

\item (Approximately Length-Regular) All paths have length $\Theta(\ell)$, where $\ell$ is the average length in $S$.
\end{itemize}
\item $S$ is source-restricted with respect to a set $X$ of size $|X| = \Theta(p/d)$. 
\end{itemize}
\end{lemma}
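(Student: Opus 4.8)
The plan is to start from the output of the ordinary Cleaning Lemma (Lemma \ref{lem:cleaning}) and then surgically introduce fresh source nodes, one per path, while only losing a constant factor in size. First I would invoke Lemma \ref{lem:cleaning} to obtain a path system $S_0 = (V_0, \Pi_0)$ with $\le n$ nodes, $\le p$ paths, bridge girth $>k$, $\|S_0\| = \Omega(\beta(n,p,k))$, and both regularity properties: every node has degree $\Theta(d)$ and every path has length $\Theta(\ell)$, where $nd = \|S_0\| = p\ell$. The obstruction to source-restrictedness is that the first node of a path in $\Pi_0$ is an ordinary node of degree $\Theta(d)$, and that same node appears (in the interior or as the start) of many other paths. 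The fix is to split off, for each path $\pi$, a private copy of its start vertex. Concretely, for each $\pi \in \Pi_0$ create a brand-new node $x_\pi$, and replace $\pi$ by the path $x_\pi$ followed by $\pi$ with its original first node either deleted or retained (retaining it is cleanest: prepend $x_\pi$ to $\pi$). Let $X := \{x_\pi : \pi \in \Pi_0\}$ and let $S = (V_0 \cup X, \Pi)$ be the resulting system. Each $x_\pi$ has degree exactly $1$, lies only at the front of $\pi$, and every other node of $\pi$ lies in $V_0 = V \setminus X$, so $S$ is source-restricted with respect to $X$.

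Next I would check the three required properties. \textbf{Size and path count:} $\|S\| = \|S_0\| + p = \Omega(\beta(n,p,k))$ since $\beta(n,p,k) = \Omega(p)$ always; the number of paths is unchanged at $\le p$, and the number of nodes is $|V_0| + |X| \le n + p$. If one insists on exactly $\le n$ nodes, delete the original start node of each path when prepending $x_\pi$ (so $|\pi|$ is unchanged and $\|S\| = \|S_0\|$, and $|V_0 \cup X| \le n$ after the $p$ deletions are compensated by the $p$ additions; a careful bookkeeping shows we stay within $n$, or we simply rescale $n$ by a constant as is standard in this area). \textbf{Bridge girth:} adding a degree-$1$ node $x_\pi$ that appears only as the first element of the single path $\pi$ cannot create any new bridge: any bridge using $x_\pi$ would need $x_\pi$ to appear on two distinct paths (as $v_1$ on both the river $\pi_b$ and an arc), which is impossible since $\deg(x_\pi)=1$. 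Hence the bridge girth of $S$ is still $>k$, and the same argument works verbatim for ordered bridge girth if one orders $\Pi$ consistently with the original ordering. \textbf{Degree regularity:} nodes in $V_0$ retain degree $\Theta(d)$, but the new nodes in $X$ have degree $1$, which breaks exact $\Theta(d)$-regularity unless $d = \Theta(1)$. The remedy is to observe $|X| = p$ while $|V_0| = n$, and since $\beta(n,p,k) = \Omega(n)$ forces $d = \|S_0\|/n = \Omega(1)$; but to genuinely get $\Theta(d)$ on \emph{every} node we instead re-clean: re-apply Lemma \ref{lem:cleaning} to $S$. The re-cleaning may delete some source nodes, but since $X$ is exactly the set of path-starts and the cleaning lemma only deletes nodes/paths/node-instances (never reorders), one can arrange (by the standard argument that the discarded mass is a constant fraction) that $\Theta(p)$ paths survive, each still starting in $X$, yielding $|X| = \Theta(p/d)$ after the degree becomes uniformly $\Theta(d)$: indeed in the re-cleaned system $\|S\| = \Theta(\beta(n,p,k))$, the number of surviving paths is $p' = \Theta(p)$, and each contributes exactly one source node, so $|X| = p'$; combined with $\|S\| = p'\ell' = nd$ one gets $p' = nd/\ell' = \Theta(p)$, and the claimed bound $|X| = \Theta(p/d)$ should instead read $|X| = \Theta(p)$ in the retained-start-node formulation — I would match whichever normalization the downstream application (Theorem \ref{thm:dsf_gap}) actually uses, which is $|X| = \Theta(p/d)$, achieved by the \emph{alternative} construction below.

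The alternative construction, which directly gives $|X| = \Theta(p/d)$: rather than a private source per path, group the paths. Partition $\Pi_0$ into $\Theta(p/d)$ groups of $\Theta(d)$ paths each (possible by the pigeonhole / regularity), and for each group introduce one shared new source node $x$, prepending $x$ to every path in its group. Then each $x$ has degree $\Theta(d)$, matching the rest of the system, and $|X| = \Theta(p/d)$ as desired; source-restrictedness holds since each $x \in X$ occurs only at the front of its group's paths and never in $V_0$. The bridge-girth check is the one subtle point and the main obstacle: now a shared source $x$ \emph{does} appear on $\Theta(d)$ paths, so it could in principle serve as $v_1$ in a bridge. But $x$ is the \emph{first} node of each of its paths, so on any such path $\pi_i$ we would need $v_1 = x$ preceding $v_2$ — fine — yet for $x$ to also lie on the river $\pi_b$ with $v_1 = x$ preceding $v_b$, the river $\pi_b$ must also be one of the group's paths (those are the only paths containing $x$); but then $v_b$ is some node of $\pi_b \in \Pi_0$, and the arcs $\pi_1, \dots, \pi_{b-1}$ together with $\pi_b$ restricted to $V_0$ would already form a $b$-bridge in $S_0$ (using $v_1' :=$ the original start of $\pi_b$, or directly using that the sub-bridge on $v_2, \dots, v_b$ closes up through $\pi_b$), contradicting bridge girth $>k$ of $S_0$. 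I would write this last implication out carefully — it is where all the work is — and then conclude, after a final re-application of Lemma \ref{lem:cleaning} (which preserves source-restrictedness since it never reorders paths and the only node we could lose from a path's front is a degree-$\Theta(d)$ element treated like any other), that $S$ has all four stated properties. The ordered version follows by carrying the path-ordering of $S_0$ through unchanged, since prepending source nodes does not affect which path is "last."
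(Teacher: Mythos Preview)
Your alternative construction---grouping the paths into $\Theta(p/d)$ buckets and prepending a shared fresh source $x$ to every path in a bucket---has a genuine gap in the bridge-girth check. The problem already appears at $b=2$: if two paths $\pi, \pi'$ in the same bucket share any node $v \in V_0$ (which they typically will, since nodes have degree $\Theta(d) > 1$), then after prepending $x$ you have $x <_{\pi} v$ and $x <_{\pi'} v$, which is a $2$-bridge. Your grouping was arbitrary (``pigeonhole / regularity''), so nothing prevents this. For larger $b$, your sketch that the arcs together with $\pi_b$ restricted to $V_0$ ``would already form a $b$-bridge in $S_0$'' also does not go through: once you drop $v_1 = x$, you would need $v_2$ to lie on the river $\pi_b$ to close up the bridge in $S_0$, and there is no reason it should.

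The paper's approach avoids these issues by never introducing new nodes at all. It takes a random subset $X \subseteq V$ of size $\Theta(p/d)$ of the \emph{existing} vertices, discards paths missing $X$, truncates each surviving path to begin at its first $X$-node, and then deletes any later occurrences of $X$-nodes from each path. All operations are deletions from the original system, so bridge girth is automatically preserved. The probabilistic analysis shows that a constant fraction of $\|S\|$ survives in expectation (each path hits $X$ early with constant probability, and the subsequent truncations and cleanup remove only a constant fraction of the mass). This is a fundamentally different mechanism from yours: rather than manufacturing sources and arguing no bridges are created, it carves sources out of the ambient node set so that the subsystem inherits bridge-freeness for free.
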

\begin{proof}
We defer the proof of this lemma to Appendix \ref{app:sr_cleaning_lemma} since it's  similar to the original cleaning lemma (Lemma \ref{lem:cleaning}). 
\end{proof} 

Now, using the source-restricted cleaning lemma, we may assume $S = (V, \Pi)$ is a path system on $n$ nodes and $p \in [1, n^{2-o(1)}]$ paths 
that is source-restricted with respect to a set $X \subseteq V$ of size $|X| = \Theta(p/d)$; $S$ has bridge girth $\infty$ and size $\|S\| = \Theta(\beta(n, p, \infty))$; and $S$ is approximately degree-regular and approximately  length-regular. 
Let $d$ be the average node degree of $S$ and $\ell$ be the average path length of $S$; we will assume that  $\ell = \omega(1)$. We can easily guarantee this assumption by requiring that $p = \frac{n^2}{e^{\omega(\sqrt{\log n})}} \leq n^{2 - o(1)}$  (this follows from existing reachability preserver lower bounds implied by \cite{Behrend46}).

For each $x \in X$, let $\Pi_x$ denote the set of paths in $\Pi$ that start with node $x$. Note that by construction, $\{\Pi_x\}_{x \in X}$ is a partition of the set of paths in $\Pi$. 
We will use path system $S$, along with set $X$ and collection  $\{\Pi_x\}_{x \in X}$, to construct our directed graph $G$ and set of demand pairs $P$. Roughly, each node $x \in X$ will be a terminal source node in $G$, and for each $x \in X$ we will add a new terminal sink node $y_x$ to $G$. The paths in $\Pi_x$ will become the internally vertex-disjoint $x \leadsto y_x$ paths in $G$ for $(x, y_x) \in P$. We will explicitly construct $G$ using the following procedure.
\begin{itemize}
    \item Let $G \leftarrow (V, \emptyset)$, let $P \leftarrow \emptyset$, and let $V' \leftarrow X$. 

    \item Fix a path $\pi \in \Pi_x$ for some $x \in X$. For each consecutive pair of vertices in $\pi$, add a directed edge between the corresponding pair of vertices in $G$. Repeat this procedure for each $\pi \in \Pi_x$, where $x \in X$. 

    \item For each $x \in X$, add a new terminal vertex $y_x$ to $G$, so that $G \leftarrow G \cup \{y_x\}$. 
    Let $V' \leftarrow V' \cup \{y_x\}$. 
    For each path $\pi[x \leadsto y] \in \Pi_x$, add a directed edge $(y, y_x)$ to $G$. Add the demand pair $(x, y_x)$ to $P$, so that $P \leftarrow P \cup \{(x, y_x)\}$. 
    
    \item Return the directed graph $G$ and set of demand pairs $P$. The terminal vertices of $G$ will be $V'$, and the nonterminal vertices of $G$ will be $V \setminus V'$.
\end{itemize}

We will now prove that the resulting graph $G$ and set of demand pairs $P$ has our desired properties.

\begin{lemma}
\label{lem:VDSF_lem}
The above procedure outputs a directed graph $G = (V \cup V', E)$ and a set of demand pairs $P \subseteq V' \times V'$ of size $|P| = \Theta(p/d)$  satisfying the following properties:
\begin{itemize}
    \item for all $(x, y_x) \in P$, there are $\Theta(d)$ pairwise internally vertex-disjoint $x \leadsto y_x$ paths in $G$. 
    \item for all $(x, y_x) \in P$, any solution $H \subseteq G$ to vertex directed Steiner forest on $G, P$ must contain a subpath of length $\Omega(\ell)$ of some path in $\Pi_x$.  
\end{itemize}
\end{lemma}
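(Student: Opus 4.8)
The plan is to verify the three claimed properties in turn; the third is the substantive one, so I will spend most effort there. Throughout I use the source-restricted cleaning lemma (Lemma~\ref{lemma:sr_cleaning_lemma}) to take $S=(V,\Pi)$ approximately degree- and length-regular, source-restricted with respect to $X$ with $|X|=\Theta(p/d)$, of bridge girth $\infty$, and (which we may take to be acyclic, cf.\ Corollary~\ref{cor:inftyacyclic}). The one combinatorial fact I will use repeatedly is that any two distinct paths of $\Pi$ share at most one node: two shared nodes would be ordered identically in both by a topological order of $S$, hence form a $2$-bridge. Since $S$ is source-restricted, $x\in X$ is the first node of every path containing it, so $\Pi_x$ is exactly the set of $\deg_S(x)=\Theta(d)$ paths through $x$; in particular $|P|=|X|=\Theta(p/d)$, giving the size claim.

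For the internally-vertex-disjoint paths: for each $\pi\in\Pi_x$ the node sequence of $\pi$ followed by the edge $(\mathrm{last}(\pi),y_x)$ is an $x\leadsto y_x$ path in $G$. If two distinct $\pi_1,\pi_2\in\Pi_x$ shared a node $v\neq x$, then $x$ (their common first node) and $v$ would be a $2$-bridge; so these $\Theta(d)$ paths meet only at $x$ and $y_x$ and are internally vertex-disjoint. Conversely, any family of internally-vertex-disjoint $x\leadsto y_x$ paths uses distinct out-edges at $x$, and the out-neighbors of $x$ in $G$ are precisely the second nodes of the paths in $\Pi_x$, which are pairwise distinct by the same $2$-bridge argument; hence the count is $\Theta(d)$, no more.

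For the third property I would prove the stronger statement that every simple $x\leadsto y_x$ path $\rho$ in $G$ equals some $\pi_b\in\Pi_x$ followed by $(\mathrm{last}(\pi_b),y_x)$. Let $\rho'$ be $\rho$ with $y_x$ deleted; its last node is $w=\mathrm{last}(\pi_b)$ for some $\pi_b\in\Pi_x$, since the only edges into $y_x$ come from last nodes of paths in $\Pi_x$. Decompose $\rho'$ greedily into maximal \emph{runs}, each a contiguous subpath of a single path of $\Pi$, with consecutive runs on distinct paths; say the runs lie on $\tau_1,\dots,\tau_r$ with junctions $a_0=x,a_1,\dots,a_r=w$, so $a_{j-1}<_{\tau_j}a_j$, and, since $G$ is a DAG, $a_i$ precedes $a_j$ in the topological order whenever $i<j$. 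Source-restriction forces $\tau_1\in\Pi_x$. No path repeats among the $\tau_j$: if $\tau_i=\tau_j=:\sigma$ with $i<j$ and $j-i$ minimal, then $\tau_{i+1},\dots,\tau_{j-1}$ are pairwise distinct and all distinct from $\sigma$, and $a_i<_{\sigma}a_{j-1}$ (from the topological order), so $a_i,\dots,a_{j-1}$ with arcs $\tau_{i+1},\dots,\tau_{j-1}$ and river $\sigma$ form a bridge of order $j-i\ge 2$ — impossible. Now suppose $r\ge 2$. If $\pi_b\notin\{\tau_1,\dots,\tau_r\}$, then $a_0,\dots,a_r$ with arcs $\tau_1,\dots,\tau_r$ and river $\pi_b$ (note $x=\pi_b[1]<_{\pi_b}w=\mathrm{last}(\pi_b)$) is an $(r+1)$-bridge. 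If $\pi_b=\tau_j$, take whichever side of run $j$ is nonempty: when $j\ge 2$, the nodes $a_0,\dots,a_{j-1}$ with arcs $\tau_1,\dots,\tau_{j-1}$ and river $\pi_b$ (using $x<_{\pi_b}a_{j-1}$, as $a_{j-1}\in\pi_b$) form a $j$-bridge; when $j=1$ we have $a_1\neq w$ since $r\ge 2$, and $a_1,\dots,a_r$ with arcs $\tau_2,\dots,\tau_r$ and river $\pi_b$ (using $a_1<_{\pi_b}w$) form an $r$-bridge. All of these contradict bridge girth $\infty$, so $r=1$; then $\tau_1$ contains both $x$ (as its first node) and $w=\mathrm{last}(\pi_b)$, so $\tau_1=\pi_b$ and $\rho'=\pi_b$. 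Thus any feasible $H$ must contain, for the demand $(x,y_x)$, the entire path $\pi_b\in\Pi_x$, whose length is $\Theta(\ell)=\Omega(\ell)$ by length-regularity.

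The main obstacle is exactly the bridge bookkeeping in the last paragraph: the definition of a $b$-bridge requires the river and all $b-1$ arcs to be pairwise distinct paths, so one must rule out a run repeating a path and the possibility that $\pi_b$ itself appears as one of the runs. Acyclicity (to transfer order-along-$\rho'$ into order-within $\tau_j$) and the minimal-gap choice (to keep the extracted arcs mutually distinct) are what make these case reductions go through, and I expect these to be the only genuinely delicate points; everything else is routine.
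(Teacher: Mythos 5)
Your proof is correct and follows the same approach as the paper, but you fill in substantially more detail than the paper provides. The paper's entire argument for the second bullet is the one-line parenthetical ``otherwise, this would imply a bridge in $S$''; your decomposition of $\rho'$ into maximal runs, the minimal-gap argument to rule out a repeated $\tau_j$, and the three-way case analysis on whether and where $\pi_b$ appears among the runs is precisely the bookkeeping that sentence elides. Two small things worth noting: you correctly make explicit the reliance on acyclicity of $S$ (via Corollary~\ref{cor:inftyacyclic}) to transfer order-along-$\rho'$ into order-within-$\tau_j$ --- the paper leaves this assumption implicit but it is genuinely needed, since a bridge-free system can still have $2$-cycles --- and you also supply the converse $O(d)$ direction of the first bullet, which the paper only establishes implicitly via the out-degree of $x$. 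The stronger statement you prove (every $x\leadsto y_x$ path in $G$ is \emph{exactly} some $\pi_b\in\Pi_x$ followed by $(\mathrm{last}(\pi_b),y_x)$, not merely that it contains some $\pi\in\Pi_x$ as a subpath) is the cleanest way to discharge the paper's claim.
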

\begin{proof} 
Fix a node $x \in X$. Note that $x$ has degree $\Theta(d)$ in $S$ by Lemma \ref{lemma:sr_cleaning_lemma}, so $|\Pi_x| = \Theta(d)$. Note that each path $\pi \in \Pi_x$ implies an $x \leadsto y_x$ path in $G$. These paths are pairwise internally node-disjoint, since this would otherwise imply a $2$-bridge in $S$. This proves the first property.

To see why the second property is true, observe that every $x \leadsto y_x$-path in $G$  contains as a subpath a path $\pi \in \Pi_x$ (otherwise, this would imply a bridge in $S$). Then since every path $\pi \in \Pi_x$ is of length $|\pi| = \Omega(\ell)$, the second property immediately follows.   \qedhere


\end{proof}

Now we are ready to lower bound the integrality gap of $\widehat{\textsc{VDSF}}$ on $G, P$ using Lemma \ref{lem:VDSF_lem}. 

\paragraph{Fractional solution.} For each vertex $v \in V$, let the fractional node capacity be $x_v = c/d$ for a sufficiently large constant $c > 0$. Then by property 1 of Lemma \ref{lem:VDSF_lem}, since there are $\Theta(d)$ pairwise internally vertex-disjoint $x \leadsto y_x$ paths in $G$ for all $(x, y_x) \in P$, our node capacities $\{x_v\}_{ v\in V}$ support one unit of flow for all demand pairs $(x, y_x) \in P$. Then our fractional solution is feasible and has size $cn/d = O(n/d)$. 

\paragraph{Integral solution.} By property 2 of Lemma \ref{lem:VDSF_lem}, for all $(x, y_x) \in P$, any solution subgraph $H \subseteq G$ to Vertex Directed Steiner Forest on $G, P$ must contain a subpath of length $\Omega(\ell)$ of some path in $\Pi_x$. Note that $\Pi_{x_1} \cap \Pi_{x_2} = \emptyset$ for distinct $x_1, x_2 \in X$. Additionally, note that for distinct $\pi_1, \pi_2 \in \Pi$, the corresponding paths in $G$  are edge-disjoint, since $S$ has bridge girth $\infty$. Then we conclude that any feasible solution subgraph $H \subseteq G$ to Vertex Directed Steiner Forest must have at least $$|E(H)| = |P|\cdot \Omega(\ell) = \Omega\left(\frac{\ell p}{d}\right) = \Omega(n)$$ edges. Moreover, $\Omega(n)$ of these edges must be in the induced subgraph $H[V]$, since path system $S$ is source-restricted with respect to $X$. If the number of edges in $H[V]$ is $|E(H[V])| = \Omega(n)$, then 
$$|V(H) \cap V| = |V(H[V])| =  \Omega(|E(H[V])|^{1/2})  = \Omega(\sqrt{n}).$$ We conclude that the integral solution of Vertex Directed Steiner Forest must have size at least $\Omega(\sqrt{n})$.

\vspace{3mm}
By the above analysis, the integrality gap of Vertex Directed Steiner Forest on $G, P$ is at least $\Omega(\sqrt{n} / (n/d)) = \Omega(d/\sqrt{n}) = \Omega(\beta(n, p, \infty) / n^{3/2})$, as desired. We note that in particular, the generalized Ruzsa-\Szemeredi{} lower bound constructions implied by \cite{Behrend46} prove that for $p = n^{2-o(1)}$, we have that  $\beta(n, p, \infty) = \Omega(n^{2-o(1)})$, and in particular, the average path length is $ \ell =  \beta(n, p, \infty) / p = \omega(1)$. Consequently, $\dsfg(n, n^{2 - o(1)}) = \Omega(n^{1/2 - o(1)})$. Theorem \ref{thm:dsf_gap} follows from the above discussion and Claim \ref{claim:vdsf_lp}.


\section*{Acknowledgments}

We are grateful to Omer Reingold, Vivek Madan, Matthew Fahrbach, and Idan Shabat for helpful technical discussions.

\bibliographystyle{plain}
\bibliography{refs}

\appendix

\section{A Tour through Prior Work on Girth Problems \label{app:girthtour}}

\subsection{The Girth Problem}

We first recall the pioneering work on girth reductions by Alth{\"o}fer, Das, Dobkin, Joseph, and Soares \cite{ADDJS93}:
\begin{definition} [Multiplicative Spanners]
A (multiplicative) $k$-spanner of a graph $G$ is a subgraph $H$ satisfying $\dist_H(s, t) \le k \cdot \dist_G(s, t)$ for all nodes $s, t$.
The function $\ms(n, k)$ is the least integer such that every undirected weighted $n$-node graph has a $k$-spanner on $\le \ms(n, k)$ edges.
\end{definition}

\begin{definition} [Graph Girth]
The girth of a graph $G$ is the least number of edges in a cycle in $G$ (or $\infty$ if $G$ is a forest).
The function $\gamma(n, k)$ is the maximum possible number of edges in an $n$-node graph of girth $>k$.
\end{definition}

\begin{theorem}  [\cite{ADDJS93}] \label{thm:greedyspanners}
$\ms(n, k) = \gamma(n, k+1)$.
\end{theorem}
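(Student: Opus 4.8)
The plan is to prove the two inequalities $\ms(n,k) \le \gamma(n,k+1)$ and $\ms(n,k) \ge \gamma(n,k+1)$ separately, via the \emph{greedy spanner} construction and a matching hard instance built from an extremal high-girth graph.

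For the upper bound, I would analyze the greedy algorithm: given an undirected weighted graph $G$, sort its edges in nondecreasing order of weight (breaking ties arbitrarily but fixing the order), and process them one at a time, adding edge $e = (u,v)$ of weight $w(e)$ to the current subgraph $H$ iff $\dist_H(u,v) > k \cdot w(e)$. First I would check that the output $H$ is a $k$-spanner: any edge $e = (u,v)$ not added satisfies $\dist_H(u,v) \le k \cdot w(e)$ at the time it is considered, and since edges are only ever added to $H$ afterward, this holds in the final $H$ as well; concatenating along a shortest path then yields $\dist_H(s,t) \le k \cdot \dist_G(s,t)$ for all $s,t$. The key structural claim is that $H$ has girth $> k+1$. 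Suppose for contradiction that $H$ contains a cycle $C$ with at most $k+1$ edges. Let $e = (u,v)$ be the edge of $C$ processed \emph{last}. Since every edge of $C$ lies in $H$, at the moment $e$ is processed the remaining path $C \setminus \{e\}$ — which has at most $k$ edges, each of weight at most $w(e)$ because it was processed no later than $e$ — is already entirely present in $H$, so $\dist_H(u,v) \le k \cdot w(e)$ and $e$ would not have been added, a contradiction. Hence $|E(H)| \le \gamma(n, k+1)$, proving $\ms(n,k) \le \gamma(n,k+1)$.

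For the lower bound, I would take an unweighted $n$-node graph $G$ with girth $> k+1$ and exactly $\gamma(n,k+1)$ edges, and argue that $G$ is its own unique $k$-spanner. For any edge $e = (u,v) \in E(G)$ we have $\dist_G(u,v) = 1$; meanwhile, in $G \setminus \{e\}$ any $u$--$v$ path avoiding $e$, together with $e$, forms a cycle, which by the girth bound has at least $k+2$ edges, so the path has at least $k+1$ edges. Thus $\dist_{G \setminus \{e\}}(u,v) \ge k+1 > k = k \cdot \dist_G(u,v)$, so $G \setminus \{e\}$ violates the spanner condition. Therefore every $k$-spanner subgraph of $G$ must retain all $\gamma(n,k+1)$ edges, giving $\ms(n,k) \ge \gamma(n,k+1)$, and combining the two bounds proves the theorem.

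The only point requiring genuine care — and the main (mild) obstacle — is the tie-breaking in the greedy algorithm: one must take $e$ to be the \emph{last-processed} edge of the cycle $C$ rather than a heaviest one, so that all other edges of $C$ are guaranteed to already be in $H$ when $e$ is considered, regardless of how weight ties are resolved. Everything else is a routine induction on the processing order, and the lower-bound direction is an immediate consequence of the definition of girth.
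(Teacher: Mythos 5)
Your proposal is correct and follows essentially the same two-inequality argument as the paper: the greedy algorithm with a girth analysis for the upper bound, and the uniqueness of an extremal high-girth unweighted graph as its own spanner for the lower bound. The tie-breaking observation (taking the last-processed edge of the short cycle rather than a heaviest one) is a small but genuine refinement that tightens the write-up without changing the approach.
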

\begin{proof} [Proof Sketch]
First we show that $\gamma(n, k+1) \le \ms(n, k)$.
Let $G$ be an unweighted graph with $n$ nodes, girth $>k+1$, and $\gamma(n, k+1)$ edges.
If one removes any edge $(u, v)$ from $G$, then $\dist_G(u, v)$ changes from $1$ to $>k$.
Thus $G$ is the only $k$-spanner of itself.
So if $G$ is taken as an input to the multiplicative spanner problem, one must keep $\gamma(n, k+1)$ edges in the spanner, so $\ms(n, k) \ge \gamma(n, k+1)$.

Next we show that $\ms(n, k) \le \gamma(n, k+1)$.
Let $G = (V, E, w)$ be an $n$-node graph for which we want to build a $k$-spanner.
Consider the following greedy algorithm to build a $k$-spanner.
Initially $H = (V, \emptyset)$.
Consider the edges of $G$ in nondecreasing order of weight.
When each edge $(u, v)$ is considered, we add it to $H$ iff $w(u, v) \le k \cdot \dist_H(u, v)$, i.e., the edge is currently needed in the spanner.
One can show that (1) in the end $H$ is indeed a $k$-spanner of $G$, and (2) for any cycle $C$ in $G$ that contains $\le k+1$ edges, not all edges in cycle will be added to the spanner $H$.
This is roughly because, when we consider the last edge $(u, v) \in C$, then if all previous edges from $C$ were added to $H$ then there is already a $u \leadsto v$ path of length $\le k \cdot w(u, v)$ using these edges.
Thus $H$ has girth $>k+1$, so it has $\le \gamma(n, k+1)$ edges.
So $\ms(n, k) \le \gamma(n, k+1)$.
\end{proof}

The reduction of \Althofer{} et al.~\cite{ADDJS93} generalizes also to \emph{emulators} and more generally to \emph{distance oracles}, which are arbitrary data structures that can approximate the distances of the input graph on query (see also \cite{TZ05}).
Recently, tight reductions to $\gamma$ have been achieved for vertex fault tolerant spanners as well, which ask for the size bounds for $k$-spanners that retain their distance approximation even after a bounded number of vertices fail in both the spanner and the original graph \cite{BP19, BDPV18}.

The ``girth problem'' asks for the asymptotic value of $\gamma$, which hence would also determine the asymptotic value of $\ms$.
This is a major open question in extremal combinatorics and theoretical computer science.
The following upper bound is known:
\begin{theorem} [Moore Bounds, Folklore] \label{thm:moore}
For any integers $n, k$, we have $\gamma(n, 2k) = O\left(n^{1+1/k}\right)$.
\end{theorem}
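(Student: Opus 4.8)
The plan is to prove the Moore bound in two steps: first pass to a subgraph of large minimum degree, then run a breadth-first search and use the girth hypothesis to show that the BFS ball of radius $k$ is a tree with branching factor essentially the minimum degree.

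First I would perform the standard degree-cleaning step. Let $G$ be an $n$-node graph of girth $>2k$ with $m$ edges, and write $d = 2m/n$ for its average degree. Repeatedly delete any vertex whose current degree is below $d/2$; each such deletion removes fewer than $d/2$ edges, so the process cannot delete all $n$ vertices, since that would remove fewer than $nd/2 = m$ edges in total. Hence it terminates at a nonempty subgraph $G' \subseteq G$ of minimum degree $\delta \ge d/2$, and $G'$ still has girth $>2k$ because deleting vertices and edges can only increase the girth. It therefore suffices to prove $\delta = O(n^{1/k})$, since then $m = nd/2 \le n\delta = O(n^{1+1/k})$.

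Next I would carry out the Moore-bound counting inside $G'$. Fix any vertex $v$, and for $0 \le i \le k$ let $L_i$ be the set of vertices at distance exactly $i$ from $v$. The key claim is that the BFS-tree edges (those joining $L_{i-1}$ to $L_i$) form a tree rooted at $v$ in which $v$ has $\ge \delta$ children and every vertex of $L_i$ with $1 \le i \le k-1$ has $\ge \delta - 1$ children in $L_{i+1}$; moreover all the vertices counted this way are distinct. This follows from girth $>2k$ by ruling out the two ways the tree structure could break. If some $u \in L_i$ with $i \le k$ had two distinct neighbors $w_1, w_2 \in L_{i-1}$, then taking the last common vertex $x$ on the BFS-paths from $v$ to $w_1$ and $w_2$ yields a simple cycle through $x, w_1, u, w_2$ of length at most $2i \le 2k$, a contradiction. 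If two vertices $u, u' \in L_i$ with $i \le k-1$ were adjacent, then the BFS-paths from $v$ to $u$ and $u'$ together with the edge $uu'$ yield, again via their last common vertex, a simple cycle of length at most $2i+1 \le 2k-1$, a contradiction. Consequently, for $u \in L_i$ with $1 \le i \le k-1$, $u$ has exactly one neighbor in $L_{i-1}$, none in $L_i$, and hence $\ge \delta-1$ in $L_{i+1}$; and by the first point each vertex of $L_{i+1}$ is a child of only one vertex of $L_i$, so $|L_{i+1}| \ge (\delta-1)|L_i|$. Iterating from $|L_0|=1$ and $|L_1| \ge \delta$ gives $n \ge |L_k| \ge \delta(\delta-1)^{k-1}$, and since the bound $\delta = O(n^{1/k})$ is trivial when $\delta \le 2$, while $\delta \ge 3$ gives $n \ge (\delta-1)^k$, we conclude $\delta = O(n^{1/k})$ in all cases, finishing the proof.

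The main obstacle is the case analysis in the key claim: one must verify that the BFS-paths from $v$ to the two relevant vertices genuinely close up into an \emph{honest simple cycle} (rather than a closed walk) of the claimed length, which is exactly why one passes to their last common ancestor $x$ in the BFS tree and notes that the two path segments from $x$ are internally disjoint and avoid the extra vertex; and one must track the parity thresholds carefully ($2i \le 2k$ in one case, $2i+1 \le 2k-1$ in the other) to see the argument applies up to level $k$ in the ``two parents'' case and up to level $k-1$ in the ``sibling edge'' case — precisely the ranges needed to count a branching factor of $\delta-1$ all the way down to level $k$. Everything else is routine arithmetic.
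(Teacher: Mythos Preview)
Your proof is correct and complete. It follows the classical BFS-ball argument: pass to a subgraph of minimum degree $\delta \ge d/2$, then show by the girth hypothesis that the radius-$k$ BFS ball around any vertex is a tree with branching $\ge \delta-1$, giving $n \ge \delta(\delta-1)^{k-1}$.

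The paper takes a different (also standard) route: it counts \emph{non-backtracking $k$-walks} in $G$, noting that there are $n\cdot\Omega(d)^k$ of them and that any two such walks sharing both endpoints would force a cycle of length $\le 2k$, so there can be at most $O(n^2)$ of them. Your approach is more self-contained, since it proves everything from scratch; the paper's sketch cites the walk-count lower bound $n\cdot\Omega(d)^k$ as a ``fact from graph theory,'' and that fact itself typically requires either your min-degree cleaning step or a convexity argument. Conversely, the paper's packaging avoids fixing a single root vertex and instead counts globally, which some find cleaner. Both are textbook proofs of the same bound.
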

\begin{proof} [Proof Sketch]
Let $G$ be an $n$-node graph of average degree $d$, and assume that $d$ is at least a sufficiently large constant.
A \emph{non-backtracking $k$-path} is a path in $G$, containing exactly $k+1$ nodes and $k$ edges, which may repeat nodes or edges but which never uses an edge $(u, v)$ followed consecutively by its reverse $(v, u)$.
The following are facts from graph theory:
\begin{itemize}
\item $G$ has $n \cdot \Omega(d)^k$ non-backtracking $k$-paths, and
\item If $G$ has two different non-backtracking $k$-paths with the same pair of endpoints $(s, t)$, then $G$ has a cycle on $\le 2k$.
\end{itemize}
Together, these imply that if $G$ has girth $>2k$, then it can only have $O(n^2)$ non-backtracking $k$-paths, and hence $n \cdot \Omega(d)^k = O(n^2)$.
Rearranging gives $d = O(n^{1/k})$, proving the theorem.
\end{proof}

Unfortunately, lower bounds are not as well understood.
The Moore bounds are known to be asymptotically tight when $k \in \{1, 2, 3, 5\}$ \cite{Wenger91, Tits59}.
The \emph{girth conjecture}, attributed to \Erdos{} \cite{Erdos75}, posits that the Moore bounds are tight for all other values of $k$ as well.
The girth conjecture is controversial, with no clear consensus from experts on whether it is likely to be true.

\subsection{The Weighted Girth Problem \label{sec:wtdgirth}}

Besides number of edges, in some applications one wants to minimize the \emph{total weight} of a spanner.
This is often measured as the \emph{lightness} of the spanner, relative to the input graph:

\begin{definition} [Spanner Lightness]
The lightness of a subgraph $H$ of a graph $G$ is
$$\ell(H \mid G) := \frac{w(H)}{w(\mst(G))}$$
where $\mst(G)$ is any minimum spanning tree of $G$ (or spanning forest if disconnected).
We write $\lms(n, k)$ for the least\footnote{Formally, one takes the $\inf$ of the values $L$ satisfying this condition} $L$ such that every $n$-node graph $G$ has a $k$-spanner $H$ of lightness $\ell(H \mid G) \le L$.
\end{definition}

In their study of light spanners, Elkin, Neiman, and Solomon \cite{ENS14} made the interesting point that an extension of the \Althofer{} et al~\cite{ADDJS93} reduction between $\gamma$ and $\ms$ also gives equivalence between the extremal function of graph lightness and \emph{weighted girth}, defined as follows:
\begin{definition} [Weighted Girth]
The \emph{weighted girth} of a graph $G$ is defined as
$$\min \limits_C \frac{w(C)}{\max \limits_{e \in C} w(e)}$$
where the min is over the set of cycles $C$ in $G$.
We define $\lambda(n, k)$ as the maximum\footnote{Formally, $\lambda$ is determined by the sup of the lightness of graphs satisfying this property.} lightness over $n$-node graphs of weighted girth $>k$.
\end{definition}

Note that weighted girth generalizes girth, in the sense that the concepts coincide for an unweighted graph.
Elkin et al.~\cite{ENS14} proved:
\begin{theorem} [\cite{ENS14}]
$\lms(n, k+1) = \lambda(n, k)$.
\end{theorem}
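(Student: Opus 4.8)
The plan is to adapt the Althöfer--Das--Dobkin--Joseph--Soares reduction (Theorem \ref{thm:greedyspanners}) to the weighted setting, replacing graph girth by weighted girth, edge count by total edge weight, and the greedy $k$-spanner by its obvious weighted analogue. I would establish the two inequalities $\lms \ge \lambda$ and $\lms \le \lambda$ separately, with the precise offset between the stretch parameter and the weighted-girth threshold tracked exactly as in the unweighted proof of Theorem \ref{thm:msintro}; the structure below is written so that this bookkeeping is the only thing that changes.

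\emph{Lower bound ($\lms \ge \lambda$).} I would take an $n$-node weighted graph $G$ of weighted girth above the relevant threshold whose lightness $\ell(G \mid G)$ is (arbitrarily close to) the extremal value $\lambda$, and show that $G$ is its own unique spanner of the corresponding stretch. The key computation: deleting an edge $e = (u,v)$ and taking any replacement simple path $P$ from $u$ to $v$ in $G - e$ produces a simple cycle $C = P + e$ with $w(C) = w(P) + w(e)$ and $\max_{e' \in C} w(e') \ge w(e)$; the weighted-girth hypothesis then forces $w(P)$ to be a large multiple of $w(e)$. In particular $e$ is the unique shortest $u \leadsto v$ path, so $\dist_G(u,v) = w(e)$, and $\dist_{G - e}(u,v) = w(P)$ exceeds the stretch factor times $\dist_G(u,v)$. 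Hence no spanner of the appropriate stretch may omit $e$, so every such spanner of $G$ equals $G$ and has lightness $\ell(G \mid G)$; taking the supremum over admissible $G$ gives the bound.

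\emph{Upper bound ($\lms \le \lambda$).} I would run the greedy algorithm of Theorem \ref{thm:greedyspanners} verbatim, with the appropriate stretch: process the edges of an arbitrary $n$-node weighted $G$ in nondecreasing weight order and insert $(u,v)$ into $H$ precisely when the current distance $\dist_H(u,v)$ exceeds the stretch factor times $w(u,v)$. Two facts are needed. First, $H$ is a spanner of the claimed stretch: every omitted edge is well-approximated in $H$ at the moment it is considered, and processing in weight order guarantees this approximation is not subsequently destroyed (this part is identical to the unweighted argument). Second --- the heart of the proof --- $H$ has large weighted girth: if $C \subseteq H$ were a cycle witnessing a small weighted-girth ratio, let $e = (u,v)$ be an edge of $C$ of maximum weight that is processed last among the edges of $C$; at the moment $e$ is considered the path $C - e$ already lies in $H$, so $\dist_H(u,v) \le w(C) - w(e)$, which is at most the stretch factor times $w(e)$ by the ratio bound, so the insertion test fails and $e$ is never added, contradicting $e \in C \subseteq H$.

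Finally I must convert the resulting bound on $H$ into the right normalization for $\lambda$: since $\lambda$ measures lightness of a high-weighted-girth graph relative to its own minimum spanning tree, I would observe that the greedy spanner always contains a minimum spanning tree of $G$ (whenever $u$ and $v$ lie in different components of the current $H$ the insertion test is passed automatically, so Kruskal's tree is a subgraph of $H$), so $w(\mst(H)) = w(\mst(G))$ and therefore $\ell(H \mid G) = \ell(H \mid H) \le \lambda$, which is exactly the statement. I expect the main obstacles to be (i) getting the stretch/weighted-girth offset exactly right and (ii) handling ties in the edge weights carefully in the ``processed last among maximum-weight edges of $C$'' step; a minor but easy-to-miss point is the minimum-spanning-tree observation, without which the two notions of lightness (relative to $\mst(G)$ versus $\mst(H)$) are not visibly equal.
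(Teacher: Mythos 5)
The paper does not prove this theorem; it only cites \cite{ENS14}, so there is nothing to compare against beyond the appendix's unweighted Theorem \ref{thm:greedyspanners}. Your plan --- mirroring that argument with weighted girth in place of girth, total weight in place of edge count, and the same greedy insertion rule --- is the right one, and both directions are structurally sound. Your $\mst(G) \subseteq H$ observation is the one genuinely non-obvious extra step (it is what lets you rewrite $\ell(H \mid G)$ as $\ell(H \mid H)$ so that the comparison against $\lambda$ makes sense), and you argue it correctly: whenever the current edge's endpoints lie in different components of $H$ the insertion test is trivially passed, so $H$'s component structure evolves exactly as Kruskal's does, and some MST of $G$ ends up inside $H$.

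Your ``obstacle (i),'' however, is not a minor bookkeeping nuisance: the offset in the statement as printed does not follow from the argument. Tracked carefully, the removal direction shows that if $G$ has weighted girth $>g$ and $e = (u,v)$ is deleted, then with $P$ the shortest replacement and $C = P + e$ one gets $\dist_{G-e}(u,v) = w(P) > (g-1)w(e) = (g-1)\dist_G(u,v)$ regardless of whether $w(e) = \max_{e' \in C} w(e')$; hence $G$ is its own unique $(g-1)$-spanner and $\lms(n, g-1) \ge \lambda(n, g)$. Symmetrically, the greedy $s$-spanner $H$ satisfies, for every cycle $C \subseteq H$, $w(C) > (s+1)\max_{e' \in C} w(e')$: taking $e$ to be the last edge of $C$ processed (so $w(e)$ is the max), at the moment of processing $\dist_H(u,v) \le w(C) - w(e)$, and this must exceed $s\,w(e)$ for $e$ to be inserted. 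Thus $H$ has weighted girth $> s+1$ and $\lms(n, s) \le \lambda(n, s+1)$. The two inequalities meet at $\lms(n, k) = \lambda(n, k+1)$, not at $\lms(n, k+1) = \lambda(n, k)$ as stated. The printed offset in fact mirrors the introduction's Theorem \ref{thm:msintro} ($\gamma(n,k) = \ms(n,k+1)$), which is itself inconsistent with the appendix's correctly proved Theorem \ref{thm:greedyspanners} ($\ms(n,k) = \gamma(n,k+1)$). So the issue lies in the theorem statement rather than in your outline, but you should not expect the $k+1$ versus $k$ pairing to survive the bookkeeping; aim for $\lms(n, k) = \lambda(n, k+1)$.
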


A natural next question is to ask for the relative values of $\lambda$ and $\gamma$.
It follows by considering the unweighted graph realizing $\gamma$ that
$$\lambda(n, k) = \Omega\left( \frac{\gamma(n, k)}{n} \right)$$
(note: we divide by $n$ on the right, since an unweighted graph has an MST of weight $n-1$).
A fascinating conjecture by Elkin et al.~\cite{ENS14}, known as the \emph{weighted girth conjecture}, implies that these bounds are asymptotically equal.
This remains open, but recent work of Le and Solomon \cite{LS22} implies that they are \emph{approximately} equal.

\subsection{The Bipartite Girth Problem}


The function $\gamma$ has a natural generalization to the setting of bipartite graphs:
\begin{definition} [The Extremal Function of Bipartite High-Girth Graphs]
The function $\gamgam(n, p, k)$ is the maximum possible number of edges in a bipartite graph with $n$ nodes on one side of the bipartition, $p$ nodes on the other side, and girth $>k$.
\end{definition}

We say that $\gamgam$ \emph{generalizes} $\gamma$, rather than merely being different, due to the following fact:
\begin{theorem} [Folklore] \label{thm:ggtog}
$\gamgam(n, n, k) = \Theta(\gamma(n, k))$.
\end{theorem}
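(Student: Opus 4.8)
The plan is to prove the two bounds $\gamma\gamma(n,n,k)=\Omega(\gamma(n,k))$ and $\gamma\gamma(n,n,k)=O(\gamma(n,k))$ separately. For the lower bound I would pass from an optimal ordinary high‑girth graph to its \emph{bipartite double cover}; for the upper bound I would forget the bipartition and then show that the function $\gamma$ grows by only a constant factor when the vertex count doubles.

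For the lower bound, let $G$ be an $n$-node graph of girth $>k$ with $\gamma(n,k)$ edges. Form $\tilde G$ by taking two copies $V(G)\times\{0\}$ and $V(G)\times\{1\}$ of the vertex set and placing an edge between $(u,0)$ and $(v,1)$ whenever $uv\in E(G)$. Then $\tilde G$ is a simple bipartite graph with $n$ nodes on each side and exactly $2\gamma(n,k)$ edges, so it suffices to check $\operatorname{girth}(\tilde G)\ge\operatorname{girth}(G)>k$. The key claim is: if $\tilde G$ has a cycle of length $\ell$, then $G$ has a cycle of length $\le \ell$. Indeed, the cycle $x_0,x_1,\dots,x_{\ell-1}$ projects under $(v,i)\mapsto v$ to a closed walk in $G$ with consecutive vertices distinct (no self-loops, as $G$ is simple). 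If all projected vertices are distinct, this is an $\ell$-cycle in $G$. Otherwise some projection repeats, say at positions $i<j$; since the two lifts of any fixed vertex lie on opposite sides of $\tilde G$, the repeated vertices $x_i\neq x_j$ must sit on opposite sides, so $i$ and $j$ have opposite parity and $j-i$ is odd (and $\ge 3$). The sub-walk from position $i$ to position $j$ is then a closed walk of odd length $j-i<\ell$, hence contains an odd cycle of length $<\ell$ in $G$. Either way $\operatorname{girth}(G)\le\ell$; taking $\ell=\operatorname{girth}(\tilde G)$ gives the claimed girth inequality, and so $\gamma\gamma(n,n,k)\ge 2\gamma(n,k)$.

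For the upper bound, regarding the extremal bipartite graph $B$ realizing $\gamma\gamma(n,n,k)$ as an ordinary graph on $2n$ vertices (of girth $>k$) gives $\gamma\gamma(n,n,k)\le\gamma(2n,k)$, so it remains to prove $\gamma(2n,k)=O(\gamma(n,k))$. I would do this by random subsampling: let $H$ be a $2n$-node graph of girth $>k$ with $\gamma(2n,k)$ edges, and pick an $n$-subset $S\subseteq V(H)$ uniformly at random. Each edge of $H$ survives in $H[S]$ with probability $\binom{2n-2}{n-2}/\binom{2n}{n}=\frac{n-1}{2(2n-1)}=\Omega(1)$, so $\mathbb{E}\big[|E(H[S])|\big]=\Omega(\gamma(2n,k))$, and some $S$ achieves $|E(H[S])|=\Omega(\gamma(2n,k))$. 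Since $H[S]$ has $n$ vertices and girth $>k$, this yields $\gamma(n,k)=\Omega(\gamma(2n,k))$, hence $\gamma\gamma(n,n,k)\le\gamma(2n,k)=O(\gamma(n,k))$. Combining the two directions gives $\gamma\gamma(n,n,k)=\Theta(\gamma(n,k))$.

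The main obstacle is the girth analysis of the double cover: the naive argument "a short cycle lifts to a short cycle" is false, since an even closed walk in $G$ need not contain a cycle. The fix is the parity bookkeeping above — a coincidence in the projection of a cycle of $\tilde G$ necessarily arises between positions of \emph{opposite} parity, and therefore produces an \emph{odd} closed subwalk, which does contain a (shorter) cycle in $G$. I would make sure to write this out carefully, and note that the degenerate regimes (e.g.\ $\tilde G$ a forest when $k$ is large, or small $n$) are harmless since the statement is asymptotic.
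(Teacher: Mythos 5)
Your proof is correct, and it diverges from the paper's sketch in a way worth noting. For the direction $\gamgam(n,n,k)\ge\Omega(\gamma(n,k))$, the paper takes a \emph{random bipartition} of the extremal girth-$>k$ graph $G$ and keeps only crossing edges: each edge survives with probability $\ge 1/2$, and since a bipartite subgraph inherits girth $>k$ for free, one gets $\gamgam(n/2,n/2,k)\ge\gamma(n,k)/2$ with essentially no girth analysis. Your bipartite double cover is deterministic, gives the cleaner $\gamgam(n,n,k)\ge 2\gamma(n,k)$ without any averaging or loss of parameter, but requires exactly the parity argument you supply (the coincidence in the projection must occur between positions of opposite parity, hence produces an odd closed walk, hence an odd cycle of length $<\ell$) — the naive claim that the projection of a cycle contains a cycle of no greater length would be false without it, so you are right to flag this as the crux. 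For the direction $\gamgam(n,n,k)\le O(\gamma(n,k))$, both you and the paper forget the bipartition to get $\gamgam(n,n,k)\le\gamma(2n,k)$; the paper then asserts $\gamma(2n,k)=\Theta(\gamma(n,k))$ ``since $\gamma$ depends at most polynomially on its first parameter,'' which is a hand-wave (one cannot literally conclude a constant-factor relation from polynomial dependence without knowing $\gamma$ is close to a pure power of $n$), whereas your random $n$-subset subsampling makes this step rigorous in a self-contained way. Net effect: your argument is a little longer but fully deterministic on the lower-bound side and fully rigorous on the upper-bound side, at the cost of the parity bookkeeping in the double cover; the paper's version is shorter but relies on an unproved regularity claim for $\gamma$.
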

\begin{proof} [Proof Sketch]
In one direction, we have
$$\gamgam(n, n, k) \le \gamma(2n, k) = \Theta(\gamma(n, k))$$
where the first inequality is immediate from the definitions, and the second inequality is by observing that $\gamma$ depends at most polynomially on its first parameter.
In the other direction, we show
$$\gamma(n, k) \le 2 \cdot \gamgam\left( \frac{n}{2}, \frac{n}{2}, k\right) \le O\left(\gamgam(n, n, k) \right).$$
The second inequality is immediate from the definitions.
For the first inequality, we start with a graph $G$ realizing $\gamma(n, k)$, and randomly bipartition its nodes into two parts of size $n/2$ each.
Let $G'$ be the bipartite subgraph that keeps only edges crossing the random bipartition.
Each edge survives in $G'$ with probability $\ge 1/2$.
Thus we have constructed a bipartite graph with $n/2, n/2$ nodes per side, girth $>k$, and $\ge \gamma(n, k)/2$ edges in expectation, which implies the first inequality. 
\end{proof}

Thus every extremal reduction to $\gamma$ can also be expressed as a reduction to a special case of $\gamgam$.
However, there are some further problems in distance sketching and extremal combinatorics that can \emph{only} be reduced to $\gamgam$, rather than the non-bipartite version.
First, the obvious bipartite generalizations of multiplicative spanners and related objects can be reduced to $\gamgam$, again by the reduction of \Althofer{} et al.~\cite{ADDJS93}.
More interestingly:
\begin{itemize}
\item Fern{\'a}ndez, Yasuda, and Woodruff \cite{FWY20} constructed lower bounds against the communication complexity of spanner construction, converting $\gamgam$ lower bounds to lower bound instances, and
\item Bodwin, Dinitz, and Robelle \cite{BDR22} used $\gamgam$ to provide lower bounds against edge fault tolerant spanners and edge distance sensitivity oracles (based on a construction from \cite{BDPV18}).
\end{itemize}

The following natural extension of the Moore bounds holds for bipartite graphs:
\begin{theorem} [Bipartite Moore Bounds]
For all $n, p, k$, we have
$$\gamgam(n, p, 2k) = \begin{cases} O\left( (np)^{1/2 + 1/(2k)} + n + p\right) & \text{if $k$ is odd}\\
O\left( n^{1/2 + 1/k} p^{1/2} + n + p\right) & \text{if $k$ is even}
\end{cases}$$
\end{theorem}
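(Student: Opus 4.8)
The plan is to run a bipartite Moore--type counting argument: first pass to a near-regular subgraph, then grow a breadth-first ball from a well-chosen vertex, use the girth hypothesis to force this ball to be a tree, and finally read off a bound on the product of the two side-degrees, from which the edge bound follows by the identity $|E|^2 = (\text{left degree}) \cdot (\text{right degree}) \cdot np$ up to constants.

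\emph{Cleaning.} Let $G$ be bipartite with sides $A,B$ of sizes $n,p$, girth $>2k$, and $|E|$ edges. Fix thresholds $\tau_A := |E|/(4n)$ and $\tau_B := |E|/(4p)$ \emph{in advance}, and repeatedly delete any vertex whose current degree is below its side's threshold. Over the entire process at most $n\tau_A + p\tau_B = |E|/2$ edges are removed, so the surviving subgraph $G'$ has $|E(G')|\ge |E|/2$, still has girth $>2k$ (a subgraph cannot lower the girth), and has minimum degree $\ge a := \tau_A = |E(G')|/(4n) \cdot \Theta(1)$ on the $A$-side and $\ge b$ on the $B$-side, where one can take $a = |E|/(4n)$, $b = |E|/(4p)$ with $|E(G')|^2 = 16\,np\,ab$. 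If $a<3$ or $b<3$ then $|E| = O(n)$ or $|E| = O(p)$ and we are done; so assume $a,b\ge 3$.

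\emph{The ball is a tree, and counting.} Pick a vertex $v$ of $G'$ and run BFS from $v$, with $L_i$ the vertices at distance exactly $i$. Since $G'$ is bipartite, every edge joins opposite sides, hence consecutive BFS levels; thus an edge with both endpoints in $\bigcup_{i\le k}L_i$ joins some $L_i,L_{i+1}$ with $i\le k-1$. If such an edge were not a BFS-tree edge, closing it with the two tree-paths to $v$ (which share a prefix up to their least common ancestor, at level $h\le i$) would produce a cycle of length $i+(i+1)+1-2h \le 2i+2 \le 2k$, contradicting girth $>2k$. Hence $G'$ restricted to $B_k(v)$ is a tree; using $\min\deg \ge a$ on $A$ and $\ge b$ on $B$, an easy level-by-level count gives, for $v\in A$,
\[
|L_{2i}| \ge a(b-1)^i(a-1)^{i-1} \ge \tfrac{(ab)^i}{2^{2i-1}},
\qquad
|L_{2i+1}| \ge a(b-1)^i(a-1)^i \ge \tfrac{a(ab)^i}{2^{2i}} ,
\]
and symmetrically (swap $a\leftrightarrow b$) for $v\in B$. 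All levels $L_i$ with $i\le k$ are present because the min-degree bound keeps every level nonempty.

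\emph{Extracting the bound.} If $k=2m$ is even, $L_{2m}$ lies entirely on the $A$-side (distance $2m>0$), so $(ab)^m/2^{2m-1}\le |L_{2m}|\le n$, giving $ab = O(n^{1/m}) = O(n^{2/k})$; then $|E|\le 2|E(G')| = O\big((np\cdot ab)^{1/2}\big) = O(n^{1/2+1/k}p^{1/2})$. If $k=2m+1$ is odd, run BFS once from a vertex of $A$ and once from a vertex of $B$: the first gives $a(ab)^m/2^{2m}\le |L_k|\le p$, the second gives $b(ab)^m/2^{2m}\le |L_k|\le n$; multiplying these yields $(ab)^{2m+1} = O(np)$, i.e. $ab = O((np)^{1/k})$, and hence $|E| = O\big((np\cdot ab)^{1/2}\big) = O\big((np)^{1/2+1/(2k)}\big)$. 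Adding back the $O(n+p)$ term from the degenerate case of the cleaning step finishes both cases; by symmetry of $\gamgam$ in its first two arguments one also obtains the tighter bound with $n,p$ swapped in the even case.

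The argument is classical, so there is no single hard step; the two points needing care are (i) the bipartite cleaning, which works cleanly only because the thresholds are fixed before any deletion rather than re-evaluated against shrinking averages, and (ii) the tree claim, where it is essential that $G$ is bipartite, so a non-tree edge inside $B_k(v)$ joins \emph{different} BFS levels and therefore closes a cycle of length $\le 2k$ rather than $\le 2k+1$ (in a non-bipartite graph of girth exactly $2k+1$, e.g. an odd cycle, the radius-$k$ ball need not be a tree).
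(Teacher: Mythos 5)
Your proof is correct and matches the approach the paper indicates: the paper gives no proof for this theorem, only remarks that it follows ``in the same spirit as Theorem \ref{thm:moore}'' (the non-bipartite Moore bound, sketched via non-backtracking paths), and your BFS-ball-is-a-tree argument with the two-sided minimum-degree cleaning is precisely the standard bipartite incarnation of that technique. One cosmetic slip: in the cleaning paragraph you wrote $|E(G')|^2 = 16\,np\,ab$, but since $a,b$ are defined from $|E|$, the equality is $|E|^2 = 16\,np\,ab$; this does not affect the conclusion, as you correctly invoke $|E| \le 2|E(G')|$ and only need $|E| = O\left((np\cdot ab)^{1/2}\right)$ at the end.
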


The proof is in the same spirit as Theorem \ref{thm:moore}, but with sensitivity to the average degree on either side of the bipartite graph.
The bipartite Moore bounds are known to be fully tight for girth parameters $2k$ when $k \in \{1, 2\}$, and they are also tight for various relative values of $n, p$ when $k \in \{3, 5, 7\}$ \cite{van12}.
Analogizing the girth conjecture, one might conjecture that the Moore bounds are tight for all $n, p, k$.
However, this was \emph{refuted} in an important paper by de Caen and \Szekely{} \cite{de1991maximum}, which showed that the Ruzsa-\Szemeredi{} theorem (discussed next) is equivalent to a (subpolynomial) improvement to the upper bounds on $\gamgam(n, p, 6)$, and thus it implies an improvement on the bipartite Moore bounds in a particular parameter setting.
At a technical level, this proof is very similar to Theorem \ref{thm:b3rs}, so we shall not repeat it here.

Recent work of Conlon, Fox, Sudakov, and Zhao \cite{CFSZ21} implies an analogous improvement to $\gamgam(n, p, 10)$; it is an interesting open problem to obtain an analogous improvement to $\gamgam(n, p, 2k)$ for any other odd $k$.

\subsection{The Ruzsa-\Szemeredi{} Problem \label{app:rs}}

The Ruzsa-\Szemeredi{} problem was introduced by Ruzsa and \Szemeredi{} \cite{RS78}, in the context of a combinatorial problem about hypergraphs.
Their result was one of the first major uses of the famous \Szemeredi{} regularity lemma \cite{Szemeredi75}.
Although it has been interpreted and reinterpreted over the years, the standard phrasing is as follows:
\begin{definition} [Induced Matchings and $\rs(n)$]
In a graph $G = (V, E)$, an \emph{induced matching} is an edge subset $M \subseteq E$ that is a matching, and also the edge subset of an induced subgraph.
In other words, for any two edges $(u_1, v_1), (u_2, v_2) \in M$, we have $(u_1, u_2), (u_1, v_2), (v_1, u_2), (v_1, v_2) \notin E$.

We define $\rs(n)$ as the largest integer such that, for every $n$-node graph $G$ whose edge set can be partitioned into $n$ induced matchings, we have
$$|E(G)| \le \frac{n^2}{\rs(n)}.$$
\end{definition}

Besides induced matchings, there are many other natural ways to interpret $\rs(n)$ \cite{CF13}.
The following transformation can be used to connect $\rs(n)$
One is: let $G$ be a graph that can be decomposed into $n$ induced matchings, which has $n^2 / \rs(n)$ edges.
Direct the edges of $G$ arbitrarily, and then add a new node $m_1, \dots, m_n$ for each of the $n$ induced matchings.
Then, for each $i$ and for each directed edge $(u, v)$ in the $i^{th}$ induced matching, interpret the triple $(m_i, u, v)$ as a $3$-path.
One can verify that this yields a bridge-free path system.
This transformation is well known, even though the description as a ``bridge-free path system'' is new.

It is not at all obvious from the definition of $\rs(n)$ that the function is nontrivial, i.e., super-constant.
But indeed, Ruzsa and \Szemeredi{} proved that $\rs(n) = \Omega(\log^* n)$.
The state-of-the-art upper bound is due to Fox \cite{Fox11}; a notable alternate proof was discovered by Moshkovitz and Shapira \cite{MS19}.
The state-of-the-art lower bound is due to Behrend \cite{Behrend46} (see also \cite{Elkin10}).
These bounds are:
$$2^{\Omega(\log^* n)} \le \rs(n) \le 2^{O(\log^c n)}.$$
While it is not clear from the definition that the Ruzsa-\Szemeredi{} problem should be regarded as a girth concept, an important paper by de Caen and \Szekely{} \cite{de1991maximum} explains its inclusion, by tightly reducing between $\rs(n)$ and $\gamgam$.
Specifically, their reduction may be interpreted as follows.
Given a value of $n$, let $p^*$ be the largest integer such that
$\gamgam(n, p^*, 6) \ge 3p^*$.
Then:
$$\gamgam(n, p^*, 6) = \Theta\left( \frac{n^2}{\rs(n)} \right).$$
Thus, the Ruzsa-\Szemeredi{} problem is a special case of the bipartite girth problem.
In network design, we mention three applications of the Ruzsa-\Szemeredi{} problem:
\begin{itemize}
\item Given an $n$-node undirected unweighted graph and a set of $p$ demand pairs, one can construct a distance preserver on $O(n^2 / \rs(n) + p)$ edges \cite{Bodwin21}.

\item For undirected unweighted input graphs with $n$ nodes and $O(n)$ edges, one can construct a distance labeling scheme with average label size $O(n / \rs(n)^c)$ \cite{KUV19}.

\item Bansal and Williams \cite{BW09} developed a combinatorial algorithm for All-Pairs Shortest Paths in unweighted graphs, by reducing to a certain \emph{algorithmic} version of the Ruzsa-\Szemeredi{} problem.
\end{itemize}

Some other miscellaneous uses of $\rs(n)$ or Ruzsa-\Szemeredi{} graphs in theoretical computer science include connections to the PCP theorem by H{\aa}stad and Wigderson \cite{HW03}, applications in Channel Scheduling by Birk, Linial, and Meshulam \cite{BLM93}, a line of work on maximum matching in streams \cite{Konrad15, Kapralov13, GKK12}, and a line of work on subgraph testing algorithms \cite{Alon02, AS03}.

\subsection{The Set Girth Problem}

The function $\gamgam(n, p, k)$ has an equivalent interpretation in the language of \emph{set systems} rather than graphs.
We consider:
\begin{definition} [Set Systems] ~
\begin{itemize}
    \item A set system is a pair $S = (V, \mathcal{T})$, where $V$ is a ground set of ``nodes'' and $\mathcal{T}$ is a multiset of node subsets.
    
    \item A \emph{$k$-cycle} in a set system $S$ is a circularly-ordered list of distinct nodes $v_0, v_1, \dots, v_k=v_0 \in V$ and sets $T_0, T_1, \dots, T_k=T_0 \in \mathcal{T}$ with each $v_i, v_{i+1} \in T_i$.
    
    \item The \emph{girth} of a set system is the smallest integer $k$ for which the system has a $k$-cycle.
    
    \item The size of a set system is written $\|S\| := \sum \limits_{T \in \mathcal{T}} |T|$.
\end{itemize}
\end{definition}

For example, a set system in which each set has size $2$ can be considered as an undirected graph.
Set systems are merely a rephrasing of bipartite graphs, and one can switch between them via \emph{incidence graphs}.
In particular:

\begin{theorem} [Folkore] \label{thm:ggtoset}
Over set systems $S$ with $n$ nodes, $p$ sets, and girth $>k$, the maximum possible value of $\|S\|$ is exactly $\gamgam(n, p, 2k)$.
\end{theorem}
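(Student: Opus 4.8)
The plan is to exploit the standard bijection between set systems and bipartite graphs given by the \emph{incidence graph} construction, and to show that this bijection turns the girth of a set system into exactly half the girth of the associated bipartite graph, while preserving all the relevant size parameters on the nose (so that equality, not merely an asymptotic bound, is obtained).

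First I would set up the correspondence precisely. Given a set system $S = (V, \mathcal{T})$ with $|V| = n$ and $|\mathcal{T}| = p$, let $G(S)$ be the bipartite graph with vertex classes $V$ and $\mathcal{T}$ and an edge $\{v, T\}$ exactly when $v \in T$. This is a bijection onto the class of bipartite graphs whose bipartition has $n$ vertices on one side and $p$ on the other; a multiset $\mathcal{T}$ with repeated sets merely produces ``twin'' vertices on the $\mathcal{T}$-side, which is harmless for a simple graph. Moreover $|E(G(S))| = \sum_{T \in \mathcal{T}} |T| = \|S\|$, and the two vertex-class sizes are exactly $n$ and $p$. Next I would establish the girth identity: the girth of $G(S)$ equals twice the girth of $S$ (with the convention that $2\cdot\infty = \infty$). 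For one direction, any cycle in the bipartite graph $G(S)$ has even length $2k'$ and alternates sides, so it reads $v_0, T_0, v_1, T_1, \dots, v_{k'-1}, T_{k'-1}, v_0$ with the $v_i$ pairwise distinct, the $T_i$ pairwise distinct, and $v_i, v_{i+1} \in T_i$ (indices mod $k'$) --- precisely a $k'$-cycle of $S$. Conversely, a $k'$-cycle of $S$ (distinct nodes $v_0,\dots,v_{k'-1}$, distinct sets $T_0,\dots,T_{k'-1}$, with $v_i, v_{i+1}\in T_i$) traces out a closed walk $v_0, T_0, v_1, \dots, T_{k'-1}, v_0$ in $G(S)$ on $2k'$ distinct vertices, i.e.\ a $2k'$-cycle. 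Minimizing over cycles on each side yields the identity, and in particular girth$(S) > k$ if and only if girth$(G(S)) > 2k$.

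Combining these two facts: the set systems $S$ with $n$ nodes, $p$ sets, and girth $>k$ are in bijection, via $S \mapsto G(S)$, with the bipartite graphs having $n$ and $p$ vertices on the two sides and girth $>2k$, and this bijection preserves the quantity to be maximized, since $\|S\| = |E(G(S))|$. Hence the maximum of $\|S\|$ over all such $S$ is exactly $\gamgam(n, p, 2k)$, by the definition of $\gamgam$. I do not expect a real obstacle here; the only points needing mild care are the bookkeeping of the distinctness conditions in the girth correspondence and the remark that repeated sets in the multiset $\mathcal{T}$ cause no inconsistency --- they correspond to twin vertices in $G(S)$, and when the repeated set has size at least $2$ they already witness a $4$-cycle in $G(S)$ and a $2$-cycle in $S$ simultaneously, so the two sides of the equivalence remain in agreement.
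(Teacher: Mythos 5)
Your proof is correct and follows exactly the paper's approach: pass to the incidence bipartite graph, note that $\|S\|$ becomes the edge count and that the set-system girth doubles, and conclude by the definition of $\gamgam$. You have simply filled in the ``one can verify'' details that the paper's sketch elides (the alternating-sides argument for cycles, and the handling of repeated sets in the multiset).
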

\begin{proof} [Proof Sketch]
A set system $S = (V, \mathcal{T})$ can be naturally bijected with its \emph{incidence graph} $G_S$.
This is a bipartite graph whose nodes on the left correspond to $V$, whose nodes on the right correspond to $\mathcal{T}$, and whose edges correspond to set membership; that is, we put an edge between $v \in V$ and $T \in \mathcal{T}$ iff $v \in T$.
Set systems carry the same information as their incidence graph.
One can verify that (1) the size $\|S\|$ of the set system is the same as the number of edges $|E(G_S)|$ of its incidence graph, and (2) if the set system has girth $k$, then its incidence graph has girth $2k$.
The theorem follows from these properties.
\end{proof}

\section{Proof of Cleaning Lemma \label{app:cleaning}}

We now prove the Cleaning Lemma (Lemma \ref{lem:cleaning}).
We will state the proof only for unordered bridge girth; the proof for ordered bridge girth is completely identical.
We split the proof into the following two lemmas:

\begin{lemma}
Suppose $S$ is a path system with $n$ nodes, $p$ paths, bridge girth $b$, average node degree $d$, and average path length $\ell$.
Then there exists a path system $S'$ that has:
\begin{itemize}
\item $n' = \Theta(n)$ nodes,
\item $p' = \Theta(p)$ paths,
\item size $\|S'\| = \Theta(\|S\|)$,
\item bridge girth $\ge b$,
\item average degree $d' = \Theta(d)$, and all nodes have degree $\Theta(d')$,
\item average length $\ell' = \Theta(\ell)$, and all paths have length $\Theta(\ell')$.
\end{itemize}
\label{lem:true_cleaning}
\end{lemma}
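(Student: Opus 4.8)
The requirement on bridge girth is essentially free, so the plan is to focus entirely on the regularity and on matching the parameters. First I would observe that bridge girth is monotone under passing to subsystems: if $S'$ is obtained from $S$ by deleting a node, a path, or a single instance of a node from a single path, then each path of $S'$ is a (not necessarily contiguous) subsequence of the corresponding path of $S$, so any $b$-bridge of $S'$ (with arc conditions $v_i <_{\pi_i} v_{i+1}$ and river condition $v_1 <_{\pi_b} v_b$) is also a $b$-bridge of $S$; hence every subsystem $S' \subseteq S$ has bridge girth $\ge b$. This remark applies verbatim once we restrict a fixed total ordering of the paths to the surviving paths, so the ordered case needs no separate argument. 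It also means the whole content of Lemma~\ref{lem:true_cleaning} reduces to: produce $S' \subseteq S$ with $\|S'\| = \Theta(\|S\|)$ in which every node has degree $\Theta(\|S'\|/n)$ and every path has length $\Theta(\|S'\|/p)$; given this, the size identity $n'd' = \|S'\| = p'\ell'$ forces $d' = \Theta(\|S\|/n) = \Theta(d)$ and $\ell' = \Theta(\ell)$, whence $n' = \|S'\|/d' = \Theta(n)$ and $p' = \|S'\|/\ell' = \Theta(p)$. (If $d$ or $\ell$ is below a sufficiently large absolute constant then $\|S\| = O(n+p)$ and a suitable $S'$ can be exhibited by hand, so we may assume $d,\ell$ are large.)

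The core tool is an iterative removal process, run with thresholds pinned to the original counts $n$ and $p$. Fix a large constant $c$, and repeat: if some node has degree $< \tfrac1c \cdot \tfrac{\|S^{\mathrm{cur}}\|}{n}$, delete it; else if some path has length $< \tfrac1c \cdot \tfrac{\|S^{\mathrm{cur}}\|}{p}$, delete it; else stop. For the low tail this is a clean amortization: each node is deleted at most once, and when it is deleted the size drops by at most $\tfrac1c \cdot \tfrac{\|S^{\mathrm{cur}}\|}{n} \le \tfrac1c \cdot \tfrac{\|S\|}{n}$, so all node deletions together remove at most $\tfrac{\|S\|}{c}$; the same bound holds for path deletions, so the process halts with $\|S_1\| \ge (1 - \tfrac2c)\|S\| = \Theta(\|S\|)$. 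By the stopping condition, in $S_1$ every node has degree $\ge \tfrac1c \tfrac{\|S_1\|}{n} = \Omega(d)$ and every path has length $\ge \tfrac1c \tfrac{\|S_1\|}{p} = \Omega(\ell)$, with still $|V(S_1)| \le n$ and $|\Pi(S_1)| \le p$.

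To also control the upper tails I would extend the same loop with the symmetric rules: delete a node whose degree exceeds $c\cdot\tfrac{\|S^{\mathrm{cur}}\|}{n}$, or a path whose length exceeds $c\cdot\tfrac{\|S^{\mathrm{cur}}\|}{p}$. If this terminates while still retaining a constant fraction of the size, we are done: at termination every degree lies in $[\tfrac1c,\,c]\cdot\tfrac{\|S'\|}{n}$, hence is $\Theta(\|S'\|/n)$, every length is $\Theta(\|S'\|/p)$, and the reduction of the first paragraph delivers all the claimed parameters. A fallback, if it makes the bookkeeping cleaner, is to replace the symmetric heavy-removal rules by a single dyadic-bucketing step applied to $S_1$: partition the surviving nodes by $\lfloor\log_2\deg\rfloor$ and the surviving paths by $\lfloor\log_2|\pi|\rfloor$, and pass to the subsystem induced by the heaviest (degree-class, length-class) pair, so that every surviving node-degree and path-length lies within a factor $2$ of a common value.

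The main obstacle is precisely the heavy side of the distribution. Deleting a light node is cheap because its degree is bounded by the threshold, but deleting a heavy node can remove a large chunk of $\|S\|$ at once, so one cannot bound the total loss by (number of deletions)$\times$(per-deletion bound). Making the heavy-removal (equivalently, the bucketing step) cost only a constant factor in size — rather than a polylogarithmic factor — is where the real care goes: one wants to argue that, after the light-removal has pinned the degree range to $[\Omega(\|S_1\|/n),\,p]$ and the length range to $[\Omega(\|S_1\|/p),\,n]$, the heaviest dyadic class already carries an $\Omega(1)$ share of $\|S_1\|$, or else to choose the constant $c$ and the order of operations so that the heavy deletions self-limit. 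That is the only genuinely delicate point; the remainder is the routine amortization and the size-identity bookkeeping sketched above, and the argument for ordered path systems is identical.
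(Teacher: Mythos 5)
There is a genuine gap, and it is exactly the one you flagged and then left unresolved: a deletion-only process cannot, in general, tame the heavy side of the degree/length distribution while losing only a constant fraction of $\|S\|$. Your hope that ``the heaviest dyadic class already carries an $\Omega(1)$ share of $\|S_1\|$'' is false. After light-side cleanup you could perfectly well have $\Theta(\log n)$ dyadic degree classes each carrying an equal $\Theta(1/\log n)$ share of the size; passing to the heaviest (degree-class, length-class) pair then costs a $\log^2$ factor, not a constant, and no reordering of thresholds or constant $c$ repairs this. The trouble is structural, not bookkeeping: you committed to producing a subsystem $S' \subseteq S$, and in a subsystem a kept node contributes its \emph{full} degree, so a node whose degree vastly exceeds $\|S'\|/n$ simply cannot be forced into the required range by deleting \emph{other} objects without destroying most of the mass incident to it. So the ``reduction of the first paragraph'' quietly reduces the lemma to an assertion about subsystems that you have not proven and that the paper does not prove either.

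The idea you are missing is to \emph{split} heavy objects rather than delete them, which takes $S'$ outside the category of subsystems of $S$. In the paper's proof one first, while some path has length $\geq \ell/2$, cuts it into two node-disjoint contiguous pieces each of length $\geq \ell/4$; then, while some node $v$ has degree $\geq d/2$, replaces $v$ by two fresh nodes $v_1, v_2$ and distributes the paths through $v$ between them so that each has degree $\geq d/4$. These operations preserve $\|S\|$ exactly, never create bridges (splitting can only destroy node-path incidences, so a bridge in the split system lifts to one in the original), and after they stabilize every degree lies in $[d/4, d/2)$ and every length in $[\ell/4, \ell/2)$ with zero loss. Only at that point does the paper run the cheap light-side deletion loop you sketched, with thresholds $d/4$ and $\ell/4$, whose total loss is bounded by $nd/4 + p\ell/4 = \|S\|/2$ by precisely the amortization you gave. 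Your light-side analysis and the size-identity bookkeeping at the end are correct, and your observation about the ordered case is fine; but without splitting, the heavy side is not handled, and the lemma is not proved.
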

\begin{proof}
We construct $S'$ by the following process.
Start with $S = (V, \Pi)$ as a path system with $n$ nodes, $p$ paths, bridge girth $>k$, and size $\|S\| = \beta(n, p, k)$.
Fix $\ell, d$ as the initial average length and degree of $S$.
Then, perform the following sequence of operations on $S$:

\begin{enumerate}
    \item While there exists a path $\pi \in \Pi$ of length $|\pi| \geq \ell/2$, split $\pi$ into two node-disjoint paths $\pi_1$ and $\pi_2$ in any way such that $\pi = \pi_1 \circ \pi_2$, $|\pi_1| \geq \ell/4$, and $|\pi_2| \geq \ell/4$.  
    
    \item While there exists a node $v \in V$ of degree $\ge d/2$, split $v$ into two new nodes $v_1, v_2$.
    Replace each occurrence of $v$ in a path with either $v_1$ or $v_2$, in any way such that $\deg(v_1) \geq d/4$ and $\deg(v_2) \ge d/4$.
    
    \item While there exists a node of degree $< d/4$, or a path of length $< \ell / 4$, delete that node or path from $S$.
\end{enumerate}

Let $S'$ be the resulting path system on $n'$ nodes and $p'$ paths.
First note that the construction must terminate, since no step of the construction increases the size of $\|S'\|$.
Our only operations are to delete nodes/paths and to split nodes/paths, which do not create bridges; thus, since $S$ does not have a bridge of size $<b$, $S'$ also has no bridge of size $<b$.
Operations that split nodes and paths do not change the size of $S$.
It is immediate from the construction that all surviving paths $\pi$ have length $\Theta(\ell)$ and that all surviving nodes have degree $\Theta(d)$.
Meanwhile, we only delete nodes of degree $<d/4$ and paths of length $<\ell/4$, so we have
$$\|S'\| > \|S\| - (nd/4 + p\ell/4) = \|S\|/2.$$
Thus we have $\|S'\| = \Theta(\|S\|)$.
Moreover, we notice that
$$\ell' p' = n' d' = \|S'\| > \|S\|/2 = nd/2 = p\ell/2.$$
Since $\ell' = \Theta(\ell)$ and $d' = \Theta(d)$, this implies that $p' = \Theta(p)$ and $n' = \Theta(n)$, completing the proof.
\end{proof}

\begin{lemma} \label{lem:paramadjust}
For any absolute constant $0 < c < 1$, we have $\beta(cn, cp, k) = \Omega(\beta(n, p, k))$.
\end{lemma}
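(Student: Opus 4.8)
The plan is to start from an optimal path system $S=(V,\Pi)$ with $n$ nodes, $p$ paths, bridge girth $>k$, and $\|S\|=\beta(n,p,k)$, and to carve out of it a subsystem on at most $cn$ nodes and at most $cp$ paths that still retains an $\Omega(1)$ fraction of the size. The whole argument rests on two applications of a single averaging observation: if the path set $\Pi$ is partitioned into $t$ classes $\Pi_1,\dots,\Pi_t$ and $S_i$ denotes the subsystem of $S$ keeping only the paths in $\Pi_i$, then $\sum_{i=1}^t \|S_i\| = \|S\|$, since each node-occurrence along a path lies on exactly one class; the identical identity holds for a partition of the node set $V$ into $t$ classes and the corresponding induced subsystems $S[V_j]$, since deleting nodes does not change how often a surviving node occurs.

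Concretely, I would set $t:=\lceil 2/c\rceil$, a constant. First, partition $\Pi$ into $t$ classes of nearly equal size (each of size $\lfloor p/t\rfloor$ or $\lceil p/t\rceil$); by the averaging observation, some class yields a subsystem $S'$ with $\|S'\|\ge \|S\|/t$, at most $\lceil p/t\rceil$ paths, and bridge girth $>k$ (subsystems cannot create new bridges). Next, partition $V$ into $t$ classes of nearly equal size and pass to the induced subsystems of $S'$; again some class $V^*$ yields $S'[V^*]$ with $\|S'[V^*]\|\ge \|S'\|/t\ge \|S\|/t^2$, at most $\lceil n/t\rceil$ nodes, at most $\lceil p/t\rceil$ paths, and bridge girth $>k$. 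Since $t=\lceil 2/c\rceil$, we have $\lceil n/t\rceil\le cn$ and $\lceil p/t\rceil\le cp$ as soon as $n,p\ge 2/c$, so $S'[V^*]$ witnesses $\beta(cn,cp,k)\ge \beta(n,p,k)/t^2=\Omega(\beta(n,p,k))$. The ordered case is verbatim the same, since restricting to a subset of paths and passing to an induced subsystem both produce ordered path systems whose ordered bridge girth can only increase.

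It remains to handle the boundary regimes where $n$ or $p$ is below the constant $2/c$, and there the claim is trivial: if $p=O(1)$ then $\beta(n,p,k)\le pn=O(n)$ while $\beta(cn,cp,k)\ge \lfloor cn\rfloor=\Omega(n)$ by taking a single (bridge-free) path through $\lfloor cn\rfloor$ nodes, and symmetrically if $n=O(1)$ then $\beta(n,p,k)\le np=O(p)$ while $\beta(cn,cp,k)\ge \lfloor cp\rfloor=\Omega(p)$ by taking $\lfloor cp\rfloor$ single-node paths. I do not expect any genuine obstacle here: the only point requiring a little care is the rounding, namely that $\lceil n/t\rceil\le cn$ needs $n\ge 2/c$, which is exactly why the small-parameter cases are peeled off and dispatched directly. (One could alternatively invoke the length-regularity half of the Cleaning Lemma to thin out the paths, but the pure pigeonhole argument avoids even that.)
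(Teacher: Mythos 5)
Your proof is correct, and it reaches the same destination by a deterministic route where the paper takes a probabilistic one. The paper's argument is a one-liner: sample exactly $cn$ nodes and $cp$ paths uniformly at random, keep the induced subsystem $S'$, and observe that each node-occurrence $(v,\pi)$ survives with probability exactly $c^2$, so $\mathbb{E}[\|S'\|]=c^2\|S\|=\Omega(\beta(n,p,k))$; some outcome therefore realizes the bound, and subsystems cannot create bridges. Your version replaces the two independent random choices with two successive pigeonhole steps, one on a partition of $\Pi$ and one on a partition of $V$, each time retaining a $1/t$ fraction of the size with $t=\lceil 2/c\rceil$; the two averaging identities you invoke (sum over path-classes equals $\|S\|$, sum over node-classes of the induced subsystems equals $\|S'\|$) are exactly what makes the expectation computation in the paper's proof come out to $c^2\|S\|$, so the underlying content is identical. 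The trade-off is that your argument is elementary and constructive but requires explicit attention to rounding and the small-$n$/small-$p$ edge cases, which you handle correctly (and which the paper silently absorbs into the convention that $cn,cp$ are integers). Both proofs extend to $\beta^*$ for the same reason you give.
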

\begin{proof}
We will prove for $\beta$; the proof for $\beta^*$ is identical.
Let $S = (V, \Pi)$ be path system with $n$ nodes, $p$ paths, bridge girth $>k$, and $\|S\| = \beta(n, p, k)$.
Let $S' \subseteq S$ be a subsystem obtained by choosing exactly $cn$ nodes in $V$ uniformly at random and $cp$ paths in $\Pi$ uniformly at random, and keeping these nodes and the paths induced on these nodes in $S'$, while deleting the rest of $S$. 
Then $S'$ has $cn$ nodes, $cp$ paths, bridge girth $>k$ and expected size
$$\mathbb{E}\left[\|S'\|\right] = c^2 \|S\| = \Omega(\beta(n, p, k)),$$
which completes the proof.
\end{proof}

We can now state the proof of the cleaning lemma.
Using the latter lemma, we can choose $c>0$ as a sufficiently small constant, and then start with $S$ as a path system with $\lceil cn \rceil$ nodes, $\lceil cp \rceil$ paths, bridge girth $>k$, and $\|S\| = \Omega(\beta(n, p, k))$.
Then, applying the former lemma, we can find a path system $S'$ that has $n' = \Theta(cn)$ nodes, $p' = \Theta(cp)$ paths, bridge girth $>k$, size $\|S'\| = \Omega(\beta(n, p, k))$, all nodes have degree $\Theta(d')$, and all paths have length $\Theta(\ell')$.
By choice of sufficiently small $c$, we have $n' \le n$ and $p' \le p$, and thus $S'$ satisfies the cleaning lemma.

\section{ Missing Proofs for Section \ref{sec:int_gap}} 
\subsection{Lemma \ref{lem:big_multicut}}
\label{app:big_multicut}
Fix an $\varepsilon > 0$. Let $\mathcal{S} \subseteq V_H$ be a set of vertices of size $|\mathcal{S}| \leq d/16$ in $H$. Fix an $i \in [1, d]$, and observe that when we are choosing the vertices in layer $L^j_i$, the size of the set $V_H \setminus  (L^1_i \cup L^2_i \cup \dots \cup L^{j-1}_i)$ is at least $d/2$.  Therefore, the probability that $L^j_i \subseteq \mathcal{S}$ is at most 
\[
\left ( \frac{|\mathcal{S}|}{d/2} \right ) ^{ \log d} \leq \left (\frac{1}{8}\right)^{\log d} = d^{-3}
\]
Note that $\mathcal{S}$ separates $(s_i, t_i)$ if and only if $L^j_i \subseteq \mathcal{S}$ for some $j \in [1, d']$. 
Then the probability that $\mathcal{S}$ separates $(s_i, t_i)$ is at most $d' \cdot d^{-3} \leq d^{-2}$, by the union bound and the fact that $d' \leq d$.   Now since our constructions of each graph $H_i$ are independent, the probability that at least $\varepsilon d$ distinct demand pairs $(s_i, t_i)$, $i \in [1, d]$ are disconnected by $\mathcal{S}$ is  at most
\[
\binom{d}{\varepsilon d} ( d^{-2} )^{\varepsilon d} = \left(\frac{1}{d}\right)^{\varepsilon d}
= 2^{-\varepsilon d \log d} < 2^{-d}
\]
for sufficiently large $d$. We have established our desired claim. 


\subsection{Theorem \ref{thm:fcg_sc}}
\label{app:sc}
Recall that our goal is to lower bound the flow-cut gap between concurrent multicommodity flow and (non-bipartite) sparsest cut.
We will accomplish this by lower bounding an integrality gap for the directed sparsest \textit{vertex} multicut problem, \textsc{SVCut}. This problem  is defined identically to sparsest cut except we choose a set of non-terminal \textit{vertices} $\mathcal{S}$ that minimizes the ratio $|\mathcal{S}| / |P_{\mathcal{S}}|$, where $P_{\mathcal{S}}$ is the set of demand pairs disconnected in $G \setminus \mathcal{S}$. Below we describe a natural  LP relaxation $\widehat{\textsc{SVCut}}$  of the directed sparsest \textit{vertex} cut problem.
\begin{center}
\begin{tabular}{|c|}
\hline 
$\widehat{\textsc{SVCut}}$ \\
$\begin{array}{crccc}
\min & \sum_{v \in V} x_v' \\
\text{s.t.} & \sum_{v\in \pi \cap V} x_v' & \ge & h_i' & \forall i\in [p],\pi\in\Pi_i\\
& \sum_{i=1}^p h_i' & \geq & 1 & \\
 & x_v', h_i' & \ge & 0 & \forall  v\in V, \forall i \in [p]
\end{array}$
\tabularnewline
\hline
\end{tabular}
\end{center}

By the discussion in Section 2.2 of \cite{chuzhoy2009polynomial}, the integrality gap between $\widehat{\textsc{SVCut}}$ and \textsc{SVCut} is at most the integrality gap between $\widehat{\textsc{SCut}}$ and \textsc{SCut}. By lower bounding the integrality gap of $\widehat{\textsc{SVCut}}$,  we will immediately obtain lower bounds for the directed sparsest cut flow-cut gap.

We will lower bound the integrality gap for $\widehat{\textsc{SVCut}}$ using our construction from the proof of Theorem \ref{thm:fcg} and a standard argument from \cite{chuzhoy2009polynomial}. 
Let $G = (V \cup V', E)$ be the graph on $N := \beta(n, n, \infty)$ non-terminal nodes defined in Section \ref{sec:multicut_const}, and let $P \subseteq V' \times V'$ be the corresponding set of demand pairs of size $|P| = n$. Observe the following solution to the sparsest vertex cut LP. For every $(s_i, t_i) \in P$, let $h_i' := 1/n$. For every non-terminal vertex $v \in V$, let $x_v' := 1/(nd')$. This is a feasible solution to $\widehat{\textsc{SVCut}}$ of size $N/(nd')$ by Claim \ref{claim:long_paths_in_G}. 

Now assume that $\widehat{\textsc{SVCut}}$ has integrality gap less than $g(N)$ for some function $g$, and fix a sufficiently small $\varepsilon > 0$. We will show that there is an (integral) vertex cut $\mathcal{S}$ of $G$ of size $|\mathcal{S}|= O(N / d')g(N)$ that disconnects more than  a $(1 - \varepsilon)$-fraction of the demand pairs $P$ in $G $. Recall that by Theorem \ref{thm:fcg}, there is an $\varepsilon > 0$ such that $|\mathcal{S}| = \Omega(N)$ if $\mathcal{S}$ disconnects more than  a $(1 - \varepsilon)$-fraction of demand pairs $P$ in $G$. Then we will conclude that $g(N) = \Omega(d')$. 

Fix a sufficiently small $\varepsilon > 0$. Our construction of $\mathcal{S}$ will proceed in rounds, where in each round we will add nodes in $G$ to $\mathcal{S}$ and disconnect some demand pairs in $G \setminus \mathcal{S}$. 
We will repeat our procedure until $\mathcal{S}$ disconnects more than a $(1 - \varepsilon)$-fraction of the demand pairs $P$. Let $G_1 := G$ and $P_1 := P$. In round 1, $G_1, P_1$ has a feasible solution to $\widehat{\textsc{SVCut}}$ of size $\varphi := N/(nd')$, so there is an integral solution to \textsc{SVCut} of size $\varphi \cdot g(N)$.  This means there is a set  $\mathcal{S}_1 \subseteq V$ in $G$ of size $|\mathcal{S}_1| = \varphi \cdot g(N) \cdot p_1$  that disconnects $p_1 \geq 1$ pairs $P_{\mathcal{S}_1}\subseteq P_1$. Add $\mathcal{S}_1$ to $\mathcal{S}$, and let $G_2 := G \setminus \mathcal{S}_1$ and $P_2 := P_1 \setminus P_{\mathcal{S}_1}$.

In round $i$, we are given a graph $G_i \subseteq G$ and a set of demand pairs $P_i \subseteq P$ that are connected in $G_i$. We halt when $|P_i| <  \varepsilon|P|$, so we may assume that $|P_i| \geq \varepsilon|P|$. Then if we let $h_j' :=  \varepsilon^{-1}n^{-1}$ for every $(s_j, t_j) \in P_i$ and let $x_v' := (\varepsilon nd')^{-1}$, then this is a feasible solution to $\widehat{\textsc{SVCut}}$ for $G_i, P_i$ of size at most $ N / (\varepsilon nd')$ by Claim \ref{claim:long_paths_in_G}. Then there is an integral vertex cut $\mathcal{S}_i$ of size at most $\varepsilon^{-1} \varphi \cdot g(N) \cdot p_i$ that disconnects $p_i \geq 1$ pairs $P_{\mathcal{S}_i} \subseteq P_i$.  Add $\mathcal{S}_i$ to $\mathcal{S}$, and let $G_{i+1} := G_i \setminus \mathcal{S}_{i}$ and $P_{i+1} := P_i \setminus P_{\mathcal{S}_i}$. When our procedure ends in round $k$, we will have a set $\mathcal{S}$ of size at most $\varepsilon^{-1} \varphi \cdot g(N) \cdot n = O(N/d')g(N)$ that disconnects at least $|P| - |P_k| > (1 - \varepsilon )|P|$ pairs in $P$. 
Then  $|\mathcal{S}| = \Omega(N)$ by Lemma \ref{lem:big_cut_fail} and the subsequent discussion, so we conclude that $g(N) = \Omega(d')$, as desired. The theorem is immediate from the above discussion. 

\subsection{Claim \ref{claim:vdsf_lp}}
\label{app:vdsf_lp} 
Fix an $n$-node unweighted directed graph $G = (V \cup V', E)$ with non-terminal nodes $V$ and terminal nodes $V'$, and a set of demand pairs $P \subseteq V' \times V'$ of size $p$.
Inputs $G, P$ correspond to an instance of the Vertex Directed Steiner Forest problem. Given $G, P$, we will construct an instance $G_1, P_1$ of the Directed Steiner Forest problem. We will then show that the integrality gap of $\widehat{\textsc{VDSF}}$ on $G, P$ is at most the integrality gap of $\widehat{\textsc{DSF}}$ on $G_1, P_1$, which will imply Claim \ref{claim:vdsf_lp}.

We  construct the weighted directed graph $G_1 = (V_1, E_1, w)$ as follows.  Initialize $G_1$ as $G_1 := (V \cup V', E)$. For each vertex $v \in V$, we replace $v$ in $G_1$ with a special directed edge $(v^+, v^-)$, which we assign weight $w_{(v^+, v^-)} = 1$. For each original edge $(u, v) \in E \cap (V\times V)$, we replace it with edge $(u^-, v^+)$, which we assign weight $w_{(u^-, v^+)} = 0$. For each original edge $(u, v) \in E \cap (V \times V')$, we replace it with $(u^-, v)$ and assign it weight $w_{(u^-, v)} = 0$; and for each original edge $(u, v) \in E \cap (V' \times V)$, we replace it with $(u, v^+)$ and assign it weight $w_{(u, v^+)} = 0$. 
This concludes our construction of $G_1$. We let $P_1 := P$. Then $G_1, P_1$ will be our corresponding instance of Directed Steiner Forest.

Let $\{x_v\}_{v \in V}$ be a feasible solution to $\widehat{\textsc{VDSF}}$  on $G, P$. Now for every $v \in V$, assign capacity $x_v$ to edge $(v^+, v^-)$ in $G_1$. For all other edges in $G_1$, assign capacity 1. Observe that the resulting solution to $\widehat{\textsc{DSF}}$ on inputs $G_1, P_1$  is feasible and has capacity 
$$\sum_{e \in E(G_1)}w_ex_e =  \sum_{v \in V} x_v \leq \widehat{\textsc{VDSF}}(G, P).$$  Then the size of the optimal solution to 
$\widehat{\textsc{DSF}}(G_1, P_1)$ is at most
$$\widehat{\textsc{DSF}}(G_1, P_1) \leq 
\widehat{\textsc{VDSF}}(G, P).$$

Now consider a feasible solution to Directed Steiner Forest on inputs $G_1, P_1$. This optimal solution corresponds to a subgraph $H_1$ of $G_1$. Now we define a corresponding feasible  solution to Vertex Directed Steiner Forest as follows. Let $H$ be the induced subgraph $$H = G[\{ v \in V \mid  (v^+, v^-) \in E(H_1)  \} \cup V'].$$   Observe that $H$ is a feasible  solution for Vertex Directed Steiner Forest on inputs $G, P$ and has  size at most $$|V(H) \cap V| =  |\{ v \in V \mid  (v^+, v^-) \in E(H_1)  \}| =  \sum_{e \in E(H_1)} w_e \leq \textsc{DSF}(G_1, P_1).$$ Then the size of the optimal solution to Vertex Directed Steiner forest on $G, P$ is at most
$$
\textsc{VDSF}(G, P) \leq \textsc{DSF}(G_1, P_1).
$$
Now suppose that for some integer $k$, inputs $G, P$ satisfy
$$
k \leq \frac{\textsc{VDSF}(G, P)}{\widehat{\textsc{VDSF}}(G, P)}.$$
Then
$$k \leq \frac{\textsc{VDSF}(G, P)}{\widehat{\textsc{VDSF}}(G, P)} \leq \frac{\textsc{DSF}(G_1, P_1)}{\widehat{\textsc{DSF}}(G_1, P_1)} = \dsfg(|V(G_1)|, p) \leq \dsfg(2n, p).$$


\subsection{Lemma \ref{lemma:sr_cleaning_lemma}}
\label{app:sr_cleaning_lemma}

We now prove the source-restricted cleaning lemma (Lemma \ref{lemma:sr_cleaning_lemma}). 
The proof will require the following lemma:

\begin{lemma}
\label{lem:scwise_cleaning_1}
    Suppose $S$ is a path system with $n$ nodes, $p$ paths, bridge girth $b$, average node degree $d$, and average path length $\ell$. Then there exists a path system $S'$ that has:
    \begin{itemize}
        \item $n' = \Theta(n)$ nodes,
        \item $p' = \Theta(p)$ paths,
        \item size $\|S'\| = \Theta(\|S\|)$,
        \item bridge girth $\geq b$,
        \item average degree $d' = \Theta(d)$, and all nodes have degree $\Theta(d')$,
        \item average length $\ell' = \Theta(\ell)$, and all paths have length $\Theta(\ell')$, 
        \item $S'$ is source-restricted with respect to a set $X$ of size $|X| = \Theta(p/d)$.
    \end{itemize}
\end{lemma}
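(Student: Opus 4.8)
The plan is to bootstrap from the ordinary cleaning lemma (Lemma~\ref{lem:true_cleaning}) and then impose source-restriction by a random sampling step. First dispose of the degenerate range $b\le 2$: no path system has a $1$-bridge, so ``bridge girth $\ge b$'' is automatic there, and one may simply clean $S$ via Lemma~\ref{lem:true_cleaning}, partition the resulting paths arbitrarily into $\Theta(p/d)$ groups of size $\Theta(d)$, prepend a single fresh source node to each group, and re-run the cleaning procedure (prepending a shared source can create $2$-bridges, but that is harmless when $b\le 2$). So assume $b\ge 3$, so that $S$ is $2$-bridge-free and hence $\|S\|\le\beta(n,p,2)=O(n\sqrt p+n+p)$ by Theorem~\ref{thm:twobeta}; since $p\le n^2$ this yields $d\ell=\|S\|^2/(np)=O(n)$. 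Apply Lemma~\ref{lem:true_cleaning} to obtain $S_1$ with $\Theta(n)$ nodes, $\Theta(p)$ paths, $\|S_1\|=\Theta(\|S\|)$, bridge girth $\ge b$, all degrees $\Theta(d)$, all lengths $\Theta(\ell)$, and record that the size identity gives $p/d=n/\ell$.

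Now sample $X\subseteq V(S_1)$ by placing each node into $X$ independently with probability $q:=1/\ell$, so $\mathbb E|X|=\Theta(n/\ell)=\Theta(p/d)$. Form $S_2\subseteq S_1$ by replacing each path $\pi$ with the subpath that starts at the first $X$-node of $\pi$ and ends just before the second $X$-node of $\pi$ (or at the end of $\pi$ if there is no second $X$-node), and deleting any path with no $X$-node. Since $S_2$ is obtained from $S_1$ only by deleting node-instances from paths and deleting paths, it is a subsystem of $S_1$, so it has bridge girth $\ge b$. It is source-restricted with respect to $X':=\{x\in X:\ x$ is the first $X$-node of some surviving path$\}$: every surviving path has its first node in $X'$ and contains no other node of $X$ (hence none of $X'$), and every node of $X'$ occurs only as a first node. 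A routine computation shows $\mathbb E\|S_2\|=\Omega(\|S\|)$ and that the expected number of surviving paths is $\Theta(p)$: a node at position $t$ of a path of length $\Theta(\ell)$ survives iff that path has an $X$-node at or before position $t$, which has probability $1-(1-q)^t$, and a constant fraction of positions on such a path have $t=\Omega(\ell)$, where this probability is $\Omega(1)$. Fix an outcome of $X$ simultaneously achieving $\|S_2\|=\Omega(\|S\|)$, $\Theta(p)$ surviving paths, and $|X|=\Theta(n/\ell)$.

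Finally, re-run the cleaning procedure on $S_2$ in a form that preserves source-restriction: truncate an over-long path by deleting a suffix (its starting node in $X'$ is kept); split an over-degree node, splitting a node of $X'$ into two nodes of $X'$ and re-distributing the paths it heads; delete an under-degree non-source node by removing it from all paths; and delete an under-degree source node together with every path it heads (this is forced, since otherwise those paths would acquire a non-source first node). Each operation either deletes node-instances/paths or renames nodes without prepending, so the resulting $S'$ still has bridge girth $\ge b$, is source-restricted, is approximately degree- and length-regular, and has $\Theta(n)$ nodes; and because every path of $S'$ starts in $X'\cap V(S')$ while every node of $S'$ has degree $\Theta(d')=\Theta(d)$, the size identity forces $|X'\cap V(S')|=\Theta(p/d)$, as required. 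The bounds $n'=\Theta(n)$ and $p'=\Theta(p)$ then follow from $\|S'\|=\Theta(\|S\|)$ exactly as in Lemma~\ref{lem:true_cleaning}.

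The step I expect to be the main obstacle is precisely showing $\|S'\|=\Theta(\|S\|)$ in this last phase, i.e.\ controlling the novel operation of discarding an under-degree source and all $\Theta(d)$ of its full-length paths: the crude bound on the incidences lost this way is $|X|\cdot\Theta(d\ell)=\Theta(nd)=\Theta(\|S\|)$, the order of the entire system. The resolution is that the random choice of $X$ makes ``bad'' sources rare. For a fixed candidate source $v$, the events ``$\pi$ contributes to $\deg_{S_2}(v)$'' over the paths $\pi\ni v$ are determined by the $X$-status of (essentially disjoint blocks of) nodes preceding $v$, so $\deg_{S_2}(v)$ concentrates around its mean $\Theta(d)$; when $d=\omega(\log n)$ a union bound over the $\le n$ candidates shows that w.h.p.\ every source already has degree $\Theta(d)$ and no source-deletions are needed, and for $d=\omega(1)$ one still gets that only a $o(1)$-fraction of the $\Theta(n/\ell)$ sources is bad, so discarding them and their paths costs only $o(nd)=o(\|S\|)$; after that the standard amortized accounting of Lemma~\ref{lem:true_cleaning} applied to the remaining (size-cheap) operations finishes the estimate. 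The residual corner where $d=O(1)$ (equivalently $\|S\|=O(n)$) is handled by an ad hoc argument, and in any case the only use of this lemma, in the proof of Theorem~\ref{thm:dsf_gap}, is in the regime $d,\ell=\omega(1)$; alternatively one can choose $X$ greedily rather than at random so that each selected source is guaranteed to head $\Theta(d)$ of the surviving segments, avoiding the concentration argument altogether.
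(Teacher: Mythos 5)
Your high-level plan matches the paper's: apply the ordinary cleaning lemma, sample a random source set $X$ of size $\Theta(p/d)$, pass to the suffix of each path starting at its first $X$-node while removing later $X$-occurrences, and finish with a deletion phase. Your truncation rule (stop before the \emph{second} $X$-node) differs slightly from the paper's (keep the whole suffix and merely delete later $X$-occurrences), but both yield source-restricted subsystems and that difference is not the important one.

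The important difference is that the step you flag as ``the main obstacle'' has a one-line deterministic resolution that you do not find, and the concentration/greedy machinery you propose in its place is both more complicated and only sketched. You estimate the cost of discarding a source $x\in X$ together with all paths it heads as $\Theta(d\ell)$ per source, and hence $\Theta(|X|\cdot d\ell)=\Theta(\|S\|)$ in total, and conclude you must show bad sources are rare. But in the paper's deletion loop (and in any sensible version of yours), a source is deleted only when its current degree has already fallen below $\lambda d$; at that moment it heads fewer than $\lambda d$ paths, each of length at most $O(\ell)$, so the cost of discarding it is $O(\lambda d\ell)$, not $\Theta(d\ell)$. Summed over $|X|=O(p/d)$ sources, this is $O(\lambda p\ell)=O(\lambda\|S\|)$, which can be made an arbitrarily small constant fraction of $\|S\|$ by choosing $\lambda$ small. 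No concentration, no $d=O(1)$ corner case, no appeal to the instance-specific regime $d,\ell=\omega(1)$. Relatedly, your concentration sketch itself is not airtight: the ``essentially disjoint blocks'' of nodes preceding $v$ on distinct paths through $v$ need not be disjoint (paths in a $2$-bridge-free system can still share prefixes before $v$), so independence of the indicator events is not immediate and would require its own argument. In short, the proposal is on the right track and would likely close with more work, but the obstacle you identify is an artifact of a too-crude bound; the paper's amortized accounting removes it directly.
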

\begin{proof}
    By the cleaning lemma (Lemma \ref{lem:cleaning}), we may assume that $S = (V, \Pi)$ is approximately degree-regular and length-regular.\footnote{This initial application of the cleaning lemma is not technically needed to make the following analysis work, but it simplifies the analysis a bit.} In particular, we may assume that $c_1\ell \leq |\pi| \leq c_2 \ell$ for $\pi \in \Pi$, and $\deg(v) \leq c_3d$ for $v \in V$,  where $c_1, c_2, c_3$ are positive universal constants. Additionally, we will assume that $\ell$ is greater than a sufficiently large constant; when $\ell \leq c$ for a constant $c >0$, we can simply take $X := V$ and make our path system source-restricted with respect to $X$ by shortening all paths in $\Pi$ until they contain only one node. Now perform the following sequence of operations on $S$:
\begin{enumerate}
    \item Uniformly at random, sample a subset $X\subseteq V$ of size $|X| = \frac{p}{d}$. Delete all paths from $S$ that do not contain a node in $X$.
    \item  For each path $\pi \in \Pi$, let $x_{\pi} \in \pi \cap X $ be the first node in $\pi$ that is also in $X$. Delete all nodes from $\pi$ preceding $x_{\pi}$; that is, all nodes $y$ such that $y <_{\pi} x_{\pi}$.
    \item For each path $\pi \in \Pi$, delete all occurrences of nodes in $X \setminus \{x_{\pi}\}$ from $\pi$.  
    \item  While there exists a node $v$ of degree $\deg(v) <  \lambda d$ or a path $\pi$ of length $|\pi| < \lambda \ell$, where $\lambda > 0$ is a sufficiently small constant, delete that node or path from $S$. If a node $v \in X$ is deleted, then delete all paths in $\Pi$ that contain $v$.
\end{enumerate}
Let $S' := (V', \Pi')$ be the resulting path system. We will now prove that with nonzero probability, $\|S'\| = \Theta(\|S\|)$. All other properties of $S'$ claimed in the lemma are immediate or follow from arguments identical to those of Lemma \ref{lem:true_cleaning}.

Fix a path $\pi$ in $\Pi$, and let $\pi_1$ be the prefix of $\pi$ corresponding to the first $ c_1/2 \cdot \ell $ nodes. Observe that $\pi_1$ contains a node in $X$ with probability 
$$
\Pr[ \pi_1 \cap X \neq \emptyset ] = 1 - \left(1 - \frac{p/d}{n}\right)^{|\pi_1|} \geq 1 - e^{-\frac{c_1/2 \cdot \ell p}{dn}} = 1 - e^{-c_1/2}.
$$
Now suppose that $\pi_1 \cap X \neq \emptyset$, so that $x_{\pi} \in \pi_1$. Then $\pi$ survives step 1 of our procedure. Let $\pi_2$ be a subpath of $\pi$ such that $\pi = \pi_1 \circ \pi_2$. Since $x_{\pi} \in \pi_1$, we  are guaranteed that $\pi_2$ is a subpath of a path surviving after the second step of our procedure; moreover, $|\pi_2| \geq c_1/2 \cdot \ell$, since $|\pi| \geq c_1 \ell$.  Let $S_2$ be the  path system obtained after performing the first two steps of our procedure on $S$. Then the expected size of $\|S_2\|$ is
$$
\mathbb{E}[\|S_2\|] \geq \sum_{\pi \in \Pi} \Pr[\pi_1 \cap X \neq \emptyset] \cdot |\pi_2| \geq (1-e^{-c_1/2})p \cdot c_1/2 \cdot \ell = c_1/2 \cdot (1-e^{-c_1/2}) \|S\|.
$$
Consequently, we may assume that $S_2$ satisfies $\|S_2\| \geq c_1/2 \cdot (1-e^{-c_1/2}) \|S\|$. Now we just need to bound the amount that $\|S_2\|$ decreases in steps 3 and 4 of our procedure. 
Note that the total decrease of $\|S_2\|$ in step 3 is at most $$\sum_{x \in X} \deg(x) \leq |X|\cdot c_3d = c_3p.$$ 
As stated earlier, we may assume $\ell$ is greater than a sufficiently large constant. If we assume $\ell > \frac{4c_3}{c_1/2 \cdot(1-e^{-c_1/2})}$, then the total decrease of $\|S_2\|$ in step 3 is at most
$$
c_3p \leq  \left( \frac{c_1/2 \cdot(1-e^{-c_1/2})}{4} \cdot \ell \right) \cdot  p \leq \frac{c_1/2 \cdot(1-e^{-c_1/2})}{4} \cdot \|S\| \leq \|S_2\|/4.
$$

In step 4, if we delete a node $v \in V \setminus X$ of degree $\deg(v) < \lambda d$, then the  decrease in $\|S_2\|$ is at most $\lambda d$. Else if we delete a path $\pi$ of length $|\pi| < \lambda \ell$, then the decrease in $\|S_2\|$ is at most $\lambda \ell$. Finally, if we delete a node $x \in X$, then the decrease in $\|S_2\|$ is at most $\lambda d \cdot c_2 \ell$, since we delete at most $\lambda d$ paths each of length at most $c_2 \ell$. Then the total decrease in $\|S_2\|$ in step 4 is at most
$$
|V| \cdot \lambda d + |\Pi| \cdot \lambda \ell + |X| \cdot \lambda d \cdot c_2 \ell \leq  \lambda(2 + c_2 )\|S\|.
$$
If we choose our constant $\lambda$ to be $\lambda := \frac{c_1/2 \cdot (1-e^{-c_1/2})}{4(2+c_2)}$, then this decrease is at most $\|S_2\|/4$. We conclude that $\|S'\| \geq \|S_2\|/2 = \Theta(\|S\|)$, as desired.
\end{proof}
Now the proof of Lemma \ref{lemma:sr_cleaning_lemma} follows by using  Lemma \ref{lem:scwise_cleaning_1} and Lemma \ref{lem:paramadjust} in an argument identical to the proof of the cleaning lemma (Lemma \ref{lem:cleaning}) in Appendix \ref{app:cleaning}.

\section{Implicit Bounds on $\beta$ in Prior Work}

\subsection{Upper Bounds for $k=2$ \label{app:twobounds}}

Here, we repeat some arguments from prior work that implicitly show upper bounds on the value of $\beta$, translated into language directly about $\beta$.
We note that $\beta(n, p, 2) = \beta^*(n, p, 2)$, since $2$-bridges are not sensitive to ordering, and so we typically prove only the upper bounds on $\beta(n, p, 2)$.

\begin{theorem} [\cite{CE06}] \label{thm:cebp2}
$\beta(n, p, 2) = \beta^*(n, p, 2) = O(np^{1/2} + p).$
\end{theorem}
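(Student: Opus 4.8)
The plan is to prove the bound $\beta(n,p,2) = O(np^{1/2} + p)$ via a forward-search (double-counting) argument, of the kind described in the paper's overview as the standard approach for small $k$. First I would invoke the Cleaning Lemma (Lemma \ref{lem:cleaning}) to reduce to a path system $S = (V,\Pi)$ with $\|S\| = \Omega(\beta(n,p,2))$ that is approximately degree-regular and length-regular; write $d$ for the average degree and $\ell$ for the average path length, so that every node has degree $\Theta(d)$ and every path has length $\Theta(\ell)$, and recall the size identity $nd = \|S\| = p\ell$. If $d = O(1)$ then $\|S\| = O(n)$ and we are done, so assume $d$ is at least a large constant; we want to conclude $\ell = O(p^{1/2})$, which rearranges to $\|S\| = p\ell = O(p^{3/2})$ — hmm, that is not quite the target, so I should instead aim the counting at the parameter that directly yields $np^{1/2}$. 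The cleaner route: fix a path $\pi \in \Pi$ with $|\pi| = \Theta(\ell)$ and count, for ordered pairs $(x,y)$ with $x <_\pi y$, the number of \emph{other} paths through both $x$ and $y$.

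The key observation is the consequence of $2$-bridge-freeness: if $\pi_1$ and $\pi_2$ both contain two nodes $x,y$, then since $S$ has no $2$-bridge, $\pi_1$ and $\pi_2$ cannot both traverse $x$ before $y$ — one must have $x <_{\pi_1} y$ and $y <_{\pi_2} x$. Equivalently, for any ordered pair $(x,y)$, \emph{at most one} path $\pi$ has $x <_\pi y$. So the total number of triples $(\pi, x, y)$ with $x <_\pi y$ is on one hand $\sum_{\pi} \binom{|\pi|}{2} = \Theta(p \ell^2)$ (using length-regularity and $\ell \ge 2$), and on the other hand at most the number of ordered node pairs, which is at most $n^2$. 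Hence $p\ell^2 = O(n^2)$, i.e. $\ell = O(n/p^{1/2})$, and therefore $\|S\| = p\ell = O(p \cdot n/p^{1/2}) = O(n p^{1/2})$. Adding back the $O(p)$ term absorbed when $\ell$ (equivalently $d$) is only a constant gives $\beta(n,p,2) = O(np^{1/2} + p)$ as claimed, and since $2$-bridges do not depend on path order the identical argument gives $\beta^*(n,p,2) = O(np^{1/2}+p)$.

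I expect the main obstacle to be mostly bookkeeping rather than a genuine conceptual difficulty: one must be careful that the "at most one path has $x <_\pi y$" statement is applied to \emph{ordered} pairs and that each path genuinely contributes $\Theta(\ell^2)$ ordered increasing pairs (which is where length-regularity and the assumption $\ell \ge 2$, both supplied by the Cleaning Lemma, are used), and one must handle the degenerate small-$\ell$ (equivalently small-$d$) regime separately to justify the additive $+p$ term. There is also a minor subtlety in that the Cleaning Lemma only preserves $\|S\|$ up to a constant factor, which is harmless here since we are proving an upper bound up to constants. No regularity of the underlying graph or finer structural property of consistent path systems is needed — the single inequality $p\ell^2 = O(n^2)$ coming from $2$-bridge-freeness does all the work.
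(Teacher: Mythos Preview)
Your proposal is correct and follows essentially the same approach as the paper: invoke the Cleaning Lemma for length-regularity, observe that $2$-bridge-freeness means each ordered node pair $(x,y)$ is realized as $x<_\pi y$ by at most one path, double-count to get $p\ell^2 = O(n^2)$, and rearrange to $\|S\| = O(np^{1/2})$. The only cosmetic difference is that the paper assumes $\ell$ (rather than $d$) is a large constant to dispatch the degenerate case, but this is immaterial.
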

\begin{proof}
Let $S = (V, \Pi)$ be a path system with $n$ nodes, $p$ paths, and bridge girth $>2$.
By the Cleaning Lemma (Lemma \ref{lem:cleaning}), we may assume without loss of generality that all paths have length $\Theta(\ell)$, where $\ell$ is the average length in $S$.
We may also assume that $\ell$ is at least a sufficiently large constant, as otherwise the bound $O(p)$ is immediate.

There are $O(n^2)$ ordered pairs of distinct nodes in $S$.
Since $S$ does not have $2$-bridges, for each such ordered pair $(x, y)$, there is at most one path $\pi \in \Pi$ with $x <_{\pi} y$.
On the other hand, each path $\pi$ contains ${|\pi| \choose 2} = \Theta(\ell^2)$ such node pairs (note: this equality uses that $\ell$ is a large constant, and so $|\pi| \ge 2$).
We therefore have:
\begin{align*}
p \ell^2 &= O\left(n^2 \right)\\
\|S\|^2 = p^2 \ell^2 &= O\left( n^2 p \right)\\
\|S\| &= O\left(np^{1/2}\right). \tag*{\qedhere}
\end{align*}
\end{proof}

\begin{theorem} [\cite{Bodwin21}] \label{thm:bodbnp2}
$\beta(n, p, 2) = \beta^*(n, p, 2) = O\left( n^{2/3} p + n\right).$
\end{theorem}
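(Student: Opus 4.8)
The plan is to follow the distance-preserver counting argument of Bodwin \cite{Bodwin21}, adapted directly to $2$-bridge-free path systems. Since $2$-bridges are insensitive to the ordering of paths we have $\beta(n,p,2) = \beta^*(n,p,2)$, so it suffices to bound $\beta(n,p,2)$. First I would apply the Cleaning Lemma (Lemma \ref{lem:cleaning}) to reduce to a $2$-bridge-free path system $S = (V,\Pi)$ that is approximately degree-regular (all degrees $\Theta(d)$) and length-regular (all path lengths $\Theta(\ell)$), with $nd = \|S\| = p\ell$. If $\ell$ or $d$ is bounded by a constant then $\|S\| = O(n+p) = O(n^{2/3}p + n)$ (using $p \le n^{2/3}p$), and we are done; so assume both are at least a large constant. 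Moreover Theorem \ref{thm:cebp2} already gives $\|S\| = O(np^{1/2} + p)$, and $np^{1/2} = O(n^{2/3}p)$ whenever $p \ge n^{2/3}$; hence the remaining case is $p < n^{2/3}$, where the target reduces to $\|S\| = O(n^{2/3}p + n)$.

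The core of the argument exploits the strong rigidity forced by $2$-bridge-freeness: for every ordered pair of distinct nodes $(u,w)$ there is at most one path with $u$ before $w$. Two consequences drive the count. First, the consecutive-pair edges of all paths are pairwise distinct. Second, whenever two distinct paths $\pi,q$ share a node $v$ internal to both, their successors at $v$ must differ — $\pi$ and $q$ \emph{diverge} at $v$ — and more generally any nodes shared by $\pi$ and $q$ appear in exactly opposite orders along the two paths, so that $\pi \cup q$ decomposes into directed cycles with pairwise disjoint interiors. Using this structure I would run a charging argument in the style of \cite{Bodwin21}: process the paths in an arbitrary order, and for each path $\pi$ charge its nodes against branching events — roughly, triples $(v,\pi,q)$ where $q$ is a previously processed path that diverges from $\pi$ at $v$ — while using the reverse-order structure to control how long $\pi$ can overlap a single earlier path before being forced off it. Summing these charges over all $p$ paths and all nodes, and comparing against the global budget of $O(n^2)$ ordered pairs (equivalently $O(nd^2)$ divergence triples), yields $\|S\| = O(n^{2/3}p + n)$, with the additive $n$ absorbing boundary and low-degree contributions.

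The main obstacle is precisely this charging step. In the acyclic (equivalently, ``consistent'') special case any two paths share at most one node, so a later path meets the previously built structure in simple segments and the count is routine; here, however, reverse-order overlaps are permitted — indeed they are essential to the matching lower bound of Theorem \ref{thm:twobeta}, where the quadratics construction is genuinely non-acyclic — and a later path can wind in and out of an earlier path repeatedly. The delicate point is to show that each such reverse-order overlap can ``absorb'' only a controlled number of a later path's nodes before the path re-enters freshly-claimed territory, so that the per-path node count can still be charged against $O(n^2)$ globally distinct objects. Once this is in place, combining the three regimes — $\ell$ or $d$ constant, $p \ge n^{2/3}$ via Theorem \ref{thm:cebp2}, and $p < n^{2/3}$ via the charging argument — gives $\beta(n,p,2) = \beta^*(n,p,2) = O(n^{2/3}p + n)$.
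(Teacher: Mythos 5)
Your proposal has a genuine gap, and it is one you flag yourself: the entire weight of the argument rests on the charging step for reverse-order overlaps, which you describe as ``the delicate point'' and never carry out. As written, this is a plan for a proof rather than a proof. Moreover, the bookkeeping you sketch does not point toward the right exponents: to conclude $\|S\| = nd = O(n^{2/3}p)$ you need an inequality of the form $nd^3 = O(p^3)$, whereas comparing $\Theta(nd^2)$ divergence triples against a budget of $O(n^2)$ ordered node pairs would at best give $d = O(n^{1/2})$, i.e.\ $\|S\| = O(n^{3/2})$, which is not the claimed bound. The $O(n^2)$-pairs budget is the one that yields the \emph{other} upper bound $O(np^{1/2})$ (Theorem \ref{thm:cebp2}); the $O(n^{2/3}p)$ bound requires counting against the $O(p^3)$ budget of \emph{triples of paths}, and your sketch never introduces that object.

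The paper's proof is much shorter and avoids any sequential charging. After cleaning to make all degrees $\Theta(d)$ (length-regularity and the case split on $p$ versus $n^{2/3}$ are not needed), the key observation is: any three distinct paths share at most one common node. Indeed, if $\pi_1,\pi_2,\pi_3$ all contained two distinct nodes $u,v$, then by pigeonhole two of the three paths would traverse $u,v$ in the same order, forming a $2$-bridge. Now double-count incidences between nodes and unordered triples of paths through them: each node $v$ lies on $\binom{\deg(v)}{3} = \Theta(d^3)$ triples, and each triple contributes at most one node, so $nd^3 = O(p^3)$. Cubing $\|S\| = nd$ gives $\|S\|^3 = n^3 d^3 = O(n^2 p^3)$, hence $\|S\| = O(n^{2/3}p)$, with the $+n$ term absorbing the case of constant $d$. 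You already have all the ingredients for this (the reverse-order structure of shared nodes is exactly what makes the pigeonhole work); what is missing is the realization that the right global object to count is triples of paths rather than pairs of nodes.
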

\begin{proof}
Let $S = (V, \Pi)$ be a path system with $n$ nodes, $p$ paths, and bridge girth $>2$.
By the Cleaning Lemma (Lemma \ref{lem:cleaning}), we may assume without loss of generality that all nodes have degree $\Theta(d)$, where $d$ is the average degree in $S$.
We may also assume that $d$ is at least a sufficiently large constant, as otherwise the bound $O(n)$ is immediate.

We first claim that, for any triple of distinct paths $\pi_1, \pi_2, \pi_3 \in \Pi$, there exists at most one node $v$ in $\pi_1 \cap \pi_2 \cap \pi_3$.
To see this, suppose for contradiction that there are distinct nodes $u, v \in (\pi_1 \cap \pi_2 \cap \pi_3)$.
Notice that there must be two paths that use $u, v$ in the same order; e.g., without loss of generality, we have $u <_{\pi_1} v$ and also $u <_{\pi_2} v$.
But this implies that $\pi_1, \pi_2$ form a $2$-bridge, giving contradiction.

Meanwhile, consider an arbitrary node $v$.
There are ${\deg(v) \choose 3} = \Theta(d^3)$ triples of paths that intersect at $v$ (note: this equality uses that $d$ is a large enough constant, and so $\deg(v) \ge 3$).
We therefore have
\begin{align*}
nd^3 &= O\left(p^3\right)\\
\|S\|^3 = n^3 d^3 &= O\left(n^2 p^3\right)\\
\|S\| &= O\left(n^{2/3} p\right). \tag*{\qedhere}
\end{align*}
\end{proof}

\subsection{Bounds for $k=3$ \label{app:threebounds}}

\begin{theorem} [Tweaked Folklore Argument] \label{thm:threebound}
$\beta(n, p, 3)=O\left((np)^{2/3} + n + p\right)$
\end{theorem}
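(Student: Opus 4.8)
The plan is to run the standard "forward search from a node" (Moore-type) counting argument, using bridge-freeness to guarantee that the nodes and paths discovered at each step are distinct. First I would invoke the Cleaning Lemma (Lemma~\ref{lem:cleaning}) to assume without loss of generality that $S = (V,\Pi)$ has $n$ nodes, $p$ paths, bridge girth $>3$, $\|S\| = \Omega(\beta(n,p,3))$, is approximately degree-regular with all degrees $\Theta(d)$, and approximately length-regular with all path lengths $\Theta(\ell)$; recall the size identity $nd = \|S\| = p\ell$. I would also assume $d$ and $\ell$ are each at least a sufficiently large constant, since otherwise $\|S\| = O(n+p)$ immediately and we are done.

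Next I would set up the search tree. Fix an arbitrary node $v_0$. Step~1: it lies on $\Theta(d)$ paths. Step~2: these paths contain (counting multiplicity) $\Theta(d\ell)$ node occurrences, and I claim the corresponding nodes other than $v_0$ are all distinct: if two of these paths shared a node $v \neq v_0$, then together with a third path we could try to form a short bridge; more directly, two paths $\pi_1, \pi_2$ sharing two nodes $v_0, v$ must use them in opposite orders (else a $2$-bridge), and then a third path through $v_0$ (or through $v$) completes a $3$-bridge — contradiction. So step~2 discovers $\Theta(d\ell)$ distinct nodes. Step~3: each of these $\Theta(d\ell)$ nodes lies on $\Theta(d)$ paths, and bridge girth $>3$ forces the $\Theta(d^2\ell)$ resulting path-instances to be distinct paths (two of them coinciding would close a $3$-bridge with $v_0$'s path as the river, after accounting for orientation). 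Since there are only $p$ paths total, $d^2 \ell = O(p)$.

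Then it is just algebra. From $d^2\ell = O(p)$ and $nd = p\ell$, substitute $\ell = nd/p$ to get $d^3 n = O(p^2)$, so $d = O(p^{2/3} n^{-1/3})$, whence $\|S\| = nd = O(n^{2/3} p^{2/3})$. Restoring the additive $O(n+p)$ from the degenerate cases gives $\beta(n,p,3) = O((np)^{2/3} + n + p)$ as claimed. The main obstacle — and the only place any real care is needed — is the distinctness argument at step~3: one must check that the combination of a discovered node, the path it arrived on, and the new path through it, together with $v_0$'s original path, genuinely instantiates a $3$-bridge (and not merely a set $3$-cycle), paying attention to the orientation constraints in the definition of a bridge; if the orientations happen to be wrong, one instead finds a $2$-bridge or argues via the multiplicities directly. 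All the other steps are routine, and the bound is known to be tight for several relative values of $n,p$ as noted in Table~\ref{tbl:betabounds}.
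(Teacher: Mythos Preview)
Your approach is the natural node-based dual of the paper's path-based argument (the paper starts from a path $\pi$ and ends with a node count $d\ell^2 = O(n)$; you start from a node $v_0$ and aim for a path count $d^2\ell = O(p)$), and the final algebra is equivalent. But your step~2 distinctness argument is wrong. If two paths $\pi_1,\pi_2$ through $v_0$ share a second node $v$ in opposite orders, that configuration is a $2$-\emph{cycle}, and adding a third path through $v_0$ (or through $v$) does \emph{not} complete a $3$-bridge: a $3$-bridge needs three distinct nodes, and you only have $v_0$ and $v$ available. So bridge girth $>3$ alone does not make the step-2 nodes distinct, and the same obstruction recurs at step~3 (a new path $q$ can hit many forward-reached nodes along a single $\pi_i$ in reverse order without creating any bridge).

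The paper handles this orientation issue by splitting each path $q$ intersecting the base path $\pi$ into its \emph{downstream} part (suffix after the first $q$--$\pi$ intersection) and \emph{upstream} part. Two downstream parts sharing a node $v$ genuinely force a $3$-bridge, because both go forward from their respective intersection points $x,y\in\pi$ and the orderings line up: $\pi$ supplies $x\to y$, one $q$ supplies $y\to v$, and the other supplies the river $x\to v$. This gives node-disjointness of the downstream parts, hence $d\ell^2 = O(n)$. You can repair your node-based search analogously---only follow the forward halves of the paths through $v_0$, so that step-2 distinctness follows from $2$-bridge-freeness alone---or, more cleanly, invoke Lemma~\ref{lem:2cyccleaning} to remove $2$-cycles first, after which any two paths share at most one node and both of your distinctness claims become immediate.
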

\begin{proof}
This argument can be viewed as a slightly more careful version of the standard $O((np)^{2/3} + n + p)$ upper bound on $\gamgam(n,p,6)$.
While technically slightly different, it follows the same rhythms and overall does not contain a significant new idea.

Let $S = (V, \Pi)$ be a path system with $n$ nodes, $p$ paths, and bridge girth $>3$.
By the Cleaning Lemma (Lemma \ref{lem:cleaning}), we may assume without loss of generality that all nodes have degree $\Theta(d)$ and all paths have length $\Theta(\ell)$, where $d, \ell$ are respectively the average degree and length in $S$.
We may also assume that both $d, \ell$ are sufficiently large constants, as otherwise the bound $O(n + p)$ is immediate.

Choose an arbitrary path $\pi \in \Pi$, which we will call the \emph{main path}.
We have that $\pi$ intersects $\Omega(\ell)$ nodes, and each of these nodes have $\Omega(d)$ paths of length $\ell$ each.
For a path $q$ that intersects $\pi$, let us say that the \emph{downstream part} of $q$ is the suffix following the first point at which $q, \pi$ intersect, and the \emph{upstream part} of $q$ is the prefix preceding the last point at which $q, \pi$ intersect (so if $q, \pi$ intersect at several nodes, which is conceivable so long as they use those nodes in opposite order, the downstream/upstream parts of $q$ overlap).

Let $U, D$ be the set of upstream, downstream parts of paths (respectively) that intersect $\pi$.
We claim that the subpaths in $U$ are pairwise node-disjoint from each other, and also the paths in $D$ are pairwise node-disjoint from each other.
To see that the paths in $D$ are pairwise node-disjoint, suppose for contradiction that there are paths $q_1, q_2 \in D$ that intersect at a node $v$.
Also suppose that the full paths containing $q_1, q_2$ intersect $\pi$ at nodes $x, y$, respectively.
Then we notice that these paths form a $3$-bridge with $\pi$, on the nodes $\{v, x, y\}$, reaching contradiction.
A similar argument works to show node-disjointness of paths in $U$.

This node-disjointness implies that $\|U\| + \|D\| = \Omega(d\ell^2)$, and so without loss of generality we may assume $\|D\| = \Omega(d\ell^2)$.
Since the paths in $D$ are node-disjoint we have $d \ell^2 = O(n)$, and so
$$(nd)(p^2 \ell^2) = O(n^2 p^2).$$
Since $nd = p\ell = \|S\|$, this implies
$$\|S\| = O((np)^{2/3})$$
as claimed.
\end{proof}

\begin{figure}[htbp]
  \begin{center}
    \includegraphics[width=3.0in]{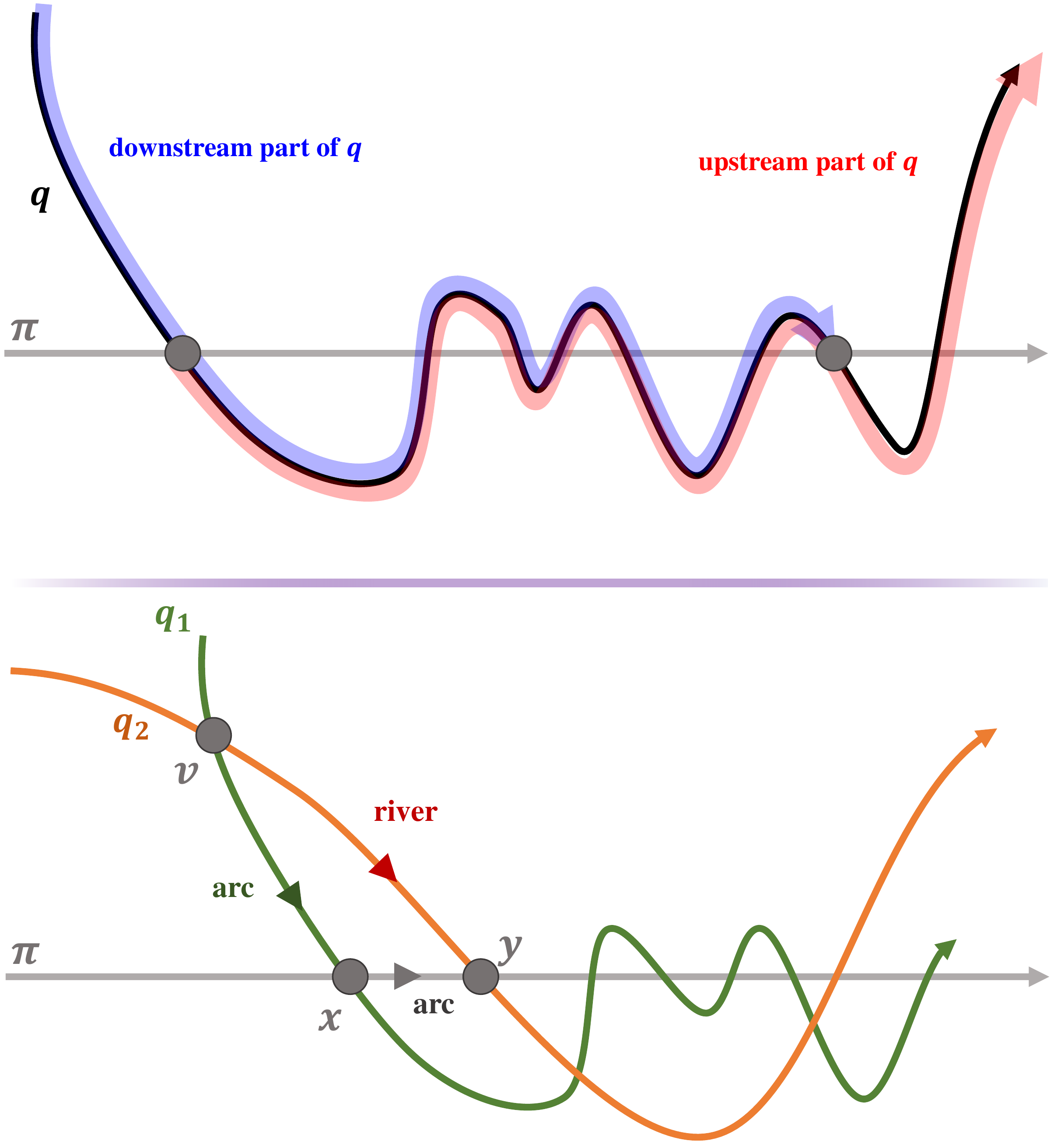}
    \end{center}
    \caption{The upper bound for $\beta(n,p,3)$. In the top is an illustration of the upstream and the downstream parts of a path $\pi$, and in the bottom the $3$-bridge on the paths $q_1,q_2$, and $\pi$.}
    \label{Figs:3bridge}
\end{figure}

    
    


    


\begin{theorem} [Based on \cite{de1997dense}] \label{thm:b3rs}
$\beta(n, p, 3) = O\left(\frac{n^2}{\rs(n)} + p\right)$.
\end{theorem}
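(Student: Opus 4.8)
The plan is to reduce $\beta(n,p,3)$ to the Ruzsa--\Szemeredi{} function by extracting a system of induced matchings from a high-bridge-girth path system, in the spirit of de Caen and \Szekely{}~\cite{de1997dense}. First I would clean the system: by Lemma~\ref{lem:2cyccleaning} there is a path system $S = (V,\Pi)$ with $\le n$ nodes, $\le p$ paths, bridge girth $>3$, \emph{no $2$-cycles}, and $\|S\| = \Theta(\beta(n,p,3))$. The crucial structural consequence of this cleaning is that any two distinct paths of $S$ share at most one node: if $\pi\ne\pi'$ both contained distinct nodes $u,v$, they would use them either in the same order (a $2$-bridge, excluded since the bridge girth exceeds $2$) or in opposite orders (a $2$-cycle, excluded by the cleaning). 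Paths of length $\le 1$ contribute at most $p$ to $\|S\|$, which is absorbed by the additive $+p$ in the statement, so I may assume every path has length $\ge 2$.

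Next I would build an auxiliary graph $H$ on vertex set $V$: for each path $\pi = (x_1,x_2,\dots,x_m)\in\Pi$, let $M_\pi := \{(x_1,x_2),(x_3,x_4),\dots\}$ be the matching obtained by pairing consecutive vertices, so $|M_\pi| = \lfloor m/2\rfloor \ge m/3$, and let $H$ be the union of all the $M_\pi$. Two facts follow at once from the ``at most one shared node'' property: (i) the $M_\pi$ are pairwise edge-disjoint, since a common edge $(a,b)\in M_\pi\cap M_{\pi'}$ with $\pi\ne\pi'$ would place $a,b$ in both $\pi$ and $\pi'$; and (ii) each $M_\pi$ is an \emph{induced} matching of $H$, since any edge of $H$ with both endpoints in $V(\pi)$ lies in some $M_{\pi'}$, and if $\pi'\ne\pi$ this again forces two shared nodes. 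Hence $E(H)$ is partitioned into $p$ induced matchings on $n$ vertices, with $|E(H)| = \sum_\pi|M_\pi| \ge (\|S\|-p)/3$.

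Finally I would invoke the quantitative Ruzsa--\Szemeredi{} bound. When $p\le n$, pad the family with $n-p$ empty matchings; then $H$ is an $n$-vertex graph whose edges split into exactly $n$ induced matchings, so by the definition of $\rs(n)$ we get $|E(H)|\le n^2/\rs(n)$, whence $\beta(n,p,3) = O(\|S\|) = O(n^2/\rs(n)+p)$. For the full range of $p$ the same conclusion follows from the Ruzsa--\Szemeredi{} framework of de Caen and \Szekely{}~\cite{de1997dense, Fox11, MS19}, which bounds the number of edges of an $n$-vertex graph partitioned into $m$ induced matchings by $O(n^2/\rs(n)+m)$ for every $m$; this is the step that genuinely uses the cited work rather than the bare definition of $\rs$.

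The main obstacle is exactly what dictates the induced-matching formulation over the obvious one. One is tempted to argue, mimicking Theorem~\ref{thm:rsreverse}, that the incidence graph of a bridge-girth-$>3$ system has girth $>6$ and hence $\beta(n,p,3)\le\gamgam(n,p,6)$ -- but this is false unless the system is acyclic. A cyclically oriented triple of paths meeting pairwise in one node, i.e.\ a directed $3$-cycle, is not a bridge of any size, yet it produces a $6$-cycle in the incidence graph; and unlike $2$-cycles (removed by Lemma~\ref{lem:2cyccleaning}), there is no evident way to delete directed $3$-cycles cheaply, since the charging argument used for $2$-cycles does not survive. Routing through induced matchings sidesteps this completely, because the ``induced'' property of $M_\pi$ uses only that distinct paths meet in at most one node, with no acyclicity needed. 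I expect the only remaining care to be bookkeeping: confirming that Lemma~\ref{lem:2cyccleaning} can be applied together with any length-regularity one wants, and quoting the de Caen--\Szekely{} bound in the form valid for all $p$ rather than only $p\le n$ (the regime $n<p<n^2$ being the one not already covered by the weaker implicit bound of Theorem~\ref{thm:threebound}).
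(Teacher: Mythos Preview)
Your argument is clean and correct for $p \le n$, but the step you flag as ``the step that genuinely uses the cited work'' is where it fails. The bound you invoke --- that an $n$-vertex graph partitioned into $m$ induced matchings has at most $O(n^2/\rs(n)+m)$ edges for \emph{every} $m$ --- is simply false. The construction of Alon, Moitra, and Sudakov~\cite{AMS13} gives $n$-vertex graphs with $(1-o(1))\binom{n}{2}$ edges decomposable into $m = n^{1+o(1)}$ induced matchings; since $\rs(n) = \omega(1)$, your claimed bound would force such a graph to have $o(n^2)$ edges. Your matchings $M_\pi$ do carry the extra structure that any two share at most one vertex, but your argument does not use it, and that property alone only yields a $C_4$-free incidence graph and hence $\|S\| = O(n\sqrt{p})$, which is off by a factor of roughly $\sqrt{\rs(n)}$ in the range $p \approx n^2/\rs(n)$. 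This matters because the regime where the theorem says anything beyond Theorem~\ref{thm:threebound} is precisely $p > n^2/\rs(n)^{3/2} \gg n$; the case $p \le n$ that you do handle is the one already subsumed by the $(np)^{2/3}$ bound.

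The paper sidesteps the issue by indexing the induced matchings by \emph{vertices} rather than by paths. It chops every path into length-$3$ pieces, takes a random tripartition $V = V_1 \cup V_2 \cup V_3$, keeps only pieces $(x,y,z)$ with $x\in V_1$, $y\in V_2$, $z\in V_3$, and defines one matching $M[v_1]$ in the bipartite graph on $V_2\cup V_3$ for each $v_1 \in V_1$. This gives at most $|V_1|\le n$ matchings regardless of $p$, so the bare definition of $\rs(n)$ applies directly --- no appeal to a stronger de Caen--\Szekely{} statement is needed. Your observation about directed $3$-cycles obstructing the naive incidence-graph route is exactly right, and the tripartition handles that too: after filtering, every surviving length-$3$ path is oriented $V_1\to V_2\to V_3$, which kills the cyclic orientation and makes the induced-matching check go through via a single forbidden $3$-bridge.
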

\begin{proof}
Let $S = (V, \Pi)$ be a path system with $n$ nodes, $p$ paths, and bridge girth $>3$.
We may assume that the average path length $\ell$ is at least a large enough constant, as otherwise the bound of $O(p)$ is immediate.
We associate $S$ to an auxiliary graph $G$ as follows:
\begin{itemize}
\item Split each path $\pi \in \Pi$ into as many node-disjoint subpaths as possible of length exactly $3$ each.  (We may discard one or two nodes at the end of the path.)
Note that, since $\ell$ is a large enough constant, we change the size of $\|S\|$ by at most a constant factor over this splitting process.

\item Take a uniform-random equitable tripartition $V = V_1 \cup V_2 \cup V_3$.
For each path $\pi=(x, y, z) \in \Pi$, keep $\pi$ iff $x \in V_1, y \in V_2, z \in V_3$; otherwise delete $\pi$ from $\Pi$.
Each path survives with constant probability, and so in expectation we again change the size of $\|S\|$ by at most a constant factor.

\item Let $G$ be the bipartite graph between vertex sets $V_2, V_3$, where we include an edge $(v_2, v_3)$ iff there exists a path $\pi \in \Pi$ with $v_2 <_{\pi} v_3$.
Note that there is one edge in $G$ per path in $\Pi$, and thus $|E(G)| = \Theta(\|S\|)$, so it suffices to bound $|E(G)|$.

\item For each node $v_1 \in V_1$, define an edge subset $M[v_1] \subseteq E(G)$ as all edges $(v_2, v_3)$ where there exists a path $(v_1, v_2, v_3) \in \Pi$.
\end{itemize}

In order to bound $|E(G)|$, we will show that each edge subset $M[v_1]$ is an induced matching.
To see this, suppose for contradiction that there are distinct edges $(u_2, u_3), (v_2, v_3) \in M[v_1]$, and also an edge $(v_2, u_3) \in E(G)$.
Suppose this other edge is caused by a path $(v'_1, v_2, u_3)$.
Then we notice that the three paths $(v_1, u_2, u_3), (v_1, v_2, v_3), (v'_1, v_2, u_3)$ form a $3$-bridge, with the first path as the river, giving contradiction.

Thus each set $M[v_1]$ is an induced matching, and so $E(G)$ may be partitioned into $|V_1| < n$ induced matchings.
Thus, by definition of $\rs(n)$ we have $|E(G)| = O\left( n^2 / \rs(n) \right)$, completing the proof.
\end{proof}

\begin{corollary}
$\beta(n, p, 3) = O\left( \frac{n^2}{2^{O(\log^* n)}} + n \right)$.
\end{corollary}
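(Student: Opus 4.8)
The final statement to prove is the Corollary asserting $\beta(n, p, 3) = O\left( \frac{n^2}{2^{O(\log^* n)}} + n \right)$, which follows directly from the immediately preceding Theorem \ref{thm:b3rs} stating $\beta(n, p, 3) = O\left( \frac{n^2}{\rs(n)} + p \right)$.

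The plan is straightforward: combine Theorem \ref{thm:b3rs} with the known upper bound on the Ruzsa-\Szemeredi{} function. First I would recall that, as discussed in Appendix \ref{app:rs} (citing Fox \cite{Fox11}, with an alternate proof by Moshkovitz and Shapira \cite{MS19}), we have the lower bound $\rs(n) \ge 2^{\Omega(\log^* n)}$ — equivalently, every $n$-node graph whose edge set decomposes into $n$ induced matchings has at most $O\!\left(\frac{n^2}{2^{\Omega(\log^* n)}}\right)$ edges. Substituting this bound for $\rs(n)$ into the conclusion of Theorem \ref{thm:b3rs} immediately yields $\beta(n, p, 3) = O\!\left(\frac{n^2}{2^{\Omega(\log^* n)}} + p\right)$.

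The only remaining cosmetic issue is the $+p$ term versus the $+n$ term in the corollary's statement. Here I would observe that the corollary as stated has $+n$ in place of $+p$; this is harmless because the interesting regime is $p \le O(n^2)$ (indeed $p \le n^2$ always, since otherwise two paths would be forced to share an ordered pair, creating a $2$-bridge, let alone a $3$-bridge), so the $+p$ term is always dominated by the $\frac{n^2}{2^{O(\log^* n)}}$ term whenever $p = \omega\!\left(\frac{n^2}{2^{O(\log^* n)}}\right)$ is impossible — and when $p$ is small, $+p$ is subsumed. More precisely, for $p \le n^2 / 2^{O(\log^*n)}$ the first term dominates, and for larger $p$ we would need $p > n^2/2^{O(\log^* n)}$, which combined with $p \le n^2$ still gives the bound up to the slack in the $O(\log^* n)$ exponent. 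In any case one may simply state the corollary with $+n$ since $\beta(n,p,3) = \Omega(n)$ always (one path through all $n$ nodes), and the reader understands $2^{O(\log^*n)}$ absorbs constant factors.

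There is no real obstacle here — this corollary is a one-line consequence of plugging a quoted external bound into the preceding theorem. The substantive content lives entirely in Theorem \ref{thm:b3rs} (the reduction from $3$-bridge-free path systems to induced-matching decompositions via the auxiliary tripartite graph $G$ and the edge subsets $M[v_1]$) and in the cited analytic bound on $\rs(n)$. The proof reads: By Theorem \ref{thm:b3rs}, $\beta(n, p, 3) = O\left(\frac{n^2}{\rs(n)} + p\right)$, and by the lower bound $\rs(n) = 2^{\Omega(\log^* n)}$ of \cite{Fox11, MS19}, we conclude $\beta(n, p, 3) = O\left(\frac{n^2}{2^{O(\log^* n)}} + n\right)$.
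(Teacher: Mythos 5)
Your proposal is correct and matches the paper's proof exactly: the paper's own argument is the one-liner ``Follows from the previous theorem, and by plugging in the state-of-the-art bounds on $\rs(n)$ from \cite{Fox11, MS19}.'' One small caution on your side remark: the claim that $p \le n^2$ is forced is not quite right, since a $3$-bridge-free system may contain arbitrarily many length-one paths with no bridge at all (this is also why Table \ref{tbl:betabounds} keeps the $+p$ term), but this imprecision lives in the corollary's statement rather than in your derivation, and the paper's terse proof glosses over it in the same way.
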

\begin{proof}
Follows from the previous theorem, and by plugging in the state-of-the-art bounds on $\rs(n)$ from \cite{Fox11, MS19}.
\end{proof}

\begin{theorem}\label{thm:points3}
$\beta(n, p, 3) = \Theta((np)^{2/3})$ when $p \in \{n^{4/5}, n^{7/8}, n, n^{8/7}, n^{5/4}\}$.
\end{theorem}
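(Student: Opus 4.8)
The plan is to establish the claimed equality $\beta(n,p,3) = \Theta((np)^{2/3})$ for the five exponent values $p \in \{n^{4/5}, n^{7/8}, n, n^{8/7}, n^{5/4}\}$ by matching the upper bound of Theorem~\ref{thm:threebound}, namely $\beta(n,p,3) = O((np)^{2/3} + n + p)$, against a lower bound construction. First I would check that in each of these five parameter regimes the term $(np)^{2/3}$ genuinely dominates the $n+p$ term, so that the upper bound simplifies to $O((np)^{2/3})$: for $p = n^{4/5}$ we have $(np)^{2/3} = n^{6/5} \gg n \gg p$, for $p=n$ we have $(np)^{2/3} = n^{4/3} \gg n = p$, for $p = n^{5/4}$ we have $(np)^{2/3} = n^{3/2} \gg p = n^{5/4} \gg n$, and the intermediate cases are similar. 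So the entire content is the lower bound $\beta(n,p,3) = \Omega((np)^{2/3})$ at these five points.

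For the lower bound, the natural route is to use a high-girth bipartite graph (equivalently, a set system of girth $>6$, i.e.\ $\gamma\gamma(n,p,6)$) and reinterpret it as an acyclic path system of bridge girth $>3$. The reduction direction we want is the converse of the one proved in Theorem~\ref{thm:rsreverse}: there it is shown that an acyclic bridge-girth-$>3$ path system has incidence graph of girth $>6$, and the footnote remarks the reduction can be reversed. So I would first record the reversal: given a bipartite graph $G$ with parts of size $n$ and $p$ and girth $>6$, one can orient it / sequence it to obtain an acyclic path system on $n$ nodes and $p$ paths with bridge girth $>3$ and size equal to $|E(G)|$. Concretely, take the bipartite incidence structure, treat the $n$-side as nodes and the $p$-side as paths, put an arbitrary topological order on the nodes, and let each path traverse its incident nodes in that order; the girth-$>6$ condition rules out $2$-bridges (which would be a $4$-cycle in $G$) and $3$-bridges (a $6$-cycle in $G$, using acyclicity to fix the orientation as in the third bullet of the proof of Theorem~\ref{thm:rsreverse}). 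Hence $\beta(n,p,3) \ge \gamma\gamma(n,p,6)$ (up to constants absorbed in the cleaning steps).

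It then remains to invoke the known tightness of the bipartite Moore bounds for girth $6$ at exactly these relative sizes of $n$ and $p$. The bipartite Moore bound (Appendix~\ref{app:girthtour}) gives $\gamma\gamma(n,p,6) = O((np)^{2/3} + n + p)$ for $k=3$ (odd case), and it is known to be tight "for various relative values of $n, p$" — the incidence graphs of generalized quadrangles and their truncations realize $\gamma\gamma(n,p,6) = \Theta((np)^{2/3})$ precisely when $p$ lies in a window around $n$, and the five exponents $\{4/5, 7/8, 1, 8/7, 5/4\}$ are exactly the values at which classical constructions (point-line incidences of generalized quadrangles, possibly combined with the symmetry $\gamma\gamma(n,p,k)=\gamma\gamma(p,n,k)$ to cover $p < n$ and $p > n$ symmetrically) are known to give matching lower bounds. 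So I would cite \cite{van12} for these, noting that the $GQ(q,q)$ incidence graph gives the $p=n$ point, truncations give $p = n^{7/8}$ and $n^{8/7}$, and $GQ(q,q^2)$ or $GQ(q^2,q)$ type incidences give $p = n^{4/5}$ and $n^{5/4}$; once the lower bound on $\gamma\gamma$ is in hand at these points, the path-system reversal above transfers it to $\beta(n,p,3)$.

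The main obstacle will be the last step: pinning down which explicit bipartite-girth-$6$ construction realizes $\Theta((np)^{2/3})$ at each of the five exponents, and verifying the arithmetic that a generalized quadrangle of order $(q,q^2)$ (with $\Theta(q^3)$ points and $\Theta(q^5)$ lines, or vice versa) and its edge-/vertex-truncations produce exactly the exponents $4/5, 7/8, 8/7, 5/4$ relative to the number of nodes $n$. This is a bookkeeping exercise rather than a conceptual one — the known-tightness of bipartite Moore bounds at these points is folklore in the incidence-geometry literature — but it is the part that requires care, and I would handle it by a short table matching each exponent to its witnessing construction and counting parameters, deferring to \cite{van12} for the underlying existence results.
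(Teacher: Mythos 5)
Your proposal is correct and follows essentially the same route as the paper: deduce the upper bound from Theorem~\ref{thm:threebound}, then obtain the lower bound by reinterpreting a girth-$>6$ bipartite graph from~\cite{van12} as a path system of bridge girth $>3$. The only difference is that you impose a single total order on the nodes and invoke acyclicity to rule out $3$-bridges, whereas the paper observes this is unnecessary — any per-path node ordering works, because a $2$-bridge forces a $4$-cycle and a $3$-bridge forces a $6$-cycle in the incidence graph regardless of orientation (it is only the converse that needs acyclicity); your closing speculation about which generalized-quadrangle truncations witness $n^{7/8}$ and $n^{8/7}$ is also unneeded since, like the paper, you ultimately just cite~\cite{van12}.
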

\begin{proof}
The upper bound follows from Theorem \ref{thm:threebound}.
For the lower bound, for each of the given values of $p$, it is known \cite{van12} that
$$\gamgam(n, p, 6) = \Omega\left((np)^{2/3}\right);$$
that is, there are constructions of bipartite graphs with $n$ nodes on one side, $p$ nodes on the other side, and girth $>6$.
We may convert any such graph to a path system $S = (V, \Pi)$ by taking $V$ as the $n$ nodes one one side, taking $\Pi$ as the $p$ nodes on the other side, and including a node $v$ in a path $\pi$ iff $(v, \pi)$ is an edge in the graph.
The order of the nodes in each path can be chosen arbitrarily.
Notice that a $2$-bridge in $S$ corresponds to a $4$-cycle in the graph, and a $3$-bridge in $S$ corresponds to a $6$-cycle in the graph.
Since neither such cycle exists, $S$ has bridge girth $>3$, and its size is $\|S\| = \Omega((np)^{2/3})$, completing the proof.
\end{proof}

\end{document}